\DeclareSymbolFontAlphabet{\mathcalorig}   {symbols}
\newcommand\fakeslant[1]{%
  \pdfliteral{1 0 0.210 1 -0.4 0 cm}#1\pdfliteral{1 0 -0.210 1 0.4 0 cm}} %%167
\newcommand\mathbbsl[1]{\mathbb{\fakeslant{#1}}}
\newtheorem{theorem}{Theorem}[section]
\newtheorem{corollary}{Corollary}[theorem]
\newtheorem{lemma}[theorem]{Lemma}
 \theoremstyle{definition}
 \theoremstyle{remark}
\newtheorem*{remark}{Remark}
\newcommand{\newpar}{\vspace{2mm}}
\definecolor{repColor}{rgb}{0.0, 0.0, 0.25} %% Color for reparameterized quantities
\definecolor{gColor}{rgb}{0.10, 0.0, 0.20} %% Color for gauge quantities
\definecolor{sColor}{RGB}{0, 40, 0} %% Color for spatial quantities
\definecolor{mygray}{gray}{0.2}
\definecolor{igray}{rgb}{0.1,0.1,0.15}
\DeclareRobustCommand{\ColorGauge}[1]{\color{gColor}#1\normalcolor}  %Gauge Coloring
\DeclareRobustCommand{\ColorRepAux}[1]{\color{repColor}#1\normalcolor}  %Parameterization Coloring
\newcommand{\beq}[1]{\begin{equation} \label{#1}}
\newcommand{\eeq}{\end{equation}}
\newcommand{\bea}[1]{\begin{eqnarray} \label{#1}}
\newcommand{\eea}{\end{eqnarray}}
\newcommand{\TODO}[1]{{\color{red}\textit{To Do}\normalcolor: \textit{#1} }}
\newcommand{\TBC}[1]{{\color{red}\textit{To Be Checked}\normalcolor: \textit{#1} }}
\newcommand{\HIDE}[1]{}
  \newcommand{\tskip}{{\hspace{0.8pt}}}   % tiny skip
\newcommand{\LiB}{\Big(}
\newcommand{\RiB}{\Big)}
\newcommand{\Ddim}{\mathrm{d}} %{D}%{{\color{DarkBlue}\mathbf{d}}}  % dimension of space-time
\newcommand{\tx}{t}
\newcommand{\sx}{{\color{sColor}{x}}}
\newcommand{\dsx}{d\sx} %{{d^{^{\,\Ddim{-}1}}\!\!\!\!\sx}}
\newcommand{\const}{ const }
\newcommand{\sVol}{{\color{sColor}V}}
\newcommand{\sint}{{\mathchoice{{\mathop{\color{sColor}\textstyle{\int}}}}
                         {{\mathop{\color{sColor}\textstyle{\int}}}}
                         {{\mathop{\color{sColor}{\int}}}}
                         {{\mathop{\color{sColor}{\int}}}} }}
\newcommand{\GR}{GR}
\newcommand{\UMG}{UMG}
\newcommand{\GUMG}{GUMG}
\newcommand{\g}{\gamma}
\newcommand{\sqrg}{\sqrt{\g\hspace{0.5pt}}\hspace{0.5pt}\vphantom{|}}%{\sqrt{{\!\,\!{\shortmid}{\g}{\shortmid}\!\,\!}}}%{} %{{\sqrt{\g}}}
\newcommand{\lapse}{\!\tskip\scriptscriptstyle{\perp}\!}
\newcommand{\INs}[1]{{#1}}  %# shift
\newcommand{\INl}[1]{{#1}^{\lapse}}  %# lapse
\newcommand{\iNl}[1]{{#1}_{\lapse}}   %# lapse
\newcommand{\Ns}{\INs{N}}
\newcommand{\Nl}{\INl{N}}
\newcommand{\lmrNl}{\iNl{\lambda}}
\newcommand{\zi}{{i}}   %i
\newcommand{\zj}{{j}}   %j
\newcommand{\zk}{{k}}   %k
\newcommand{\zl}{{l}}   %l
\newcommand{\zm}{{m}}   %m
\newcommand{\zn}{{n}}   %n
\newcommand{\Zl}{{\lambda}}  %l
\newcommand{\Zm}{{\mu}}      %m
\newcommand{\Zn}{{\nu}}      %n
\DeclareMathOperator{\weq}{\approx}
\newcommand{\weqwrt}[1]{\overset{\footnotesize{#1}}{\approx}} %\simeq%{\genfrac{}{}{0pt}{}{\approxeq}{\tiny{#1}} }
\DeclareMathOperator{\eomeq}{{\,\accentset{{\color{gray}\sim}}{=}\,}}
\DeclareMathOperator{\defeq}{{\accentset{{\color{gray}\scriptsize{\,---\,}}}{\,=\,}}}
\newcommand{\teq}{{\,\hspace{1pt}=\:}} %% text mode =
\newcommand{\tneq}{{\,\neq\:}} %% text mode =
\newcommand{\tequiv}{{\,\equiv\:}} %% text mode \equiv
\newcommand{\tdefeq}{{\,\defeq\:}} %% text mode \equiv
\def\tdot{\dot}
\def\tauxdot{\dot} %{\mathring}
\DeclareMathOperator{\rank}{\,\mathrm{rank}\,}
\newcommand{\var}{\operatorname{\delta}\hspace{-1pt}} %{\hspace{0.8pt}\delta} % variation symbol
\newcommand{\gvar}[1][]{{\color{gColor}\var}_{#1}}  % gauge variation %(bad: отступы в дробях)
\newcommand{\gvarp}[1][]{{\color{gColor}\var'\!\!}_{#1}}  % gauge variation %(bad:
\newcommand{\ugvar}[1][]{{\color{gColor}\var}_{#1}}  % unrestricted gauge transfromations
\newcommand{\gvarGR}[1][]{\gvar[#1]^{^{\scriptscriptstyle{(G\hspace{-0.5pt}R)}}}\!}
\newcommand{\TDer}[2]{\frac{{d}\tskip{#1}} {{d}\tskip{#2}}}  % total derivative
\newcommand{\tTDer}[2]{\tfrac{{d}\tskip{#1}} {{d}\tskip{#2}}}  % total derivative
\newcommand{\VDer}[3][]{ \frac{{\var^{#1}}  {#2}} {{\var^{}} {#3}} }  % variational derivative
\newcommand{\tVDer}[3][]{ \tfrac{\var^{#1} {#2}} {\var {#3}} }  % variational derivative
\newcommand{\svar}{{\color{sColor}\var\hspace{1pt}}} %{{^{^x\!\!}\delta}}
\newcommand{\sVDer}[3][]{ \frac{{\svar^{#1}}  {#2}} {\svar {#3}} }  % spatial variational derivative %%Unique to BVReducible
\newcommand{\Dvg}[2][\@empty]{%
  \ifx#1\@empty %This is the case when the argument is empty.%
    (\partial #2)
  \else %This is the case when the argument is not empty: #1.%
    (\partial #2)
    %% #2^{#1}_{\,,#1}
    %% \partial_{#1}#2^{#1\!}
  \fi
}
\newcommand{\Dvgp}[2][\@empty]{% %protected indexed divergence for quantities
    #2^{#1}_{\,,#1}
}
\newcommand{\Dvgn}[1]{% %protected divergence with no indexes
    (\partial #1)
}
\newcommand{\Dvgi}[2][\@empty]{% %improved divergence for field with lower index
  \ifx#1\@empty %This is the case when the argument is empty.%
    (\partial #2)
  \else %This is the case when the argument is not empty: #1.%
    {#2^{\,#1}}\vphantom{|\xi}_{,#1}
    %\partial_{#1}#2^{#1\!}
  \fi
}
\newcommand{\LieB}[4][]{{\mathchoice{{\big[{#3},{#4}\big]^{#2}}}
                         {{\big[{#3},{#4}\big]^{#2}}}
                         {{[{#3},{#4}]^{#2}}}
                         {{[{#3},{#4}]^{#2}}} }}
\newcommand{\LieD}[1]{\mathcal{L}_{#1}}      %# Lie derivative
\newcommand{\isst}{{\scriptscriptstyle (t)}} %% transverse index
\newcommand{\tProj}{{P_{\!\!\isst}}} %% transverse projector
 \newcommand{\geps}{\ColorGauge{\varepsilon}} %additional grouping makes acccent (bar, dot) position incorrect (lost italics adjustment)
 \newcommand{\gxi}{\ColorGauge{\xi}} %additional grouping makes acccent (bar, dot) position incorrect (lost italics adjustment)
 \newcommand{\gzeta}{\ColorGauge{\zeta}}
 \newcommand{\gchi}{\ColorGauge{\chi}} %% Lagrange multiplier at constraint \pi^io_,\zn
\newcommand{\gzetat}{\gzeta_{{\color{mygray}(\hspace{-1pt}t\hspace{-1pt})}}}
\newcommand{\gchit}{\gchi_{{\color{mygray}(\hspace{-1pt}t\hspace{-1pt})}}}
 \newcommand{\gx}{\ColorGauge{\zeta}}
 \newcommand{\gxs}{\INs{\gx}}
 \newcommand{\gxl}{\INl{\gx}}
\newcommand{\SSS}{S} % Action
\newcommand{\LLd}{\mathcal{L}}  % Lagrange density
\newcommand{\qq}{q}  % phase space (coordinates)
\newcommand{\pp}{p}  % phase space (momenta)
\newcommand{\CC}{\psi} % constraint
\newcommand{\IC}{\gamma} %I-class constraints
\newcommand{\IIC}{\chi} %II-class constraints
\newcommand{\PB}[2]{{\mathchoice{\big\{{#1},{#2}\big\}} {\big\{{#1},{#2}\big\}} {\{{#1},{#2}\}}{\{#1,#2\}}}}
 \newcommand{\phI}{\vphantom{\Big|}}
\newcommand{\Bra}[1]{\langle #1| }   %% <|
\newcommand{\Ket}[1]{|#1 \rangle}   %% |>
\newcommand{\BraKet}[2]{\langle #1| #2 \rangle}   %% scalar product
\newcommand{\BraOKet}[3]{\langle #1| #2| #3 \rangle}   %% scalar product
\newcommand{\tf}{\rep{\tau}}
\newcommand{\ptf}{\rep{\pi}}  % canonically conjugated momenta for time-field
\newcommand{\replm}{\rep{\lambda}}  % parametrization Lagrange multiplier
\newcommand{\CT}{\mathcal{T}}
\newcommand{\CS}{\mathcal{S}}
\newcommand{\CP}{\mathcal{P}}
\newcommand{\CPI}{\CP_{\scriptscriptstyle{\!\!(I)}}}
\newcommand{\rK}{K} %{{\color{SaddleBrown} K}}  % reduced extrinsic curvature
\newcommand{\sR}{R} %{{\color{DarkGreen} ^{^{\sdim}\!\!}R}}  % spatial curvature
\newcommand{\GBulk}{g} %{\textbf{g}}
\newcommand{\pg}{\pi}%{p}             % 3-metric momenta
\newcommand{\trpg}{\hspace{0.5pt}\pg\hspace{0.5pt}} %{{\color{Brown}\pg}}%        % 3-metric momenta trace
\newcommand{\ecK}{K}  %% extrinsic curvature K
\newcommand{\trecK}{K}  %% trace of extrinsic curvature K
\newcommand{\Hs}{{H}}                        % Momentum constraint
\newcommand{\Hl}{{H}_{^{_{\!\perp}}\!}}      % Hamiltonian constraint
\newcommand{\stR}{{^{^{\Ddim}\!\!}R}}  % spatial curvature
  \newcommand{\FF}{F}
  \newcommand{\WW}{W} % Polytropic parameter = 2 d ln F/ d ln \g
\newcommand{\wwc}{\mathrm{\hspace{0.5pt}w}}
  \newcommand{\OOmega}{\Omega}
  \newcommand{\TTheta}{\Theta}
 \newcommand{\ader}{\hspace{1pt}\accentset{\leftrightarrow}{\partial}\hspace{-1pt}} %% alternating derivative
\newcommand{\Ho}{\HH_0}
\definecolor{bulgarianrose}{rgb}{0.28, 0.02, 0.03} %% de-macro does not see it
\DeclareRobustCommand{\ColorRepAux}[1]{\color{repColor}#1\normalcolor}  %Gauge Coloring
\newcommand{\ialt}{alt}
\newcommand{\iGR}{{\scriptscriptstyle G\hspace{-0.5pt}R}}
\newcommand{\iGUMG}{}
\newcommand{\iwwc}{{\scriptscriptstyle{(\!\wwc\!)}}}
\newcommand{\iE}{{\scriptscriptstyle E}}
\newcommand{\iP}{{\scriptscriptstyle P}}
\newcommand{\io}{{\scriptscriptstyle{0}}}
\newcommand{\ibo}{{\scriptscriptstyle{\color{igray}\mathbf{0}}}}  % bold index 0
\newcommand{\rep}[1]{\ColorRepAux{#1}} %%{\breve{#1}}
\newcommand{\taux}{\rep{t}} %$\tau$ time variable in parameterized action
\newcommand{\xo}{t}%{\rep{t}}%{x^0} %% Stuekelberg time
\newcommand{\po}{p}%{p_{_0}}        %% Stuekelberg time momenta
\newcommand{\lmo}{\lambda}  %% time reparametrization Lagrange multiplier
 \newcommand{\HTf}{\rep{\mathcal{V}}} %%{\rep{\mathcal{X}}}
 \newcommand{\CCf}{\rep{\varLambda}}  %{\varLambda\hspace{0.5pt}}
 \newcommand{\CCof}{\rep{\varLambda}_\ibo}  %{\varLambda\hspace{0.5pt}}
 \newcommand{\cco}{\mathrm{c}_\ibo}
\newcommand{\Gaux}{\rep{\GBulk}}%{\rep{\GG}}
\newcommand{\repNs}{\rep{\Ns}}%{\rep{\Ns}}
\newcommand{\repNl}{\rep{\Nl}}%{\rep{\Nl}}
\newcommand{\lmCT}{\mu} %% vector Lagrange multipliers at \CT_{,\zm} %% to cmd.GRAV.sty
\newcommand{\replmCT}{\rep{\lmCT}}%{\rep{\lmv}}
\newcommand{\replmCTn}{\replmCT'}
\newcommand{\lmptf}{\rep{\nu}}%{\rep{\lmv}}
\newcommand{\hmg}[1]{\overline{#1}} %{\bar{#1}}
\newcommand{\inh}[1]{\widetilde{#1}} %{\check{#1}} {\tilde}
\newcommand{\hmgw}[1]{\overline{\hspace{0.4mm}#1\hspace{0.4mm}}} %{\bar{#1}}
\newcommand{\inhw}[1]{\hspace{-0.6mm}\widetilde{\hspace{0.6mm}#1\hspace{0.6mm}}\hspace{-0.6mm}} %{\check{#1}}
\newcommand{\hmgs}[1]{{\hspace{0.9mm}\hmg{\hspace{-0.5mm}#1\hspace{-0.5mm}}\hspace{0.9mm}}} %{\bar{#1}}
\newcommand{\TSEt}{T}
\newcommand{\pSEt}{p}
\newcommand{\eSEt}{\varrho} %{\varepsilon}
\newcommand{\repTSEt}{\rep{T}}
\newcommand{\reppSEt}{\rep{p}}
\newcommand{\repeSEt}{\rep{\varrho}} %{\varepsilon}
\newcommand{\Lib}{\big(}
\newcommand{\Rib}{\big)}
\newcommand{\wGUMG}{$\wwc$-\GUMG}
\newcommand{\Fw}{F_{\!\iwwc}}
\newcommand{\CTw}{\CT_{\!\iwwc}}
\newcommand{\argsqrg}{{\scalebox{0.8}{\mbox{\!\ensuremath{\sqrg}}}}}
\DeclareRobustCommand{\op}[1]{\hspace{-0pt}\mathbbsl{#1}}
\newcommand{\Id}{{\hspace{1.0pt}\op{{I}\hspace{1pt}}\hspace{0.8pt}}}
\newcommand{\hmgId}{{\op{\hspace{-0.5pt}\hmg{\hspace{1.6pt}{I}\hspace{0.4pt}}}\hspace{2.0pt}}}
\newcommand{\inhId}{{\op{\hspace{-0.5pt}\inhw{\hspace{1.6pt}{I}\hspace{0.3pt}}}\hspace{2.1pt}}}
\newcommand{\opE}{\hspace{0pt}\op{E}\hspace{0.8pt}}
\newcommand{\opinvE}{\hspace{0pt}\op{E}^{\,-1\!}}
\newcommand{\ofa}{a}
\newcommand{\opEa}{\opE_{\!\ofa}}
\newcommand{\opEE}{\opE_{\hspace{-0.5pt}ab\hspace{-0.2pt}}}%{\op{\mathcal{E}}} %% generic \opE
\newcommand{\opG}{{\hspace{0.4pt}\inh{\hspace{-0.5pt}\op{G}\hspace{0.5pt}}\hspace{0.5pt}}}
\newcommand{\opGm}{\inh{\hspace{-0.5pt}\op{G}\hspace{0.5pt}}_{{\hspace{-1.3pt}\scriptscriptstyle{-}}}}
\newcommand{\opGp}{\inh{\hspace{-0.5pt}\op{G}\hspace{0.5pt}}_{{\hspace{-1.3pt}\scriptscriptstyle{+}}}}
\newcommand{\tFProj}[2]{\mathop{\underaccent{#1,#2}{\Pi}}} %\accentset
\newcommand{\lFProj}[2]{\mathop{\underaccent{#1,#2}{\textrm{L}}}} %\accentset
\newcommand{\Af}{a}
\newcommand{\Bf}{b}
\newcommand{\Xf}{x}
\newcommand{\Yf}{y}
\newcommand{\invBraKet}[2]{\tfrac{1}{\langle #1| #2 \rangle}}   %% scalar product
\begin{document}

%%\if{ %%%% ARTICLE (b1)
  \title{Alternative Action for Generalized Unimodular Gravity}
  %\title{\bf Parameterized Action for Generalized Unimodular Gravity Models}
  \author{Dmitry~Nesterov$^{\star}$ }
 
  \date{}

  \maketitle

\vspace{-7mm}
\begin{center}
  %\hspace{-0mm}{\em Theory Department, Lebedev Physical Institute, Leninsky Prospect 53, Moscow 119991, Russia}
  \hspace{-0mm}{\em $^{\star}$ \HIDE{I.E.Tamm }Theory Department, P.N. Lebedev Physical Institute of the Russian Academy of Sciences,  Leninsky Prospekt 53, Moscow, 119991, Russia}\\
%  \phantom{.}\\
%  \hspace{-0mm}{\em $^{\dag}$ Moscow Institute of Physics and Technology, \\
%   Institutskiy per. 9, Dolgoprudny, Moscow Region, 141700, Russia}
% \\
 \noindent
\end{center}

% \noindent
% $^{\star}$e-mail: nesterov@lpi.ru\\
%% $^{\dag}$e-mail: \\

%%}\fi %%% END ARTICLE

\if{ %%% REVTEX
%Title of paper
\title{Alternative Action for Generalized Unimodular Gravity
 %\\ \phantom{.}
 \vspace{0.5mm} }
%\vspace{1mm}

% repeat the \author .. \affiliation  etc. as needed
% \email, \thanks, \homepage, \altaffiliation all apply to the current
% author. Explanatory text should go in the []'s, actual e-mail
% address or url should go in the {}'s for \email and \homepage.
% Please use the appropriate macro foreach each type of information

% \affiliation command applies to all authors since the last
% \affiliation command. The \affiliation command should follow the
% other information
% \affiliation can be followed by \email, \homepage, \thanks as well.
  \author{Dmitry~Nesterov} %$^{\star}$  %D.V.~Nesterov
\email[]{nesterov@lpi.ru}
%\homepage[]{Your web page}
%\thanks{}
%\altaffiliation{}
  \affiliation{ \HIDE{I.E.Tamm }Theory Department, P.N. Lebedev Physical Institute of the Russian Academy of Sciences,  Leninsky Prospekt 53, Moscow, 119991, Russia} % $^{\star}$

% insert suggested keywords - APS authors don't need to do this
%\keywords{}

%\maketitle must follow title, authors, abstract, and keywords
\maketitle

}\fi %%% END REVTEX

%%%============================================================================%%%

\begin{abstract}
  \vspace{0.6mm}

   We present an alternative formulation of generalized unimodular gravity (\GUMG), a class of modifications to general relativity characterized by{ a special} partial breaking of general coordinate covariance.
   The action for this formulation is derived constructively through a sequence of equivalent representations, starting from the original {\GUMG} setup and extending the configuration space and gauge structure by introducing time parametrization. Our approach, based on a canonical formalism, parallels the method used by Henneaux and Teitelboim to covariantize the action of unimodular gravity ({\UMG}), which was\HIDE{ subsequently} generalized to consistently accommodate a parameterization via local fields for the entire {\GUMG} family.
   For completeness, we explore the dynamical structure of the theory and provide a detailed account of its gauge properties.
   A notable consequence of the consistent parametrization is the emergence of explicit spatial delocalization in the action, manifestations of which we carefully examine.
   Within the {\GUMG} family, we identify a subfamily of models described by a local action. While preserving the essential structural features, these models avoid the complications of spatial nonlocality allowing a clearer analysis.
   The dynamical and constraint structures of {\GUMG} family models are different from those of {\UMG}, however the latter appears as an exceptional special case within the family, providing valuable comparative insights.

  \vspace{0.7mm}
  \phantom{.}

\end{abstract}

 \noindent
 $^{\star}$e-mail: nesterov@lpi.ru\\
% $^{\dag}$e-mail: \\

{
  %\small
  \newpage
  \setcounter{tocdepth}{2}
  \tableofcontents
}

  }
  {%
    % \foo defined
  }%
\makeatother

%%%============================================================================%%%
%%%============================================================================%%%
%%%============================================================================%%%

\newpage
\section{Introduction}
 \hspace{\parindent}
 The exploration of gravity theories with broken general covariance dates back to Einstein, who examined such frameworks in 1919 \cite{Einstein1919UMG}. Thus Unimodular gravity ({\UMG}) --- a significant model that retains key features of general relativity ({\GR})\HIDE{, including the dynamical massless graviton}, emerged nearly at the dawn of modern gravitational physics and gained renewed interest in late eighties,   \cite{Weinberg:1988cp,Unruh:1988in,Ng:1990xz,Bousso:2007gp,Bufalo:2015wda}. A useful advantage of this theory is an alternative covariant formulation, proposed by Henneaux and Teitelboim, \cite{Henneaux:1989zc}. Over the past decade, the search for experimentally viable modifications of general relativity has been influenced by the Horava-Lifshitz models \cite{Horava:2009uw,Blas:2009qj,Clifton:2011jh}, where spacetime covariance is explicitly broken to only spatial covariance, allowing to improve ultraviolet behavior.

 Generalized unimodular gravity ({\GUMG}) \cite{Barvinsky:2017pmm,Bufalo:2017tms,Barvinsky:2019agh,Barvinsky:2019qzx} in combines specific aspects of both approaches. Like {\UMG}, it can be described by simply restricting the configuration space of general relativity. At the same time it breaks spacetime covariance in a substantial way, introducing a preferred time direction and a corresponding spatial slicing of spacetime. Remarkably, although spatial diffeomorphism invariance is also implicitly broken in {\GUMG}, this violation does not generate explicit spatial anisotropy, allowing the theory to retain some of the benefits of spatial covariance within the formalism.
%\\

\newpar

Generalized unimodular gravity models originally were introduced in \cite{Barvinsky:2017pmm}
% \if{ %% ARTICLE
  \bea{Action_GUMG_L} % (\ref{Action_GUMG_L})
    \SSS^{\iGUMG}[\GBulk,\lmrNl] =
     \int \!d\tx \hspace{1pt} \dsx
     \,\sqrt{|\GBulk|}\,
     \stR(\GBulk)
      %%-2\Lambda
     - \int \!d\tx \hspace{1pt} \dsx
     \,\lmrNl \big(\Nl {-} \FF(\argsqrg) \big)
    \,,
  \eea
%  }\fi %% ARTICLE
\if{ %% REVTEX
  \bea{Action_GUMG_L} % (\ref{Action_GUMG_L})
    \!
    \SSS^{\iGUMG}[\GBulk,\lmrNl]
    = \! \int \hspace{-1pt}\!d \tx \hspace{1pt} \dsx \,
   \LiB\!
     \sqrt{|\GBulk|}\,\stR(\GBulk)
     -
     \lmrNl\big(\Nl {-\,} \FF(\argsqrg) \big)
   \RiB \hspace{-1pt}
    ,\;\;\;\;
  \eea
}\fi %% REVTEX
\HIDE{with $\FF(\argsqrg)$ --- some arbitrary positive definite function,} motivated by search for dark energy candidates. Here $\GBulk_{\Zm\Zn}$ is an unrestricted metric of $\Ddim$-dimensional spacetime,  which can be decomposed into Arnowitt–Deser–Misner (ADM) metric components \cite{Arnowitt:1959ah,Misner:1973prb} as
 %$
 \beq{GBulk_metric_ADM}% was {Gaux_correspondence}
  \GBulk_{\Zm\Zn}d{X}^{\Zm}d{X}^{\Zn}
   =  \g_{\zm\zn} (d\sx^\zm {+} {\Ns}^\zm d\tx)(d\sx^\zn {+} {\Ns}^\zn d\tx) - {\Nl}^2 d\tx \hspace{1pt} d\tx
   \,,
 %$
 \eeq
 with lapse function $\Nl(\tx,\sx)$, shift functions $\Ns^\zm(\tx,\sx)$ and the induced metric $\g_{\zm\zn}(\tx,\sx)$ on $\tx \teq \const$ hypersurfaces. The first term in (\ref{Action_GUMG_L}) is the standard Einstein-Hilbert action of general relativity and the second term implements the generalized unimodular restriction condition,
 \beq{GUMG_restriction}
  \Nl \eomeq \FF(\argsqrg)
  \,,
 \eeq
which restricts the lapse function $\Nl$ to be equal to some positive-definite function $\FF(\argsqrg) {\,>\,} 0$
  %%\footnote{Condition $\FF{\,>\,}0$ is sufficient in the range of values of the argument, $\sqrg>0$. Certain regularity conditions will suggest additional bounds on monotonicity and convexity of the characteristic function $\FF$ on the domain.}
of the induced metric determinant, $\sqrg \defeq \sqrt{\det \g_{\zm\zn}}$. A choice of the characteristic function $\FF$ fixes the particular theory. In the particular case $\FF\teq\sqrg^{-1}$, condition (\ref{GUMG_restriction}) reproduces the restriction condition of unimodular gravity, which justifies the name of the {\GUMG} family and allows for certain checks to be carried out.

The observation, which attracted attention to {\GUMG} models, is that the restriction term in the action (\ref{Action_GUMG_L}) generates the energy-momentum tensor of the cosmological perfect fluid,
  \bea{GUMGrStressTensor}
    \TSEt^{\Zm\Zn}%\big|_{\Nl=\FF}
     = \eSEt \,u^\Zm u^\Zn + \pSEt \,(\GBulk^{\Zm\Zn}+u^\Zm u^\Zn)
     \,,
  \eea
on the right hand side of Einstein equations. In (\ref{GUMGrStressTensor}) $u^\Zm$ --- timelike unit vector field, normal to $\tx = \const$ hypersurfaces. Perfect fluid's energy is $\eSEt \defeq \frac{\lmrNl}{2\sqrg}$ and pressure is $\pSEt \defeq \frac{\lmrNl}{2\sqrg}\frac{\FF\WW}{\Nl}$,
 % \footnote{
 %   Where $u^\Za$ is unit timelike vector field orthogonal to constant time slices. Energy $ \eSEt=\frac{\mu}{2\sqrg}$ and pressure $\pSEt=\frac{\mu}{2\sqrg}\frac{\FF\WW}{\Nl}$ which on the (\ref{GUMG_Restriction}) leads to energy pressure relation $\pSEt=\WW\eSEt$, so that $\WW(\argsqrg)$ is identified with barotropic parameter of cosmological fluid.
 % }
which on the restriction surface $\Nl \eomeq \FF(\argsqrg)$ implies the equation of state\HIDE{ of the cosmological perfect fluid}
  \bea{PerfectFluid_Eq_of_state} % was {defGUMGrWW}
    \pSEt  \eomeq  \WW \eSEt
    \,,
  \eea
where the role of barotropic parameter is played by the derivative characteristic function\footnote{
 To avoid additional branching of the canonical constraint structure we assume $\WW(\argsqrg)$ to be sign-definite\HIDE{ function}, which implies the monotonicity of $\FF(\argsqrg)$ (these characteristic functions are defined on domain $\sqrg {\,>\,} 0$). However, it is not the\HIDE{ absolutely} strict condition. For example, the particular \emph{\HIDE{cold }dust} case $\WW \teq 0$ also fits the formalism.
}
  \bea{def_WW_GUMG} % (\ref{def_WW_GUMG}) % was {defGUMGrWW}
    \WW(\argsqrg) \,\defeq\, \TDer{\,\ln \FF(\argsqrg)}{\,\ln{\sqrg}}
    \,.
  \eea

The dependence of $\WW$ on the cosmological scale of spatial sections is an attractive feature, as the coefficient in the equation of state typically depends on the redshift parameter, according to phenomenology of the Universe evolution (for recent reviews\HIDE{ in this field} see \cite{Wang:2018fng,Avsajanishvili:2023jcl}\footnote{
 Authors of\HIDE{ the review} \cite{Avsajanishvili:2023jcl} refer to homogeneous cosmological models, characterized by the dependance of $\WW$ on the conformal factor of spatial sections, as {$\WW$}CDM models\HIDE{ and give reference to certain validity tests based on modern observational data in a near $\WW$-constant limit}. They note, however, that such parametrization ``has no physical motivation, but is commonly used as an ansatz in data analysis to quantify differences and distinguish between dark energy models''. {\GUMG} models\HIDE{, with their restriction mechanism that directly generates {$\WW$}CDM parametrization ansatz,} present a novel perspective on their potential physical origin.
 %%At the same time {\GUMG} approach allows treating such models as fundamental theories.
}).

Dynamical and constraint analysis of {\GUMG} models disproved some original hopes, but uncovered various interesting features \cite{Barvinsky:2019agh}. Notably, it was shown that {\GUMG}
suggests new inflation mechanism consistent with observations within a certain subclass of $\FF\HIDE{(\argsqrg)}$ \cite{Barvinsky:2019qzx}.

\newpar

This work continues the general analysis of the generalized unimodular gravity theories. Here we construct and analyze an alternative representation for {\GUMG} models  \cite{Barvinsky:2017pmm,Barvinsky:2019agh}, analogous to the Henneaux--Teitelboim representation \cite{Henneaux:1989zc} for unimodular gravity. We discuss the dynamic and gauge structure of the resulting representation as well as related technical\HIDE{ and conceptual} issues, including peculiarities of parametrization of the field theory models.

\newcommand{\opB}{\op{B}}

Formally, the key result is a constructive proof that the general {\GUMG} model can be equivalently described by the action
  \bea{Alt_Action_GUMG}
    \SSS^{\iGUMG}_{\ialt}[\Gaux,\CCf,\HTf]
    = \int \!d\taux \hspace{1pt} \dsx 
      \,\sqrt{|\Gaux|}\;
     \big( \stR(\Gaux) - \CCf \big)
     + \int \!d\taux \hspace{1pt} \dsx 
     \: \partial_\Zm \HTf^\Zm \opB \hspace{1pt} \CCf
     \,,
  \eea
where configuration-space variables are the components of the bulk auxiliary metric $\Gaux_{\Zm\Zn}(\taux,\sx)$, cosmological-constant field $\CCf(\taux,\sx)$ and the auxiliary vector field $\HTf^{\Zm}(\taux,\sx)$.
The operator $\opB(\argsqrg)$ is a local in time but in general spatially nonlocal linear invertible operator of the form
  \beq{def_opB}
   \opB \HIDE{(\argsqrg)} = \opE \HIDE{(\argsqrg)} \FF\HIDE{(\argsqrg)} \sqrg
   \,,
  \eeq
where the spatial delocalization is encoded in the invertible operator $\opE$,
the explicit definition of which is given in Section \ref{SSect:ConstraintBasisChange} and basic properties are\HIDE{ summarized and proved} discussed in the Appendix \ref{ASSect:opE_properties}. Here we just note that though the choice of $\opE$ is not entirely unique, there is a natural one --- $\opE(\WW)$,\HIDE{ (\ref{opE_kernel}),} (\ref{opE_left-right_action}), functionally dependent on $\WW(\argsqrg)$ only and expressible as a sum of two projectors. This choice is all the more convenient because for the subfamily of models with
  \beq{AMG_part_case}
    \WW \equiv \wwc = \const %\quad (\wwc \neq 0, -1)
    \qquad \Leftrightarrow \qquad
    \FF \propto {\sqrg\vphantom{|^l}}^{\wwc}
    \,, \qquad [\text{\wGUMG}]
  \eeq
operator $\opE(\WW)$ becomes the {identity} operator, which makes the action and its equations of motion manifestly local. We will henceforth refer to the subfamily (\ref{AMG_part_case}) as {\wGUMG} and use it to highlight key features of the gauge structure and dynamics of (\ref{Alt_Action_GUMG}) without technical complications associated with spatial nonlocality.
In the\HIDE{ exceptional} particular case of unimodular gravity, corresponding to
  \beq{UMG_part_case}
    \WW \equiv - 1 %\quad (\wwc \neq 0, -1)
    \qquad \Leftrightarrow \qquad
    \FF \propto {\sqrg\vphantom{|^|}}^{-1}
    \,, \qquad [\text{\UMG}]
    %% $\WW \equiv -1$, $\FF=\sqrg^{-1}$
  \eeq
the entire operator $\opB(\argsqrg)$, (\ref{def_opB}), becomes the {identity} operator, which reduce action (\ref{Alt_Action_GUMG}) to the Henneaux--Teitelboim covariant {\UMG} action \cite{Henneaux:1989zc}.

We present a detailed analysis of dynamical properties of the theory (\ref{Alt_Action_GUMG}). In Section \ref{SSect:DynamicalProperties_GUMG} we also discuss the consequences of spatial delocalization\HIDE{ in general {\GUMG} models}. In particular, the effect\HIDE{ of the spatial nonlocality} may be seen in the on-shell behaviour of the cosmological-constant field $\CCf$ (\ref{ccf_onshell_GUMG}), which feels the global properties of $\WW(\argsqrg)$,
\vspace{-1mm}
  \beq{ccf_onshell_GUMG_Intro}
    \CCf
    %\eomeq \sqrg^{-1} \FF^{-1} \opinvE \cco
    \,\propto\,
    \sqrg^{-1} \FF^{-1} \frac{\WW^{-1}}{\,\hmg{\WW^{-1}}\,\vphantom{I^{|^I}}} \HIDE{\, \cco}
    \, ,
 \vspace{-2mm}
  \eeq
where\HIDE{  $\cco$ is some invariant constant parameter \,and\,} $\hmg{\WW^{-1}} \teq \hmg{\WW^{-1}}(\taux)$ is the average\HIDE{ value} of the $\WW^{-1}(\taux,\sx)$ over the $\,\taux \teq \const\,$ spatial hypersurface\HIDE{: $\hmg{\!\WW^{-1}\!} \teq \sint \WW^{-1}/\sint 1$}.

%\newpar

The key step linking the original and alternative representations of {\GUMG} models is the consistent implementation of time reparametrization, which we carry out in the canonical formalism. While conceptually this is similar to the procedure of Henneaux and Teitelboim for unimodular gravity \cite{Henneaux:1989zc}, a direct generalization is applicable only for the {\wGUMG} subfamily of models. The naive transfer of the mechanical parametrization \cite{Henneaux:1992ig} to a field theory generally does not provide an equivalent theory with local constraints.
At the same time, due to the specific constraint structure of {\GUMG} models,
a functionally complete modified parametrization constraint can still be recovered. This may be achieved by a mechanical-like parameterization using local in time but spatially global fields, then by introducing a purely gauge sector of average-free\HIDE{ (``inhomogeneous'')} auxiliary fields, and finally making a linear transformation of constraints.
This approach localizes the canonical phase space, potentially keeping simple spatial nonlocality in the constraint structure encoded by the operator $\opE$.
We provide a detailed discussion\HIDE{ of the peculiarities} of the consistent parametrization\HIDE{ procedure, in particular,} analyzing the emergence and nature of this spatial nonlocality.

We carefully examine the gauge structure of the resulting action and show that it also exhibits spatial nonlocality. In particular, it manifests in the fact that time reparametrization acts homogeneously on the metric sector. However, the spatial nonlocality of the gauge structure arises for reasons only indirectly related to those that generate $\opE$: the nonlocality persists even in manifestly local {\wGUMG} models, reflecting the mixed-class functional nature of constraints. In an appropriate constraint basis, only the spatially homogeneous component of parametrization constraint is first-class, whereas the complementary average-free\HIDE{ (``inhomogeneous'')} component is second-class. Among spatial diffeomorphisms, only those with divergence-free parameters are first-class. In {\GUMG} models with nonconstant $\WW(\argsqrg)$, the representation of gauge transformations becomes more involved due to the spatial nonlocality in the action. However, the gauge structure across all nonexceptional {\GUMG} theories remains qualitatively similar.

\newpar

The article is organized as follows.

In Section \ref{Sect:GUMG}, we briefly review the canonical formulation of generalized unimodular gravity\HIDE{, performing concise hamiltonization,} and complete it with comments on\HIDE{ particular and} exceptional cases. This section establishes the canonical extended action and constraint structure as the starting point for\HIDE{ implementing time} parameterization.

In Section \ref{Sect:AltAction_wGUMG}, we focus on the \wGUMG\ subfamily  (\ref{AMG_part_case}), deriving its\HIDE{ local} Henneaux--Teitelboim-like action and analyzing its dynamical properties and gauge structure. We take the opportunity\HIDE{ to postpone the discussion of modified parametrization,} to highlight various \GUMG-specific dynamical features, and to examine the gauge-structure spatial nonlocality for a manifestly local model, free from the nonlocalities introduced by $\opE$.

In Section \ref{Sect:AltActionGenGUMG}, we extend the analysis to general {\GUMG} models, with particular emphasis on the spatial delocalization associated with $\opE$ and the consistent parameterization.

The Appendices provide additional technical and conceptual details. 
In Appendix \ref{ASect:opE_properties} we summarize the properties of the delocalization operator $\opE$ and related operators. 
In Appendix \ref{ASect:gauge_calcs} we take out some calculational details concerning gauge structures\HIDE{ and hidden details of the gauge invariance check}. 
In Appendix \ref{ASect:Homogeneous_Parameterization} we present the Lagrangian form of the theory when parameterized only by spatially-homogeneous fields.
In Appendix \ref{ASect:NonCompact} we discuss the applicability of the developed technique and results to the case of manifolds with noncompact spatial sections.
In Appendix \ref{ASect:PB_GUMG_Tables} we collected together canonical structure relations for {\wGUMG} and general {\GUMG} theories.

%%In Section \ref{ASect:parametrization}, we comment on\HIDE{ the issues of} the canonical implementation of time parametrization in field theory.

%%%============================================================================%%%
%%%============================================================================%%%
%%%============================================================================%%%

\newpage
\section{Canonical Form of Generalized Unimodular Gravity}
 \hspace{\parindent}
  \label{Sect:GUMG}
In this section, we discuss the canonical formalism for {\GUMG} theory (\ref{Action_GUMG_L}) and its constraint structure. This largely follows the observations of \cite{Barvinsky:2019agh}, but is rewritten here using concise hamiltonization, where canonically conjugate momenta are introduced only for the induced metric on constant-time hypersurfaces  $\g_{\zm\zn}$, the only dynamical component of the spacetime metric $\GBulk_{\Zn\Zm}$ (\ref{GBulk_metric_ADM}).

The {\GUMG} action (\ref{Action_GUMG_L}), reduced with respect to fields $\lmrNl$ and $\Nl$, which form a pair of auxiliary variables\footnote{By \emph{auxiliary variables} we mean a subset of variables that can be expressed algebraically in terms of other fields by solving a subsystem of variational equations with respect to only these variables. The full action and the action reduced by eliminating the auxiliary variables are physically equivalent.}, up to total derivatives reads
  \bea{ADM_Action_GUMG}
    \SSS^{\iGUMG}_{red}[\g,\Ns] &=&
    \!\int \!d\tx \hspace{1pt} \dsx %\int dt \,\dsx
    \,\sqrg \FF(\argsqrg) \big(
      \rK^\zm_{\,\zn}\rK^\zn_{\,\zm} - \rK^\zm_{\,\zm}\rK^\zn_{\,\zn} + \sR(\g)
    \big)
    \,.
  \eea
The configuration space of the reduced theory (\ref{ADM_Action_GUMG}) consists of the induced metric $\g_{\zm\zn}$ and shift functions $\Ns^\zn$, (\ref{GBulk_metric_ADM}). $\sR(\g)$ denotes the scalar curvature of the induced metric on constant-time sections and the ``restricted''
 %%\footnote{These are standard extrinsic curvatures \cite{Misner:1973prb} with lapse function $\Nl$ substituted with $\FF(\g)$ in denominator of the prefactor.}
extrinsic curvature is
    \begin{eqnarray}
     \label{K_GUMG}
     \rK_{\zm\zn}
     = \frac{1}{2\FF(\argsqrg)} \big( \tdot{\g}_{\zm\zn} - \Ns_{\zm; \zn} - \Ns_{\zn; \zm} \big)
     \,.
    \end{eqnarray}
Spatial indices, denoted by Latin letters, are raised and contracted by the inverse spatial metric $\g^{\zm\zn}$. A \emph{dot} denotes for time derivative, a \emph{semicolon} indicates spatial covariant derivatives, while a \emph{colon} represents partial derivatives with respect to the\HIDE{ corresponding} spatial coordinate.

%%-------------------------=%        ******         %=-------------------------%%
%%-------------------------=%        ******         %=-------------------------%%

  \subsection{Canonical extended action}
   \label{SSect:GUMG_Canon_Actions}
    \hspace{\parindent}
  \newcommand{\fProj}{\op{\Pi}}
  \newcommand{\sLapl}{\vartriangle}
Concise transition to the canonical theory requires\HIDE{ introducing} conjugated momenta only for $\g_{\zm\zn}$,
 %% which have velocities in (\ref{ADM_Action_GUMG})
  \bea{momenta_GUMG}
    \pg^{\zm\zn} &\!=\HIDE{\defeq}\!& \frac{\svar\LLd}{\svar \dot{\g}_{\zm\zn}}
    \:=\: \sqrg \big( \rK^{\zm\zn}-\g^{\zm\zn} \rK^{\zk}_{\,\zk} \big)
    \,.
  \eea
After the Legendre transform, the action (\ref{ADM_Action_GUMG}) acquires \emph{primary} canonical form
  \bea{primAction_GUMG}  % was {GUMGHamiltonianAction}
    &&\SSS_{\iP}[\g,\pg,\Ns]
    = \int \!d\tx \hspace{1pt} \dsx %\int dt\, \dsx
    \,\Lib \pg^{\zm\zn}\dot{\g}_{\zm\zn}
    - \FF(\argsqrg) \Hl(\g,\pg) - \Ns^\zn \Hs_\zn(\g,\pg)
    \Rib
    \,.
  \eea
The phase-space structures $\Hl(\g,\pg)$ and $\Hs_\zn(\g,\pg)$ are identical to the Hamiltonian and momentum constraints in canonical Einstein gravity \cite{Misner:1973prb},
 \bea{}
    {\Hl}(\g,\pg) &=& \frac{1}{\sqrg} \Big( \pg^{\zm\zn}\pg_{\zm\zn} - \frac{1}{\Ddim{-}2}\trpg^2 \Big) - \sqrg\,\sR(\g) %%%-2\Lambda
    \,,
    \label{BGUMGHamiltonianStructure}
    \\
    {\Hs}_\zn(\g,\pg)  &=& -2\, \g_{\zn\zm} {\pg^{\zm\zk}}_{\!;\zk} \,,
    \label{BGUMGMomentaConstraints}
\eea
where \,$\pg_{\zm\zn} \defeq \g_{\zm\zk}\pg^{\zk\zl}\g_{\zl\zn}$ \,and\, $\trpg \defeq \g_{\zm\zn}\pg^{\zm\zn}$. The presence of constraint structures from the parental gauge theory is generically expected in restricted theories\footnote{We refer to Einstein gravity as \emph{parental} to {\GUMG} in the sense of \cite{Barvinsky:2022guw} as the gauge theory from which the considered \emph{restricted} model is obtained by imposing algebraic restriction, which fixes some gauge and/or non-gauge degrees of freedom.}, though their roles may change. In (\ref{primAction_GUMG}), the second structure (\ref{BGUMGMomentaConstraints}) is still defines constraint
 \beq{Hs_constraint}
  \Hs_\zn (\g,\pg) = 0
  \,,
 \eeq
entering the canonical action with Lagrange multipliers $\Ns^\zm$, while the first structure (\ref{BGUMGHamiltonianStructure}) appears as a factor in the nontrivial Hamiltonian density $\FF(\argsqrg) \Hl(\g,\pg)$.
In the concise setup, the phase-space variables are ($\g_{\zm\zn}, \pg^{\zm\zn}$), and the Poisson bracket of\HIDE{ two phase-space} functions $A(\g, \pg)$, $B(\g, \pg)$ is given by
 $ % \bea{BriefPB}
   \PB{A}{B}
   = \sint \big(\sVDer{A}{\g_{\zm\zn}(\sx)}\sVDer{B}{\pg^{\zm\zn}(\sx)} - \sVDer{A}{\pg^{\zm\zn}(\sx)}\sVDer{B}{\g_{\zm\zn}(\sx)}\big)
   \,,
 $ % \eea
where $\sint ...$ denotes the \emph{spatial} integration over the $\tx = \const$ hypersurface\HIDE{ of spacetime}, $\sint ... \defeq \int\limits_{\tx = \const} \!\!\! \dsx \,...\,$.

%\newpar

The Dirac consistency procedure \cite{Bergmann:1949zz,Dirac:1950pj,Henneaux:1992ig}, which in its first step requires the preservation of primary constraints $\Hs_\zm=0$ in time,
  \beq{consistency_primary_GUMG}%{consistency_secondary_GUMGr}
    \dot{{\Hs}}_\zn
    \;\eomeq\;
    \big(\WW\FF\Hl\big)_{,\zn} \! + {\Ns^\zm}_{,\zn}\Hs_\zm + \big(\Ns^\zm\Hs_\zn\big)_{,\zm}
    \,\weq\, 0
    \;,
  \eeq
reveals secondary constraints
%%(for details see Section \ref{ASect:GUMG_Algebra_Details}). %% deprecated
  \beq{secondary_GUMG}%{tertiary_GUMGr}
  %  \boxed{
   \CT_{,\zm}(\g,\pg) \;=\; 0
   \,,
  \eeq
where we introduce the phase-space function
 \beq{def_CT}
    \CT(\g,\pg) \;\equiv\; \WW(\argsqrg) \FF(\argsqrg) \Hl(\g,\pg)
    \,.
  %  }%\nonumber\\
  \eeq
This is where the equation-of-state parameter of the cosmological perfect fluid
 $ % \bea{def_WW}
  \WW(\argsqrg) = \tfrac{d \ln{\FF}}{d \ln\!{\sqrg}}
  \,,
 $ % \eea
 (\ref{def_WW_GUMG}), is explicitly incorporated into the canonical formalism. If $\WW(\argsqrg) \teq 0$ for certain values of $\sqrg$, it would induce constraint branching at this step\HIDE{and complicate the constraint structure}. Though it does not affect the main result, to avoid further complication of the constraint structure\HIDE{ by default} we assume sign-definiteness of $\WW(\argsqrg)$ and thus monotonicity of $\FF(\argsqrg)$.

The new constraint (\ref{secondary_GUMG}) takes the form of a gradient, where the derivative is the partial spatial derivative. Since $\CT\HIDE{(\g,\pg)}$ is not a scalar function\footnote{
 \HIDE{In (\ref{def_CT})} $\Hl(\g,\pg)$ is a scalar density of weight $1$ (like $\sqrg$). Structures $\FF(\argsqrg)$, $\WW(\argsqrg)$ generically do not have a definite weight, unless they are homogeneous in $\sqrg$.
}
under spatial diffeomorphisms, this suggests a partial breaking of spatial covariance. The latter is confirmed by the analysis of the constraint structure in Section \ref{SSect:GUMG_Constraint_Structure}, which shows that the longitudinal part of the spatial diffeomorphisms is a second-class constraint.

Notably, the constraint (\ref{secondary_GUMG}) is equivalent to \emph{functionally incomplete} scalar constraint $\inh{\CT}=0$,
 \beq{secondary_GUMG_inh}
   \CT_{,\zn}\HIDE{(\g,\pg)} = 0
   \qquad\Leftrightarrow\qquad
   \inh{\CT}\HIDE{(\g,\pg)} = 0
   \,,
 \eeq
where $\inh{\CT}$ denotes the spatially average-free\HIDE{ (``{inhomogeneous}'')}  part of\HIDE{ the function} $\CT$\HIDE{, (\ref{def_CT})}. Any local function $f(\tx,\sx)$\HIDE{ dependent on spatial coordinates $\sx^\zm$} is uniquely decomposed into\HIDE{ the sum of} average (\HIDE{spatially }homogeneous) and average-free (inhomogeneous) components
 \beq{def_inh_hmg_sVol}
   f(\tx,\sx) = \hmg{f}(\tx) + \inhw{f}(\tx,\sx) \;:
   \qquad
   \begin{array}{|l}
     \; \hmg{f}(\tx) \defeq \frac{1}{\sVol} \sint f(\tx,\sx)
     \,, \vphantom{\big|^l}
     \quad \;
     \sVol \defeq \sint 1
     \,,
     \\
     \;
      \inhw{f}(\tx,\sx) \defeq
     % f(\tx,\sx) - \frac{1}{\sVol} \sint f(\tx,\sx)
     f(\tx,\sx) - \hmg{f}(\tx)
     \,, \vphantom{{\big{|}^l}^i}
     \\
   \end{array}
   \qquad
  %% \sVol = \sint 1 \,,
 \eeq
where integrals $\sint$\HIDE{ without an explicit measure} denote spatial integration and $\sVol$ is the background (coordinate) volume of the $\tx \teq \const$ \, hypersurface\footnote{
  In {\GUMG} models, the background spatial volume is gauge-invariant on the branch where the decomposition (\ref{def_inh_hmg_sVol}) applies. This is because only \emph{transverse} diffeomorphisms (\ref{transverse_diffeomorphisms}), associated with unit-Jacobian spatial coordinate transformations, remain as gauge symmetries, thereby preserving spatial integrals with a fixed measure.
 \if{
   Note that in\HIDE{ nonexceptional} {\GUMG} models, the background spatial volume is gauge-invariant on the branch, where the decomposition (\ref{def_inh_hmg_sVol}) is actually employed, since there only the transverse spatial diffeomorphisms are gauge symmetries.
   %%Transverse diffeomorphisms preserve spatial integrals for any fixed measure, as they correspond to the coordinate transformations with unit Jacobian.
   These\HIDE{Transversal diffeomorphisms} correspond to spatial\HIDE{ time-dependent} coordinate transformations with unit Jacobian and\HIDE{ thus} preserve spatial integrals for any fixed measure.
 }\fi
}.
The two subspaces are orthogonal, which implies $\sint f \, h \,\teq\, \sVol \hmg{f} \, \hmg{h} {\,+ } \sint \inh{f}\: \inh{h} $ and $ \sint \hmg{f}\, \inh{h} \,\teq \hmg{f} \sint \inh{h} \,\teq\, 0$.
While the explicit form of definitions on the right of (\ref{def_inh_hmg_sVol}) is straightforward for compact spatial sections, a similar projection decomposition is applicable for noncompact spatial sections as well (see Appendix \ref{ASect:NonCompact}).
\if{
 \footnote{
  The accurate formulation for the noncompact case needs specifying the asymptotic conditions of fields, entering the decomposed combinations.
  The important actually used features of the decomposition (\ref{def_inh_hmg_sVol}) are that the average-free quantities vanish upon either spatial integration or averaging\HIDE{ and is orthogonal with homogeneous part} and, if needed, can be represented as the divergence of a spatial\HIDE{ tangent} vector field.
  %
  %% We extensively use such decomposition in what follows.
  %% Here the functionally incompleteness of the secondary constraint is hidden by redundancy of the gradient-type  constraint. However, this would no longer be convenient in the context of the alternative formulation.
  %% We put the definition here to stress that this source of the spatial nonlocality is present already in the original formulation.
  %% We comment on this more in the Appendix \ref{ASect:Noncompact}.
 }
}\fi 

\newpar

Preservation of the secondary constraint $\CT_{,\zn}\weq0$\HIDE{, (\ref{secondary_GUMG}),}  in time leads to the condition
  \bea{consistency_secondary_GUMG}%{consistency_secondary_GUMG}
    \dot{\CT}_{,\zn}(\g,\pg)
    \;\eomeq\;
     \partial_\zn \big(\CT\,\CS\big)
        + \partial_\zn \big(\WW\partial_\zm(\FF^2 \g^{\zm\zk} \Hs_\zk)\big)
        + \partial_\zn \big(\Ns^\zm \CT_{,\zm}\big)
    \,\weq\, 0
    \,,
  %%%\ref{dot_CTa_GUMGr}
  \eea
where
\vspace{-2mm}
  \bea{def_GUMG_CS} % (\ref{def_GUMG_CS})
    & \CS(\g,\pg,\Ns)
    \;\defeq\;
    \OOmega(\argsqrg)\,\Dvg[\zn]{\Ns} - \TTheta(\argsqrg)\,\trpg\;,
    \vphantom{\Big{|_{\big|}}}
    \\
    %\nonumber\\
 %  \eea
%%with $\trpg=\g_{\zm\zn}\pg^{\zm\zn}$
% and
 % \bea{def_OOmega_TTheta}
    & \displaystyle{
    \OOmega(\argsqrg) \defeq  \TDer{\ln{\WW}}{\ln\!\sqrg} + \WW + 1\;,
    \qquad
    \TTheta(\argsqrg) \defeq  \frac{1}{\Ddim{-}2}\, \TDer{\ln{\WW}}{\ln\!{\sqrg}}\,\frac{\FF}{\sqrg}
      }
    \,.
    \label{def_OOmega_TTheta}
    %\nonumber\\
  \eea
The two latter terms in\HIDE{ the middle side of the consistency equation} (\ref{consistency_secondary_GUMG}) vanish on constraint surface $\Hs_\zn \teq 0$, $\CT_{,\zn} \teq 0$. So, given $(\CT\hspace{1pt}\CS)_{,\zn} = \CT\hspace{1pt}\CS_{,\zn} {+\,}\CT_{,\zn}\CS$, the  consistency condition (\ref{consistency_secondary_GUMG}) generates the constraint
  \bea{tertiary_GUMG}%{quaternary_GUMGr_T}
   % \boxed{
     \CT\HIDE{{\cdot}}\,\CS_{,\zn} = 0
     \,.
   % }%\nonumber\\
  \eea
Importantly, $\CT$\HIDE{ in the left-hand side} does not vanish on the constraint surface --- only its average-free part, $\inh{\CT}$, does.

At this point a peculiar feature of generalized unimodular gravity emerges. The consistency condition for the secondary constraint takes a {factorizable} form, signaling a bifurcation of the constraint set into two dynamical branches,
 \beq{branches_GUMG}
  \begin{array}{lll|lll}
     & \text{\emph{Non-GR branch}} &&& \quad \text{\emph{GR branch}} &\!  \\
    \hline
     & \Hs_\zn=0 \,, &&& \Hs_\zn=0 \,, &\! \\
     & \CT_{,\zn}=0 \,, \;\;(\CT\neq0) &&& \CT =0 \;\;(\Leftrightarrow\,\Hl=0) \,. &\! \\
  \end{array}
 \eeq
The constraint set of the second branch coincides with the constraint set of the parental theory --- Einstein's general relativity.
  %%\footnote{The restoration of the ``parental'' branch with dynamic and gauge structure of the parental theory is the common behaviour in the restricted theories which are obtained from some parental gauge theory.}
For this reason, we refer to the second branch as the \emph{GR branch} and the\HIDE{ physically more interesting} first as \emph{non-GR branch}.
 %%\HIDE{\cite{Barvinsky:2022guw}.}

\newpar

Within the concise hamiltonization approach, the Dirac consistency procedure ends at this step, as preservation in time of the constraint $\CT\,\CS_{,\zn} = 0$ does not introduce new restrictions on the constraint surface. But the underlying mechanisms differ between the two branches.

The GR branch ($\Hs_\zn \teq 0$, $\CT\HIDE{ \sim\Hl} \teq 0$) is preserved in time since
  \bea{dot_CT_GUMGr} %%\ref{dot_CT_GUMGr}
    \dot{\CT}(\g,\pg)
   % &=& \PB{\CT}{\sint \FF\Hl } + \PB{\CT}{\sint \Ns^\zm \Hs_\zm} \;,
   % \\
    &=& \CT\,\CS
        + \WW\partial_\zn(\FF^2 \g^{\zn\zm} \Hs_\zm)
        + \Ns^\zn \CT_{,\zn}
    \weqwrt{\Hs_\zn,\CT}  0
    \,.
  \eea
The preservation is predictable, because the constraints are equivalent to the constraint set of the parental gauge theory with closed algebra ($\Hs_\zn {=\,} 0$, $\Hl \teq 0$), and the Hamiltonian $\FF\Hl$ on this branch weakly vanishes.

In contrast, the condition $\CS_{,\zn} {=\,} 0$, (\ref{def_GUMG_CS}), of the non-GR branch ($\Hs_\zn {=\,} 0$, $\CT_{,\zn} {=\,} 0$, $\CT \neq 0$)
does not impose additional constraints on the phase space since it restricts the divergence of the Lagrange multipliers\footnote{
 The solution (\ref{DvgNs_from_CS_GUMG}) \HIDE{for the Lagrange multipliers }implicitly assumes that the derivative characteristic function $\OOmega(\argsqrg)$, (\ref{def_OOmega_TTheta}), should be sign-definite. The assumption $\OOmega(\argsqrg)\neq0$ imposes a convexity inequality on the characteristic function $\FF(\argsqrg)$ and prevents further branching of the canonical constraints structure. As discussed in Section \ref{SSect:Except_Partic_Cases}, {\GUMG} models from the exceptional subfamily $\OOmega(\argsqrg)\tequiv0$ are dynamically too restrictive, if not to say pathological, except important regular {\UMG} case (\ref{UMG_part_case}), for which $\TTheta\tequiv0$ and $\CS$ does not appear.
}
\vspace{-2mm}
  \bea{DvgNs_from_CS_GUMG} %{DvgNs_from_CS_LRG} \COPY
   \Dvg[\zn]{\Ns}
    \,\eomeq\, \inh{U}_0(\g,\pg)
    \,\defeq\, \HIDE{+} \OOmega^{-1}\TTheta \,\trpg
      - \OOmega^{-1}\frac{\sint \OOmega^{-1}\TTheta \,\trpg}{\sint \OOmega^{-1}}
    \,.
  \eea
For {\wGUMG} subfamily $\WW \tequiv \wwc \teq \const$, (\ref{AMG_part_case}), the average-free structure $\inh{U}_0(\g,\pg)$ vanishes, as  $\OOmega=\wwc{+}1$ and $\TTheta \propto \TDer{\WW}{\sqrg} \equiv 0$. %% (\ref{def_OOmega_TTheta}).
In the general {\GUMG} case, $\inh{U}_0(\g,\pg)$ remains nontrivial.
  %% and can be expressed as the spatial divergence of a vector quantity $\inh{U}_0(\g,\pg)=\partial_\zn {U}^\zn_0(\g,\pg)$. The latter structure will play a role in defining the first-class subalgebra of the parameterized theory.

\newpar

Finally, the \emph{extended} canonical action of the {\GUMG} theory may be written as
 \bea{extAction_GUMG} % (\ref{extAction_GUMG}) %was (\ref{extAction_BLRG})
    \SSS_{\iE}[\g,\pg,\Ns,\lmCT]
    &=& \!\int \!d\tx \hspace{1pt} \dsx %\int dt\, \dsx
    \,\Lib
     \pg^{\zm\zn}\dot{\g}_{\zm\zn}
     - \FF \Hl
     - \Ns^\zn \Hs_\zn
     - \lmCT^\zn \CT_{,\zn}
    \Rib
    \,.
  \eea
While it looks as the extended action for the\HIDE{ physically interesting} non-GR branch, it \HIDE{encapsulates}describes the dynamics on both branches, explicitly incorporating their total constraint surface.
%% The formal uniqueness of the form of the extended action for both branches is the benefit of the concise hamiltonization.

\if{
  \footnote{
   This form of the action is granted by using brief hamiltonization, which in contrast to the standard Dirac one used in \cite{Barvinsky:2019agh}, allows to describe the complete constraint surface (including both branches) by the set of conditions (\ref{Hs_constraint}) and (\ref{secondary_GUMG_inh}). In the standard approach with different number of constraints on branches one could also write down the action, but generically it is impossible without explicit introduction of factorizable (nonconstant-rank) constraints which has its own shortcomings. %Strictly speaking this is the extended action for the non-GR branch since it explicitly contains the complete set of its canonical constraints. For the GR branch with the constraint $\Hl=0$ this is formally not the extended form. However, the subspace of the constraint surface correspondent to GR branch is contained in the constraint surface $\Hs_\zm=0$, $\CT_{,\zm}\!=0$ of (\ref{extAction_GUMG}) and this is enough for the purposes of the discussion.}
  }
}\fi

\if{
  %\footnote{
  Here is the point where imposition of the third bound on the type of the characteristic function $\FF(\argsqrg)$ would be in place. The constraint structure avoids additional branching and specific submanifolds if on assume $\OOmega(\argsqrg)\neq0$ on the domain $\sqrg>0$. This allows to completely solve $\CS_{,\zm} = 0$ in terms of $\Dvg{\Ns}$, (\ref{DvgNs_from_CS_GUMG}), which avoids new constraints on this branch. Later the same assumption will save from the additional complication of the rank irregularity of the Poisson bracket matrix of constraints. However, the exceptional unimodular case, which falls into $\OOmega\equiv0$ subfamily, will also be described by the alternative action as the particular case\HIDE{, though its constraint structure and physical content differ from generic {\GUMG} models}.
  %}
}\fi

\if{
  %\footnote{
   %% The stability of the parental branch is predictable since the Hamiltonian $\FF\Hl$ vanishes and the system of constraints repeats that of the parental gauge theory.

  Formally for the GR branch there appears the new constraint $\Hl=0$. Which we will not indicate in the action explicitly. Thus, the canonical action (\ref{extAction_GUMG}) with explicitly specified constraints $\Hs_\zm=0$, $\CT_{,\zm}=0$, which we in what follows refer as the \emph{extended} action, is truly extended for the non-GR branch, but not for the GR branch. However, we prefer to deal with the single action functional for all configuration space. And this will not lead to inconsistencies in what follows: the two-branch structure of physical space will be preserved, which we periodically check.
  %}
}\fi

%%-------------------------=%        ******         %=-------------------------%%
%%-------------------------=%        ******         %=-------------------------%%

%\newpage
     %\subsection{Poisson brackets of {\GUMG} constraints (2020)}
     \subsection{Constraint structure}
     \label{SSect:GUMG_Constraint_Structure}
      \hspace{\parindent}
 \if{ %% INIT ver
A review of canonical {\GUMG} would be incomplete without mentioning the constraint structure. In the basis $\Hs_\zm,\CT_{,\zm}$ the Poisson brackets of the constraints are
  \bea{GUMG_Algebra_offshell} %From GUMGr_NDV file \ref{PB_GUMGr_B1}
   \begin{array}{lll}
     \PB{\sint {\xi}^\zn\Hs_\zn}{\sint {\eta}^\zm\Hs_\zm}
     & = & \sint \LieB{\zn}{\xi}{\eta} \Hs_\zn
    \,, \vphantom{\big|^I}\\
     \PB{\sint {\xi}^\zn\Hs_\zn}{\sint \eta^{\zm}\CT_{,\zm}}
     & = &
     \sint {\xi}^\zn \Dvg{\eta}\, \CT_{,\zn} + \sint \Dvg{\xi} \Dvg{\eta} \,\OOmega\, \CT
    \,, \vphantom{\big|^I}\\
     \PB{\sint {\xi}^{\zn}\CT_{,\zn}}{\sint {\eta}^{\zm}\CT_{,\zm}}
     & = & \sint \big(\Dvg{\xi} {\Dvg{\eta}}_{,\zn} {-} {\Dvg{\eta}}{\Dvg{\xi}}_{,\zn}\big) \g^{\zn\zm} \FF^2 \WW^2 \Hs_\zm
    \,, \vphantom{\big|^I}\\
   \end{array}
  \eea
where\HIDE{ structures} $\WW(\argsqrg)$ and $\OOmega(\argsqrg)$ are defined in (\ref{def_WW_GUMG}) and (\ref{def_OOmega_TTheta})\HIDE{ correspondingly}, and $\xi^\zn(\sx)$, $\eta^\zm(\sx)$ are arbitrary test functions with $\Dvg{\xi}\defeq \Dvg[\zk]{\xi}$ --- a compact notation for the divergence of a spatial vector.
 %%Integral form allows to represent the field-theoretic Poisson bracket relations in compact form.
 %%\footnote{Remind that integrals without specifying the measure are spatial integrals over  constant-time hypersurfaces.}

At the complete constraint surface of the theory ($\Hs_\zn=0$, $\CT_{,\zm}\!=0$) one gets
  \beq{GUMG_Algebra_onshell} %From GUMGr_NDV file \ref{PB_GUMGr_B1}
     \begin{array}{lll}
       \PB{\sint {\xi}^\zn\Hs_\zn}{\sint {\eta}^\zm\Hs_\zm}
       & \weq & 0
       \, , \vphantom{\big|^I}\\
       \PB{\sint {\xi}^\zn\Hs_\zn}{\sint \eta^{\zm}\CT_{,\zm}}
       & \weq & \sint \Dvg{\xi} \Dvg{\eta} \,\OOmega\, \CT
       \, , \vphantom{\big|^I}\\
       \PB{\sint {\xi}^{\zn}\CT_{,\zn}}{\sint {\eta}^{\zm}\CT_{,\zm}}
       & \weq & 0
       \, , \vphantom{\big|^I} \\
     \end{array}
  \eeq
which shows that rank of Poisson bracket matrix is \emph{nonconstant} on the constraint surface.
}\fi %% END INIT ver
A review of canonical {\GUMG} would be incomplete without mentioning the constraint structure. In the basis $\Hs_\zm$, $\CT_{,\zm}$, the Poisson brackets of the constraints are
  \bea{GUMG_Algebra_offshell} %From GUMGr_NDV file \ref{PB_GUMGr_B1}
   \begin{array}{lll}
     \PB{\sint {\xi}^\zn\Hs_\zn}{\sint {\eta}^\zm\Hs_\zm}
     & {\!=\!} & \sint \LieB{\zn}{\xi}{\eta} \Hs_\zn
     \;\; \weq \;\; 0
    \,, \vphantom{\big|^I}\\
     \PB{\sint {\xi}^\zn\Hs_\zn}{\sint \eta^{\zm}\CT_{,\zm}}
     & {\!=\!} &
     \sint {\xi}^\zn \Dvg{\eta}\, \CT_{,\zn} + \sint \Dvg{\xi} \Dvg{\eta} \,\OOmega\, \CT
     \;\; \weq \;\; \sint \Dvg{\xi} \Dvg{\eta} \,\OOmega\, \hmg{\CT}
    \,, \vphantom{\big|^I}\\
     \PB{\sint {\xi}^{\zn}\CT_{,\zn}}{\sint {\eta}^{\zm}\CT_{,\zm}}
     & {\!=\!} &
   %%  \sint \big(\Dvg{\xi} {\Dvg{\eta}}_{,\zn} \!{-} {\Dvg{\eta}}{\Dvg{\xi}}_{,\zn}\big) \FF^2 \WW^2 \g^{\zn\zm} \Hs_\zm
     \sint \big(\Dvg{\xi} \ader_{\zn} {\Dvg{\eta}} \big) \FF^2 \WW^2 \g^{\zn\zm} \Hs_\zm
     \;\; \weq \;\; 0
    \,, \vphantom{\big|^I}\\
   \end{array}
   \label{GUMG_Algebra_onshell}
  \eea
where $\WW(\argsqrg)$ and $\OOmega(\argsqrg)$ are defined in (\ref{def_WW_GUMG}) and (\ref{def_OOmega_TTheta}), and the vectors $\xi^\zn(\sx)$, $\eta^\zm(\sx)$ are arbitrary local\HIDE{ test} functions.\HIDE{ For readability,} We use the notation $\,\Dvgn{\xi}\defeq \Dvgp[\zn]{\xi}\,$ for the\HIDE{ spatial} vector divergence,
$(f \ader_{\zn} g) \defeq (f g_{,\zn} {-} f_{,\zn} g)$ for the antisymmetric derivative\HIDE{ generalized Wronskian}, and $\LieB{\zn}{\xi}{\eta} \teq {\xi}^\zm\partial_\zm {\eta}^\zn {-} {\eta}^\zm\partial_\zm {\xi}^\zn $ for the Lie bracket.

The right-hand side, revealing nontrivial contributions on the complete constraint surface ($\Hs_\zn \teq 0$, $\CT_{,\zn} \teq 0$), demonstrates that the rank of the Poisson bracket matrix is \emph{nonconstant}\HIDE{ on this surface}. On the \emph{non-GR branch} ($\Hs_\zm \teq 0$, $\CT_{,\zn} \teq 0$, $\CT \tneq 0$), the constraints $\CT_{,\zn}$ and the longitudinal part of $\Hs_\zm$ form a second-class\HIDE{ conjugate} pair, while the \emph{transverse}\footnote{
  We refer to the diffeomorphisms (\ref{transverse_diffeomorphisms}) with divergence-free gauge parameters as \emph{transverse}.
  It is worth clarifying that this decomposition is covariant with respect to a background metric of unit determinant, rather than with respect to the physical metric $\g_{\zm\zn}$.
  %It is worth clarifying that this decomposition is not covariant with respect to the physical metric $\g_{\zm\zn}$, but instead is with respect to a background metric with unit determinant.
  %To avoid confusion, note that this is not a covariant decomposition with respect to the physical metric $\g_{\zm\zn}$, but rather with respect to a background metric of unit determinant.
  % To avoid confusion, note that this does not imply a covariantly divergence-free vector with respect to the physical metric.
  % Rather, the gauge parameter can be viewed as covariantly divergence-free with respect to a background metric of unit determinant.
  % The divergence is covariant rather with respect to a background metric of unit determinant.
} 
momentum constraint,
  \beq{transverse_diffeomorphisms}
    \sint{\xi}_{\isst}^\zn \Hs_\zn\,,
    \qquad \Dvgn{\xi_{\isst}}\equiv \Dvgi[\zn]{\xi_{\isst}} = 0
    \,,
  \eeq
is first-class.
This follows from the fact that form factors\HIDE{ $\xi^\zn$} of $\Hs_\zn$ appear on the right side of (\ref{GUMG_Algebra_onshell}) only through divergence (form factors at $\CT_{,\zn}$ always form divergence combination due to the constraint's gradient form).
On the \emph{GR branch} ($\Hs_\zn \teq 0$, $\CT \teq 0$), the rank of the constraint matrix (\ref{GUMG_Algebra_onshell}) is zero implying that all constraints are first-class.
This offers an alternative viewpoint on the irregular nature of the {\GUMG} constraint structure, as having a Poisson bracket matrix with non-constant rank.

%\newpar

 %% v.2024-03
 %% We will not use the explicit decomposition of the spatial diffeomorphisms constraints into longitudinal and transverse components, but some remarks are in order. 
 %% -= SIDE COMMENT =-
Transverse spatial diffeomorphisms form a gauge subalgebra in the full algebra of spatial diffeomorphisms
 $ % \beq{}
  \PB{\sint {\xi}_{\isst}^\zm\Hs_\zm}{\sint {\eta}_{\isst}^\zn\Hs_\zn} {\;=\;} \sint {\lambda}_{\isst}^\zm \Hs_\zm,
 $ % \eeq
where ${\lambda}_{\isst}^\zn \teq \LieB{\zn}{{\xi}_{\isst}}{{\eta}_{\isst}} $, and $\Dvgi{\lambda_{\isst}} \teq 0$ for $\Dvgi{\xi_{\isst}} \teq 0$ and $\Dvgi {\eta_{\isst}} \teq 0$.
The transverse components of a vector field, ${\xi}_{\isst}^\zn \teq \xi^\zm \tProj_\zm^{\,\zn}$, can be extracted using the projector $\tProj_\zn^{\,\zm} \defeq \delta_\zn^{\,\zm} - \partial_\zn \tfrac{1}{\sLapl_0} \partial_\zk \g_0^{\zk\zm}$, where $\frac{1}{\sLapl_0}$ is the inverse of the symmetric second-order spatial differential operator $\sLapl_0 \!\defeq \partial_\zm \g_0^{\zm\zn} \partial_\zn$, constructed using an auxiliary nondegenerate symmetric form\footnote{
 If the operator $\sLapl_0$ is degenerate, as for compact manifolds, its inverse is\HIDE{ uniquely} defined to be symmetric, with its\HIDE{ left and right} kernels coinciding with those of $\sLapl_0$ (the Moore-Penrose inverse \cite{MoorePenrose}).
 Note that\HIDE{ an operator of the form} $\sLapl_0\!\! \tdefeq \partial_\zm \g_0^{\zm\zn} \partial_\zn$ is covariant under coordinate transformations with the unit Jacobian, corresponding to the gauge freedom of the model.
} $\g_0^{\zm\zn}$.
Transverse diffeomorphisms
 $ % \beq{transverse_diff_projector}
    \sint{\xi}_{\isst}^\zn \Hs_\zn
    \teq
     \sint \xi^\zn
     \tProj_\zn^{\,\zm}
     \Hs_\zm
   % \,.
 $ % \eeq
correspond to spatial coordinate transformations with a unit determinant of the Jacobian.
The second-class longitudinal part of\HIDE{ the spatial diffeomorphism constraints} $\Hs_\zn \teq 0$, can be expressed as
 $ % \beq{diff_longit}
  \tfrac{1}{\sLapl_0} \partial_\zm (\g_0^{\zm\zn} \Hs_\zn)
  {\;=\;} 0
  \,.
 $ % \eeq
 Consequently, the constraint term in the action can be decomposed into the first- and second-class contributions
 $ % \beq{}
  \sint \Ns^\zn \Hs_\zn
  \,\teq\,
  \sint \Ns_{\isst}^\zn \Hs_\zn {\,-\,} \sint \Dvg{\Ns} \tfrac{1}{\sLapl_0} \partial_\zm (\g_0^{\zm\zn} \Hs_\zn)
  \,,
 $ % \eeq
where $\Ns_{\isst}^\zn \tdefeq \tProj_\zn^{\,\zm} \Ns^\zm$ and $\Dvg{\Ns} \tdefeq \Dvgp[\zn]{\Ns}$.
 The parametric dependence of such longitudinal-transverse decomposition on $\g_0^{\zm\zn}$ does not compromise consistency.
 Change of $\g_0^{\zm\zn}$ induces a shift by a transverse combination,
  $\var_{\!\g_0} \tProj_\zn^{\,\zm} =  - \partial_\zn \tfrac{1}{\sLapl_0} \partial_\zk (\var \g_0^{\zk\zl}) \tProj_\zk^{\,\zm} $,
 which\HIDE{ obviously} preserves divergence-free property of the transverse vectors. Likewise,
  $\var_{\!\g_0} \tfrac{1}{\sLapl_0} \partial_\zl \g_0^{\zl\zn}  \Hs_\zn
  =  \tfrac{1}{\sLapl_0} \partial_\zl (\var \g_0^{\zl\zk}) \tProj_\zk^{\,\zn}  \Hs_\zn$,
 which indicates that the ambiguity in the longitudinal second-class component\HIDE{ of spatial diffeomorphisms} stems from arbitrary shifts by first-class transverse diffeomorphisms.

%\newpar

The presence of second-class constraints on the non-GR branch imposes on-shell conditions on the corresponding Lagrange multipliers\HIDE{ (\ref{extAction_GUMG})} \cite{Henneaux:1992ig}. The\HIDE{ linear} system of\HIDE{ equations on the Lagrange multipliers from} consistency equations for constraints yields a particular solution
\if{
\footnote{
 The first relation (\ref{def_Uo}) is the consequence of the conservation equation for the secondary constraint $\CT_{,\zm}=0$, so is directly related to the solution (\ref{DvgNs_from_CS_GUMG})\HIDE{ with trivial transverse part}. In is defined up to arbitrary shifts of the transverse vectors, which corresponds to the shift of the second-class constraint with the first-class constraints. The second condition (\ref{lmv_sol}) comes from the conservation relation of the primary constraint within the extended setup.
}
}\fi
\vspace{-1mm}
 \bea{lm_solutions_nonGR}
  \begin{array}{|llllll}
   \;\Ns^\zn
    &\! \eomeq \!&
    U_0^\zn %%(\g_0,\g,\pg)
    \;\defeq\;
     \g_0^{\zn\zm}\partial_\zm \frac{1}{\sLapl_0}
     \Big( \OOmega^{-1}\TTheta \trpg
      - \OOmega^{-1}\frac{\sint \OOmega^{-1}\TTheta \trpg}{\sint \OOmega^{-1}}
      \Big)
    \,,
    \label{def_Uo}
   %& \quad
   \\
   \;{\lmCT}^\zn
    &\!\eomeq\!&
    0
    \,,
    \label{lmv_sol}
  \end{array}
 \eea
where $\Dvgi[\zn]{U_0}$ coincides with $\inh{U}_0$ from (\ref{DvgNs_from_CS_GUMG}). Since the consistency conditions constrain only divergences\HIDE{ of these vector fields, } $\Dvg[\zn]{\Ns}$ and $\Dvg[\zn]{\lmCT}$, the general solution is obtained by adding arbitrary transverse vectors on the right-hand sides of (\ref{lm_solutions_nonGR}).

As noted in \cite{Barvinsky:2019agh}, the GR and non-GR branches (\ref{branches_GUMG}) describe different numbers of degrees of freedom. While the functional dimensions of the configuration space and constraint surface are\HIDE{ almost} the same, the difference lies in the number of first-class constraints and the corresponding gauge-equivalence classes. The GR branch has ${\Ddim(\Ddim{-}3)}/{2}$ local physical degrees of freedom, which is one less than ${(\Ddim{-}1)(\Ddim{-}2)}/{2}$ on the non-GR branch.

     \subsection{Exceptional cases}
     \label{SSect:Except_Partic_Cases}
      \hspace{\parindent}
To complete the discussion of the standard formulation of {\GUMG}, we note important exceptional cases within the family (\ref{ADM_Action_GUMG}) based to its canonical description.

%% Functional characteristic parameter $\WW(\argsqrg)$ (\ref{def_WW_GUMG}) is the factor in the secondary canonical constraint $\CT_{,\zm}$ (\ref{secondary_GUMG}, \ref{def_CT}), whereas parameter $\OOmega(\argsqrg)$ (\ref{def_OOmega_TTheta}) is the factor at the only on-shell nonvanishing term in the right-hand side of the Poisson bracket matrix (\ref{GUMG_Algebra_onshell}). Their exceptional values characterize the exceptional cases of {\GUMG}.

%% Note also that values of parameter $\WW$, entering the equation of state (\ref{}) of the cosmological perfect fluid differ physically different cosmological scenarios.

\begin{description}
  \item[\HIDE{$\bullet\;$} Unimodular case $\WW \tequiv \HIDE{ \wwc \teq} {-}1$.]

  This is the case of \emph{unimodular} gravity corresponding to $\FF(\argsqrg)  \propto \sqrg^{-1}$, (\ref{UMG_part_case}).
   Its exceptional status in the\HIDE{ \GUMG} family arises from $\OOmega(\argsqrg) \tequiv0$. Relations (\ref{GUMG_Algebra_onshell}) imply that the rank of the Poisson brackets matrix\HIDE{ of the complete set} of constraints is regular and equals zero. Thus, all constraints ($\Hs_\zm$, $\CT_{,\zm}$) are first-class and there is no division of the physical space into two dynamical branches.

  This theory describes $\Ddim(\Ddim{-}3)/2$ local gravitational degrees of freedom, similar to {\GR}\HIDE{ general relativity}.
  \if{
    \footnote{Strictly speaking \UMG\ has one additional spacetime-global parameter --- nonfixed cosmological constant. At the same time it has fixed spacetime-volume global degree of freedom, which is free in general relativity \cite{Barvinsky:2022guw}.}
  }\fi

  \item[\HIDE{$\bullet\;$} Other cases with $\OOmega \tequiv 0$.]

  There are non-unimodular cases where $\OOmega(\argsqrg) \tequiv 0$.
  These cases correspond to restriction functions
    $ %\beq{FF_for_OOmega_0}
     \FF(\argsqrg) \propto  |\, b {\,+} \sqrg^{-1}|
    $ % \eeq
  with some overall positive constant and another arbitrary constant $b\neq0$.
  %% \footnote{Case $b=0$ corresponds to unimodular gravity which is among $\OOmega=0$ family. At the same time limit $b\to \infty$, $b c \to \const $ corresponds to $\WW=0$ exceptional case which corresponds to $\OOmega=1$.}
 %
  Due to $\OOmega=0$ for $\WW \neq \const $ the consistency procedure starting from the secondary constraint (\ref{consistency_secondary_GUMG})\HIDE{,(\ref{def_GUMG_CS})} generate a chain of further constraints:
  $\inh{\frac{\trpg}\sqrg}$, etc. , making these cases exceptional and potentially  pathological.

  %% $b=0$ corresponds to exceptional case of unimodular gravity ($\WW \tequiv -1$). However, the family (\ref{FF_for_OOmega_0}) in the limit $b\to \infty$, $b c \to \const $ contain another exceptional case $\WW \tequiv 0$. It formally corresponds to $\OOmega\tequiv1$, but in this case structure $\OOmega$ does not appear in Poisson bracket matrix and Dirac consistency procedure.

%\end{description}

%\begin{description}
  \item[\HIDE{$\bullet\;$} Dust case $\WW \HIDE{\tequiv \wwc} \teq 0$.]

  This case corresponds to a pressureless perfect fluid with
    $ %\beq{}
     \FF(\argsqrg) \propto 1
     \,.
    $ % \eeq
  %% with some positive constant.
  %% in the restriction conditions,  where $c$ is the arbitrary positive constant.% (which in what follows we let equal $1$.)
   %
  In is notable because, unlike generic {\GUMG}, it possesses a regular canonical constraint structure. The regularity occurs because the consistency condition of the primary constraints\HIDE{ ($\Hs_\zm=0$)} (\ref{consistency_primary_GUMG}) does not generate secondary constraints $\CT_{,\zm}$ (\ref{secondary_GUMG}).
  All\HIDE{ $(\Ddim{-}1)$} primary constraints $\Hs_\zm$ are first-class, exhibiting an unbroken spatial-diffeomorphism gauge invariance.%
  \if{
    \footnote{Somewhat analogous situation occurs for the relativistic point particle when in the Polyakov form one restricts the one-dimensional reparametrization parameter to constant value. Then such theory describes the particle moving along geodesics equation without the constraint equation which restricts the square of 4-velocities.% Which one has when applying correspondent gauge after the complete set of the equations of motion are obtained.
  }
  }\fi
  \if{
    \footnote{There is no reason to isolate the branch with $\Hl=0$ to restore the complete analogy with generic {\GUMG}. However, when external interactions are spacetime-covariant this is formally possible.}
  }\fi

  However, the extended action (\ref{extAction_GUMG}) also describes the \HIDE{cold }dust case. The manually added constraint $\CT_{,\zm} \to \alpha {\Hl}_{,\zm}$, with some constant $\alpha$ (instead of $\WW\FF)$, serves as a partial gauge-fixing condition for the longitudinal part of\HIDE{ spatial diffeomorphisms constraints} $\Hs_\zn$.
  This\HIDE{ theory} describes ${(\Ddim{-}1)(\Ddim{-}2)}/{2}$ local degrees of freedom, as in a general {\GUMG} theory\HIDE{ on the non-GR branch}.

\end{description}

As we have already mentioned, theories in which $\WW(\argsqrg)$ or $\OOmega(\argsqrg)$ vanish at certain values of $\sqrg$ exhibit additional branching\HIDE{ of the physical space} and can therefore also be considered exceptional.

%%-------------------------=%        ******         %=-------------------------%%
%%-------------------------=%        ******         %=-------------------------%%
%%-------------------------=%        ******         %=-------------------------%%

\newpage
 \section{Alternative Action for $\WW = \const$ {\GUMG} Theories}
  \label{Sect:AltAction_wGUMG}
  \hspace{\parindent}
In this section we derive and discuss an alternative Henneaux--Teitelboim-like form of the action for the {\wGUMG} subfamily defined by (\ref{AMG_part_case})
\HIDE{of generalized unimodular gravity theories}
 \beq{AMG_part_case_COPY}
    \WW \,\equiv\, \wwc = \const %\quad (\wwc \neq 0, -1)
    \qquad \Leftrightarrow \qquad
    \Fw \propto \sqrg^{\wwc}
    \,,
  \eeq
keeping in mind the exceptional status of the cases $\wwc \teq {-}1$ and $0$. The constancy of\HIDE{ the equation-of-state parameter} $\WW(\argsqrg)$ implies that the derivative\HIDE{ characteristic} functions (\ref{def_OOmega_TTheta}) also take the simple forms
 \bea{def_OOmega_TTheta_AMG}
  %\OOmega_{\scriptscriptstyle \wwc}(\argsqrg)
  \OOmega(\argsqrg) \,\equiv\, \wwc + 1 \,,\;
  \quad
  %\TTheta_{\!\scriptscriptstyle \wwc}(\argsqrg)
  \TTheta(\argsqrg) \,\equiv\, 0 \,.
 \eea
Following the reasoning that led Henneaux and Teitelboim to derive the so-called covariant action for unimodular gravity \cite{Henneaux:1989zc}, we obtain an alternative action\HIDE{ (\ref{Alt_Action_GUMG})} that extends their approach to {\wGUMG} models. The locality of this procedure within this subfamily makes it a valuable intermediate step toward analyzing the general {\GUMG} case.

%%-------------------------=%        ******         %=-------------------------%%
%%-------------------------=%        ******         %=-------------------------%%

  \subsection{Parameterized canonical action}
   \label{SSect:ParameterizedAction_AMG}
   \hspace{\parindent}
The first step is to parameterize the extended action (\ref{extAction_GUMG}) restricted to {\wGUMG},
  \bea{extAction_AMG} % (\ref{extAction_AMG}) %was (\ref{extAction_BLRG})
    \SSS^{\iwwc}_{\iE} [\g,\pg,\Ns,\lmCT]
    &=& \!\int \!d\tx \hspace{1pt} \dsx %\int dt\, \dsx
    \,\Lib
     \pg^{\zm\zn}\dot{\g}_{\zm\zn}
     - \Fw \Hl
     - \Ns^\zn \Hs_\zn
     - \lmCT^\zn \CTw{}_{,\zn}
    \Rib \,,
  \eea
where\HIDE{ the limitation to \wGUMG\ case manifests in} $\FF \mapsto \Fw \defeq \sqrg^\wwc$ and $\CT \mapsto \CTw \defeq \wwc\Fw\Hl$.
The secondary constraint structure $\CTw$ contains $\Hl$, the Hamiltonian constraint\HIDE{ structure} of general relativity. However, the constraint $\CTw{}_{,\zn}{=\,}0$ is functionally incomplete, as it lacks a spatially homogeneous mode.
 %%As a result, the {\GR} Hamiltonian constraint abandoned by the restriction (\ref{GUMG_restriction}) is not fully restored.
Parameterizing the action with the Hamiltonian $\FF\Hl$ should naturally restore this missing component.

Motivated by \cite{Henneaux:1989zc}, in this section we adopt a naive, local parametrization prescription,
   \bea{paramAction_naive_AMG} %(\ref{paramAction_naive_AMG})
    \SSS^{\iwwc loc}_{par}[\g,\pg,\tf^\io\!,\ptf_\io,\replm^\io\!,\repNs,\replmCTn]
    \!&=&\!\!\! \int \!d\taux \hspace{1pt} \dsx %\int d\taux\, \dsx
    \,\Lib
     \pg^{\zm\zn}\tauxdot{\g}_{\zm\zn}
     \!+ \ptf_\io\,\tauxdot{\tf}^\io\!
     - \replm^\io ( \ptf_\io {+} {\Fw \Hl}\! )
     - \repNs^\zn \Hs_\zn
     \!- \replmCTn^\zn \CTw{}_{,\zn}
     \Rib
     \,,
   \qquad% \nonumber\\
  \eea
which introduces time parametrization into the canonical action
by adding an auxiliary canonical field pair, ${\tf}^\io(\taux,\sx)$ and $\ptf_\io(\taux,\sx)$, and converting the Hamiltonian (shifted by $\ptf_\io$) into a new constraint with the Lagrange multiplier $\replm^\io(\taux,\sx)$.
 %% While this prescription is always valid in mechanics, field theory presents additional challenges, urging to verify the physical equivalence.
While this prescription is always valid in mechanics, it generally fails in field theory, making it necessary to verify physical equivalence.

The extended theory (\ref{extAction_AMG}), with the constraint set $\big(\Hs_\zn, \CTw{}_{,\zn}\big)$, and the parameterized theory (\ref{paramAction_naive_AMG}), with the enlarged constraint set $\big( (\ptf_\io {+} {\Fw \Hl} ), \Hs_\zn, \CTw{}_{,\zn}\big)$, are physically equivalent if the following two conditions are satisfied. First, the rank of the Poisson bracket matrices for both sets must be preserved, meaning the enlarged constraint space introduces a new \emph{first-class} constraint. This constraint may be either the newly added one or a linear combination of the new and original constraints, provided the coefficient of the new constraint is full-rank.
%%\footnote{
 %%This new first-class constraint may be either the newly introduced constraint itself or a linear combination involving the new constraint (with a full-rank coefficient) and the original constraints\HIDE{ from initial system  $\big(\Hs_\zm,\CTw{}_{,\zm}\big)$}.
 %%%%The coefficient at the parametrization constraint in the first-class linear combination should be full-rank\HIDE{ at least on the constraint surface}.
%}.
Second, an admissible (partial) gauge fixing must exist such that, when imposed on the parameterized action (\ref{paramAction_naive_AMG}), it recovers the original extended action (\ref{extAction_AMG}) upon reduction.
%%\footnote{By means of (partial) reduction with respect to two second-class constraints, one of which is the ``correspondence'' gauge-fixing constraint.}
%%(See more detailed discussion in the Appendix \ref{ASect:parametrization}.)

From the Poisson-bracket relations %% which complete (\ref{GUMG_Algebra_offshell})
  \bea{parAMG_Algebra_offshell} %From GUMGr_NDV file \ref{PB_GUMGr_B1}
   \begin{array}{lll}
     \PB{\sint f(\ptf_\io {+} {\Fw \Hl})}{\sint g(\ptf_\io {+} {\Fw \Hl})}
     \; = \; % & {\!=\!} &
     \sint \big( f {g}_{,\zn} \!{-} {g}{f}_{,\zn}\big) \Fw^2 \g^{\zn\zm} \Hs_\zm
     \; \weq \; 0
    \,, \vphantom{\big|^I}
    \\
     \PB{\sint f(\ptf_\io {+} {\Fw \Hl})}{\sint {\eta}^\zn \Hs_\zn}
    \; = \; % & {\!=\!} &
     (\wwc{+}1) \sint f \Dvg{\eta} \, \Fw\Hl
     + \sint f \eta^\zn \, {(\Fw\Hl)}_{,\zn}
     \; \weq \; \frac{\wwc{+}1}{\wwc} \sint f \Dvg{\eta} \, \hmg{\CTw}
    \,, \vphantom{\big|^I}
    \\
     \PB{\sint f(\ptf_\io {+} {\Fw \Hl})}{\sint \eta^{\zn} \CTw{}_{,\zn}}
     \; = \; % & {\!=\!} &
    {-} \wwc \sint \big(f\Dvg{\eta}_{,\zn} {-} \Dvg{\eta} f_{,\zn}\big) \Fw^2 \g^{\zn\zm} \Hs_\zm
     \; \weq \; 0
    \,, \vphantom{\big|^I}
    \\
\end{array}
   \label{parwGUMG_Algebra_onshell}
  \eea
where $f$, $g$, and $\eta^\zn$ are arbitrary local form factors, and $\Dvg{\eta}$ stands for divergence $\Dvg[\zn]{\eta}$, it can be noticed that the newly introduced constraint $(\ptf_\io {+} {\FF \Hl} )$ is not first-class, because it does not commute with the longitudinal part of the spatial diffeomorphism constraint. Therefore, it is necessary to verify whether a linear combination $\CPI \defeq (\ptf_\io {+} \Fw\Hl) {\,+\,} \alpha^\zn\Hs_\zn {\,+\,} \beta\,\inhw{\CTw}$, exists\footnote{
  Here $\alpha^\zm$ and $\beta$ generally are some linear operators. Also we used the equivalence (\ref{secondary_GUMG_inh}).
},
which is in involution with all the constraints in (\ref{paramAction_naive_AMG}).
\if{ %% TWO-COLUMN VERSION
Thus one should check if there is a linear combination of the form \footnote{
 We take into account (\ref{secondary_GUMG_inh}), implying that the constraint $\int \eta^\zm \CTw{}_{,\zm}$ is equivalent to $\sint g \inhw{\CTw}=\sint \inhw{g} \CTw=\sint \inhw{g} \inhw{\CTw}$ with $\inhw{g}=-\Dvg{\eta}$.
}
  \beq{lin_comb_CPI}
    \CPI = (\ptf_\io {+} \Fw\Hl) + \alpha^\zm\Hs_\zm + \beta\inhw{\CTw}
    \,,
  \eeq
where  $\alpha^\zm$, $\beta$ --- generically are some linear operators, so that $\CPI$ is first-class
 %% To avoid dealing with series in powers of $\tf^\io$, $\tf^\io_{,\zm}$ we find particular combination (\ref{lin_comb_CPI}) which solves the first-class conditions
 \beq{}
  \left\{
  \begin{array}{l}
    \PB{\CPI}{(\ptf_\io {+} \Fw\Hl)} \weq 0 \,,\\
    \PB{\CPI}{\Hs_\zm} \weq 0 \,,\\
    \PB{\CPI}{\inhw{\CTw}} \weq 0 \,.\\
  \end{array}
  \right.
 \eeq
}\fi
The general solution involves $\alpha^\zn$ --- an arbitrary transverse vector field $\alpha_{\isst}^{\zn}$ satisfying $\Dvgi[\zn]{\alpha_{\isst}} \teq 0$, and  $\beta$ --- multiplication by $-\tfrac{1}{\wwc}$. The arbitrariness of the transverse combinations of the spatial diffeomorphisms reflects the possibility of adding other first-class constraints.
The simplest particular solution
\if{
  \footnote{
   Since any spatially inhomogeneous function can be represented as divergence of some vector field here\HIDE{ (in the case of secondary constraint being inhomogeneous part of the initial Hamiltonian density)} the expression (\ref{CPI_AMG}) could also be obtained just by subtracting spatially nonlocal total divergence $ \partial_\zm (\g_0^{\zm\zn}\partial_\zn \tfrac{1}{\sLapl_0}\Fw\Hl)$ from the Hamiltonian density without involving the constraints (operators are defined in Section \ref{SSect:GUMG_Constraint_Structure}).
   This is the case of spatially nonlocal minimal time parametrization which is always possible (see discussion in Section \ref{ASect:parametrization}).
  }
}\fi
with $\alpha_{\isst}^{\zn}=0$ implies
  \beq{CPI_AMG}
     \CPI
     \;=\; \ptf_\io + \Fw \Hl - \frac{1}{\wwc}\inhw{\CTw}
     \;=\; \ptf_\io + \hmg{\Fw \Hl}
     \,.
  \eeq
The existence of such first-class constraint ensures that the rank of the Poisson bracket matrix of constraints is preserved while parameterizing.

On the right-hand side of (\ref{CPI_AMG}), only the homogeneous part of the Hamiltonian, $\Fw \Hl$, survives. This is how spatial nonlocality arises in the gauge structure of the local {\wGUMG} parameterized theory (this issue is further addressed in Section \ref{SSect:GaugeStructure_AMG}). The new first-class  parametrization constraint generates spatially local gauge transformations in the auxiliary variables sector and nonlocal (spatially homogeneous) transformations\HIDE{ in the sector} of the metric fields.

The gauge-fixing condition that transforms\! (\ref{paramAction_naive_AMG})\! into\! (\ref{extAction_AMG})\HIDE{ upon reduction},
 %% referred to as
called the\! \emph{correspondence gauge},\! is
 %%can be chosen as
 \beq{correspondence_gauge_AMG}
  \tf^\io(\taux,\sx) - \taux = 0
  \,.
 \eeq
The new first-class constraint (\ref{CPI_AMG}) is transversal to the gauge-fixing condition (\ref{correspondence_gauge_AMG}), which supports the admissibility of this gauge. It is clear how the reduction of the partially gauge-fixed parameterized system with respect to the second-class constraints  $\ptf_\io {+} {\FF \Hl}$ and $\tf^{\io} {-} \taux$ leads to (\ref{extAction_AMG}): the reduction eliminates these two constraints while restoring the Hamiltonian $\FF\Hl$ on the reduction surface from the term $\ptf_\io\,\tauxdot{\tf}^\io$ (where $\ptf_\io \to -\FF\Hl$, $\tauxdot{\tf}^\io \to 1$). This competes the check of the physical equivalence between the parameterized and original theories.

\if{ %% v.2024-10
  By the same reason after implementation of $\tf^{\io}{-}\taux$ constraint into the parameterized action (\ref{paramAction_naive_AMG})\HIDE{ (with some new Lagrange multiplier)} one may perform a reduction with respect to two conjugated constraints $(\ptf_\io {+} {\FF \Hl})$ and $(\tf^\io{-}\taux)$ \footnote{Such reduction is an equivalent reduction since $\PB{\big(\ptf_\io {+} {\FF \Hl}\big)(\taux,\sx)\,}{\tf^{\io}(\taux,\sx'){-}\taux}=-\delta(\sx-\sx')$, so $\tf^\io$, $\ptf_{\io}$, $\replm^{\io}$ and the Lagrange multiplier for gauge-fixing condition $(\tf^\io{-}\taux)$ form the set of auxiliary variables, expressible from the system of variational equations of motion with respect to these variables. Which guarantees that the reduction in the action leads to the physically equivalent model.} so that the reduced action will coincide with\HIDE{ extended action} (\ref{extAction_AMG}). Since the needed gauge-fixing is predefined in the naive parametrization the only thing to be verified the rank condition or, equivalently, the presence of the new independent first-class constraint.
}\fi

\newpar

The success of the \emph{local} implementation of time parametrization in {\wGUMG}, (\ref{paramAction_naive_AMG}),
stems from the fact that the average-free part of the local Hamiltonian, $\Fw\Hl$, is constrained to zero by the secondary constraint (\ref{secondary_GUMG_inh}), which is equivalent to $\wwc \inh{\Fw\Hl} \teq 0$. This is nontrivial, as an analogous local\HIDE{ time} parametrization is generally not possible in field theory.

To emphasize the role of the\HIDE{ secondary} constraint (\ref{secondary_GUMG_inh}) in ensuring the locality of the parameterized action (\ref{paramAction_naive_AMG}), consider an alternative stepwise approach to parameterization, which provides a foundation for extending the procedure to general {\GUMG} models. Starting from the extended action (\ref{extAction_AMG}), one can always introduce a set of canonically conjugate spatially homogeneous fields, $\hmg{\tf}^\io(\taux)$ and $\hmg{\ptf}_\io(\taux)$, along with the corresponding Lagrange multiplier $\hmg{\replm}^\io(\taux)$, which convert the Hamiltonian into a homogeneous constraint.
Applied to (\ref{extAction_AMG}), this yields an equivalent homogeneously parameterized action
  \beq{hmgparAction_AMG}
   \SSS^{\iwwc}_{\,\hmg{\!par\!}} [\g,\pg,\hmg{\tf}^\io\!,\hmg{\ptf}_\io,\hmg{\replm}^\io\!,\repNs,\replmCT]
    = \!
   \int \! d\taux\hspace{1pt} \dsx \, \big(
     \pg^{\zm\zn}\tauxdot{\g}_{\zm\zn}
     \!+  \hmg{\ptf}_\io\,\tauxdot{\hmg{\tf}}^\io
     -   \hmg{\replm}^\io ( \hmg{\ptf}_\io {+} \Fw \Hl )
     - \repNs^\zn \Hs_\zn
     - \replmCT^\zn \CTw{}_{,\zn}
 \big) \,.
 \eeq
 % (\label{hmgparAction_GUMG})
This action is equivalent to (\ref{extAction_AMG}) because $( \hmg{\ptf}_\io {+} \hmg{\FF \Hl} )$ is first-class and the correspondence gauge $\hmg{\tf}^\io {-\,} \taux \teq 0$ directly reduces it to the original (\ref{extAction_AMG}). This property is universal\footnote{
 To be precise, the first-class property of the homogeneous parametrization constraint holds for so-called first-class Hamiltonians. However, any local canonical Hamiltonian can be made first-class by shifting it by a certain combination of second-class constraints \cite{Henneaux:1992ig}, ensuring the existence of a linear combination of first class.
} 
following from the consistency of the dynamical and gauge canonical structures \cite{Henneaux:1992ig}.
In the second step one introduces to a theory a pair of spatially average-free canonical fields, $\inh{\tf}^\io(\taux,\sx)$, $\inh{\ptf}_\io(\taux,\sx)$, along with a corresponding Lagrange multiplier $\inh{\replm}^\io(\taux,\sx)$, via the trivial, pure-gauge combination
   \beq{inh_trivial_sector}
    \sint \Lib
     \inh{\ptf}_\io\,\tauxdot{\inh{\tf}}^\io
     {\,-\,} \inh{\replm}^\io \,\inh{\ptf}_\io
     \Rib \,.
   \eeq
This step localizes the auxiliary sector forming the local field combinations,
  $ % \bea{}
    \sint \Lib
     \hmg{\ptf}_\io\,\tauxdot{\hmg{\tf}}^\io
     {\,-\,} \hmg{\replm}^\io \,\HIDE{(} \hmg{\ptf}_\io %% + \FF \Hl )
     \Rib
     +
   \sint \Lib
     \inh{\ptf}_\io\,\tauxdot{\inh{\tf}}^\io
     {\,-\,} \inh{\replm}^\io \,\inh{\ptf}_\io
     \Rib
   \;=\;
    \sint \Lib
     \ptf_\io \tauxdot{\tf}^\io
     {\,-\,} \replm^\io \HIDE{(} \ptf_\io  %%+ \hmg{\FF \Hl} )
     \Rib \,,
  $ % \eea
and leads, in view of $\sint\, \hmg{\replm}^\io \, \FF \Hl  \teq \sint\, \hmg{\replm}^\io \, \hmg{\FF \Hl} = \sint\, {\replm}^\io \, \hmg{\FF \Hl} $, to the following equivalent representation
  \beq{parAction_cons_AMG}
   \SSS^{\iwwc}_{par} [\g,\pg,\tf^\io\!,\ptf_\io,\replm^\io\!,\repNs,\replmCT]
    = \!
   \int \! d\taux\hspace{1pt} \dsx \, \big(
     \pg^{\zm\zn}\tauxdot{\g}_{\zm\zn}
     \!+ {\ptf}_\io\,\tauxdot{{\tf}}^\io
     - {\replm}^\io ( {\ptf}_\io {\hspace{0.5pt}+\hspace{0.5pt}} \hmg{\Fw \Hl} )
     - \repNs^\zn \Hs_\zn
     - \replmCT^\zn \CTw{}_{,\zn}
 \big) \,.
 \eeq
\vspace{-2mm}

So far, the scheme\HIDE{ universally} applies to any field theory, but the resulting action is not local, as it still contains the nonlocal term $\sint \,{\replm}^\io \,\hmg{\FF \Hl}$. A distinctive feature of {\wGUMG} is the secondary constraint term, $ - \sint \replmCT^\zn (\wwc\Fw\Hl)_{,\zn} \teq \sint \Dvg[\zn]{\replmCT}\, (\wwc\inhw{\Fw\Hl})$, which through redefining Lagrange multipliers as $\Dvg[\zn]{\replmCT} \to \Dvg[\zn]{\replmCTn} \,\teq\, \Dvg[\zn]{\replmCT} {\,+\,} \wwc^{-1}\inh{\replm}^\io$\HIDE{ in the action}, incorporates the missing average-free part of the Hamiltonian into
 %%the otherwise nonlocal parametrization constraint
$\sint \replm^\io( {\ptf}_\io {\hspace{0.5pt}+\hspace{0.5pt}} \hmg{\Fw \Hl} )$, finally achieving localization (\ref{paramAction_naive_AMG}). Without this constraint, localization of the {\wGUMG} action fails. As discussed later, for general {\GUMG} models, localization of %% the analog of the parameterized action
(\ref{parAction_cons_AMG}) cannot be completed in the same way, since for nonconstant $\WW(\argsqrg)$, the secondary constraint is no longer an average-free part of the local Hamiltonian.

We end this excursus with the note that the local constraint $\ptf_\io {\hspace{0.5pt}+\hspace{0.5pt}} {\Fw \Hl}$ in (\ref{paramAction_naive_AMG}) is not first-class, as it decomposes into three constraints:
  \beq{parConstraint_decomposition}
    \ptf_\io {\hspace{0.5pt}+\hspace{0.5pt}} \Fw\Hl = \hmg{\ptf_\io {\hspace{0.5pt}+\hspace{0.5pt}} \Fw\Hl} + \inh{\ptf}_\io + \inh{\Fw\Hl}
    \,,
  \eeq
of which the first two are first-class and form $\CPI$, (\ref{CPI_AMG}), while the third, proportional to the secondary constraint, is second-class. A similar discussion applies to the {\UMG} case \cite{Henneaux:1989zc}; however, it is redundant there because the secondary constraint and the whole (\ref{parConstraint_decomposition}) are first-class.

%%-------------------------=%        ******         %=-------------------------%%
%%-------------------------=%        ******         %=-------------------------%%

  \subsection{Henneaux--Teitelboim-like action}
   \label{SSect:AltAction_AMG}
    \hspace{\parindent}
 As the next step, we change the basis of constraints in the parameterized action (\ref{paramAction_naive_AMG}). The nearest goal is to isolate the dependence on\HIDE{ the Hamiltonian structure} $\Hl$ into a single constraint, ensuring that this constraint, along with the momentum constraint and the\HIDE{ symplectic} term $\pg^{\zm\zn}\tauxdot{\g}_{\zm\zn}$, form the canonical counterpart of the Einstein-Hilbert term in the Lagrangian action. Other terms should be independent of $\pg^{\zm\zn}$ to facilitate a {\GR}-like reduction with respect to gravitational momenta\HIDE{ in the action}.
 %\footnote{ The latter is preferable because it guarantees, that after passing to Lagrangian action (by excluding the momenta $\pg^{\zm\zn}$)\HIDE{ in the canonical action} all time derivatives of metric components are assembled in spacetime scalar curvature term.} %%(modulo total derivative)

\if{ v.2022-10
The first step is to rewrite the action (\ref{extAction_GUMG}) in \emph{parameterized} form.  The extended action (\ref{}) acquired the constraint term proportional to $\Hl$ which is needed to restore canonical action. However, this constraint term is functionally incomplete due to absence of spatially homogeneous part. Moreover Hamiltonian is also proportional to $\Hl$ structure which should be turned into constraint term to complete the constraint to form a part of the canonical spacetime curvature expression.

Being universal in mechanics \cite{Henneaux:1992ig} parametrization of the theory acquiring  spacetime locality becomes a nontrivial problem in field theory. \TODO{Brief discussion of this can be found in Appendix (\ref{Sect:parametrization})}.

For particular family of {\GUMG} theories the action and so the theory and equations of motions preserve spacetime locality.
 This is the case of anisotropic unimodularity  with
 \beq{defAMG}
  \Fw(\argsqrg) = \sqrg^{\wwc} \;\;
  (\wwc\neq-1);
  \quad
  \WW(\argsqrg) = \wwc = const;
  \quad
  \OOmega(\argsqrg) = 1+\wwc.
 \eeq

In this case operator $\opE=\Id$ and ...
}\fi

It is straightforward to see that the following pairs of constraint conditions are equivalent:
  \bea{}
   \left\{
    \begin{array}{l}
      \ptf_\io + {\Fw \Hl} =0 \,, \\
      \CT_{,\zn} \:\equiv\: \wwc (\Fw \Hl)_{,\zn} =0  \,. \\
    \end{array}
   \right.
   \quad \Leftrightarrow \quad
   \left\{
    \begin{array}{l}
      \ptf_\io + {\Fw \Hl} =0 \,, \\
      {\ptf_\io}_{,\zn} =0 \,. \\
    \end{array}
   \right.
  \eea
In (\ref{paramAction_naive_AMG}) such basis change is achieved through a nondegenerate linear transformation of the Lagrange multipliers $ ( \replm^\io , \replmCTn^\zn )  \,\to\,  ( \replm , \lmptf^\zn )$:
  \beq{redef_lms_AMG}
    \begin{array}{|lcl}
      \replm^\io \!& \!=& \! \! \replm - \Dvg[\zn]{\lmptf} \,,\\
      \replmCTn^\zn \!& \!=&\! \! -\wwc^{-1} \lmptf^\zn \,,\\
    \end{array}
    %% checked 2024-11
  \eeq
leading to the action in the form\footnote{
 At this step, the overall $\wwc$ coefficient in one of the constraints disappears, eliminating the need for special reservations in the exceptional $\wwc =0$ case. (See the remark on this case in Section \ref{SSect:Except_Partic_Cases}.)
}
   \beq{paramAction_AMG'} %(\ref{parAction_BLRG})
    \SSS^{\iwwc}_{par'}[\g,\pg,\tf^\io\!,\ptf_\io,\replm,\repNs,\lmptf]
    \;\,=\, %\!\!&=& \!\!\!
    \int \!d\taux \hspace{1pt} \dsx  %\int \!d\taux \hspace{1pt} \dsx
    \Lib
     \pg^{\zm\zn}\tauxdot{\g}_{\zm\zn}
     + \ptf_\io \tauxdot{\tf}^\io
     - \replm ( \ptf_{\io} {+} {\Fw \Hl} )
     - \repNs^\zn \Hs_\zn
     - \lmptf^\zn {\ptf_\io}_{,\zn}
     \Rib \,.
   %\qquad% \nonumber\\
  \eeq
In (\ref{paramAction_AMG'}), all dependence on gravitational variables is confined  in local structures, analogous to those in Einstein's general relativity.
To establish an accurate correspondence, we rescale Lagrange multiplier ${\replm} \to \repNl$,\HIDE{ absorbing the $\Fw$ factor,} and rename the cosmological-constant field term $\ptf_\io \to \CCof$:
 \beq{replm_ptf_io_rescaling_AMG}
   \replm = \repNl \Fw^{-1}\,,
   \qquad
   \ptf_\io =  \CCof \,, %%\Fw \sqrg \CCf,
   %\label{ccf_0_AMG}
 \eeq
yielding the yet canonical action
   \bea{paramAction_AMG_fin} %(\ref{parAction_BLRG})
    \SSS^{\iwwc}_{par''}\![\g,\pg,\tf^\io\!,\CCof,\repNl\!,\repNs,\lmptf]
    &\!=\!\!&
    \int \!d\taux \hspace{1pt} \dsx  % \int d\taux\, \dsx
    \,\LiB
     \pg^{\zm\zn}\tauxdot{\g}_{\zm\zn}
     \!- \repNl ( \Hl {+} \Fw^{-1}\! \CCof  )
     - \repNs^\zn \Hs_\zn
     + \big(\tauxdot{\tf}^\io {+\,} \Dvg[\zn]{\lmptf}\big) \CCof
     \RiB
     .
   %\qquad
   \nonumber\\
  \eea
The factor multiplying\HIDE{ Lagrange multiplier} $\repNl$ is similar to the Hamiltonian constraint (\ref{BGUMGHamiltonianStructure}) of general relativity with the\HIDE{ dynamical} combination $\CCf \defeq \sqrg^{-1} \Fw^{-1} \CCof$\, playing the role of the dynamical cosmological-constant term.
The factor multiplying  $\repNs^\zn$ is the standard {\GR} momentum constraint (\ref{BGUMGMomentaConstraints}).

\newpar

Since all dependence on\HIDE{ gravitational momenta} $\pg^{\zm\zn}$ is confined within the {\GR}-like canonical structures, it can be expressed through their own variational equations of motion,
  \bea{momenta_ecK_GR}
    \pg^{\zm\zn} \eomeq {\sqrg}(\ecK^{\zm\zn}-\g^{\zm\zn} \trecK)
    \,, \qquad
    \ecK_{\zm\zn} = \tfrac{1}{2\repNl} ( \tauxdot{\g}_{\zm\zn} - \repNs_{\zm ; \zn} - \repNs_{\zn ; \zm})
    \,,
  \eea
and reduced from the action in the same manner as in general relativity. Upon reduction, the first three terms on the right-hand side of (\ref{paramAction_AMG_fin}) transform into Einstein-Hilbert Lagrangian with a modified dynamical cosmological term
 \bea{ActionHTlike0_AMG} %(\ref{ActionHTlike0_AMG})
    \SSS^{\iwwc}_{\ialt_0}[\Gaux,\CCof,\HTf]
    &=& \int \!d\taux \hspace{1pt} \dsx %\int d\taux\, \dsx \,
     \sqrt{|\Gaux|} \Big( \stR (\Gaux) - \sqrg^{-\wwc-1}\CCof\Big)
    + \int \!d\taux \hspace{1pt} \dsx %\int d\taux\, \dsx
     \,\partial_\Zm \HTf^\Zm\CCof
    \,,
    %\quad
    %\nonumber
  \eea
where $\Gaux_{\Zm\Zn}$ is the\HIDE{ covariant} spacetime metric corresponding to the ADM variables $(\g_{\zm\zn},\repNs^\zn,\repNl)$  via
 $ % \beq{GBulk_rep_metric_ADM}% was {Gaux_correspondence}
  \Gaux_{\Zm\Zn}d{X}^{\Zm}d{X}^{\Zn}
   \teq  \g_{\zm\zn} (d\sx^\zm {+} {\repNs}^\zm d\taux)(d\sx^\zn {+} {\repNs}^\zn d\taux) - {\repNl}^2 d\taux \hspace{1pt} d\taux
   \,,
 $ % \eeq
 %% as in (\ref{GBulk_metric_ADM}),
\HIDE{with }volume density $ \sqrt{|\Gaux|} = \repNl\! \sqrg$, and $\Fw$ is explicated as $\sqrg^{\wwc}$. Also, we form the Henneaux--Teitelboim\HIDE{ auxiliary} spacetime vector field $\HTf^\Zm$:
  \bea{def_tef}
    \HTf^\Zm \equiv \big(\HTf^0, \HTf^\zm \big)  = \big(\tf^\io, \lmptf^\zm \big)
    %\qquad \text{ so that } \quad
    \qquad \Rightarrow \quad
    \tauxdot{\tf}^\io + \Dvg[\zm]{\lmptf} = \partial_\Zm \HTf^\Zm.
  \eea
The terms in\HIDE{ actions} (\ref{paramAction_AMG_fin}) and (\ref{ActionHTlike0_AMG}) proportional to $\CCof$  are independent of $\pg^{\zm\zn}$ and thus remain unaffected by the reduction. In general, these terms introduce a perfect-fluid degree of freedom and break spacetime covariance.
In (\ref{ActionHTlike0_AMG}) the\HIDE{ familiar Henneaux--Teitelboim} structure of the last term ensures that $\CCof$ is an \emph{on-shell constant}, while the apparent noncovariance arises from the modified cosmological-field term\HIDE{ in the first integral}.

All action transformations leading to (\ref{ActionHTlike0_AMG}) for the {\wGUMG} subfamily preserved classical equivalence.
 %\footnote{In the sense that all transformations of the actions we made guaranteed that classical equations of motion has the equivalent spaces of solutions of variational equations of motion.}
Thus, this alternative representation\HIDE{ (\ref{ActionHTlike_AMG})} is equivalent to the original formulation by construction. It is also important to note that the action (\ref{ActionHTlike0_AMG}) describes models with all $\wwc\in\mathbb{R}$, including the exceptional cases $\wwc \teq 0$ and $\wwc \teq {-}1$.

\newpar

Another convenient form of the alternative representation is obtained by rescaling $\CCof \to \CCf$:
  \beq{ccf_AMG} %% was {ccf_0_AMG}
    \CCof \defeq \sqrg^{\wwc+1} \CCf
    \,,
  \eeq
so that the alternative action takes the form
 \bea{ActionHTlike_AMG} %(\ref{ActionHTlike_AMG})
    \SSS^{\iwwc}_{\ialt}[\Gaux,\CCf,\HTf] %%\tf^\io,\replmCTn
    &=& \!\int \!d\taux \hspace{1pt} \dsx % \int d\taux\, \dsx \,
    \sqrt{|\Gaux|} \Big( \stR (\Gaux) - \CCf\Big)
    +  \int \!d\taux \hspace{1pt} \dsx  %\int d\taux\, \dsx
    \: \partial_\Zm \HTf^\Zm \!\sqrg^{\wwc+1} \! \CCf
    \,,
   % \nonumber\\
  \eea
where the first integral acquires\HIDE{ covariant} the form of the Einstein-Hilbert action with the dynamical cosmological constant, while the apparent noncovariance shifts to the latter term\HIDE{ with the divergence of the auxiliary field}\footnote{
 If the metric $\Gaux_{\Zm\Zn}$ transforms as a tensor and $\CCf$ transforms as a scalar, the latter term in (\ref{ActionHTlike_AMG}) is noncovariant.
 Alternatively, if $\CCof \teq \sqrg^{\wwc+1} \! \CCf$ is assumed to transform as a scalar, diffeomorphism invariance is instead broken by the cosmological-field term in the first integral.
 In any case, covariance\HIDE{ (the full diffeomorphism invariance)} is restored for $\wwc \teq {-}1$, corresponding to the Henneaux--Teitelboim action \cite{Henneaux:1989zc} of \emph{unimodular} gravity.
}.

%%-------------------------=%        ******         %=-------------------------%%
%%-------------------------=%        ******         %=-------------------------%%

\newcommand{\vareqmapsto}{\mapsto}

\subsection{Dynamical properties}
 \label{SSect:Dynamical properties_AMG}
  \hspace{\parindent}
The fields  $\CCf$ and $\HTf^\Zm$ enter the Lagrangian action  (\ref{ActionHTlike_AMG}) linearly. They are intertwined, so their variational equations determine each other's on-shell behavior. The variational equation for the auxiliary fields  $\HTf^\Zm$,
 \beq{EoM_tef_AMG}
  \VDer{\SSS^{\iwwc}_{\ialt}}{\HTf^\Zm}
  = 0
  \qquad\vareqmapsto\qquad
  \partial_\Zm ( \sqrg^{\wwc+1} \CCf) \eomeq 0
  \,,
 \eeq
generalizes behavior of the cosmological term, implying that the on-shell spacetime constant is \beq{}
  \CCof = \sqrg^{\wwc+1}\CCf \eomeq \cco
  \,.
 \eeq
The variation with respect to the cosmological-constant field $\CCf$,
 \beq{EoM_Lambda_AMG}
   \VDer{\SSS^{\iwwc}_{\ialt}}{ \CCf }
   = 0
  \qquad\vareqmapsto\qquad
    \sqrt{|\Gaux|} =  \partial_\Zm \HTf^\Zm  \sqrg^{\wwc+1}
    \,,
 \eeq
relates the spacetime metric determinant to the divergence of the\HIDE{ Henneaux-Tritelboim} auxiliary vector field.

Since $\CCf$ enters the action linearly, its dynamical nature ensures that the  cosmological term compensates the divergence term on shell. Consequently, the principal function is determined by the values of the Einstein Lagrangian term on classical configurations\HIDE{ which is typical for restricted theories}.
Since $\HTf^\Zm$ appears in the action only through the divergence term $\partial_\Zm \HTf^\Zm$, it is defined by the equations of motion only up to an arbitrary spacetime transverse vector field. This reflects an additional gauge ambiguity in the auxiliary sector.

From the canonical structure of this action, it is also clear that $\CCof= \ptf_\io$ is a gauge-invariant homogeneous constant of motion. It obviously commutes with all constraints of (\ref{paramAction_AMG'}), ensuring that it is gauge-invariant and preserved in time. Furthermore, due to the constraint ${\ptf_\io}_{,\zm}=0$, on shell it does not depend on $\sx$. The physical sector of the model naturally foliates into leaves parameterized by constant $\cco$, which, in view of the first constraint in (\ref{paramAction_AMG'}), corresponds to the on-shell values of $\CTw \eomeq - \wwc \,\cco$. Consequently, the GR branch corresponds to the leaf $\cco=0$, while the non-GR branch consists of leaves with $\cco\neq0$.

\newpar

In the ADM parametrization of the metric field, relation (\ref{EoM_Lambda_AMG})
can be interpreted as relating\HIDE{ the divergence of the auxiliary vector field} $\partial_\Zm \HTf^\Zm$ to another nondynamical field --- the lapse function --- via
$\repNl \eomeq \partial_\Zm \HTf^\Zm  \sqrg^{\wwc}$.
At the same time, on the non-GR branch, the lapse function $\repNl$ satisfies additional on-shell conditions, since in the canonical framework it serves as a Lagrange multiplier for a mixed-class constraint.
From the Poisson bracket relations (\ref{GUMG_Algebra_onshell}), (\ref{parwGUMG_Algebra_onshell}) in the constraint basis of the {\wGUMG} parameterized action (\ref{paramAction_AMG'}), the second-class constraints are given by $\inh{(\ptf_\io {+} {\Fw\Hl})}$,  where $\Fw \teq \sqrg^{\wwc}$, and the longitudinal part of $\Hs_\zn$. For $\wwc \tneq {-}1$\HIDE{ and $\cco\neq0$} the corresponding Lagrange multipliers, $\inh{\replm} \teq \inh{\repNl\sqrg^{-\wwc}}$ and $\Dvg[\zk]{\repNs}$, must vanish due to the consistency equations. This,
 %\footnote{The first-class constraints are disentangled implicitly upon simple, phase-space independent decomposition of the Hamiltonian reparametrization constraint into spatially homogeneous and inhomogeneous parts and decomposition of the momenta constraint into transverse (divergence-free) and longitudinal (gradient) parts (see the discussion at the end of the Section \ref{SSect:GUMG_Constraint_Structure}).}
together with (\ref{EoM_Lambda_AMG})\footnote
{
  In the context of the canonical action (\ref{paramAction_AMG'}), equation (\ref{EoM_Lambda_AMG}) arises from variation with respect to $\ptf_\io$, implying the extremal condition:  $\replm \,\tequiv\, \repNl\sqrg^{-\wwc} \eomeq \tauxdot{\tf}^\io {+} \Dvg{\lmptf} \,\tequiv\, \partial_\Zm \HTf^\Zm$. By separating the average and average-free components and taking into account consistency conditions, one arrives at (\ref{DVG_tef}).
}, leads to
 \beq{DVG_tef}
   \begin{array}{|l}
    \inh {\partial_\Zm \HTf^\Zm} \,=\, \tauxdot{\inh{\tf}}^\io {+\,} \Dvg[\zm]{\lmptf} \,\eomeq\, 0 \,, \vphantom{\big|}
   \\
    \partial_\Zm \HTf^\Zm \,\eomeq\, \hmg{\partial_\Zm \HTf^\Zm} \,=\, \tauxdot{\hmg{\tf}}^\io \, \;\;\; \text{---\; arbitrary ,} \vphantom{\big|^l}
   \end{array}
   \qquad ( \wwc\neq-1\,, \;\; \text{non-GR branch:} \;\; \cco\neq0 \HIDE{\CCf\neq0} )
   \,.
  \eeq
Summarizing the on-shell behaviour of the parameterized lapse and shift functions, one obtains
 \beq{LM_rep_eom_AMG}
   \begin{array}{|l}
    \repNl \,\eomeq\, \tauxdot{\hmg{\tf}}^\io  \sqrg^{\wwc}
    \quad\;\; ( \,\Rightarrow \;\; \inh{\repNl\Fw^{-1} \HIDE{\sqrg^{-\wwc}}} \,\eomeq\, 0 \, )
    \,, \vphantom{\big|}
    \\ %\quad
    \Dvg[\zk]{\repNs} \,\eomeq\, 0
    \,, \vphantom{\big|^l}
   \end{array}
   \qquad ( \wwc\neq-1\,, \;\; \text{non-GR branch:} \;\; \cco\neq0 \HIDE{\CCf\neq0} )
   \,.
  \eeq
Relations (\ref{LM_rep_eom_AMG}) indicate that the gauge freedom of $\repNl$ is functionally incomplete, allowing for rescaling by an arbitrary nonzero function of time, $\tauxdot{\hmg{\tf}}^\io(\taux)$, and the transverse components of $\repNs^\zn$ remain undetermined by the equations of motion, reflecting the gauge freedom associated with transverse spatial diffeomorphisms.
Notably, the first condition in (\ref{LM_rep_eom_AMG}) reinstates the restriction condition (\ref{GUMG_restriction}) from the original theory, now with an extra factor accounting for homogeneous time reparametrization\footnote{The homogeneous reparametrization factor $\tauxdot{\hmg{\tf}}^\io$ reduces to unity in the correspondence gauge (\ref{correspondence_gauge_AMG}), verifying the result by comparison with the equations of motion of non-parameterized theory.}.

On the GR branch and in the exceptional {\UMG} case, where all constraints are first class, none of the Lagrange multipliers are fixed by the equations of motion
  \beq{LM_rep_eom_AMG_GR_branch}
   \begin{array}{|l}
    \repNl \,\eomeq\,  \partial_\Zm \HTf^\Zm  \sqrg^{\wwc}  \; \text{ --- arbitrary}\,,
    \\ %\quad
    \repNs^\zn \; \text{ --- arbitrary}
    \,,
   \end{array}
   \qquad (\text{for\;} \wwc =-1 \; \text{or\;  on the GR branch:} \; \cco = 0 \HIDE{\CCf\neq0} )
   \,.
  \eeq

\newpar

The dynamics of gravitational degrees of freedom in general is governed by the variational equations for the spacetime metric $\Gaux_{\Zm\Zn}$, which take the form of the Einstein-Hilbert equation
 \beq{EoM_GG_AMG}
   \VDer{\SSS^{\iwwc}_{\ialt}}{\Gaux_{\Zm\Zn}}
  \,=\, 0
  \qquad\vareqmapsto\qquad
    \stR^{\Zm\Zn} \!- \frac12 \stR \,\Gaux^{\Zm\Zn} 
    \:\eomeq\:
    \frac12 \repTSEt_{\iwwc}^{\Zm\Zn} 
    \,,
 \eeq
where all terms involving $\CCf$ are attributed to the cosmological \emph{matter} sector, contributing to the energy-momentum tensor $\repTSEt_{\iwwc}^{\Zm\Zn}$. Using  (\ref{EoM_Lambda_AMG}), the on-shell form of the energy-momentum tensor is
  \beq{Stress-Energy_Tensor_AMG}
    \repTSEt_{\iwwc}^{\Zm\Zn}
    \,\defeq\,
    \frac{2}{\sqrt{|\Gaux|}} \VDer{\SSS^{\iwwc}_{mat}}{\Gaux_{\Zm\Zn}}
    \:\eomeq\:
     \CCf \, n^{\Zm}n^{\Zn}
     + \wwc
     \CCf \big(\Gaux^{\Zm\Zn} \!+ n^{\Zm}n^{\Zn}\big)
     \,,
  \eeq
where $n^\Zm$ is the unit vector field orthogonal to constant-time hypersurfaces. Here the energy density is given by $\repeSEt \eomeq  \CCf $, and the pressure is $\reppSEt \eomeq \wwc \CCf$, reinstating the perfect fluid equation of state with the constant parameter $\wwc$:
\if{
  \footnote{
   The cosmological-constant term $\sqrt{|\Gaux|}\CCf$ generates $-\CCf\Gaux^{\Zm\Zn}$ contribution to $\repTSEt_{\wwc}^{\Zm\Zn}$ giving equal contribution to the energy and the pressure. Whereas gauge-breaking term from the second integral (\ref{ActionHTlike_AMG}), which feels only spatial induced metric, contributes only to the pressure component: $(\wwc{+}1)\CCf \big(\Gaux^{\Zm\Zn} {+} n^{\Zm}n^{\Zn}\big)$, in which by virtue of (\ref{EoM_Lambda_AMG}) we equated\HIDE{ the overall} factor $\sqrg^{\wwc+1} \partial_\Zm \HTf^\Zm /\sqrt{|\Gaux|}$ to unity. Extracting contribution to energy $\sim -n^\Zm n^\Zn$ from the first contribution and transferring the rest to pressure $\sim \Gaux^{\Zm\Zn} {+} n^\Zm n^\Zn$ gives the claimed
    balance (\ref{PF_EoS_AMG}).
   \TODO{hide?}
  }
}\fi
  \beq{PF_EoS_AMG}
    \reppSEt \eomeq \wwc \,\repeSEt
    \,.
  \eeq
The dynamical behavior of the energy and pressure becomes more explicit when $\CCf$ is expressed in terms of the spatially homogeneous constant of motion $\CCof \eomeq \cco$, (\ref{replm_ptf_io_rescaling_AMG}):
  \bea{energy_pressure_AMG}
   \begin{array}{|lcl}
    \eSEt
     &\eomeq&
     \sqrt{|\Gaux|}^{-1} \, \tauxdot{\hmg{\tf}}^\io \, \cco
     \,\eomeq\,
     \sqrg^{-\wwc-1} \, \cco
    \,,
   \\
    \pSEt &\eomeq&
     \wwc \sqrt{|\Gaux|}^{-1} \, \tauxdot{\hmg{\tf}}^\io\, \cco
     \,\eomeq\,
     \wwc \sqrg^{-\wwc-1} \, \cco
    \,.
    \end{array}
  \eea
 %%This behaviour is counterintuitive when compared to the cosmological constant in Einstein gravity, but it is not unusual when one reminds behavior of the cosmological constant's stress-energy in unimodular gravity.

Analogous to Einstein's general relativity in the presence of cosmological perfect fluid, the trace of the Einstein equations (\ref{EoM_GG_AMG})\HIDE{ in terms of the ADM metric fields} takes the form
 \beq{eom_tr_Einstein_AMG}
   %\left(
   \stR(\Gaux)
   \;\tequiv\;
   2\nabla_\Zm(n^\Zm \trecK {-\,} n^\Zn\nabla_\Zn n^\Zm)
   +\ecK^\zm_{\,\zn}\ecK^\zn_{\,\zm} - \trecK\trecK +\, \sR(\g)
   %\right)
   \eomeq    \frac1{\Ddim{-}2} (\wwc{+}1 -\Ddim\wwc ) \CCf
   \,.
 \eeq
At the same time, the Lagrangian counterpart of the canonical constraint $\Hl {+} \sqrg \CCf \eomeq 0$, (\ref{BGUMGHamiltonianStructure}), in ADM components is
 \beq{eom_Hl_AMG}
   %\left(
   \ecK^\zm_{\,\zn}\ecK^\zn_{\,\zm} - \trecK\trecK -\, \sR(\g)
   %\right)
   \eomeq - \CCf
   \,,
 \eeq
while the momentum constraint $\Hs_\zn \HIDE{= -2\g_{\zm\zn} {\pg^{\zn\zk}}_{;\zk}} \eomeq 0$, (\ref{BGUMGMomentaConstraints}), imposes a {\GR}-like condition on the extrinsic curvature (\ref{momenta_ecK_GR}):
 %The Lagrangian version of the momentum constraint $\Hs_\zn \HIDE{= -2\g_{\zm\zn} {\pg^{\zn\zk}}_{;\zk}} \eomeq 0$, (\ref{BGUMGMomentaConstraints}),  gives a {\GR}-like condition on the extrinsic curvature (\ref{momenta_ecK_GR}),
  \beq{}
    \ecK_{\zm\,;\zn}^{\;\zn} - \trecK_{,\zm} \eomeq 0
    \,.
  \eeq

%%-------------------------=%        ******         %=-------------------------%%
%%-------------------------=%        ******         %=-------------------------%%

  \subsection{Canonical gauge structure\HIDE{ of the non-GR branch}}
   \label{SSect:GaugeStructure_AMG}
    \hspace{\parindent}
     %% Major update and final reduction 2024-10
     %% Calculation in section \ref{ASSect:GaugeInvAMG'}
   %
In this subsection, we analyze the gauge symmetry on the non-GR branch. The simplifications in the {\wGUMG} case provide clearer expressions for the spatially nonlocal gauge structure and simpler derivation of Lagrangian gauge transformations from canonical ones.

As noted in Section \ref{SSect:ParameterizedAction_AMG}, in the parameterized theory (\ref{paramAction_naive_AMG}), the independent first-class canonical generators correspond to the constraint (\ref{CPI_AMG}) and the transverse spatial diffeomorphisms:
  \beq{I_class_constraints_AMG}
   \left|
    \begin{array}{l}
      \HIDE{\sint f \CPI \equiv}
      \sint \geps (\ptf_\io {+} \hmg{\Fw \Hl})
      = \sint \geps \ptf_\io +\sint{\hmg{\geps}\Fw \Hl}
      = \sint{\hmg{\geps}\, (\hmg{\ptf_\io {+} \Fw \Hl})}
        + \sint \,\inh{\geps} \,\inhw{\ptf}_\io
    \,, \vphantom{\big|} \\
      \sint \gzetat^{\zm} \Hs_\zm, \qquad \,\Dvgi{\gzetat} = 0\,
    \,,  \vphantom{\big|^l} \\
    \end{array}
   \right.
  \eeq
which follows from the {\wGUMG} Poisson bracket relations (\ref{GUMG_Algebra_onshell}) and (\ref{parwGUMG_Algebra_onshell}). In the constraint basis of the action (\ref{paramAction_AMG'}), the first generator is a linear combination of the homogeneous part of\HIDE{ the constraint} $\ptf_\io {+} {\Fw\Hl}$ and the constraint $\inh{\ptf}_{\io}$ (equivalent to a constraint, defined by ${{\ptf}_{\io}}_{,\zm}$).
On the configuration space of\HIDE{ the action} (\ref{paramAction_AMG'}), the canonical generators (\ref{I_class_constraints_AMG}) induce the following gauge transformations:
 % \footnote{
   % We do not explicate here lengthy gauge transformation for $\pg^{\zm\zn}$,\, $\gvar[\geps,\gzeta] \pg^{\zm\zn} = \PB{\pg^{\zm\zn}}{\sint  {\Fw \Hl}} \hmg{\geps} + \PB{\pg^{\zm\zn}}{ \sint \Hs_\zm} \gzetat^{\zm} $, since this field will be reduced when passing to Lagrangian formulation.
 % }
\beq{gauge_transfs_ham_AMG'}
 \left|
  \begin{array}{lll}
   \gvar[\geps,\gzeta] \g_{\zn\zm}
   &\!=&
   \tfrac{2\Fw}{\sqrg} \left(\pg_{\zn\zm} -\frac{1}{\Ddim{-}2}\,\trpg\,\g_{\zn\zm} \right)\!
   \hmg{\geps}
  +
   \big( \g_{\zk\zm} {\gzeta_{\isst}^{\zk}}_{,\zn}
   + \g_{\zn\zk}  {\gzeta_{\isst}^{\zk}}_{,\zm}
   + \g_{\zn\zm,\zk} \gzeta_{\isst}^{\zk}
   \big)
   \,, \\
   \gvar[\geps,\gzeta]  \tf^\io
   &\!=&
   \geps
      \,, \qquad \quad \;\;
      \gvar[\geps,\gzeta]  \ptf_\io \;=\; 0
  % \,, \\
  % \gvar[\geps,\gzeta]  \ptf_\io &=& 0
   \,, \vphantom{\big|} \\
   \gvar[\geps,\gzeta]  \repNs^\zn
     &\!=&
       \Fw^2 \g^{\zn\zm} \replm_{,\zm} \hmg{\geps}
      + \TDer{}{\taux} \gzetat^{\zn}
      - \LieB{\zn}{\repNs}{\gzetat}
   \,, \vphantom{\big|} \\
   \gvar[\geps,\gzeta]  \hmg{\replm}
    &\!=& \TDer{}{\taux} \hmg{\geps}
      \,, \qquad \;\;
      \gvar[\geps,\gzeta]  \inhw{\replm}
      \;=\;
      \wwc\, \Dvg[\zk]{\repNs}\hmg{\geps}
      + \replm_{,\zk} \gzetat^\zk
   \,, \vphantom{\big|} \\
   \gvar[\geps,\gzeta]  \Dvg[\zk]{\lmptf}
   \!\! &\!=&
    - \TDer{}{\taux} \inhw{\geps}
    + \wwc \,\Dvg[\zk]{\repNs} \hmg{\geps}
    + \replm_{,\zk} \gzetat^\zk
   \,, \vphantom{\big|} \\
  \end{array}
  \right.
 %% Note that in 2024 sign convention for $\gchi$ changed to $-gchi$
 %% Checked by calcs in the Section \ref{ASSect:GaugeInvAMG'}
 \eeq
provided ${\hmg{\geps}}_{,\zm}=0$ and $\Dvgi[\zn]{\gzetat}=0$.
In particular, transformations (\ref{gauge_transfs_ham_AMG'}) imply
 \beq{gvar_sqrt} %% NonCANON
  \begin{array}{lll}
   \gvar[\geps,\gzeta] \sqrg
   &=&   (\sqrg)_{,\zk} \gzeta_{\isst}^\zk
    - \tfrac{1}{\Ddim{-}2} \Fw \trpg \,\hmg{\geps}
    \,,
  \end{array}
  % GUMG.tex 236, 257 Sect 7.2 2023-05
 \eeq
which leads to the following transformation law for the characteristic function $\Fw(\argsqrg)\equiv\sqrg^{\wwc}$: \;
 $ % \beq{gvar_f_sqrt} %% NonCANON
   \gvar[\geps,\gzeta] \Fw
   \teq
    \wwc \tfrac{\Fw}{\sqrg} \sqrg_{,\zk} \gzeta_{\isst}^\zk
   - \wwc \tfrac{1}{\Ddim{-}2} \tfrac{\Fw^2}{\sqrg} \trpg \,\hmg{\geps}
   \,.
 $ % \eeq
Note that under transverse spatial diffeomorphisms, $\sqrg$ and its algebraic functions transform as scalars. The lengthy explicit expression for the gauge transformation of $\pg^{\zm\zn}$: \, $\gvar[\geps,\gzeta] \pg^{\zm\zn} = \PB{\pg^{\zm\zn}}{\sint  {\Fw \Hl}} \hmg{\geps} + \PB{\pg^{\zm\zn}}{ \sint \Hs_\zm} \gzetat^{\zm} $, is omitted here, as this field is reduced when passing to the Lagrangian formulation.

% \subsubsection*{The hidden trivial sector}
%  \hspace{\parindent}

In the canonical gauge transformations (\ref{gauge_transfs_ham_AMG'}), the metric variables\footnote{
  In the parameterized\HIDE{ canonical} treatment, metric variables include $\g_{\zm\zn}$, $\repNs^\zn$, and also $\replm$, from which the lapse function $\repNl$ will be reinstated. The conjugate momenta $\pg^{\zm\zn}$ also transforms with the homogeneous parameter $\hmg{\geps}$.
}
respond only to the \emph{homogeneous} part of the\HIDE{ time-reparametrization} parameter ${\geps}(\taux)$, undergoing spatially homogeneous time reparameterizations. This functional incompleteness is expected due to the averaged nature of\HIDE{ the non-auxiliary term} $\hmg{\Fw\Hl}$ in $\CPI$ (\ref{CPI_AMG}).
Only the auxiliary parameterization variables, $\tf^\io$ and $\replm^\io=\replm {\,-\,} \Dvg[\zk]{\lmptf}$, transform under the full gauge parameter $\geps(\taux,\sx)$, while another auxiliary variable, $\ptf_\io$, is gauge-invariant.

To highlight the distinction in gauge transformations, we decompose $\replm$ in (\ref{gauge_transfs_ham_AMG'}) into its\HIDE{ spatially} homogeneous and average-free parts, $\replm = \hmg{\replm} + \inh{\replm}$. This shows that only the homogeneous part, $\hmg{\replm}$, transforms as the Lagrange multiplier for a first-class constraint, whereas the transformation of the average-free component, $\inh{\replm}$, is on-shell trivial, indicating its second-class nature.
\if{
  \footnote{
   It is relevant since only the homogeneous part is the Lagrange multiplier for the first-class constraint and the $\inh{\replm}$ vanishes on shell.\\
    % One may note that the gauge transformations of the Lagrangian multipliers at the second-class constraints are trivial (proportional to on-shell vanishing quantities), as should be. Namely, transformations of $\Dvg[\zk]{\Ns}$ and $\inhw{\replm}$ are proportional to these two quantities (or ${\replm_{,\zm}}$) which all vanish on shell (see discussion before (\ref{DVG_tef})). %\\
  }
}\fi

To better understand the origin of the mixed-class gauge behavior of $\replm$ it is useful to examine the transformations of the Lagrange multipliers in the minimally parameterized action (\ref{parAction_cons_AMG}).
 There, $\replm^\io$ acts as the\HIDE{ Lagrange} multiplier for a first-class constraint, transforming as $\gvar[\geps,\gzeta] \replm^\io \teq \tauxdot{\geps}$, while $\Dvg[\zk]{\replmCT}$ is the multiplier for the second-class constraint $-\inh{\CTw}$, transforming as $\gvar[\geps,\gzeta]\Dvg[\zk]{\replmCT} \teq {-} \wwc\, \Dvg[\zk]{\repNs}\hmg{\geps} - \replm_{,\zk} \gzetat^\zk$.
\if{ %% v2024-08
  This translates to the Lagrange multipliers of (\ref{paramAction_AMG'}) as
   %% $\gvar[\geps,\gzeta] \hmg{\replm}^\io \,\,\mapsto\,\, $
  $\gvar[\geps,\gzeta] \hmg{\replm} \teq \tauxdot{\hmg{\geps}}\,$ and
   %% $\,\gvar[\geps,\gzeta] \inh{\replm}^\io \,\,\mapsto\,\, \gvar[\geps,\gzeta] (\inh{\replm} {\,-\,} \Dvg[\zk]{\lmptf}) \teq \tauxdot{\inh{\geps}}$.
  $\gvar[\geps,\gzeta] \inh{\replm} \teq \gvar[\geps,\gzeta] \inh{\replm}^\io {\,-\,} \wwc \,\gvar[\geps,\gzeta]\Dvg[\zk]{\replmCTn}$, where $\Dvg[\zk]{\replmCTn}$ is the\HIDE{ Lagrange} multiplier for the second-class constraint $\inh{\CTw}$.
}\fi
These correspond to the Lagrange multipliers of (\ref{paramAction_AMG'}) via
$\hmg{\replm} \teq \hmg{\replm}^\io$ and $ \inh{\replm} \teq {-} \wwc \Dvg[\zk]{\replmCT}$, endowing $\hmg{\replm}$ with first-class and $\inh{\replm}$ with second-class transformation properties.

\newpar

This structure aligns with the Henneaux--Teitelboim unimodular theory \cite{Henneaux:1989zc}, the special case (\ref{UMG_part_case}) with $\wwc \teq {-}1$ where the longitudinal spatial diffeomorphisms and inhomogeneous constraint $\inhw{\CTw}$ become first-class. In this case, the first-class constraints $\sint {\geps} \CPI$ and $\sint \inh{\geps}' \wwc^{-1} \inh{\CTw}$ generate a local time-reparametrization symmetry via $\ptf_{\io} {+} \sqrg^{-1}\Hl$, with the fully functional parameter $\hmg{\geps} {\,+\,} \inh{\geps}'$. Meanwhile, the average-free components of $\tf^\io$ and $\replm {\,-\,} \Dvg[\zk]{\lmptf}$ transform under their own average-free gauge parameter $\inh{\geps}$. Notably, the transformations of the metric and auxiliary sectors are coupled only through the homogeneous gauge parameter $\hmg{\geps}$, reflecting the decoupling of the gauge-trivial sector of the average-free auxiliary fields: $\inh{\tf}^\io$, $\inh{\ptf}\io$, and $\inh{\replm} {\,-\,} \Dvg[\zk]{\lmptf}$.

\if{
The latter effect may be seen directly in the\HIDE{ $\WW=\wwc$} {\wGUMG} naively parameterized action (\ref{paramAction_naive_AMG}) by noticing that when the secondary constraint is appropriately subtracted from the time-reparametrization constraint to kill the inhomogeneous part of $\Fw\Hl$, so that the hamiltonian constraint acquires the first-class form $\ptf_\io {+} \hmg{\Fw\Hl}$ (\ref{CPI_AMG}), then the inhomogeneous sector of the auxiliary variables decouples.

To state this from another perspective one can start not from the naive parameterized action (\ref{paramAction_naive_AMG}) but with equivalent \emph{minimal parameterized action}
     % \if{
        \footnote{
         The minimal parameterized action of the form (\ref{paramAction_min_AMG}) is always equivalent to the initial extended action in general field theory (without extending the Hamiltonian and initial constraints with terms proportional to $\tf^\io_{,\zm}$) contrary to naive reparameterized form of the action (\ref{paramAction_naive_AMG}), which in \UMG\ and \wGUMG\ is equivalent to non-parameterized action by chance. See the discussion in Appendix \ref{ASect:parametrization}.
         %
         %% This becomes obvious when the canonical action first is made parameterized w.r.t. homogeneous time reparametrization. Which goes in the field theory the same way as in mechanics and needs introduction of only homogeneous variables $\hmg{\tf^\io}$, $\hmg{\ptf_\io}$, $\hmg{\replm^\io}$. And then these homogeneous quantities may be localized by adding the pure gauge action of their inhomogeneous completions (the second integral in the r.h.s. of (\ref{min_param_decoupled}).
        %% (For the general discussion on this issue see Section (\TODO{}))
        }
     % }\fi
   \bea{paramAction_min_AMG} %(\ref{parAction_BLRG})
    \SSS^{\iwwc}_{par}[\g,\pg,\tf^\io\!,\ptf_\io,\replm^\io\!,\repNs,\replmCTn]
    = \!\int \!d\taux \hspace{1pt} \dsx  %\int d\taux\, \dsx
     \,\LiB
     \pg^{\zm\zn}\tauxdot{\g}_{\zm\zn}\!
     + \ptf_{\io}\tauxdot{\tf}^\io
     - \replm^\io ( \ptf_\io {+} \hmg{\Fw \Hl} )
     - \repNs^\zm \Hs_\zm
     - \replmCTn^\zm \CTw{}_{,\zm}
     \RiB
     \,, \;
   % \nonumber\\
  \eea
in which it is clear that the auxiliary sector enters the action in the form where homogeneous and inhomogeneous parts of variables $\tf^\io$, $\ptf_\io$, $\replm^\io$  decouple
 \beq{min_param_decoupled}
   \sint % \int d\taux\, \dsx \,
   \big( \ptf_\io\,\tauxdot{\tf}^\io  - \replm^\io ( \ptf_\io + \hmg{\Fw \Hl} ) \big)
   \,=
   \sint % \int d\taux\, \dsx \,
   \big( \hmg{\ptf_\io}\,\tauxdot{\hmg{\tf^\io}}  - \hmg{\replm^\io} ( \hmg{\ptf_\io} + \hmg{\Fw \Hl} ) \big)
   +
   \sint % \int d\taux\, \dsx \,
   \big(  \inhw{\ptf_\io} \tauxdot{\inhw{\tf^\io}}  - \inhw{\replm^\io}  \HIDE{(} \inhw{\ptf_\io} \HIDE{ + 0 )} \big) \,.
  \eeq

Such mechanism is not explicit in the Henneaux--Teitelboim action for the unimodular gravity and analogously could stay implicit in {\wGUMG}. But it is worth discussing this before parameterizing the general {\GUMG} model in the next section.
}\fi

    \subsubsection*{Lagrangian gauge symmetries}
     \hspace{\parindent}
Given the restricted nature of generalized unimodular theory, we expect Lagrangian gauge transformations of the metric fields to take the form of restricted diffeomorphisms of Einstein's general relativity. A direct way to verify\HIDE{ and achieve} this is by tracking field redefinitions leading to the configuration space of the action (\ref{paramAction_AMG_fin}) and adjusting gauge parameters to match those generating standard canonical transformations in {\GR}.
This can be done\footnote{We comment on this derivation in Appendix \ref{ASSect:gauge_calcs_AMG}.}, and ultimately,  such {\GR}-like gauge transformations are induced by the canonical generator
  \beq{canon_generator_AMG''_fin}
    \gvarp[\geps,\gzeta] {(\,.\,)}
    = \PB{\,.\,}{\sint ( \Hl {+} \Fw^{-1} \CCof )} \Fw\hmg{\geps}
    + \PB{\,.\,}{\sint \Hs_\zn} \gzetat^\zn
    + \PB{\,.\,}{\sint \inh{\CCof} } \inh{\geps}
    %% + \PB{\,.\,}{\sint \ptf_{\io,\zm}} \gchi^\zm
    \,.
  \eeq
It complies with the constraint basis of (\ref{paramAction_AMG_fin}) and for the ADM metric fields imply
 \beq{gauge_transfs_ham_AMG''_GRbasis}
 \left|
  \begin{array}{lll}
   \gvarp[\geps,\gzeta] \g_{\zn\zm}
   &=& \big( \g_{\zk\zm} {\gzeta_{\isst}^{\zk}}_{,\zn}
   + \g_{\zn\zk}  {\gzeta_{\isst}^{\zk}}_{,\zm}
   + \g_{\zn\zm,\zk} \gzeta_{\isst}^{\zk}
   \big)
   + \tfrac{2}{\sqrg}\left(\pg_{\zm\zn} -\frac{1}{\Ddim{-}2}\,\trpg\,\g_{\zm\zn} \right)
   \Fw\hmg{\geps}
   \,,
   \\
  % \gvar \pg^{\zn\zm}  %% fin
  %        &=&
  %    \gvar[\Fw\hmg{\geps}]^{^{(\GR)}} \pg^{\zn\zm}
  %    +\PB{\pg^{\zn\zm}}{\sint \Fw^{-1}\ptf_\io} \Fw \hmg{\geps}
  % \\
   \gvarp[\geps,\gzeta]{\repNl} %% fin
     &=& \tTDer{}{\taux} (\Fw \hmg{\geps})
     + \repNl\!_{,\zn} \gzetat^\zn %% {inh.}
     - \repNs^\zn (\Fw \hmg{\geps})_{,\zn}
   \,, \vphantom{\big|^l}
   \\
   \gvarp[\geps,\gzeta] \repNs^\zn %% fin
    &=& \tTDer{}{\taux} \gzetat^{\zn}
       - \LieB{\zn}{\repNs}{\gzetat}
       - \g^{\zn\zm} (\repNl \ader_\zm \Fw\hmg{\geps})
    \,.
   \\
  \end{array}
  \right.
  \eeq
These\HIDE{ transformations} are equivalent to the\HIDE{ restricted} {\GR} canonical diffeomorphism transformations
 \bea{GR_canon_gauge_transfs} %% (\ref{GR_canon_gauge_transfs}) %% from GR.tex
 %% GR gauge transfs
  \begin{array}{|lcl}
    \gvarGR \g_{\zn\zm}
    &=& \big( \g_{\zk\zm} \gxs^{\zk}_{,\zn}
    + \g_{\zn\zk}  \gxs^{\zk}_{,\zm}
    + \g_{\zn\zm,\zk} \gxs^{\zk}
    \big)
    + \tfrac{2}{\sqrt{\g}} \big( \pg_{\zn\zm} - \tfrac{1}{\Ddim-2}\,\trpg\, \g_{\zn\zm}\big)\gxl
    \,,
    \\
    \gvarGR \Nl
    &=& \tTDer{}{\taux} {\gxl} + \Nl_{,\zn} \gxs^\zn - \Ns^\zn \gxl_{,\zn}
    \,,
    %\nonumber\\
    \\
    \gvarGR \Ns^{\zn}
    &=& \tTDer{}{\taux} {\gxs}^{\zn} - \LieB{\zn}{\Ns}{\gxs} - \g^{\zn\zm} (\Nl {\ader_\zm} \gxl)
    \,,
\end{array}
 \eea
with the gauge parameters subject to the following constraints:
 \beq{wGUMG_gauge_restriction_from_GR}
  \begin{array}{|lcl}
   \gxl &\to& \Fw \hmg{\geps}
   \,,
   \\
   \gxs^\zn &\to& \gzetat^\zn:
   \quad \; \partial_\zn \gzetat^\zn=0
   \,.
  \end{array}
 \eeq
The minor difference with the {\GR}\HIDE{ canonical} gauge transformations generated by\HIDE{ the Hamiltonian vector fields of} (\ref{BGUMGHamiltonianStructure}, \ref{BGUMGMomentaConstraints}), arises from the\HIDE{ $\sqrg$-dependent} term $\Fw^{-1}(\argsqrg)\, \ptf_{\io}$ \,in\HIDE{ the generator} (\ref{canon_generator_AMG''_fin}), and in the metric sector affects only the transformations of\HIDE{ gravitational} the momenta $\pg^{\zn\zm}$.
\if{
     $\gvarp[\geps,\gzeta] \pg^{\zn\zm}
          \,=\,
      \gvar^{^{(\GR)}}\!\pg^{\zn\zm} \big|_{\genfrac{}{}{0pt}{}{\gxl=\Fw\hmg{\geps}}{\gxs^\zn =\gzetat^\zn}}
      +\PB{\pg^{\zn\zm}}{\sint \Fw^{-1}\ptf_\io} \Fw \hmg{\geps}$,

which we do not expand explicitly\HIDE{ here since this field will be reduced}.
}\fi
Transformations (\ref{gauge_transfs_ham_AMG'}) and (\ref{gauge_transfs_ham_AMG''_GRbasis}) are equivalent modulo a trivial gauge transformation arising from a field-dependent redefinition of gauge parameters (see Appendix \ref{ASSect:gauge_calcs_AMG}).

\newpar
% \subsubsection*{Lagrangian gauge symmetries}
%  \hspace{\parindent}

The transformations (\ref{gauge_transfs_ham_AMG''_GRbasis}) of the ADM metric components match the restricted {\GR} canonical gauge transformations (\ref{GR_canon_gauge_transfs}, \ref{wGUMG_gauge_restriction_from_GR}), and the reduction to the Lagrangian theory:
  %%(\ref{momenta_ecK_GR})
  $\pg^{\zm\zn} \to {\sqrg}(\ecK^{\zm\zn} {\,-\,} \g^{\zm\zn} \trecK)$,
where $ \ecK_{\zm\zn} = \tfrac{1}{2\repNl} ( \tauxdot{\g}_{\zm\zn} {\,-\,} \repNs_{\zm ; \zn} {\,-\,} \repNs_{\zn ; \zm})$,
occurs in the same way as in Einstein's general relativity. Thus, the spacetime metric in the Lagrangian actions (\ref{ActionHTlike0_AMG}) and (\ref{ActionHTlike_AMG}) transforms under the restricted covariant gauge transformations
%% of the spacetime metric
  \bea{Lagrangian_metric_diffeo_transfs_AMG}
   \gvar[\breve{\gxi}] \,\Gaux_{\Zn\Zm}
  % &=& \big( \Gaux_{\Zl\Zm} \partial_\Zn
  % + \Gaux_{\Zn\Zl} \partial_\Zm
  % + (\partial_\Zl \Gaux_{\Zn\Zm})
  % \big)\gxi^{\Zl}
  % \nonumber\\
   &\!=\!& \Gaux_{\Zl\Zm} \breve{\gxi}^{\Zl}_{\,,\Zn}
   + \Gaux_{\Zn\Zl} \breve{\gxi}^{\Zl}_{\,,\Zm}
   + \Gaux_{\Zn\Zm,\Zl} \breve{\gxi}^{\Zl}
  \eea
with the constrained covariant gauge parameters \,$ %%\gxi^{\Zn}\equiv \big(\gxi^0,\gxi^\zn \big)  \rightarrow
\breve{\gxi}^{\Zn}(\hmg{\geps},\gzetat^\zn)$:
  \bea{Lagrangian_diffeo_restricted_params_AMG}
    %% see Eq \ref{EGr_gauge_diffeo_correspondence_AMG}
    \left| \begin{array}{lclclcl}
              \gxi^\zn
              &\!=\!&
              \gxs^\zn - \frac{\repNs^\zn}{\repNl} \gxl
              &\;\rightarrow\;&
              \breve{\gxi}^\zn &\!\defeq\!& \gzetat^\zn - \frac{\repNs^\zn}{\repNl} \Fw \hmg{\geps} \;, \\
              \gxi^0
              &\!=\!&
              \frac{1}{\repNl} \gxl
              &\;\rightarrow\;&
              \breve{\gxi}^0 &\!\defeq\!& \frac{1}{\repNl} \Fw \hmg{\geps} \;.
            \end{array}
    \right.
  \eea

\newpar

The gauge transformations of the auxiliary sector variables $\CCof$ and $\HTf^\Zm$ follow from (\ref{gauge_transfs_ham_AMG'}) and has the same form in the canonical and Lagrangian theories.
The cosmological-constant field $\CCof \tequiv \ptf_\io$
is gauge-invariant,
  \beq{gvar_ccf0}
   \gvarp[\geps,\gzeta] \CCof \,=\, 0
   \, .
  \eeq
The canonical gauge transformation of the effective cosmological-constant field $\CCf = \sqrg^{-1}\!\Fw^{-1}\! \CCof$, (\ref{ccf_AMG}), depends on momenta $\pg^{\zm\zn}$. After restricting to the reduction surface %% (\ref{momenta_ecK_GR})
it takes the\HIDE{ Lagrangian} form
  \bea{gvar_ccf}
   \gvarp[\geps,\gzeta] \CCf
    \,=\,
    - (\wwc{+}1) \,\CCf\,
     \Big( \tfrac1{\sqrg}
           \big( \tTDer{}{\taux} { \sqrg }
              {-}  (\sqrg \repNs^\zn)_{,\zn}
           \big) \tfrac{1}{\repNl} \Fw \hmg{\geps}
          \,+\, \tfrac1{\sqrg}  (\sqrg\gzetat^\zn)_{,\zn}
     \Big)
    \,.
  \eea

The components of the Henneaux--Teitelboim auxiliary field $\HTf^\Zm \tequiv \big(\HTf^0, \HTf^\zm \big) \teq \big(\tf^\io, \lmptf^\zm \big)$, (\ref{def_tef}),
transform as
  \beq{gvar_tef}
   \begin{array}{|lcl}
    \gvarp[\geps,\gzeta]  \tf^\io  &=& \geps
    \;=\; \hmg{\geps} - \Dvg[\zk]{\gchi}
    \,, \vphantom{\big|}
    \\
    \gvarp[\geps,\gzeta,\gchi] {\lmptf^\zm}\!\!
    &=&
     \tTDer{}{\taux} \gchi^\zm +   \HIDE{\replm\to}\repNl \Fw^{-1} \gzetat^\zm + \wwc \repNs^\zm \hmg{\geps}
    \,,  \vphantom{\big|^I}
   \end{array}
    \qquad \big(\,\Dvg[\zk]{\gchi} = - \inh{\geps}\,\big)
  \eeq
where we introduced $\gchi^\zm$ --- an arbitrary infinitesimal vector gauge parameter with divergence\HIDE{ equal to} $-\inh{\geps}$.
The arbitrariness of the transverse (divergence-free) part, $\gchit^\zm$, reflects an ambiguity in $\lmptf^\zm$ when its divergence is fixed. The canonical action
(\ref{paramAction_AMG'})
and the Lagrangian actions (\ref{ActionHTlike0_AMG}) and (\ref{ActionHTlike_AMG}) depend only on its spatial divergence, $\Dvg[\zk]{\lmptf}$,  meaning the divergence-free part is redundant\footnote{
This redundancy, introduced to maintain locality, originates from the {\GUMG} extended action (\ref{extAction_GUMG}), where the secondary constraint (\ref{secondary_GUMG_inh}) appears in gradient form through the term $-\sint \lmCT^\zn \CT_{,\zn}$ which is equivalent to $ \sint (\Dvg[\zk]{\lmCT}) \inh{\CT}$.
The same discussion is applicable to the Henneaux--Teitelboim action for unimodular gravity \cite{Henneaux:1989zc}.
}.

 % \subsubsection*{Final notes on the symmetry structure}
 %  \hspace{\parindent}
  %
\newcommand{\gChiT}{{\color{gColor}\mathcal{X}}_{\scriptscriptstyle(T)}}

Moreover, in the Lagrangian actions  (\ref{ActionHTlike0_AMG}) and (\ref{ActionHTlike_AMG}), the Henneaux--Teitelboim auxiliary field $\HTf^\Zm$ appears only through $\partial_\Zm \HTf^\Zm$, which transforms as
 \beq{gvar_DVG_tef}
   \gvarp[\geps,\gzeta] \, {\partial_\Zm \HTf^\Zm}
   \;=\;
   \tTDer{}{\taux} \hmg{\geps} + \HIDE{\replm\to}(\repNl \Fw^{-1})_{,\zm} \gzetat^\zm + \wwc \Dvg[\zk]{\repNs} \,\hmg{\geps}
   \,.
 \eeq
In the right side of (\ref{gvar_DVG_tef}) the gauge parameter $\inh{\geps}$ cancels. Since $\inh{\geps}$ does not appear in the transformations of other fields (\ref{gauge_transfs_ham_AMG''_GRbasis}, \ref{gvar_ccf}), the\HIDE{ whole} associated
gauge symmetry is also the internal\HIDE{ gauge} symmetry of the
auxiliary field $\HTf^\Zm$. Alternatively, the spacetime divergence
implies the gauge redundancy
  \beq{}
   \HTf^\Zm \to \HTf^\Zm + \gChiT^\Zm \,,
   \qquad \big(\,\partial_\Zm \gChiT^\Zm = 0\,\big) \,,
  \eeq
where 
the\HIDE{ divergence-free covariant} condition $\partial_\Zm \gChiT^\Zm = 0$ implies that components of $\gChiT^\Zm$ are parameterized by a spatial vector $\gchi^\zm$: $\inh{\gChiT^0 \!\!\!}\, \teq {-}\Dvg[\zk]{\gchi}$ and $\gChiT^\zm \teq \tauxdot{\gchi}^\zm $, inducing the internal symmetry $\inh{\HTf^0\!} \to \inh{\HTf^0\!}\,  {\,-\,}\Dvg[\zk]{\gchi}$ and $\HTf^\zm\! \to \HTf^\zm {+\,} \tTDer{}{\taux} {\gchi}^\zm $.
Notably, this covariantly isolated noninfinitesimal\HIDE{ auxiliary} symmetry, parameterized by $\gchi^\zm$, incorporates two symmetries: the $\Dvg[\zm]{\gchi} \teq {-}\inh{\geps}$ symmetry of the gauge-trivial sector (\ref{inh_trivial_sector}) and the noncanonical
transverse $\gchit^\zm$ symmetry\HIDE{ originated in the extended action} due to the redundancy when incorporating the secondary constraint.
The rearrangement of the full gauge basis
%of the parameterized action
$(\geps,\gzetat^\zm,\gchit^\zm) \to (\hmg{\geps},\gzetat^\zm, \gchi^\zm)$ disentangles the internal auxiliary-sector symmetry from that affecting the metric fields.

In this section we discussed the symmetry on the physically interesting non-GR branch, $\CCf\propto\CCof \tneq 0$, which describes gravity in the presence of a cosmological perfect fluid. The GR branch,\HIDE{ tied to subspace} $\CCf\propto\CCof \teq 0$, retains full diffeomorphism invariance\HIDE{ of the general relativity} in the metric sector \cite{Barvinsky:2019agh}. In the canonical treatment this follows from the Poisson bracket matrix of constraints (\ref{GUMG_Algebra_onshell}, \ref{parAMG_Algebra_offshell}),
 %% (see table \ref{table:Par_PB_ww}), matrix of
which becomes zero-rank on shell for ${\CTw}\propto \Hl =0$, equivalent to $\CCof \teq \ptf_\io \teq 0$.
Since $\CCof$ is gauge-invariant (\ref{gvar_ccf0}), the branch division is itself gauge-invariant.

\if{
\footnote{
 We are not going to discuss the physical validity of the branch interpretation of the rank discontinuity here. Only note that the additional invariance of the GR branch may be easily killed by the interaction, which respects only the restricted gauge symmetry, which would restore the regularity of the dynamical and gauge content of the theory. However, if the coupled external matter and interaction with it respects the general coordinate covariance the two branch interpretation keeps valid.
}
}\fi

%%-------------------------=%        ******         %=-------------------------%%
%%-------------------------=%        ******         %=-------------------------%%

 \subsection{Gauge invariance check}
 \label{SSect:Gauge_Inv_Check_AMG}
  \hspace{\parindent}
The gauge transformations
(\ref{Lagrangian_metric_diffeo_transfs_AMG}\,--\,\ref{gvar_tef})
derived from the canonical treatment
can be verified\HIDE{ directly} by directly checking the invariance of the alternative action. We choose the representation $\SSS^{\iwwc}_{\ialt_0}[\Gaux,\CCof,\HTf]$, (\ref{ActionHTlike0_AMG}), which is more convenient due to the gauge invariance of $\CCof$. The {\wGUMG} action splits into diffeomorphism-invariant and gauge-breaking parts:
 \beq{S0=SGR+Smat0}
  \SSS^{\iwwc}_{\ialt_0} = \SSS^{\iGR} + \SSS^{\iwwc}_{mat_0}
  =
   \int \!d\taux \hspace{1pt} \dsx %\int d\taux\, \dsx \,
    \sqrt{|\Gaux|}\,  \stR (\Gaux)
   +
   \int \!d\taux \hspace{1pt} \dsx  %\int d\taux\, \dsx
   \,\big( {-} \sqrt{|\Gaux|}\sqrg^{-1}\!\Fw^{-1} \!+ \partial_\Zm \HTf^\Zm \big) \CCof
   \,,
 \eeq
where $\SSS^{\iGR}[\Gaux_{\Zm\Zn}]$ is the Einstein-Hilbert action of general relativity,
and the\HIDE{ symmetry breaking} restriction term $\SSS^{\iwwc}_{mat_0} [\sqrg,\repNl\!,\HTf, \CCof]$ can be interpreted as the cosmological matter action.
The canonical treatment implies that metric fields transform under the restricted diffeomorphism symmetry
 (\ref{Lagrangian_metric_diffeo_transfs_AMG}, \ref{Lagrangian_diffeo_restricted_params_AMG}).
Since the Einstein action
is fully diffeomorphism-invariant\HIDE{ with respect to the unrestricted and any restricted diffeomorphism transformations}, the only remaining check is the invariance of the matter part. As the local field $\CCof$ is gauge-invariant, (\ref{gvar_ccf0}), and acts as an overall factor in the integrand of $\SSS^{\iwwc}_{mat_0}$, it suffices to show that the gauge variation of the local expression in parentheses vanishes. This holds because the restricted gauge transformations (\ref{gauge_transfs_ham_AMG''_GRbasis})
in the canonical gauge basis $(\geps,\gzeta^\zn)$ imply
 \bea{}
   &&\gvarp[\geps,\gzeta] (\sqrt{|\Gaux|} \sqrg^{-1}\! \Fw^{-1})
   \,=\,   \gvarp[\geps,\gzeta] (\repNl \Fw^{-1})
   %\,=\,   \gvar\repNl \cdot\Fw^{-1}+\Nl\gvar \Fw^{-1}
%   \nonumber\\
 \if{
   &&\,=\,
    \uwave{\Fw^{-1}\tTDer{}{\taux} (\Fw \hmg{\geps})}
     + (\Fw^{-1}\repNl)_{,\zn} \gzetat^\zn
     - \cancel{\Fw^{-1}\repNs^\zm \partial_\zm \Fw \hmg{\geps}}
       + \uwave{\tTDer{}{\taux} \Fw^{-1} {\cdot} {\Fw\hmg{\geps}}}
          + \wwc \Dvg{ \repNs} \hmg{\geps}
          - \cancel{\repNs^\zn \partial_\zn \Fw^{-1}\! {\cdot}  {\Fw\hmg{\geps}}}
   \,.
   \quad
 }\fi
 %   &&
 \,=\,
    \tTDer{}{\taux} \hmg{\geps}
     + (\repNl\Fw^{-1})_{,\zn} \gzetat^\zn
        + \wwc \,\Dvg[\zk]{\repNs}\, \hmg{\geps}
   \,,
   \quad
 \eea
which exactly cancels\footnote{
 This cancellation is more transparent in the canonical representation (\ref{paramAction_AMG'}), where
  $%%\sqrt{|\Gaux|} \sqrg^{-1}\! \Fw^{-1} \tequiv
  \repNl \Fw^{-1}$
  is simply $\replm$, (\ref{replm_ptf_io_rescaling_AMG}), and transformations (\ref{gauge_transfs_ham_AMG'}) directly imply $- \gvar[\geps,\gzeta] {\replm} + \gvar[\geps,\gzeta] \big( \tauxdot{\tf}^\io {+\,} \Dvg[\zk]{\lmptf} \big) \teq 0$.
}
$ \gvarp[\geps,\gzeta\,] \partial_\Zm \HTf^\Zm$, (\ref{gvar_DVG_tef}).

\newpar

It is also instructive to examine the {completeness} of the obtained gauge symmetries from the perspective of the restricted-theory approach \cite{Barvinsky:2022guw}.
For a regular restriction of the gauge theory\HIDE{ with the closed gauge algebra}, one expects the restricted theory's gauge algebra to be a subalgebra of the parental theory's gauge algebra. The residual symmetries can be identified by solving the explicit linear constraints on the infinitesimal gauge parameters. For the representation (\ref{Action_GUMG_L}), this was performed in \cite{Barvinsky:2019agh} and led to a discussion of an apparent paradox: a mismatch between naive gauge restriction counting and actual number of canonical symmetries\footnote{
 In the original {\GUMG} approach (\ref{Action_GUMG_L}), imposing the one-dimensional algebraic restriction (\ref{GUMG_restriction}) on a parental theory with a closed gauge algebra could be expected to break at most one local gauge degree of freedom. However, the canonical treatment suggests that on the non-GR branch, two gauge symmetries are broken.
}.
This\HIDE{ problem} was resolved by the observation that one of the naive residual gauge symmetries violates spacetime locality.

In the alternative representation, a similar investigation of residual symmetries is possible, though the restriction mechanism is a bit subtler. One can vary the alternative {\wGUMG} action (\ref{ActionHTlike0_AMG}) under the unrestricted {\GR} gauge transformations (\ref{GR_canon_gauge_transfs}), adjusting auxiliary field variations to cancel as many structures as possible while preserving time locality. The anomaly in this variation imposes constraints on the gauge parameters of the parental symmetry.

Due to the gauge invariance of the first term, $\SSS^{\iGR}$, in\HIDE{ decomposition} (\ref{S0=SGR+Smat0}) under the unrestricted transformations (\ref{GR_canon_gauge_transfs}), a gauge anomaly $A_\iwwc$:\:
   $
    \int dt\, A_\iwwc
    {\defeq}
      \gvarGR \SSS^{\iwwc}_{\ialt_0}
    =
      \HIDE{ 0 +} \gvarGR \SSS^{\iwwc}_{mat_0}
     \,,
   $
arises solely from the variation of the matter term.
The metric gauge variation\HIDE{ (\ref{GR_canon_gauge_transfs})} in the first term $ \sqrt{|\Gaux|}\sqrg^{-1}\!\Fw^{-1} \teq \repNl\Fw^{-1}$,
  \beq{ugvar_repNl_invFw}
   %\hspace{-4mm}
   \gvarGR (\repNl\Fw^{-1}\hspace{-1pt})
   \,=\,
   \if{
    \tTDer{ }{\taux} (\Fw^{-1} {\gxl})
    -  ( \repNs^\zn \Fw^{-1}\gxl)_{,\zn}
    \! +  (\wwc{+}1) \Fw^{-1} \Dvg{\repNs}  \gxl
    \!+ (\repNl\Fw^{-1}\gxs^\zn)_{,\zn}
    \!- (\wwc{+}1) \repNl\Fw^{-1}  \Dvg{ \gxs}
    \nonumber\\
    &=&
    }\fi
    \tTDer{ }{\taux} (\Fw^{-1} \! {\gxl})
    -  ( \repNs^\zn \Fw^{-1} \! \gxl)_{,\zn}
    \!+ (\repNl\Fw^{-1} \! \gxs^\zn)_{,\zn}
    \! +  (\wwc{+}1) \big( \Fw^{-1} \! \Dvg[\zk]{\repNs} \gxl
          \!-  \repNl \Fw^{-1} \! \Dvg[\zk]{\gxs} \big)
    \,,
  \eeq
combined with compensating transformations of the auxiliary fields,
  \beq{gvar_dvgHTf}
    \begin{array}{|lcl}
     \gvarGR \tauxdot{\tf}^\io \!\!\!
     &=& \!\! \tTDer{ }{\taux} (\Fw^{-1} {\gxl}) - \tTDer{ }{\taux} \Dvg[\zk]{\gchi}
     \,,\\
     \gvarGR \Dvg[\zk]{\lmptf} \!\!\!
     &=& \!\! \tTDer{ }{\taux} \Dvg[\zk]{\gchi}
      - (\repNs^\zn  \Fw^{-1}\gxl)_{,\zn}
      \!+ (\repNl \Fw^{-1} \gxs^\zn)_{,\zn}
     %\\
     %&&
      \!+ (\wwc{+}1)
            \big( \inh{\Dvg[\zk]{\repNs} \Fw^{-1}\gxl}
            {-}  \inh{\repNl\Fw^{-1} \Dvg[\zk]{\gxs}} \big)
      \,, \\
    \end{array}
  \eeq
produces the anomaly
  \beq{Anomaly_AMG} % (\ref{Anomaly_AMG})
   A_\iwwc
    =
    (\wwc{+}1)
    \HIDE{\int d\taux\, \dsx }\, %\LiB
      \sint
      \big( \repNl \Fw^{-1} \Dvg[\zk]{\gxs} - \Dvg[\zk]{\repNs} \Fw^{-1}\gxl \big)\, \hmgs{\CCof}
    \,.
  \eeq
In the right-hand side, the overall factor of the anomaly is the spatially homogeneous part of ${\CCof}$,
because\footnote{
 To obtain\HIDE{ the form} (\ref{Anomaly_AMG}), the identity $\sint f \,\hmg{h} = \sint \hmg{f} \,h$ was used.
} the spatially homogeneous part of the terms proportional to $(\wwc{+}1)$ remain uncompensated by $ \gvarGR \partial_\Zm \HTf^\Zm $.

Transformation laws of $\HTf^\Zm$ are not determined by the parental theory, so their form is arbitrary. However, the divergence structure $\partial_\Zm \HTf^\Zm$ imposes limitations, as its gauge variation must preserve time locality and the average-free nature of the spatial divergence.
The transformations (\ref{gvar_dvgHTf}) were chosen to cancel terms from $\gvarGR (\repNl\Fw^{-1})$ as much as possible. The anomaly (\ref{Anomaly_AMG}) would be eliminated only if the time-nonlocal term\footnote{
 Or, equivalently, one may assume\HIDE{ transformations}
   $\gvarGR \tauxdot{\tf}^\io \teq \tTDer{ }{\taux} (\Fw^{-1} {\gxl}) - \tTDer{ }{\taux} \Dvg[\zk]{\gchi} + (\wwc{+}1) \int^{\taux}\!\! d\taux' \big( { \repNl \Fw^{-1} \Dvg[\zk]{\gxs} - \Dvg[\zk]{\repNs} \Fw^{-1}\gxl} \big)(\taux')$
   and
   $\gvarGR\! \Dvg[\zk]{\lmptf} \teq \tTDer{ }{\taux} \Dvg[\zk]{\gchi}
      {\,-\,} (\repNs^\zn  \Fw^{-1}\gxl)_{,\zn}
      {\,+\,} (\repNl \Fw^{-1} \gxs^\zn)_{,\zn}\,
   $. However, the initially given formulation more clearly highlights that canceling the anomaly requires modifying only the\HIDE{ canonical} transformation of the homogeneous mode $\hmg{\tf}^\io$.
}
$(\wwc{+}1) \int^{\taux}\! d\taux'\, \big( \hmg{ \repNl \Fw^{-1} \Dvg[\zk]{\gxs} - \Dvg[\zk]{\repNs} \Fw^{-1}\gxl} \big)(\taux')$ is added to $ \gvarGR {\tf}^\io $.
Local in time improvements to the auxiliary field's gauge transformations
cannot\HIDE{ eliminate} cure the anomaly in the {\wGUMG} models with $\wwc \tneq {-}1$.
Changing the metric diffeomorphism transformations also fails to eliminate the anomaly in the matter term. The two anomaly terms in (\ref{Anomaly_AMG}) are proportional to the {\GUMG}-specific consistency conditions, which cannot be expressed as linear combinations of the Einstein tensor ${\var \SSS^{\iGR}}\!/{\var \Gaux_{\Zm\Zn}}$.
 \HIDE{Thus this would inevitably invoke structures proportional to the new equations of motion in the anomaly.}

\if{ %% version 2024-12
  Improving the metric diffeomorphism transformations will also not eliminate the anomaly in the gauge-breaking term. One argument is that an arbitrary variation of the metric in the {\GR} term of (\ref{S0=SGR+Smat0}) will be proportional to the left-hand sides of Einstein's gravity equations of motion, $\VDer{\SSS^{\iGR}}{\Gaux_{\Zm\Zn}}$. However, none of the linear combinations of these equation-of-motion structures can impose a constraint on the canonical Lagrange multipliers $\inh{\replm}\equiv\inh{\repNl \Fw^{-1}}$ and $\Dvg{\repNs}$.\footnote{
   These Lagrange multipliers are factors at the first-class constraints in canonical general relativity, so they are not defined within variational equations of motion of this theory, which implies that they cannot be expressed as linear combinations of $\VDer{\SSS^{\iGR}}{\Gaux_{\Zm\Zn}}$.
  } At the same time the two anomaly terms (\ref{Anomaly_AMG}) are proportional to these combinations, which are the differential consequences of the {\wGUMG} equations of motion, (\ref{LM_rep_eom_AMG}), on the non-GR branch of this theory.\footnote{
   Note that the first term in the anomaly (\ref{Anomaly_AMG}) does not depend on the on-shell nonvanishing spatially homogeneous part of ${\replm}={\repNl \Fw^{-1}}$, since $\sint \hmg{\repNl \Fw^{-1}} \Dvg{\gxs} \hmgs{\CCof} \equiv 0$ due to the average-free property of the divergence $\Dvg{\gxs}$. It is important that the anomaly reduced to the form proportional only to the spatially homogeneous part of $\CCof$.
  }
  Thus, no improvement of the canonical transformations of $\Gaux_{\Zm\Zn}$ from the $\SSS^{\iGR}$ part of the {\wGUMG} alternative action (\ref{S0=SGR+Smat0}) can compensate contributions proportional to $\inh{\repNl \Fw^{-1}}$ and $\Dvg{\Ns}$ from the gauge variation of $\SSS^{\iwwc}_{mat_0}$. An arbitrary (either gauge or anomalous) variation of the action (\ref{S0=SGR+Smat0}) is proportional to a linear combination of the equations of motion of the theory, and thus terms proportional to the two new equations of motion of the restricted theory (\wGUMG) generally appear in the variation of the gauge-breaking term $\SSS^{\iwwc}_{mat_0}$ (since, as we noted, they cannot appear in the nonsingular variation of the parental (\GR) part $\SSS^{\iGR}$.
   %% \TBC{If none of these on-shell-trivial\HIDE{ equation-of-motion} terms appear then $\SSS^{\iGR} + \SSS^{\iwwc}_{mat_0}$ can be made gauge-invariant w.r.t. to improved parental gauge transformations.}
   %% Conversion w.r.t. to auxiliary canonical phase-space variables?
   %% What if one of theme disappear? What if linear combintaion disappear& is that possible?
  %\TODO{edit the previous paragraph}

 %% %DRAFT
 %% Due to this reason the anomaly cannot be cancelled via improving  the gauge transformations for auxiliary fields or via extending the diffeomorphism generators, acting on metric, with terms proportional to $\CCof$  without violation of the locality in time.
}\fi

\if{ %%% DRAFT 2024-09

  Here we skip the details of calculations, which can be found in the Appendix \ref{ASSect:SymmetryCheckUnrestricted} where we perform the canonical derivation. Finally, the result of varying the action
    $ % \bea{paramAction_AMG_fin_COPY} %(\ref{parAction_BLRG})
      \SSS^{\iwwc}_{par''} [\g,\pg,\tf^\io\!,\ptf_\io,\repNl,\repNs,\lmptf]
      \, = %\!\!&=& \!\!\!
      \int d\taux\, \dsx \, \LiB
       \pg^{\zm\zn}\tauxdot{\g}_{\zm\zn}
       + \ptf_\io \tauxdot{\tf}^\io
       - \repNl ( \Hl {+} \Fw^{-1}\ptf_{\io})
       - \repNs^\zm \Hs_\zm
       - \lmptf^\zm {\ptf_\io}_{,\zm}
       \RiB %.\;
    %\nonumber
   $, % \eea
    (\ref{paramAction_AMG_fin}),
    %
    %% Constraint set $\big(\,( \Fw^{-1}\ptf_{\io} {+} \Hl ),\, \Hs_\zm,\, {\ptf_\io}_{,\zm} \big) $ generate gauge transformations of the metric components in the form close to the canonical \GR\ transformations. Also since this action contains all ADM metric variables and the Legendre reduction to the Lagrangian action is performed via \emph{the same} equation $\pg^{\zn\zm}=\pg^{\zn\zm}(\g,\repNs,\repNl)$ one immediately gets the \GR-like covariant gauge transformations for space-time metric components.
    %
   % from
   %\subsection{Gauge invariance (\AMG'') in ``GR'' constraint basis}
   %\label{ASSect:GaugeInvAMG''}
  with \emph{unrestricted} canonical transformations of phase space fields generated by all constraints\footnote{
  Since the I-class constraint set is just the subset of these constraints then the genuine canonical gauge symmetry obtained by the restriction of the parameters. The generality of this consideration is based on the fact that these are the symmetries of the parental gauge theory (permissibly extended to be canonical) before adding the gauge-breaking term.}
  %(\ref{gauge_transfs_ham_AMG'})
    \beq{canon_transf_AMG''}
      \gvar {(\,.\,)} = \PB{\,.\,}{\sint ( \Hl {+} \Fw^{-1} \ptf_{\io} )} \gxl + \PB{\,.\,}{\sint \Hs_\zm} \gxs^\zm + \PB{\,.\,}{\sint \ptf_{\io,\zm}} \gchi^\zm
    \eeq
  and correspondent compensatory gauge transformations of the canonical Lagrange multipliers \TODO{ add ref to Appendix}

}\fi

Finally, since $\inh{\repNl \Fw^{-1}}$ and $\Dvg[\zk]{\repNs}\HIDE{\equiv \partial_\zm \repNs^\zm}$ appear as the left-hand sides of independent equations of motion on the non-GR branch, (\ref{LM_rep_eom_AMG}), anomaly cancellation for $\wwc \tneq {-}1$ theories imposes two constraints on the gauge parameters\footnote{
 Note that the first term in the anomaly (\ref{Anomaly_AMG}) is independent of the on-shell nonvanishing spatially homogeneous part of ${\replm} \teq {\repNl \Fw^{-1}}$, since $\sint \hmg{\repNl \Fw^{-1}}\, \Dvg[\zk]{\gxs} \,\hmgs{\CCof} = 0$ due to the average-free property of the divergence $\Dvg{\gxs}$. Thus, the anomaly reduces solely to terms proportional to the spatially homogeneous part of $\CCof$.
}:
  \beq{gauge_param_restrict}
    \begin{array}{|l}
     \,\Dvg[\zk]{\gxs} =\, 0 \,,
     \quad %
     \Rightarrow\;\;\gxs^\zm =\, \gzetat^\zm,
     \\
      \,\gxl = \Fw \hmg{\geps} \,.
    \end{array}
  \eeq
These correspond to the canonical gauge transformations (\ref{canon_generator_AMG''_fin}): the metric fields transform with the spatially homogeneous parameter $\hmg{\geps}$, while spatial diffeomorphisms are restricted to transverse ones, associated with coordinate transformations of unit Jacobians.
Besides these residual transformations
the auxiliary sector retains the internal symmetry parameterized by $\gchi^\zn$, which incorporates $\inh{\geps}$ from (\ref{canon_generator_AMG''_fin}) via $\inh{\geps} \teq {-}\Dvg[\zk]\gchi$.

The constraints (\ref{gauge_param_restrict}) on the parameters of the parental gauge symmetry confirm that all local (in time) symmetries in the physical sector are exhausted by the canonical symmetries of the theory, supporting the\HIDE{ analogous} conclusions of \cite{Barvinsky:2022guw} regarding the reasons for the  additional reduction of the gauge symmetry on the non-GR branch.

On the non-GR branch, where $\hmgs{\CCof}$ vanishes, the anomaly disappears, indicating the restoration of full diffeomorphism invariance. In the {\UMG} case, (\ref{UMG_part_case}), where $\wwc \teq {-}1$, the anomaly (\ref{Anomaly_AMG}) does not arise, consistent with the fact that the Henneaux--Teitelboim action for unimodular gravity retains full diffeomorphism invariance \cite{Henneaux:1989zc}.

\if{ %% FOOTNOTES
\footnote{
 Unrestricted transformations in the form (\ref{gvar_GR_aux_sector})  can be obtained from canonical action, in which after varying canonical variables with
  \( %\beq{canon_transf_AMG''}
    \gvar {(\,.\,)} = \PB{\,.\,}{\sint ( \Hl {+} \Fw^{-1} \ptf_{\io} )} \gxl + \PB{\,.\,}{\sint \Hs_\zm} \gxs^\zm + \PB{\,.\,}{\sint \ptf_{\io,\zm}} \gchi^\zm
    \,,
  %  \nonumber
 \) %\eeq
variation laws for the canonical Lagrange multipliers are defined from the maximal possible compensation of the non-cancelled terms in the action variation.}

\footnote{
 These Lagrange multipliers are factors at the first-class constraints in canonical general relativity, so they are not defined within variational equations of motion of this theory, which implies that they cannot be expressed as linear combinations of $\VDer{\SSS^{\iGR}}{\Gaux_{\Zm\Zn}}$.
}
}\fi

\if{ %% 2023-06
  \subsection{Exceptional cases}

  Section \ref{SSect:Except_Partic_Cases}

  The obtained alternative action will contain the Henneaux--Teitelboim action as a regular limiting case of the action with $\wwc =-1$. It may be seen by direct comparison with the Henneaux--Teitelboim action for unimodular gravity \TODO{Add ref}
  \\

  The cold pressureless dust exceptional case with $\wwc =0$ is also described by the alternative action (\ref{ActionHTlike_AMG}) with $\Fw=1$.

  This needs comments because for this exceptional case Dirac procedure does not generate secondary constraints in the canonical action and all $\Hs_\zk$ are first class (see Sect. \ref{SSect:Except_Partic_Cases}). However, one can partially break the gauge invariance (namely --- longitudinal spatial diffeomorphisms) with the $ (B\Hl)_{,\zk}$ canonical gauge condition where $B$ --- generically some nonnull function of phase space variables.
  %%(full-rank linear operator?).
   We choose it too be any nonnull function\footnote{
   However, the simplest natural choice here is $B=1$.
  }
  of $\sqrg$: $B(\argsqrg) >0$, so that $\TDer{\ln B}{\ln \sqrg} + 1 \neq 0$. The latter guarantees transversality of the partial gauge-fixing constraint surface $ (B\Hl)_{,\zk}=0$ to the longitudinal spatial diffeomorphisms (cf. the middle equation of (\ref{GUMG_Algebra_onshell}) and condition $\OOmega\neq0$ (\ref{def_OOmega_TTheta}) for $\Fw=1$, $\WW=B$.) After such partial gauge fixing the action for the theory acquires the form similar to extended action of generic {\GUMG} model. Namely for $B=1$ this exactly corresponds to the desired particular model for $\Fw=1$, $\WW=0$. (But note that in this particular case the choice of $B$ is actually generic, since it is not a secondary constraint, predefined by the canonical action of primary constraints on Hamiltonian, but a partial gauge-fixing which is a matter of our choice.

  %%\footnote{For $B=B(\argsqrg)$ one gets the simple condition of transversality of the partial gauge-fixing constraint surface to the transverse spatial diffeomorphism action: one just needs that $\TDer{\ln B}{\ln \sqrg} + 1 \neq 0$. (Cf. the middle equation of (\ref{GUMG_Algebra_onshell}) and condition $\OOmega\neq0$ (\ref{def_OOmega_TTheta}) for $B=\Fw\WW$.) However, the simplest and natural choice here is $B=1$.}

  This is not a regular gauge breaking. Gauge-breaking condition $\inh{\Hl}=0$ is transversal to $\Hs_\zk=0$ everywhere except the surface $\Hl=0$. At the complete constraint surface this subspace is defined by additional condition $\hmg{\Hl}=0$ where the rank of the matrix of constraints vanish (cf. (\ref{def_OOmega_TTheta}) for $\Fw\WW=B$).

  However, this fact does not prevent one from parameterizing the partially-gauge-fixed theory. Since we proceed from the extended-like action (\ref{extAction_GUMG}) (with $\Fw=1$ but arbitrary $\WW\to B$) parametrization goes the same way as for {\wGUMG}.\TBC{\footnote{The $B(\argsqrg)$ will enter as the parameter for operator $\opEE'$ instead of $\WW$ (here $\Fw=1$). When changing the basis of constraints one gets the same $\ptf_{,\zn}=0$ (which follows from Corollary \ref{Corollary:inhIWE-1f=0}). The footprint of $B$ is left only in internal structure of operator $\opEE'$.}}

  Thus, the ordinary situation may be restored in this case by the additional unfortunate rank-irregular gauge fixing. After that the parametrization and further procedures can be continued as for the generic {\wGUMG} case.

}\fi

%%-------------------------=%        ******         %=-------------------------%%
%%-------------------------=%        ******         %=-------------------------%%
%%-------------------------=%        ******         %=-------------------------%%

\newpage
 \section{Alternative Action for General {\GUMG} Theories}
  \label{Sect:AltActionGenGUMG}
   \hspace{\parindent}
In this section, we extend the derivation of the alternative parameterized action\HIDE{ from the previous section} to general {\GUMG} theories with nonconstant $\WW(\argsqrg)$. However, certain complications deserve special attention. The local parameterization procedure is generally inapplicable, so we analyze its failure
and use a consistent parameterization scheme, also discussed in Section \ref{SSect:ParameterizedAction_AMG}.
This clarifies the origin of \emph{spatial} nonlocality,
which affects the dynamics and further complicates the gauge structure. Note that the description remains local in time, so all nonlocalities discussed in this section are strictly \emph{spatial} nonlocalities.

  \subsection{Parameterized canonical action\HIDE{ (nonlocal minimal parametrization)} }
   \label{SSect:ParameterizedAction}
   \hspace{\parindent}
As in {\wGUMG} theory, we first seek a parameterized model that is physically equivalent to the initial extended canonical theory (\ref{extAction_GUMG}). However, for a generic {\GUMG} model, a common field-theoretic issue arises: the naive parameterized action with the local constraints,
 $ %  \bea{paramAction_naive} %(\ref{paramAction_naive_AMG})
    \SSS^{loc}_{par} %%[\g,\pg,\tf^\io\!,\ptf_\io,\replm^\io,\repNs,\replmCTn]
    = \! \int \!d\taux \hspace{1pt} \dsx %\int d\taux\, \dsx
    \,\Lib
     \pg^{\zm\zn}\tauxdot{\g}_{\zm\zn}
     \!+ \ptf_\io\,\tauxdot{\tf}^\io
     - \replm^\io ( \ptf_\io {+} {\FF \Hl} )
     - \repNs^\zm \Hs_\zm
     \!- \replmCTn^\zm {\CT}_{,\zm}
     \Rib
     \,,
 %  \qquad% \nonumber\\
 $ % \eea
fails to be preserve this equivalence.
To analyse the problem, consider the Poisson bracket relations for the new constraint,
  \bea{parGUMG_Algebra_offshell} %From GUMGr_NDV file \ref{PB_GUMGr_B1}
   \begin{array}{lll}
     \PB{\sint f(\ptf_\io {+} {\FF \Hl})}{\sint g(\ptf_\io {+} {\FF \Hl})}
     \; = \; % & {\!=\!} &
     \sint ( f \ader_{\zn} {g} )\, \FF^2 \g^{\zn\zm} \Hs_\zm
     \;\; \weq \;\; 0
    \,, \vphantom{\big|^I}
    \\
     \PB{\sint {\xi}^\zm\Hs_\zm}{\sint g(\ptf_\io {+} {\FF \Hl})}
    \; = \; % & {\!=\!} &
       \HIDE{+} \sint g_{,\zn} \xi^\zn \tfrac1\WW\, \CT
       - \sint g \Dvg{\xi} \,\CT
     \;\; \weq \;\;
      \sint g_{,\zn} \xi^\zn \frac{\WW{+}1}{\WW}\, \hmg{\CT}
    \,, \vphantom{\big|^I}
    \\
     \PB{\sint \eta^{\zm}{\CT}_{,\zm}}{\sint g (\ptf_\io {+} {\FF \Hl})}
     \; = \; % & {\!=\!} &
      \HIDE{+} \sint f \Dvg{\eta} \,\TTheta \,\trpg\,\CT
      + \sint \big(g \ader_{\zn} \Dvg{\eta} \WW \big) \FF^2 \g^{\zn\zm} \Hs_\zm
  %    \\
  %    \phantom{\PB{\sint f(\ptf_\io {+} {\FF \Hl})}{\sint \eta^{\zm}{\CT}_{,\zm}}}
     \;\; \weq \;\;
      \sint g \Dvg{\eta} \,\TTheta \,\trpg\,\hmg{\CT}
    \,, \vphantom{\big|^I}
    \hspace{-10mm}
    \\
   \end{array}
   \label{parGUMG_Algebra_onshell}
  \eea
where the structure $\TTheta(\argsqrg)$ is defined in (\ref{def_OOmega_TTheta}), as before, $\,\Dvgn{\xi}\defeq \Dvgp[\zk]{\xi}\,$\HIDE{ is the spatial vector divergence},
$(f \ader_{\zn} g) \defeq (f g_{,\zn} {-} f_{,\zn} g)$\HIDE{ is the antisymmetric derivative}, weak equality $\:\weq\:$ denotes the equality on the constraint surface, and $f(\sx), g(\sx), \xi^\zm(\sx), \eta^\zm(\sx)$ are arbitrary test functions. These relations extend (\ref{GUMG_Algebra_onshell}) to the full set of structure relations for the theory, described by the naive local parameterized action $\SSS^{loc}_{par}$.
The non-equivalence can be noticed for spatial diffeomorphisms $\sint {\xi}^\zm\Hs_\zm$.
\HIDE{In the original theory, }The right-hand side of (\ref{GUMG_Algebra_onshell}) depends only on the divergence $\Dvg{\xi}$, demanding that a gauge part of  spatial diffeomorphisms should be transverse. However, in the naive parameterized theory $\SSS^{loc}_{par}$, the right-hand side of (\ref{parGUMG_Algebra_onshell}) introduces dependence on additional components of $\xi^\zm$, implying that not all transverse diffeomorphisms are remain first-class. This signals a modification of the gauge structure and an altered number of degrees of freedom.

A more rigorous and direct proof of the inequivalence between the original {\GUMG} theory $\SSS_{\iE}$,  (\ref{extAction_GUMG}), and the naive parameterized model $\SSS^{loc}_{par}$ follows from the violation of the equivalence conditions\HIDE{ extensively} discussed in Section \ref{SSect:ParameterizedAction_AMG}. \HIDE{Specifically, }The constraint rank condition and the requirement for an\HIDE{ existence of the} accessible correspondence gauge for $\SSS^{loc}_{par}$ reduce to the question of whether a \emph{first-class} constraint of the form $\CPI^{loc} \defeq (\ptf_\io {+} \FF\Hl) {\,+\,} \alpha^\zn\Hs_\zn {\,+\,} \beta\,\inhw{\CT}$ exists, where $\alpha^\zn$ and $\beta$ are\HIDE{ some} linear operators acting to the right on the\HIDE{ respective} constraints.
The involution of $\CPI^{loc}$ with all constraints
\big($\sint f (\ptf_\io {+} {\Fw \Hl} ), \sint \xi^\zm\Hs_\zm, \sint \eta^\zm {\CT}_{,\zm}$\big) imposes the following on-shell conditions on $\alpha^\zm$ and $\beta$:
  \bea{CPI_naive_coef_eqs_left}
  \left\{
   \begin{array}{lll}
     \HIDE{\vec}{\alpha}^\zn (\tfrac{\WW+1}{\WW} \HIDE{\partial_\zn} f_{,\zn} ) \, - \,
     \HIDE{\vec}{\beta} ( \inh{\TTheta\trpg f} )
      \;\;{=}\;\; %\;\;\weq\;\;
      0
    \,,
    \\
     \HIDE{\vec}{\beta} \big( \inh{\OOmega \,\Dvg{\xi}} \big)
      \;\;{=}\;\; %\;\;\weq\;\;
     - \HIDE{\partial_\zn} (\tfrac{\WW+1}{\WW} \xi^\zn )_{,\zn}
    \,,
    \\
     \HIDE{\vec}{\alpha}^\zn \HIDE{\partial_\zn} \big( \OOmega \,\Dvg{\eta} \big)_{\!,\zn}
      \;\;{=}\;\; %\;\;\weq\;\;
     - \TTheta\,\trpg\, \Dvg{\eta}
     \,. \vphantom{\phI}
    \\
   \end{array}
   \right.
   %\label{parGUMG_Algebra_onshell_COPY_add}
  \eea
which must hold for \emph{arbitrary} local form factors $f(\sx), \xi^\zm(\sx), \eta^\zm(\sx)$.
However, these conditions fail in general. For instance, for transverse $\xi^\zm$: $\Dvgn{\xi} \HIDE{ \equiv \Dvgp[\zk]{\xi} } \teq 0$, the second relation imposes a to nontrivial constraint on the phase space variables: $\partial_\zn\WW(\argsqrg) \teq 0$. Likewise, the first relation imposes additional constraints\footnote{
  The system (\ref{CPI_naive_coef_eqs_left}) admits a solution on the\HIDE{ functional} subspace of the constraint surface, where the spatial volume density and the conjugate trace of the\HIDE{ gravitational} momenta are\HIDE{ spatially} homogeneous: $\sqrg \teq \hmg{\sqrg}(\taux)$,  $\trpg \teq \hmg{\trpg}(\taux)$. On this subspace, the naive parameterization looks consistent, providing the correspondence of the solutions and the gauge structures. However, the Poisson bracket matrix of constraints exhibits a rank discontinuity at this subspace. So, the two theories are inequivalent outside this subspace the theories describing different numbers of degrees of freedom.
}.
Since no independent first-class constraint $\CPI^{loc}$ exists to compensate for the additional phase space variables introduced by parameterization, the number of degrees of freedom changes, and the original nonparameterized action (\ref{extAction_GUMG}) cannot be recovered upon the consistent reduction. This confirms the inconsistency of the naive local parameterization $\SSS^{loc}_{par}$.

%\newpar

A possible way to make the local canonical parameterization scheme consistent is to extend the constraints with $O(\tf^\io_{,\zm})$ terms, which vanish in the correspondence gauge (\ref{correspondence_gauge_AMG}). These terms can be expressed as a series in powers of $\tf^\io$, $\ptf_\io$, and their spatial derivatives. However, in nonlinear gauge field theories with open algebras, such an approach typically leads to infinite series involving auxiliary variables and their derivatives.
Additionally, one could modify the initial Hamiltonian by adding total spatial derivatives\HIDE{ or other average-free combinations, if we allow spatial nonlocality}. While these do not affect the local properties of the Hamiltonian, they become nontrivial bulk components of a local constraint after the parameterization. However, this does not restore equivalence to the original theory while preserving spacetime locality.

\if{
  \footnote{
    A possible way to cure the non-first-class property of the naive parametrization constraint is to extend\HIDE{ (possibly all)} constraints with $O(\tf^\io_{,\zm})$ terms\HIDE{ with higher powers of $\tf^\io$, $\ptf_\io$ and their derivatives}, which vanish in the correspondence gauge (\ref{correspondence_gauge_AMG}).
     %%(See Section \ref{ASect:parametrization} for details.)
    These terms can be represented as series in powers of $\tf^\io$, $\ptf_\io$ and their spatial derivatives, however in nonlinear gauge field theories with open algebras one generically gets infinite series\HIDE{ in the auxiliary phase space variables}.
    Also one can add total spatial derivatives to initial Hamiltonian (which do not affect local properties of Hamiltonian, but after ``localization'' when parameterizing it become nontrivial bulk component of a local constraint).
  } %
}\fi

    \subsubsection*{Consistently parameterized action}
     \hspace{\parindent}
The discussion in the context of {\wGUMG} suggests that the key ingredients leading to the action in the Henneaux--Teitelboim-like form are the parameterization of the canonical action with local fields and the separation of a new constraint with a functionally complete $\Hl$ structure. These goals can be achieved by a consistent parameterization, which, however, picks up only the homogeneous component of the Hamiltonian into the new constraint, and then mixing it with the specific secondary constraint, which encodes information of the inhomogeneous (average free) component of the Hamiltonian.
As\HIDE{ was already} mentioned in Section \ref{SSect:ParameterizedAction_AMG}, one can always perform a homogeneous parameterization
  $\SSS_{\,\hmg{\!par\!}} = \int d\taux\hspace{1pt} \dsx \, \Lib
     \pg^{\zm\zn}\tauxdot{\g}_{\zm\zn}
     +  \hmg{\ptf}_\io\,\tauxdot{\hmg{\tf}}^\io
     -   \hmg{\replm}^\io ( \hmg{\ptf}_\io {+} \hmg{\FF \Hl} )
     - \repNs^\zm \Hs_\zm
     - \replmCT^\zm \CT_{,\zm}
     \Rib
  $
of the canonical theory (\ref{extAction_GUMG}), which implements\HIDE{ only} homogeneous time-reparametrization gauge freedom $\taux\to\taux'(\taux)$ and guarantees physical equivalence of the theories\footnote{
  The homogeneously parameterized theory $\SSS_{\,\hmg{\!par\!}}$ is equivalent to the original one, as it satisfies the rank and correspondence criteria. From general considerations \cite{Henneaux:1992ig}, there exist a first-class Hamiltonian of the form $\sint (\FF\Hl {+} U_0^\zn\Hs_\zn)$, with  $U_0^\zn \teq U_0^\zn(\g,\pg)$ (\ref{lm_solutions_nonGR}), providing a new first-class homogeneous constraint $\hmg{\replm}^\io\sint (\ptf_\io {+} \FF\Hl {+} U_0^\zn\Hs_\zn)$ in $\SSS_{\,\hmg{\!par\!}}$. This constraint is evidently transversal to the correspondence gauge $\hmg{\tf}^\io {-\,} \taux \teq 0$, guaranteeing equivalence with the original theory. The theory based on $\SSS_{\,\hmg{\!par\!}}$ and its Lagrangian version are briefly discussed in Appendix \ref{ASect:Homogeneous_Parameterization}.
}.
Furthermore, one can localize the sector of the auxiliary fields by adding to the action the\HIDE{ initially disentangled} pure gauge\HIDE{ canonical} combination of spatially inhomogeneous fields,
  $ % \bea{}
   \sint \Lib
     \inh{\ptf}_\io\,\tauxdot{\inh{\tf}}^\io
     {\,-\,} \inh{\replm}^\io \,\inh{\ptf}_\io
     \Rib
     \,,
  $ % \eea
 which remains decoupled until the basis of constraints is changed.
By aggregating the average and\HIDE{ complementary} average-free components of the\HIDE{ corresponding} auxiliary fields into local fields $\tf^\io(\taux,\sx)$, $\ptf_\io(\taux,\sx)$, we obtain the consistently parameterized action
  \bea{paramAction_cons_GUMG} %% (\ref{paramAction_cons_GUMG})
    \SSS_{par}[\g,\pg,\tf^\io,\ptf_\io,\replm^\io,\repNs,\replmCT]
    \!&=&\!\! \int \!d\taux \hspace{1pt} \dsx  %\int\! d\taux\, \dsx
    \,\LiB
     \pg^{\zm\zn}\tauxdot{\g}_{\zm\zn}
     \!+ \ptf_\io \,\tauxdot{\tf}^\io
     - \replm^\io ( \ptf_\io {+} \hmg{\FF \Hl} )
     - \repNs^\zm \Hs_\zm
     - \replmCT^\zm \CT_{,\zm}
     \RiB , \;
   \quad% \nonumber\\
  \eea
which is physically equivalent to the {\GUMG} extended action (\ref{extAction_GUMG}). The new parametrization constraint contains only the spatially average component $\hmg{\FF \Hl}(\taux)$ of the original Hamiltonian density term\HIDE{, and thus the action is not spatially local}.
\if{ %% v. 2025-02
However, there is a consistent way to parameterize the field-theoretical with local auxiliary fields, discussed in Section \ref{SSect:ParameterizedAction_AMG}, which however parameterizes the Hamiltonian term homogeneously. On the other hand, in {\GUMG} the specific structure of the secondary constraint term
 %%, according to discussion in Section \ref{SSect:ParameterizedAction_AMG},
allows to construct the new constraint with the functionally full $\Hl$ structure after the universally possible homogeneous time parameterization of the action and localizing the auxiliary sector fields by adding a gauge-trivial combination.\footnote{
 % To give another perspective of the above construction of nonlocal parameterized action we note that it
As was noted in Section \ref{SSect:ParameterizedAction_AMG}\HIDE{ and discussed in Appendix \ref{ASect:parametrization},} minimal parametrization (\ref{paramAction_cons_GUMG})
can be achieved from homogeneously parameterized action
$\int d\taux\hspace{1pt} \dsx \, \LiB
     \pg^{\zm\zn}\tauxdot{\g}_{\zm\zn}
     +  \hmg{\ptf}_\io\,\tauxdot{\hmg{\tf}}^\io
     -   \hmg{\replm}^\io ( \hmg{\ptf}_\io {+} \hmg{\FF \Hl} )
     - \repNs^\zm \Hs_\zm
     - \replmCT^\zm \CT_{,\zm}
     \RiB$
 % (\label{hmgparAction_GUMG})
which is a ``mechanical'' parametrization of the action (\ref{extAction_GUMG}), which implements only homogeneous time-reparametrization gauge freedom $\taux\to\taux'(\taux)$,
by adding a trivial (purely gauge) pair of spatially inhomogeneous fields
 \bea{}
    \sint \Lib
     \ptf_\io \tauxdot{\tf}^\io
     - \replm^\io ( \ptf_\io + \hmg{\FF \Hl} )
     \Rib
   \;=\;
    \sVol \Lib
     \hmg{\ptf}_\io\,\tauxdot{\hmg{\tf}}^\io
     - \hmg{\replm}^\io \,( \hmg{\ptf}_\io + \hmg{\FF \Hl} )
     \Rib
     +
   \sint \Lib
     \inh{\ptf}_\io\,\tauxdot{\inh{\tf}}^\io
     - \inh{\replm}^\io \,\inh{\ptf}_\io
     \Rib .
  \eea
Here \emph{spatially inhomogeneous} part $\inh{\phi}(\taux,\sx)$ of a field $\phi(\taux,\sx)$ is defined as
 $ % \beq{}
  \inh{\phi} \defeq \phi - \hmg{\phi},
 $ % \eeq
so that
$\sint \inh{\phi} =0$.
Before change of the constraint basis, the inhomogeneous sector of the new auxiliary fields is completely decoupled and does not interact with fields of the homogeneously parameterized action\HIDE{ (\ref{hmgparAction_GUMG})}.
}

 This procedure would allow to satisfy the rank and correspondence criteria of valid parameterization.
}\fi
\if{ %%% v.2024
  As in \wGUMG\ theory, we first rewrite the extended action (\ref{extAction_GUMG}) in \emph{time-parameterized} form. However, for generic {\GUMG} model one faces the difficulty that is typical for field theory. The standard mechanical time-parametrization prescription \cite{Henneaux:1992ig} when applied to field theory may contradict space-time locality and covariance, as well as it may be inconsistent with its local gauge structure. One is forced either to introduce additional auxiliary variables and constraints
  %%to parameterize all spacetime coordinates
  or the time-parameterized action appears as (generically infinite) expansion in powers of spatial derivatives of the parameterized canonical time field. Alternatively one can explicitly construct spatially nonlocal action with parameterized time directly.
  %%% Being universal in mechanics \cite{Henneaux:1992ig} parametrization of the theory acquiring spacetime locality becomes a nontrivial problem in field theory.
  ({Brief discussion of these issues is summarized in Appendix \ref{ASect:parametrization}}.)
  Below we use the latter approach to parameterize the generic {\GUMG} model. The naive parametrization prescription, used for parameterizing {\wGUMG} in Section \ref{SSect:ParameterizedAction_AMG} fails because ranks of the matrix of the
  Poisson brackets for the constraint set $\big(\Hs_\zm, {\CT}_{,\zm}\big)$
  of the extended theory (\ref{extAction_GUMG}) and that of the enlarged $\big( (\ptf_\io {+} {\FF \Hl} ), \Hs_\zm, {\CT}_{,\zm}\big)$ constraint set of naively-parameterized theory, analogous to (\ref{paramAction_naive_AMG}), do not coincide.\footnote{Remind the notation (\ref{def_CT}): $\CT\equiv\WW\FF\Hl$, $\CT_{,\zm}\equiv \partial_{\zm} \CT$.} Along the lines of the discussion in Section \ref{SSect:ParameterizedAction_AMG} this means that for generic {\GUMG} theory with $\WW\neq \wwc \teq \const$ naive parametrization does not introduce a new first-class constraint into the parameterized constraint set.
}\fi
\if{ %% 2023
  %\paragraph{Homogeneous reparametrization}\vphantom{.}\\
  \vspace{5mm}

  In field theory one can always use the direct analog of the quantum mechanical reparametrization by introducing only \emph{homogeneous}\footnote{In what follows we denote $\sx$-independent (only time-dependent) quantities \emph{homogeneous}.} degrees of freedom, which account for $\sx$-independent --- spatially-homogeneous --- time reparametrization
   \bea{hmgparAction_GUMG} %(\ref{parAction_BLRG})
      \SSS_{\,\hmg{\!par\!}}[\g,\pg,\hmg{\xo},\hmg{\po},\hmg{\replm},\repNs,\replmCT]
      &=& \int d\taux\, \dsx \, \LiB
       \pg^{\zm\zn}\tauxdot{\g}_{\zm\zn}
       + \hmg{\po}\,\tauxdot{\hmg{\xo}}
       - \hmg{\replm} ( \hmg{\po} + {\FF \Hl} )
       - \repNs^\zm \Hs_\zm
       - \replmCT^\zm \CT_{,\zm}
      \RiB
   %   \nonumber\\
   %   &=& \int d\taux\,  \LiB
   %    \BraKet{\pg^{\zm\zn}}{\tauxdot{\g}_{\zm\zn}}
   %    + \hmg{\po}\sVol\tauxdot{\hmg{\xo}}
   %    - \hmg{\replm} \sVol (\hmg{\po} + \hmg{\FF \Hl})%\BraKet{1}{\FF \Hl})
   %    - \BraKet{\repNs^\zm}{\Hs_\zm}
   %    - \BraKet{\replmCT^\zm}{\CT_{,\zm}}
   %   \RiB
    \eea
  where $\hmg{\xo}=\hmg{\xo}(\taux)$, $\hmg{\po}=\hmg{\po}(\taux)$, $\hmg{\replm} =\hmg{\replm} (\taux)>0$ --- new spatially homogeneous fields which depend only on time and $\sVol\equiv \sint 1$ --- is the coordinate volume of $\taux = \const$ hypersurfaces. All other fields are reexpressed as functions of new time time variable $\taux$ (and spatial coordinates $\sx$.\footnote{One could embed the original theory in parameterized theory in different ways. For example one can leave the time coordinate of the base spacetime manifold unchanged and just introduce new extended set of fields. This way does not assume reexpression of fields since the time coordinate is not changed. Also one can rescale Lagrange multipliers by $\replm$, or by $\tauxdot{\xo}$, or keep them as in original theory. \TBC{These all are physically equivalent ways of parameterizing the theory}.}

  Classical equivalence with the theory (\ref{extAction_GUMG}) is guaranteed by the fact that there appear the new first-class homogeneous ``Hamiltonian'' constraint $(\hmg{\po} + \hmg{\FF \Hl} + \hmg{U_0^\zm\Hs_\zm})$ which may be gauge-fixed with condition $\hmg{\xo}(\taux)-\taux=0$ so that the reduction with respect to these (now the second-class) constraints recovers the extended action (\ref{extAction_GUMG}).

  The configuration space of the parameterized theory (\ref{hmgparAction_GUMG}) contains fields which are elements of different functional spaces\HIDE{ (i.e. sections of different bundles)}. This introduces (spatial) nonlocality to the theory. However, it is more convenient to work with configuration space of the same functional space local in spacetime manifold. And it is always possible to reformulate (\ref{hmgparAction_GUMG}) in terms of only local fields by adding to homogeneous fields the correspondent complementary sector which is a pure gauge.\footnote{In the context of {\GUMG} the local parametrization will keep us closer to Henneaux--Teitelboim line of reasoning for {\UMG} theory \cite{Henneaux:1989zc}.}
}\fi
%
%\if{ %% Good 2025-02 version
However, this form of the constraint keeps the equivalence transparent. The criterion for equivalent parameterization is the introduction of a new first-class constraint\HIDE{ to the theory}, which is transversal to the correspondence gauge condition.
From general arguments \cite{Henneaux:1992ig}, the original theory (\ref{extAction_GUMG}) admits a first-class Hamiltonian of the form $\sint (\FF\Hl {+} U_0^\zn\Hs_\zn)$ with  $U_0^\zn=U_0^\zn(\g,\pg)$ given by (\ref{lm_solutions_nonGR})\footnote{
  These $U_0^\zn(\g,\pg)$ correspond to on-shell solutions for the Lagrange multipliers for the primary constraints.
}.
This promotes a new first-class constraint in the consistently parameterized theory (\ref{paramAction_cons_GUMG}):
  \beq{first_class_param_constraint_GUMG}
    \CPI \defeq \HIDE{\sint f(\sx)}
    \HIDE{(} \ptf_\io + \hmg{\FF\Hl {+} U_0^\zn\Hs_\zn} \HIDE{)}
    \,.
  \eeq
This constraint is clearly transversal to the correspondence gauge condition,
  \beq{correspondence_gauge}
    \tf^\io(\taux,\sx) - \taux = 0
    \,.
  \eeq
However, its transversality to a local constraint is ensured by the locality of the $\ptf_\io$ term in (\ref{first_class_param_constraint_GUMG}), meaning that the (non)locality of the Hamiltonian term, which depends on non-auxiliary metric fields, does not affect this property.

\if{ %% v 2025-02 good
One either can think that $\taux$ as the new time variable, which flows independently in different points of fixed spatial coordinates $\sx^\zn$, that dynamical field $\tf^\io$ characterize the values of time parameter $t$ of (\ref{extAction_GUMG}) in the new coordinate parametrization of spacetime points of (\ref{paramAction_cons_GUMG}), and that all fields are now redefined to be the fields, dependent on new time variable and spatial coordinates\footnote{
 In this interpretation the canonical Lagrange multipliers in the parameterized action formally relates to counterparts in the extended action as
 $ %  \bea{def_repNs_repvaux_}
  \repNs^\zm(\taux,\sx) \;\leftrightarrow\; \tauxdot{\tf}^\io(\taux,\sx) \Ns^\zm(\tf^\io(\taux,\sx),\sx),
  \;\; %\qquad
  \replmCT^\zm(\taux,\sx) \;\leftrightarrow\; \tauxdot{\tf}^\io(\taux,\sx) \lmCT^\zm(\tf^\io(\taux,\sx),\sx) \,.
   \nonumber
 $ %\eea
 Factors $\tauxdot{\tf}^\io(\taux)$ compensate the change of the integration measure.
 %% \TODO{change notation for l.m. in parameterized action or equality sign in (\ref{def_repNs_repvaux_})}
},
%% Though this is not important in classics and in quantum applications.
%\TODO{add ref|discussion why first class}
%
%% without extending constraint with terms $O(\inh{\xo})$.
%
 %\if{ %% version 2023
 % \bea{def_repNs_repvaux_}
 %  \repNs^\zm = \tauxdot{\xo}\Ns^\zm, \quad
 %  \replmCT^\zm = \tauxdot{\xo}{\lmv}^\zm.
 % \eea
 % (note that $\tauxdot{\xo} \eomeq \replm$)
 %% to preserve their gauge transformation law?
 %}\fi
with addition of the new fields $\tf^\io(\taux,\sx)$, $\ptf_\io(\taux,\sx)$  $\replm^\io(\taux,\sx)$. Or one may think of transition from description (\ref{extAction_GUMG}) to (\ref{paramAction_cons_GUMG}) as simple enlarging the configuration space of the extended action on the same spacetime manifold $(t,\sx^\zn)$, so that times $\taux$ in (\ref{paramAction_cons_GUMG}) and $t$ in (\ref{extAction_GUMG}) are identical.
Both interpretations are valid since the minimally parameterized action (\ref{paramAction_cons_GUMG}) satisfies equivalence conditions\HIDE{ (\ref{cond1}),(\ref{cond2}),(\ref{cond3})} discussed in Section \ref{SSect:ParameterizedAction_AMG}. Namely, there is a new\HIDE{ ``local'' (with respect to new auxiliary fields)} first-class constraint\footnote{
 The spatially nonlocal function $U_0^\zm(\g,\pg)$ is defined in (\ref{lm_solutions_nonGR}). In any case, we will discuss the constraint and gauge structures of the parameterized theory in more detail later.
 }
$\ptf_\io {+} \hmg{\FF \Hl} {+} \hmg{U_0^\zn\Hs_\zn}$
and this new gauge symmetry may be gauge-fixed with the accessible gauge-fixing condition --- the \emph{correspondence gauge}
  \beq{correspondence_gauge}
    \tf^\io(\taux,\sx) - \taux = 0
    \,,
  \eeq
so that the reduced gauge-fixed action recovers the extended action (\ref{extAction_GUMG}). This proves the physical equivalence of the theories defined by the actions (\ref{extAction_GUMG}) and (\ref{paramAction_cons_GUMG}) in any interpretation of the latter.
}\fi

\if{ %% version 2023-06
Local time reparametrization is acquired by introducing \emph{auxiliary time} $\taux$ which can flow in different spatial points independently, and adding new auxiliary fields: time parameter $t$ turns into dynamical field $\xo(\taux,\sx)$, $\po(\taux,\sx)$ is its conjugated momenta, $\replm(\taux,\sx)$ is an auxiliary Lagrange multiplier for the parameterized ``Hamiltonian'' constraint $\po(\taux,\sx) + \hmg{\FF \Hl}(\taux)$ where $\hmg{\FF \Hl}(\taux)\equiv \sVol^{-1}\sint \FF \Hl$ is the spatially homogeneous part of Hamiltonian density $\big(\FF \Hl\big)(\taux,\sx)$. The parameterized action reads
 \bea{paramAction_cons_GUMG} %(\ref{parAction_BLRG})
    \SSS_{par}[\g,\pg,\xo,\po,\replm,\repNs,\replmCT]
    &=& \int d\taux\, \dsx \, \LiB
     \pg^{\zm\zn}\tauxdot{\g}_{\zm\zn}
     + \po\,\tauxdot{\xo}
     - \replm ( \po + \hmg{\FF \Hl} )
     - \repNs^\zm \Hs_\zm
     - \replmCT^\zm \CT_{,\zm}
     \RiB
   \qquad% \nonumber\\
  \eea
which satisfies equivalence conditions (\ref{cond1}),(\ref{cond2}),(\ref{cond3}). There is a new spatially local first-class constraint $(\po + \hmg{\FF \Hl} + \hmg{U_0^\zm\Hs_\zm})$ (with nonlocal internal structure) which may be completely fixed with the gauge-fixing condition $\xo(\taux,\sx) - \taux = 0$ so that the reduced gauge-fixed action will recover the extended action (\ref{extAction_GUMG}). Which proves the equivalence.
}\fi

    \subsubsection*{Constraint structure}
     \hspace{\parindent}
\newcommand{\lmv}{{\color{DarkGreen}\mu}}
The constraint basis of the parameterized action (\ref{paramAction_cons_GUMG}), in addition to the original constraints $(\Hs_\zn,\CT_{,\zn} \HIDE{\equiv(\WW\FF\Hl)_{,\zn}} )$ of (\ref{extAction_GUMG}), includes the new constraint $\ptf_\io {\,+\,} \hmg{\FF\Hl}$. Consequently, the original structure relations (\ref{GUMG_Algebra_offshell}) are supplemented by
  \bea{GUMG_addpar_offshell}
   \begin{array}{lll}
     & \PB{\sint f (\ptf_\io{+}\hmg{\FF\Hl})}{\sint g (\ptf_\io{+}\hmg{\FF\Hl})}
     \;=\; 0
     \,, \vphantom{\big|^I} \\
     & \PB{\sint {\xi}^\zn\Hs_\zn}{\sint g (\ptf_\io{+}\hmg{\FF\Hl})}
     \;=\; - \sint \Dvg{\xi} \hmg{g} \,\CT
     \;\weq\; 0
     \,, \vphantom{\big|^I} \\
     & \PB{\sint \eta^{\zm}\CT_{,\zm}}{\sint g (\ptf_\io{+}\hmg{\FF\Hl})}
     \;=\; \sint \Dvg{\eta} \hmg{g} \,\TTheta \,\trpg\,\CT
        + \sint \big(\Dvg{\eta}\WW \big)_{\!,\zn} \hmg{g} \,\FF^2 \g^{\zn\zk} \Hs_\zk
     \;\weq\;
     \sint \Dvg{\eta} \hmg{g} \,\TTheta \,\trpg\,\hmg{\CT}
     \,, \vphantom{\big|^I}
     \hspace{-5mm} \\
   \end{array}
   \label{GUMG_addpar_onshell}
  \eea
where $\TTheta(\argsqrg)$ is defined in (\ref{def_OOmega_TTheta}). As expected, only the homogeneous component of the test function $g$ appears on the right-hand sides.
Due to the latter Poisson bracket, the new constraint $\ptf_\io{\,+\,}\hmg{\FF\Hl}$ is not first-class. However, structure relations (\ref{GUMG_Algebra_offshell}) and (\ref{GUMG_addpar_onshell}) imply that the system of involution conditions with basis constrains for the coefficients $\alpha$ and $\beta$ in the combination $\CPI \defeq (\ptf_\io {+} \FF\Hl) {\,+\,} \alpha^\zm\Hs_\zm {\,+\,} \beta\,\inhw{\CT}$ now has a solution given by (\ref{first_class_param_constraint_GUMG}).
The basis of the first-class constraint algebra for consistently parameterized theory (\ref{paramAction_cons_GUMG}) is spanned by $\sint f \CPI$ and transverse spatial diffeomorphisms $\,\sint{\xi}_{\isst}^\zm \Hs_\zm\,$ with $\Dvgi[\zm]{\xi_{\isst}} \teq 0$, (\ref{transverse_diffeomorphisms}). As in the original theory, the longitudinal component of spatial diffeomorphisms and the secondary constraint can be treated as second-class basis elements.

%%-------------------------=%        ******         %=-------------------------%%
%%-------------------------=%        ******         %=-------------------------%%

  \subsection{Henneaux--Teitelboim-like action and delocalization}
   \label{SSect:AltAction_GUMG}
    \subsubsection*{Delocalization operator}
     \label{SSect:ConstraintBasisChange}
     \hspace{\parindent}
Starting from the consistently parameterized action (\ref{paramAction_cons_GUMG}) we modify the constraint basis to consolidate the $\Hl$ structure in a single constraint. The set of constraint equations from representation (\ref{paramAction_cons_GUMG}): $\ptf_\io {+} \hmg{\FF \Hl} \teq 0$, \, ${\CT}_{,\zm} \equiv (\WW \FF \Hl)_{,\zm} \teq 0$,
describes the same constraint surface as the set
$\ptf_\io {+} \hmg{\FF \Hl} {+} \WW^{-1} \inh{\WW \FF \Hl} \teq 0$, \, ${\CT}_{,\zm} \teq 0$.
The first constraint in the new set can be reformulated as $\ptf_\io {+} \opE \FF \Hl \teq 0$, where the operator $\opE$ acts\HIDE{ to the right} on the function $\FF \Hl$ with kernel
  \beq{opE_kernel}
    E(\taux|\sx;\sx')
    =
    \delta^{\Ddim{-}1}(\sx;\sx')
    - \WW^{-1} (\taux,\sx) \frac1{\sVol{(\taux)}}  \WW (\taux,\sx')
    + \frac1{\sVol{(\taux)}}
  \,,
  \eeq
where $\WW(\taux,\sx) \tequiv \WW(\sqrg {\scalebox{0.8}{\mbox{\ensuremath{(\taux,\sx)}}}})$ and the volume normalization factor is\footnote{
 %This kernel representation is directly applicable to\HIDE{ the case of the} compact spatial sections with finite volumes. 
 For the case of noncompact spatial sections, the volume normalization factors can be understood through a\HIDE{ finite-volume} limiting procedure. See Appendix \ref{ASect:NonCompact}.
}
$\sVol(\taux) \defeq \sint_{\taux = \const}1$. To clarify the structure, note that in terms of the complementary symmetric projectors $\hmgId$ and $\inhId$, which extract the spatially homogeneous (average) and inhomogeneous (average-free) parts of a function respectively, the operator $\opE$ is represented as
 $ % \beq{def_opE}
  \opE \;{=}\; \hmgId + \WW^{-1} \inhId \WW
         \;{=}\; \Id + \hmgId - \WW^{-1} \hmgId \WW
         \,,
 $ % \eeq
where $\Id \teq \hmgId + \inhId$ is the identity operator.
Consequently, the action of $\opE$ on local function $f(\sx)$, $g(\sx)$ to the left and right can be expressed as
 \beq{opE_left-right_action}
  \begin{array}{|l}
    f\opE \:=\: f - \hmg {f \WW^{-1}}\,\WW + \hmg{f} \,, \\
    \opE g \:=\: g - \WW^{-1}\,\hmg{\WW g} + \hmg{g} \,, \\
  \end{array}
  \quad \Leftrightarrow \quad
  \sint f \opE g
    \:=\:
     \sint \big(
    f g
    -  \hmg{f\WW^{-1}\!}\,\, \hmg{\WW g}
    +  \hmg{f}\,\hmg{g}
    \big)
    \,.
 \eeq
Relevant properties of the operator $\opE$ are detailed in Appendix \ref{ASSect:opE_properties}. The crucial property is that $\opE$ is a \emph{nondegenerate} (full-rank) operator. This nondegeneracy ensures the existence of the inverse operator $\opinvE$ (see lemma \ref{Lemma:InvE}) with the kernel
 \beq{invE_kernel}
  E^{-1}(\taux|\sx;\sx')
  \;=\; \delta(\sx;\sx')
    \,-\, \frac{\WW^{-1}(\taux,\sx) }{\sint {\WW^{-1}} (\taux)\vphantom{\big{|}}}
    \,-\, \frac{\WW(\taux,\sx')}{\sint{\WW}(\taux)\vphantom{\big{|}}\,}
    \,+\, 2
    \frac{\WW^{-1}(\taux,\sx) }{\sint {\WW^{-1}} (\taux)\vphantom{\big{|}}}
    \sVol(\taux)
    \frac{\WW(\taux,\sx')}{\sint{\WW}(\taux)\vphantom{\big{|}}\,}
  \if{
   = \delta^{\Ddim{-}1}(\sx;\sx')
    - \WW^{-1}(\taux,\sx) \tfrac{1}{\sint {\WW^{-1}}(\taux)}
    -\tfrac{1}{\sint{\WW}(\taux)} \WW(\taux,\sx')
    + 2 \WW^{-1}(\taux,\sx)
        \tfrac{\sVol(\taux)}{\sint{\WW^{-1}}(\taux)  \sint{\WW}(\taux) }
        \WW(\taux,\sx')
  }\fi
  \,.
 \eeq
The existence of\HIDE{ the inverse operator} $\opinvE$ allows expressing $\Hl$ in terms of $\ptf_\io$ on the constraint surface: $\Hl = - \FF^{-1}\opinvE \ptf_\io$, eliminating it from the secondary\HIDE{ functionally incomplete} constraint (\ref{secondary_GUMG}).

To form a functionally complete term with $\Hl$ in the parametrization constraint, one could handle a more general linear combination with the secondary constraint. This choice does not affect the physical results. The unit coefficient at the parametrization constraint $\ptf_\io{+}\hmg{\FF\Hl}$ can be fixed by redefining\HIDE{ the Lagrange multiplier} $\replm^\io$, so the resulting constraint,  $\ptf_\io{+}\opEa{\FF\Hl}$, is parameterized by the nondegenerate operator\footnote{
 Properties of more generic $\opE$-type operators are outlined in Appendix \ref{SSect:generic_op_E}. We will further comment on the independence of physical results from the choice of $\ofa$ in $\opEa$ when discussing the dynamics, serving as a partial consistency check.
}
 $%\bea{def_opEa}
  \opEa \,\defeq \hmgId + \ofa^{-1} \inhId \WW
  %       \;=\; \ofA \WW + \hmgId - \ofa^{-1} \hmgId \WW
 %\eea
 $,
with a generic sign-definite function $\ofa$.
Our choice of $\ofa \teq \WW(\argsqrg)$ is a matter of convenience: when $\WW \tequiv \wwc \teq \const$, the operators $\opE$ and $\opinvE$ become the identity operator, recovering the {\wGUMG} results from Section \ref{Sect:AltAction_wGUMG}. In the quantum context, such choice is advantageous as their diagonals, ${E}(\sx,\sx)$ and ${E}^{-1}(\sx,\sx)$, reduce to the field-independent spatial $\delta(0)$, and their determinants acquire simpler forms\footnote{See Section \ref{SSect:Quantum_measure}.}.

The explicit spatial nonlocality encoded in the operator $\opE$ arises in the canonical constraint of the parameterized {\GUMG} theory with nonconstant $\WW(\argsqrg)$. This nonlocality stems from the inability to merge the functionally incomplete $\Hl$ contributions from the parameterization and secondary constraints into a local structure. Furthermore, one cannot redefine the phase space fields to achieve locality of the canonical action. The secondary constraint restricts the average-free part of the Hamiltonian density $\FF\Hl$ weighted by $\WW(\argsqrg)$, while the consistent parameterization promotes the Hamiltonian density to the unweighted homogeneous structure $\hmg{\FF\Hl}$. However, local gluing becomes possible in the {\wGUMG} models when $\WW(\argsqrg) \to \HIDE{\wwc =} \const $.

\newpar

A further simplification of the constraint basis is achieved by eliminating $\Hl$ via the substitution $ - \FF^{-1}\opinvE \ptf_\io$ in the secondary constraint $\CT_{,\zm}\HIDE{ \equiv (\WW\FF\Hl)_{,\zm}} \teq 0$. This leads to the cumbersome constraint $(\WW\opinvE \ptf_\io)_{,\zm} \teq 0$, which can be replaced with the simpler, yet equivalent, constraint constraint ${\ptf_\io}_{,\zm}=0$. This equivalence follows from the properties of the operator $\opE$ (see Corollary \ref{Corollary:var_invE_G_equations}).
The nontrivial equivalence between  $(\WW\opinvE \ptf_\io)_{,\zm} \teq 0$ and ${\ptf_\io}_{,\zm} \teq 0$ is significant, as it allows us to gather various auxiliary fields into the covariant spacetime divergence\HIDE{ of the Henneaux--Teitelboim auxiliary field},  $\partial_\Zm \HTf^\Zm$, in the alternative action (\ref{Alt_Action_GUMG}) instead of dealing with a noncovariant, spatially nonlocal metric-dependent differential structure.
\if{ %% version 2023-06
Thus, the complete constraint basis can be equivalently rearranged to
 \beq{constraint_basis_change_GUMG}
  \left\{
  \begin{array}{l}
    \po + \hmg{\FF \Hl} = 0,\; \\ \Hs_\zm = 0,\; \\ \inh{ \WW\FF\Hl } = 0,
  \end{array}
  \right.
 \qquad \Rightarrow \qquad
  \left\{
  \begin{array}{l}
    \po + \opE \FF \Hl = 0,\; \\ \Hs_\zm = 0,\; \\ \inh{ \WW \opinvE \po } = 0,
  \end{array}
  \right.
 \eeq
where nondegenerate operator $\opE$ is defined in (\ref{def_opE}).
}\fi

\newpar

To summarize the above discussion, the constraint basis of (\ref{paramAction_cons_GUMG}) is rearranged:
\beq{constraint_basis_change_GUMG}
  \left\{
  \begin{array}{l}
    \ptf_\io + \hmg{\FF \Hl} = 0\,,\; \\
    \Hs_\zm = 0 \,,\; \\
    { (\WW\FF\Hl)_{,\zm} } = 0 \,, \\
  \end{array}
  \right.
 \qquad \to \qquad
  \left\{
  \begin{array}{l}
    \ptf_\io + \opE \FF \Hl = 0 \,,\; \\
    \Hs_\zm = 0\,,\; \\
    {\ptf_\io}_{,\zm} = 0\,, \\
  \end{array}
  \right.
 \eeq
allowing the parameterized canonical action to be equivalently expressed as
   \bea{paramAction'_GUMG} %(\ref{parAction_BLRG})
    \SSS_{par'}[\g,\pg,\tf^\io\!,\ptf_\io,\replm,\repNs,\lmptf]
    \;= %\!\!&=& \!\!\!
    \int \!d\taux \hspace{1pt} \dsx %\int d\taux\, \dsx
     \,\LiB
     \pg^{\zm\zn}\tauxdot{\g}_{\zm\zn}
     \!+ \ptf_\io \tauxdot{\tf}^\io
     - \replm ( \ptf_{\io} {+} \opE {\FF \Hl} )
     - \repNs^\zm \Hs_\zm
     - \lmptf^\zm {\ptf_\io}_{,\zm}
     \RiB
     . \;\;
 \eea
\if{ %% version 2023-05
\bea{parAction3_GUMG} %(\ref{parAction_BLRG})
    &&\Spar_{3}[\g,\pg,\xo,\po,\replm_2,\repNs,\inh{\lmv}_2]
    \;=\; \int d\taux\, \dsx \, \LiB
     \pg^{\zm\zn}\tauxdot{\g}_{\zm\zn}
     + \po\,\tauxdot{\xo}
     - \replm_2(\underbrace{\po + \opE \FF \Hl}_{\CP_2})
     - \repNs^\zm \Hs_\zm
     - \Dvg{\lmv_2} %\inh{\lmv}_2
        %\inhId
      \WW \opinvE \po
     \RiB
  \nonumber\\
    &&\Spar_{3}
    \;=\; \int d\taux\, \LiB
     \BraKet{\tauxdot{\g}_{\zm\zn}} {\pg^{\zm\zn}}
     + \BraKet{\tauxdot{\xo}} {\po}
     - \BraKet{\replm_2} {\po + \opE \FF \Hl}
     - \BraKet {\repNs^\zm} {\Hs_\zm}
     - \BraOKet{ \inh{\lmv}_2} {\inhId \WW \opinvE} { \po}
     \RiB
   % \qquad
   % \nonumber\\
  \eea
}\fi
The\HIDE{ physical} equivalence of representations (\ref{paramAction_cons_GUMG}) and (\ref{paramAction'_GUMG}) is guaranteed by the freedom of choosing any equivalent constraint basis\HIDE{ in the canonical representations of the theory}. Alternatively, the change of the constraint basis (\ref{constraint_basis_change_GUMG})\HIDE{ in the action} corresponds to the nondegenerate redefinition of the Lagrange multipliers\footnote{
  Transformation (\ref{constraint_basis_change_GUMG}) differs from the Lagrange multiplier redefinition in {\wGUMG} (\ref{redef_lms_AMG}), due to the distinct  definition of the parametrization constraint in the naive and consistent parametrization schemes.
}
  $
   (\replm^\io \,, \replmCT^\zm)
    \;\to\;
   (\replm \,, \lmptf^\zm)
  $
in (\ref{paramAction_cons_GUMG}):
  \beq{basis_change_GUMG} %{def_lmv2_lmo2}
  \begin{array}{|cclc}
    \replm^\io \!\!
   % &=&  \replm + \Dvg{\replmCT'} \WW \opinvE    % &=&  \replm + \Dvg{\lmptf} \ofA^{-1} \tFProj{\ofA^{-1}}{1}
   &=&  \replm - \Dvg[\zk]{\lmptf} \,, %%  - \Dvg{\lmptf} \to
   \\
    \Dvg[\zk]{\replmCT} \!\!
   &=& - \inh{ \replm  \WW^{-1}\HIDE{\ofA} } \,. %% \ofA\to\WW^{-1}
   \\
  \end{array}
  %\label{def_lmv2_lmo2}
  %% checked 2024-10-25, 2024-11-05
 \eeq
%where $\Dvg{\lmptf}$ and $\Dvg{\replmCT}$ are divergences of

In the action (\ref{paramAction'_GUMG}), all appearances of the gravitational momenta $\pg^{\zn\zm}$ within the constraints are confined to the {\GR} structures $\Hl$ and $\Hs_{\zn}$\HIDE{ in functionally complete constraint} (\ref{BGUMGHamiltonianStructure}, \ref{BGUMGMomentaConstraints}).
This\HIDE{ simplification} comes at the cost of spatial delocalization, which is fully encoded in the operator $\opE$. However, this nonlocality is not an artifact of parameterization. Rather, it was implicitly present in the interplay between the dynamical and constraint structures of the original canonical theory (\ref{extAction_GUMG}). Consistent parameterization simply made this nonlocality explicit.

\if{  %% ADD HIDDEN (version 2023-12)
Inverse coordinate change is
  \beq{}
    \begin{array}{|crclc}
      & \replm
      &=&  \replm^\io \hmgId \opEA^{-1}
        - \Dvg{\replmCT} \WW \opEA^{-1}
      \\
      &&=&  \replm^\io \hmgId \opEA^{-1}
        - \Dvg{\replmCT} \ofA^{-1} \! \tFProj{\ofA^{-1}\!\!}{1}
      \\
      &&=& \hmg{\replm^\io}  \frac{ \ofA^{-1} }{ {\,\hmg{\!\ofA^{-1}\!}\,\vphantom{{I^{|^I}}}} }  %% \ofA^{-1}/\hmg{\!\ofA^{-1}\!}
        - \Dvg{\replmCT}\ofA^{-1}
        + \hmg{\Dvg{\replmCT}\ofA^{-1}}   \frac{ \ofA^{-1} }{ {\,\hmg{\!\ofA^{-1}\!}\,\vphantom{{I^{|^I}}}} } ,
      \\
       & \Dvg{\lmptf}
       &=&  \replm - \replm^\io
      \\
       &&=&  \inh{\replm} - \inh{\replm^\io} ,
       %\TODO{Check equivalence conditions for $\CP_2,\Hs_\zm,\inh{\CT}$ system (2023-04)}
     \\
    \end{array}
%% \quad \Leftrightarrow \qquad
  \eeq
where operators $\opinvE$ and projector $\! \tFProj{\ofA^{-1}\!\!}{1}$ act on functions to the left.

One can check that this is the inverse explicitly by substitution (\ref{basis_change_GUMG}) into this expression. For $\Dvg{\lmptf}$ it is evident. For $\replm$ one gets
  \bea{}
    \replm
    &=& \replm^\io \hmgId \opEA^{-1}
        + \inh{ \replm \ofA } \ofA^{-1} \! \tFProj{\ofA^{-1}\!\!}{1}
        \nonumber\\
    &=& \hmg{\replm} \;\tfrac{ \ofA^{-1} }{ {\,\hmg{\!\ofA^{-1}\!}\,\vphantom{{I^{|^I}}}} }
        \;+\;  \replm \;\cancelto{1}{\vphantom{\Big|}\ofA \ofA^{-1}}  \tFProj{\ofA^{-1}\!\!}{1}
        \;-\; \hmg{ \replm \ofA }\; \cancelto{0}{\ofA^{-1} \! \tFProj{\ofA^{-1}\!\!}{1} }
        \nonumber\\
    &=& \replm \lFProj{\ofA^{-1}\!\!}{1}
        \;+\;  \replm \Id
        -  \replm \lFProj{\ofA^{-1}\!\!}{1}
        \;-\; 0
        \nonumber\\
    &=&  \replm,
  \eea
  where we used $\hmg{\replm^\io}=\hmg{\replm}$.
}\fi

\subsubsection*{Lagrangian form of the parameterized {\GUMG} action}
 \hspace{\parindent}
The remaining steps toward the alternative {\GUMG} action in the  Henneaux--Teitelboim-like form (\ref{Alt_Action_GUMG}) are analogous to those discussed in Section \ref{SSect:AltAction_AMG} for {\wGUMG} models.
We rescale the Lagrange multiplier \,$\replm \to \repNl$\, by introducing \,$\repNl \teq \replm \opE \FF$\, to purify the $\Hl$ term in the constraint and rename \,$\ptf_\io \to \CCof$\, to highlight its role as the cosmological-constant field:
 \beq{replm_to_repNl_and_ccf0} %was {lmo2_to_Nl}
   \replm \:=\: \repNl \FF^{-1} \opinvE   
   \,,
   \qquad
   \ptf_\io \:=\: \CCof
   \,,
 \eeq
where operators $\opE$ and $\opinvE$ act on functions to the left.
In terms of the new variables, the action (\ref{paramAction_AMG_fin})
becomes
   \beq{paramAction_GUMG_fin}  %% (\ref{paramAction_GUMG_fin})
    %&&
    \SSS_{par''}[\g,\pg,\tf^\io\!,\CCof,\repNl\!,\repNs,\lmptf]
     \;=\; \!\!\int \!d\taux \hspace{1pt} \dsx  %\int d\taux\, \dsx
     \,\LiB
       \pg^{\zm\zn}\tauxdot{\g}_{\zm\zn}
       \!- \repNl ( \Hl {+}  \FF^{-1}\! \opinvE \! \CCof  )
       - \repNs^\zm \Hs_\zm\!
       + \big(\tauxdot{\tf}^\io {+} \Dvg[\zk]{\lmptf}\big)  \CCof\,
     \RiB
     . \;\;
     \vspace{-4mm}
   \if{
   \qquad \\
    &&\SSS_{par''}[\g,\pg,\tf^\io,\CCf,\repNl\!,\repNs,\lmptf]
    \,= \int d\taux\, \LiB
     \BraKet{\tauxdot{\g}_{\zm\zn}} {\pg^{\zm\zn}}
     - \BraKet{\repNl} {\Hl {+} \sqrg\CCf }
     - \BraKet {\repNs^\zm} {\Hs_\zm}
     + \BraKet{ \tauxdot{\tf}^\io {+} \Dvg{\lmptf}}  {\opE \FF\sqrg\CCf }
     \RiB .
   \qquad
   }\fi
    \eeq
 \if{ %% version 2023-06
 \bea{parAction_Halt_GUMG} %(\ref{parAction_BLRG})
    &&\Spar_{Halt}[\g,\pg,\xo,\CCf,\repNl,\repNs,\inh{\lmv}_2]
    \;=\; \int d\taux\, \dsx \, \LiB
     \pg^{\zm\zn}\tauxdot{\g}_{\zm\zn}
     - \repNl(\sqrg\CCf + \Hl)
     - \repNs^\zm \Hs_\zm
     + \tauxdot{\xo} \opE \FF\sqrg\CCf
     - \Dvg{\lmv_2} %\inh{\lmv}_2
      %\inhId
      \WW\FF\sqrg \CCf
     \RiB
  \nonumber\\
    &&\Spar_{Halt}
    \;=\; \int d\taux\, \LiB
     \BraKet{\tauxdot{\g}_{\zm\zn}} {\pg^{\zm\zn}}
     - \BraKet{\repNl} {\sqrg\CCf + \Hl}
     - \BraKet {\repNs^\zm} {\Hs_\zm}
     + \BraKet{\tauxdot{\xo}} {\opE \FF\sqrg\CCf }
     - \BraOKet{ \inh{\lmv}_2} {\inhId \WW\FF\sqrg } { \CCf }
     \RiB
   \qquad
    \nonumber\\
  \eea
 }\fi

Since all dependence on the\HIDE{ gravitational} momenta $\pg^{\zm\zn}$ is confined within the {\GR} canonical structure
   $\,
       \pg^{\zm\zn}\tauxdot{\g}_{\zm\zn}
       \!- \repNl \Hl
       - \repNs^\zm \Hs_\zm\!\,
   $,\,
they can be expressed through their own variational equations\HIDE{ of motion},
  \bea{momenta_ecK_GR}
    \pg^{\zm\zn} \eomeq {\sqrg}(\ecK^{\zm\zn}-\g^{\zm\zn} \trecK)
    \,, \qquad
    \ecK_{\zm\zn} = \tfrac{1}{2\repNl} ( \tauxdot{\g}_{\zm\zn} - \repNs_{\zm ; \zn} - \repNs_{\zn ; \zm})
    \,,
  \eea
and removed from the action\HIDE{ in the same manner} as in general relativity, converting the canonical structure into the Einstein-Hilbert Lagrangian. The Lagrangian {\GUMG} action thus takes the form
 \bea{ActionHTlike_GUMG_ccf0}
    \SSS_{\ialt_0}[\Gaux,\CCof,\HTf]
    &=& \!\! \int \!d\taux \hspace{1pt} \dsx %\int \!d\taux\, \dsx \,
     \sqrt{|\Gaux|} \Big( \stR (\Gaux) - \sqrg^{-1}\!\FF^{-1}\!\opinvE \!\CCof \Big)
    + \int \!d\taux \hspace{1pt} \dsx %\!\int \!d\taux\, \dsx
     \,\partial_\Zm \HTf^\Zm  \CCof
    \;, \quad
   % \nonumber
  \eea
where $\Gaux_{\Zm\Zn}$ is the\HIDE{ covariant} spacetime metric corresponding to the ADM variables $(\g_{\zm\zn},\repNs^\zm,\repNl)$  via
 $ % \beq{GBulk_rep_metric_ADM_COPY}% was {Gaux_correspondence}
  \Gaux_{\Zm\Zn}d{X}^{\Zm}d{X}^{\Zn}
   \teq  \g_{\zm\zn} (d\sx^\zm {+} {\repNs}^\zm d\taux)(d\sx^\zn {+} {\repNs}^\zn d\taux) - {\repNl}^2 d\taux \hspace{1pt} d\taux
   \,,
 $ % \eeq
the volume density $ \sqrt{|\Gaux|} $ emerged from $\teq \repNl\! \sqrg$, and $\stR(\Gaux)$ is the spacetime scalar curvature.
The cosmological matter terms in (\ref{paramAction_GUMG_fin}), proportional to $\CCof$, remain unaffected by the reduction.
For $\WW \tneq {-}1$, these terms\HIDE{ introduce a perfect-fluid degree of freedom and} break spacetime covariance.
The  Henneaux--Teitelboim\HIDE{ auxiliary} auxiliary spacetime vector field
  $ % \bea{def_tef_COPY}
    \HTf^\Zm \equiv \big(\HTf^0, \HTf^\zm \big)
    \teq
    \big(\tf^\io, \lmptf^\zm \big)
    \,,
  $ % \eea
(\ref{def_tef}),
aggregates in the covariant divergence $ \partial_\Zm \HTf^\Zm \teq \tauxdot{\tf}^\io {\,+\,} \Dvg[\zm]{\lmptf} $. This combination, appearing in the last term of (\ref{ActionHTlike_GUMG_ccf0}), ensures that $\CCof$ is a spacetime constant on shell.

\newpar

Another convenient form of the alternative representation arises from redefinition $\CCof \to \CCf$:
  \beq{ccf_GUMG}
  %%  \CCf \HIDE{(\taux,\sx)} \defeq \sqrg^{-1}\!\FF^{-1}\! \opinvE \! \CCof \,,
    \CCof \defeq \opE \FF \sqrg \CCf  \,,
  \eeq
%%where $\opinvE$, (\ref{invE_kernel}), acts to the right on $\CCof$,
where $\opE$, (\ref{opE_kernel}), acts to the right,
leading to the alternative action (\ref{Alt_Action_GUMG}),
 \bea{ActionHTlike_GUMG} %(\ref{ActionHTlike_AMG})
    \SSS_{\ialt}[\Gaux,\CCf,\HTf]
    &=& \!\! \int \!d\taux \hspace{1pt} \dsx  %\int \!d\taux\, \dsx \,
     \sqrt{|\Gaux|} \,\big( \stR(\Gaux) - \CCf \big)
    +  \int \!d\taux \hspace{1pt} \dsx  %\!\int \!d\taux\, \dsx
     \,\partial_\Zm \HTf^\Zm  \opE  \FF \! \sqrg \CCf
    \,.
 \eea
The first term takes the form of the Einstein-Hilbert action with the dynamical cosmological constant $\CCf(\taux,\sx)$. The apparent noncovariance shifts to the second term, which now confines model specific characteristic functions $\FF(\argsqrg)$ and $\WW(\argsqrg)$ (within $\opE$).
For the {\wGUMG} subfamily (\ref{AMG_part_case}), the correspondence with (\ref{ActionHTlike_AMG}) is achieved with $\opE \to 1$. In the {\UMG} exceptional case (\ref{UMG_part_case}), $\opE  \FF  \sqrg \to 1$ restoring spacetime covariance of the action.

All intermediate transformations of the {\GUMG} theory actions leading to (\ref{ActionHTlike_GUMG_ccf0}) and (\ref{ActionHTlike_GUMG}) preserve classical equivalence, ensuring that the alternative representation is\HIDE{, by construction,} equivalent to the original formulation.

%%------------------------=%       ******        %=------------------------%%
%%------------------------=%       ******        %=------------------------%%

%\newpage
   \subsection{Dynamical properties}
    \label{SSect:DynamicalProperties_GUMG}
     \hspace{\parindent}
The structure of\HIDE{ the} equations of motion is similar to that in {\wGUMG}, except for the presence of spatial nonlocality.
The variational equations with respect to the auxiliary fields  $\HTf^\Zm$\HIDE{, which enter the action linearly,}
  \beq{EoM_tef_GUMG}
    \VDer{\SSS_{\ialt}}{\HTf^\Zm} \eomeq 0
    \qquad\vareqmapsto\qquad
    \partial_\Zm (  \opE  \FF  \sqrg \CCf )
    \;\equiv\; \partial_\Zm  \CCof \eomeq 0
    \quad\Rightarrow\quad
    \CCof \,\eomeq\, 0
    \, ,
  \eeq
generalize the behavior of the cosmological constant, implying that the combination $\CCof \equiv  \opE  \FF  \sqrg \CCf$, (\ref{ccf_GUMG}), is a spacetime constant on shell.
The\HIDE{ presence of the} operator $\opE$, (\ref{opE_left-right_action}), introduces spatial nonlocality into the classical relation between the spatial measure factor $\sqrg$ and the cosmological-constant field $\CCf$ in the\HIDE{ action}{ representation} (\ref{ActionHTlike_GUMG}):
  \beq{ccf_onshell_GUMG}
    \CCf
    \eomeq \sqrg^{-1} \FF^{-1} \opinvE \cco
    = \sqrg^{-1} \FF^{-1} \frac{\WW^{-1}}{\,\hmg{\!\WW^{-1}\!}\,\vphantom{I^{|^I}}} \, \cco
    \, ,
  \eeq
where $\cco$ is an on-shell constant value of $\CCof$, \,and\, $\hmg{\!\WW^{-1}\!}(\taux)$ denotes the spatial average of the $\WW^{-1}(\taux,\sx)$ over a $\taux \teq \const$ hypersurface.
This result follows from the properties\HIDE{ \ref{Corollary:opE_1}}\footnote{
 For the more general delocalization operator $\opEa \,\equiv \hmgId\! + \ofa^{-1} \inhId \WW$, mentioned in Section \ref{SSect:AltAction_GUMG}, the on-shell expression would be $\CCf {\,\eomeq\,} \sqrg^{-1} \FF^{-1} \opEa^{-1} \cco $. However, according to Corollary \ref{Corollary:EE-1_1}, this again simplifies to $\sqrg^{-1} \FF^{-1} \frac{\WW^{-1}}{\,\hmg{\!\WW^{-1}\!}\,\vphantom{I^{|^I}}} \, \cco$ from the right-hand side of (\ref{ccf_onshell_GUMG}), verifying the independence of the choice of $\ofa$.
} (\ref{opinvE_properties}) of $\opinvE$.
%% v.2023
%% for general operator $\opEE^{-1}$ acting (to the right) on unit function. It is noteworthy that the result does not depend on arbitrary functional parameter of $\ofA$ in $\opEA$.
 %
\if{ $%% ADD in version 2023-06
  This result could be alternatively obtained by noticing that in initial minimally parameterized action (\ref{paramAction_cons_GUMG}) variational equations of motion with respect to inhomogeneous part of Lagrange multiplier $\replm$ imply $\po \eomeq \hmg{\po}(\taux)$ whereas variational equation with respect to $\xo$ finally leads to on-shell constraint $\po(\taux,\sx) \eomeq \cco$ where $\cco$ is arbitrary integration constant.\footnote{This constant is defined from the initial conditions on physical fields from variational equation of motion with respect to homogeneous part of $\replm$, $\cco \eomeq - \hmg{\FF\Hl}$.} From (\ref{ptf_to_ccf}) by definition $\CCf \equiv  \sqrg^{-1}\FF^{-1}\opinvE \po$ so on shell $\CCf \eomeq  \sqrg^{-1}\FF^{-1}\opinvE \cco$. By the property (\ref{opE_properties}) spatially homogeneous functions are (right) eigenfunctions of operator $\opE \WW^{-1}$ with eigenvalue $\hmg{\WW^{-1}}$. Due to nondegenerateness of $\opE$ (see Lemma \ref{Lemma:InvE}) (and thus of $\opE \WW^{-1}$), the inverse operator $\WW\opinvE$ exists and spatially homogeneous functions are its eigenvectors with eigenvalue $\big(\hmg{\WW^{-1}}\big)^{-1}$. This finally leads to $\CCf  \eomeq \sqrg^{-1} \FF^{-1} \WW^{-1} \big(\hmg{\WW^{-1}}\big)^{-1} \cco $ (\ref{varLambda_sol}).
}\fi
For the {\wGUMG} subfamily with $\WW=\wwc \teq \const$, the nonlocal factor ${\WW^{-1}} / {\,\hmg{\!\WW^{-1}\!} }$ reduces to unity, reproducing the results of Section \ref{SSect:Dynamical properties_AMG}.
In the unimodular case, where $\FF(\argsqrg) \teq \sqrg^{-1}$ and $\WW \teq {-}1$, the cosmological-constant field $\CCf$ remains equal to constant $\cco$ on shell.

The variational equation with respect to $\CCf$,
 \beq{EoM_ccf_GUMG}
    \VDer{\SSS_{\ialt}}{ \CCf }
   \eomeq 0
   \qquad\vareqmapsto\qquad
    \sqrt{|\Gaux|} \eomeq \partial_\Zm \HTf^\Zm  \opE  \FF  \sqrg
    \,,
 \eeq
relates the determinant of the spacetime metric to the nonlocally weighted divergence of the Henneaux--Teitelboim auxiliary field (the operator $\opE$ acts to the left on $\partial_\Zm \HTf^\Zm$), where the weight depends on the induced metric determinant $\sqrg$.
\if{
 \footnote{
  Note that the dynamical nature of cosmological-constant field $\CCf$ and its role of the Lagrange multiplier ensures that gauge-breaking term compensates cosmological-constant term in Hamilton principal function of the theory\HIDE{, like it happens for the Henneaux--Teitelboim representation of unimodular gravity}.
    %%https://en.wikipedia.org/wiki/Hamilton%E2%80%93Jacobi_equation#Hamilton's_principal_function
  }
}\fi

In terms of the ADM parameterization of the metric,\HIDE{ from this} one can express the lapse function of the covariant metric\HIDE{ of the parameterized action} as $\repNl \teq \partial_\Zm \HTf^\Zm  \opE \FF$, or relate the delocalized analog of the initial restriction condition to the divergence of the auxiliary field: $ \repNl \FF^{-1}  \opinvE \teq \partial_\Zm \HTf^\Zm $, where the operator $\opE$ acts to the left. In view of the redefinition (\ref{replm_to_repNl_and_ccf0}), the left-hand side of the latter equality is nothing but the Lagrange multiplier $\replm$,
reinstating the on-shell relation
 % $\replm \eomeq \tauxdot \tf^\io {\,+\,} \Dvg[\zm]{\lmptf} \equiv \partial_\Zm \HTf^\Zm$
  \beq{replm_onshell}
    \replm {\;\eomeq\;} \tauxdot \tf^\io + \Dvg[\zm]{\lmptf}
    {\;\equiv\;} \partial_\Zm \HTf^\Zm
    \,,
  \eeq
from the canonical action (\ref{paramAction'_GUMG}). However, $\replm$ is a Lagrange multiplier associated with the mixed-class constraint\footnote{
 Only the spatially homogeneous component of $\replm$ remains undetermined, corresponding to the first-class constraint (\ref{first_class_param_constraint_GUMG}).  The\HIDE{ weighted inhomogeneous} non-homogeneous part\HIDE{ of $\replm$} vanishes on shell: $\inh{ \replm \WW^{-1} } \eomeq 0$, as required by the Dirac consistency conditions for the second-class constraints in the canonical system (\ref{paramAction'_GUMG}).}.
By taking into account the on-shell constraints on the Lagrange multipliers associated with second-class constraints, one can further refine the relation.

A systematic approach to tracking the conditions imposed on canonical Lagrange multipliers is to start from the emerging conditions in the consistently parameterized action (\ref{paramAction_cons_GUMG}) and follow the redefinitions of the Lagrange multipliers. From the structure relations (\ref{GUMG_Algebra_offshell}) and (\ref{GUMG_addpar_offshell}), it follows that the longitudinal spatial diffeomorphisms and the secondary constraint are independent second-class constraints. The consistency conditions then imply

\if{ %% v.2025-02
  Equation (\ref{EoM_ccf_GUMG}) together with the on-shell equation\footnote{
   Only the spatially homogeneous part of $\replm$ turns to be the Lagrange multiplier for the first-class constraint (\ref{first_class_param_constraint_GUMG}). The\HIDE{ weighted inhomogeneous} non-homogeneous part of $\replm$ will vanish on shell (\ref{replm_onshell}), $\inh{ \replm \WW^{-1} } \eomeq 0$,  due to the Dirac consistency conditions in the canonical system (\ref{paramAction'_GUMG}) as the Lagrange multiplier for a second-class constraint.}
  on the non-homogeneous part of Lagrange multiplier $\replm$  reinstates the analog of {\GUMG} restriction condition\HIDE{ $\Nl=\FF(\argsqrg)$} (\ref{GUMG_restriction}) in the initial representation of the theory. In terms of the ADM decomposition of spacetime metric, equation (\ref{EoM_ccf_GUMG}) expresses the lapse function
  $\repNl\eomeq \partial_\Zm \HTf^\Zm  \opE  \FF$ in terms of auxiliary field and the determinant of the spatial induced metric. %\footnote{
  Note also that in view of the redefinition (\ref{replm_to_repNl_and_ccf0})\HIDE{ $\repNl = \replm\, \opEA\, \FF$} one gets the on-shell relation
    \beq{replm_onshell}
      \replm \eomeq \partial_\Zm \HTf^\Zm \equiv \tauxdot \tf^\io + \Dvg[\zm]{\lmptf}
      \,,
    \eeq
  which is the obvious consequence of $\VDer{\SSS_{par'}}{\ptf_\io}=0$  for the canonical action (\ref{paramAction'_GUMG}).
  %}

  The consistency conditions for the constraints of the parameterized action (\ref{paramAction_cons_GUMG}) lead to the constraints for the on-shell values of the Lagrange multipliers for the second-class constraints.\footnote{
   In contrast to the case of the unimodular gravity theory \cite{Henneaux:1989zc} where all constraints are first class, in {\GUMG} case there are two inhomogeneous-scalar second-class constraints on the non-GR branch (see discussion in Section \ref{SSect:GUMG_Constraint_Structure}). These second-class constraints are inherited by the consistently parameterized action.
   %Thus, there appear two on-shell constraints for the correspondent canonical Lagrange multipliers.
  }
   The Lagrange multipliers for the \emph{secondary} second-class constraints vanish on shell which here gives $\Dvg{ \replmCT} \eomeq 0$. One can check it directly by using the Poisson brackets of constraints in table \ref{table:Par_PB_WW} that for the minimally parameterized system $\SSS_{par}$, (\ref{paramAction_cons_GUMG}), the consistency conditions for constraints imply
}\fi

  \beq{IIClm_onshell}
   \begin{array}{|lcl}
%\repNs^\zn &\eomeq&\!
   %\hmg{\replm^{\io}} \, U_0^\zn + \repNs_{\isst}^\zn
   \Dvg{\repNs}
    \!\!&\eomeq&\!
     \hmg{\replm^{\io}} \, \inh{U}_0 \HIDE{(\g,\pg)}
    \,\equiv\,
     \hmg{\replm^{\io}}
      \big( \OOmega^{-1}\TTheta \trpg
      - \OOmega^{-1}\tfrac{\sint \OOmega^{-1}\TTheta \trpg}{\sint \OOmega^{-1}}
      \big)
   \,, \\
   \Dvg{\replmCT} \!\!&\eomeq&\! 0
   \,, \\
   \end{array}
  \eeq
where $\inh{U}_0(\g,\pg)$ is defined in (\ref{DvgNs_from_CS_GUMG}), and the structures $\OOmega(\argsqrg)$ and $\TTheta(\argsqrg)$ are introduced in (\ref{def_OOmega_TTheta}).
Under the basis transformation
  $
   (\replm^\io \,, \replmCT^\zm)
    \;\to\;
   (\replm \,, \lmptf^\zm)
   \,,
  $
(\ref{basis_change_GUMG}), leading to the representation (\ref{paramAction'_GUMG}),
%When passing  (\ref{basis_change_GUMG}) to the representation (\ref{paramAction'_GUMG}),
the second relation of (\ref{IIClm_onshell}) imposes the constraint
  \beq{inhreplm_onshell}
   \inh{ \replm \WW^{-1} }
   {\;\eomeq\;} 0 \,,
  \eeq
on the Lagrange multiplier $\replm$.
This ensures that, on shell, the action of the operator $\opE$, (\ref{opE_left-right_action}), on $\replm$ simplifies: $\replm\opE \eomeq \hmg{\replm}$, leaving the homogeneous component $\hmg{\replm}$\HIDE{ of the Lagrange multiplier} undetermined. This homogeneous component corresponds to the first-class part of the mixed-class constraint.
Upon further redefinition $\replm \to \repNl$,
(\ref{replm_to_repNl_and_ccf0}), we obtain
 $\inh{(\repNl \FF^{-1} \opinvE \WW^{-1})} \eomeq 0$,
where the inverse operator acts to the left on $\repNl \FF^{-1}$ and then the average-free component is taken of the whole expression. Using the $\opinvE$ property (\ref{def_opGmp}),\HIDE{ from Corollary \ref{Corollary:var_invE_G_properties}} this simplifies to $\inh{\repNl \FF^{-1}} \eomeq 0$. At the same time, from\HIDE{ the first$ property from} (\ref{opinvE_properties}), the undetermined component satisfies $\hmg{\replm} \teq \tfrac{\hmg{\repNl \FF^{-1}\WW^{-1}}}{\hmg{\WW^{-1}}\vphantom{|^{I^I}}}$, which applying $\inh{\repNl \FF^{-1}} \eomeq 0$, on shell can be expressed as $\hmg{\repNl \FF^{-1}}$ or $\repNl \FF^{-1}$.
Finally, from (\ref{replm_onshell}),
the homogeneous part of $\replm$ satisfies $\hmg{\replm} \eomeq \tauxdot{\hmg{\tf}}^{\io}$.

Summarizing the\HIDE{ above} discussion of on-shell constraints on\HIDE{ the second-class} Lagrange multipliers one obtains
 \beq{repNl_repNs_onshell}
   \begin{array}{|lcl}
   \inh{\repNl \FF^{-1}}
     \! &\! \eomeq \!& 0
   \,, \\
   \Dvg{\repNs}
    &\! \eomeq \!&
   % \repNl \FF^{-1} \inh{U}_0
   % \; \eomeq \;
   % \hmg{\repNl \FF^{-1}} \inh{U}_0
   % \; \eomeq \;
    \tauxdot{\hmg{\tf}}^{\io} \,\inh{U}_0
   \,, \\
   \end{array}
 \eeq
where $\inh{U}_0(\g,\pg)$ is defined in (\ref{DvgNs_from_CS_GUMG})\footnote{The general solution for\HIDE{ the Lagrange multipliers} $\repNs^\zn$ is given by $\repNs^\zn \teq \hmg{\replm^{\io}} \, U_0^\zn {\,+\,} \repNs_{\isst}^\zn$, where $U_0^\zm(\g,\pg)$ is defined in (\ref{lm_solutions_nonGR})\HIDE{ up to an arbitrary transverse vector}, and $\repNs_{\isst}^\zn$ is an arbitrary transverse vector field, $\Dvg{\repNs_{\isst}} \teq 0$, left undetermined by the equations of motion.}.
Additionally, from the\HIDE{ homogeneous} relation $\hmg{\repNl\FF^{-1}} \eomeq \hmg{\replm} \eomeq \hmg{\partial_\Zm \HTf^\Zm}$, in view of $\inh{\repNl\FF^{-1}} \eomeq 0$ and 
  $ % \beq{replm_onshell_COPY}
    \hmg{\partial_\Zm \HTf^\Zm }
    \teq \tauxdot{\hmg{\tf}}^{\io} ,
  $ % \eeq
one recovers the parameterized version of the restriction condition (\ref{GUMG_restriction}) from the initial restricted {\GUMG} setup\footnote{
  For the more general delocalization operator, $\opEa = \hmgId {+\,} \ofa^{-1} \inhId \WW$,
   %% mentioned in Section \ref{SSect:AltAction_GUMG},
  equation (\ref{inhreplm_onshell}) modifies to
   $ %  \beq{replm_onshell_ofa}
     \inh{ \replm \ofa^{-1} } \eomeq 0
     \, .
   $ %  \eeq
  This further implies $\replm\opEa \eomeq \hmg{\replm}$ and
   $ % \beq{replm_to_repNl_onshell_ofa}
     \repNl
     %= \replm \opEA \FF
     \eomeq \hmg{\replm} \,\FF
     \eomeq \tauxdot{\hmg{\tf}}^{\io}\, \FF
     \, .
   $ %\eeq
  Thus, the resulting expression retains the same form as (\ref{repNl_onshell}).
}:
 \beq{repNl_onshell}
   \repNl
   \,\eomeq\, \tauxdot{\hmg{\tf}}^{\io}\, \FF(\argsqrg)
   \,.
 \eeq
The presence of the free factor $\tauxdot{\hmg{\tf}}^{\io}(\taux) $ reflects the time-reparametrization gauge freedom in the parameterized {\GUMG} theory (\ref{ActionHTlike_GUMG}), which acts on metric fields only through its spatially homogeneous component. In the correspondence gauge (\ref{correspondence_gauge}), this factor is set to unity.

Using (\ref{repNl_onshell}), equality (\ref{ccf_onshell_GUMG}) can be rewritten as
 $ % \beq{ccf_onshell_GUMG_COPY}
    \CCf
    \,\eomeq \sqrt{|\Gaux|}^{-1} \frac{\WW^{-1}}{\,\hmg{\!\WW^{-1}\!}\,\vphantom{I^{|^I}}} \, \tauxdot{\hmg{\tf}}^{\io} \cco
    \, ,
 $ % \eeq
while the{ on-shell} relation (\ref{EoM_ccf_GUMG}) takes the form
 $ % \beq{EoM_ccf_GUMG_COPY}
    \sqrt{|\Gaux|} \,\eomeq\, \tauxdot{\hmg{\tf}}^{\io} \FF  \sqrg
    \,. %% rechecked 2025-04
 $ %\eeq
Also, from (\ref{EoM_ccf_GUMG}), the divergence of the Henneaux--Teitelboim auxiliary field can be expressed in terms of the metric fields as $\partial_\Zm \HTf^\Zm \,\eomeq\, \hmg{\Nl\FF^{-1}} \frac{\WW}{\,\hmg{\WW}\,\vphantom{I^{|^a}}} $.

\newpar

The variational derivative of the action $\SSS_{\ialt}[\Gaux,\CCf,\HTf]$, (\ref{ActionHTlike_GUMG}), with respect to the spacetime metric $\Gaux_{\Zm\Zn}$ takes the form of the Einstein-Hilbert equations with matter
 \beq{EoM_GG_GUMG}
   \VDer{\SSS_{\ialt}}{\Gaux_{\Zm\Zn}}
  \eomeq 0
  \qquad\vareqmapsto\qquad
    \stR^{\Zm\Zn} - \tfrac12 \stR \Gaux^{\Zm\Zn} 
    \eomeq    
    \tfrac12 \repTSEt^{\Zm\Zn} 
    \,,
 \eeq
where all terms proportional to $\CCf$  are attributed to the matter sector, $\SSS_{mat}=\!\int \!d\taux\, \dsx \, (-\sqrt{|\Gaux|}\CCf + \partial_\Zm \HTf^\Zm  \opE  \FF  \sqrg \CCf)$, whose contribution is encoded in the stress-energy tensor on the right-hand side\HIDE{ of the equations}. Taking into account  (\ref{EoM_tef_GUMG}) and (\ref{EoM_ccf_GUMG}), the stress-energy tensor on shell is given by
 \beq{Stress-Energy_Tensor_GUMG}
    \repTSEt^{\Zm\Zn}
   \defeq \frac{2}{\sqrt{|\Gaux|}} \VDer{\SSS_{mat}}{\Gaux_{\Zm\Zn}}
     %%=
     %% - \Gaux^{\Zm\Zn} \CCf
     %% +(\wwc+1)\frac{\sqrg^{\wwc+1}}{\sqrt{|\Gaux|}}
     %% \big(\Gaux^{\Zm\Zn} + n^{\Zm}n^{\Zn}\big)
     %% \CCf \big(\tauxdot{\tf}^\io - \wwc \Dvg{\replmCT}\big)
   \eomeq     \CCf \, n^{\Zm}n^{\Zn}
    + \WW
    \CCf  \big(\Gaux^{\Zm\Zn} \!+ n^{\Zm}n^{\Zn}\big)
    \,,
 \eeq
where $n^\Zm$ is a unit vector field orthogonal to constant-time hypersurfaces. In (\ref{Stress-Energy_Tensor_GUMG}) the energy density is $\repeSEt \teq \CCf $ and the pressure is $\reppSEt \eomeq \WW \CCf$. Thus, the perfect fluid equation of state (\ref{PerfectFluid_Eq_of_state}),
 \beq{}
   \reppSEt \eomeq \WW \repeSEt
  \,,
 \eeq
with a dynamic parameter $\WW=\WW(\argsqrg)$, is recovered. %\footnote{
 %The cosmological-constant term $\sqrt{|\Gaux|}\CCf$ generates $-\CCf\Gaux^{\Zm\Zn}$ contribution to $T_{\wwc}^{\Zm\Zn}$ and gives equal contribution to the energy and pressure. Whereas gauge-breaking term from the second integral (\ref{ActionHTlike_AMG}), which feels only spatial induced metric, gives contribution only to pressure component: $(\wwc+1)\CCf \big(\Gaux^{\Zm\Zn} + n^{\Zm}n^{\Zn}\big)$, in which by virtue of (\ref{EoM_ccf_GUMG}) we equated the overall factor $\sqrg^{\wwc+1}(\tauxdot{\tf}^\io - \wwc \Dvg{\lmv}/\sqrt{|\Gaux|}$ to unity. Extracting contribution to energy $\sim -n^\Zm n^\Zn$ from the first contribution and transferring the rest to pressure $\sim \Gaux^{\Zm\Zn} + n^\Zm n^\Zn$ gives the claimed balance $\reppSEt=\wwc\repeSEt$.
%}
 %
The dynamical behavior of the energy (and pressure) is more transparent when\HIDE{ the $\CCf$ is} expressed in terms of the\HIDE{ spatially} homogeneous constant of motion $\CCof \eomeq \cco$, (\ref{EoM_tef_GUMG}):
\vspace{-2mm}
  \bea{energy_pressure_GUMG}
  % \begin{array}{rcl}
    \repeSEt
     \!\!&\eomeq&\!\!
      \sqrt{|\Gaux|}^{-1} \frac{\WW^{-1}}{\,\hmg{\!\WW^{-1}\!}\,\vphantom{I^{|^I}}} \,\tauxdot{\hmg{\tf}}^\io \, \cco
     \,\eomeq\,
      \sqrg^{-1} \FF^{-1} \frac{\WW^{-1}}{\,\hmg{\!\WW^{-1}\!}\,\vphantom{I^{|^I}}}  \, \cco
   \,.
  %  \end{array}
  \eea
In the {\wGUMG} subfamily (\ref{AMG_part_case}), the nonlocal factor disappears, ${\WW^{-1}}\!/\,\hmg{\!\WW^{-1}\!} \to 1$.  In the unimodular gravity case, $\FF(\argsqrg) \teq \sqrg^{-1}$, $\WW \teq {-}1$,
 %%in the correspondence gauge (\ref{correspondence_gauge})
one finds $ \repeSEt \eomeq {-} \reppSEt \eomeq \HIDE{ \CCf = \CCof \eomeq } \cco$, as expected.

Notably, the metric dependence of the nonlocal operator $\opE(\g)$, (\ref{opE_kernel}, \ref{opE_left-right_action}), does not affect the on-shell
result of the metric variation.
The terms arising from the metric variation of $\opE$ vanish due to the equations of motion for the auxiliary fields\footnote{
 For the more general\HIDE{ delocalization} operator $\opEa = \hmgId {+\,} \ofa^{-1} \inhId \WW$, mentioned in Section \ref{SSect:AltAction_GUMG},
a similar result holds:
  \beq{}
    \sint % \int \!d\taux\, \dsx \,\,
     \partial_\Zm \HTf^\Zm (\var_{\!\Gaux} \opEa ) \FF \sqrg \CCf
    =
    \sint % \!\int \!d\taux\, \dsx \,
    % \LiB
   \big(\inh{\partial_\Zm \HTf^\Zm  \var_{\!\Gaux} \ofa^{-1}}\big) \,
   \big({\inh{\WW \FF  \sqrg \CCf }}\HIDE{_{\eomeq 0}}\big)
   +
    \sint % \!\int \!d\taux\, \dsx \,
     \big({\inh{\partial_\Zm \HTf^\Zm \ofa^{-1}\,}\!}\HIDE{_{\eomeq 0}}\big) \,
     \big(\inh{ (\var_{\!\Gaux}\WW) \FF\! \sqrg \CCf }\big)
    % \RiB
    \eomeq 0
    \,.
   \nonumber
  \eeq
This vanishes on shell due to
  $ % \beq{}
    \inh{\big(\WW \FF  \sqrg \CCf\big)}
    \eomeq \inh{ \big(\frac{\,\cco}{\,\hmg{\!\WW^{-1}\!} {\vphantom{I^{|^I}}}}\big) }
    \teq
     0
    \,
  $ % \eeq
and
$\inh{\big(\partial_\Zm \HTf^\Zm \ofa^{-1}\big)} \eomeq \inh{\big(\replm \ofa^{-1}\big)} \eomeq 0 $.
}:
  \bea{var_g_opE}
    \sint % \int \!d\taux\, \dsx \,\,
    \partial_\Zm \HTf^\Zm (\var_{\!\Gaux} \opE ) \FF  \sqrg \CCf
    \,\,=\,
    \sint %\!\int \!d\taux\, \dsx \,
    %\LiB\!
   \big(\inh{\partial_\Zm \HTf^\Zm  \var_{\!\Gaux} \WW^{-1}\!}\big)
   \big(\HIDE{\!\underbrace}{\inh{\WW \FF  \sqrg \CCf }}\HIDE{_{\eomeq 0}}\big)
   +
    \sint %\!\int \!d\taux\, \dsx \,
     \big(\HIDE{\!\underbrace}{\inh{\partial_\Zm \HTf^\Zm \WW^{-1}}}\HIDE{_{\eomeq 0}}\!\big)
     \big(\inh{ (\var_{\!\Gaux}\WW) \FF\!  \sqrg \CCf }\big)
    % \!\RiB
     \,\eomeq\, 0
    \,.\;\;
  \eea
\vspace{-4mm}\\
The first term on the right-hand side vanishes\HIDE{ on shell} due to\HIDE{ the on-shell behavior of $\CCf$,} (\ref{ccf_onshell_GUMG}):\,
   $ % \beq{}
    \big( \inh{ \WW \FF\!  \sqrg \CCf } \big)
    \eomeq \big( \inh{ \frac{\, \cco}{\,\hmg{\!\WW^{-1}\!}\vphantom{I^{|^I}}} } \big)
    {\,=\,} 0
    \,.
  $  % \eeq
The second term vanishes\HIDE{ on shell} due to the on-shell conditions $\inh{\partial_\Zm \HTf^\Zm \WW^{-1}} \eomeq \inh{\replm \WW^{-1}} \eomeq 0 $, (\ref{replm_onshell}, \ref{inhreplm_onshell}).

%\vspace{1cm}
\if{ %% version 2023-06
%\paragraph{Reminiscence of the {\GUMG} restriction condition.}
The other Lagrange multiplier on-shell constraint reinstates the {\GUMG} restriction condition. In Lagrangian parameterized action one can obtain this relation by varying the action (\ref{ActionHTlike_GUMG}) with respect to the ``cosmological constant'' field
 \beq{EoM_xo}
   \VDer{S}{ \CCf }
  =
   - \repNl\sqrg
   + \tauxdot{\xo} \opE \FF\sqrg
   - \Dvg{\lmv}_2 \WW \FF \sqrg
  \eomeq 0
 \eeq
where
 \beq{}
   \Dvg{\lmv}_2
  = \Dvg{\lmv}
   + \inh{\replm \WW^{-1}}
  \eomeq   \inh{\replm \WW^{-1}}
 \eeq
where we took into account on-shell equality $\Dvg{\lmv} \eomeq 0$ (\ref{lm_solutions_nonGR}).
Variational equation of parameterized action (\ref{paramAction_cons_GUMG})  with respect to $\po$\HIDE{, $\VDer{\SSS_{par}}{\po} \eomeq 0$,} implies $\replm \eomeq \tauxdot{\xo}$.

So for (\ref{EoM_xo}) one gets
 \beq{}
   - \repNl\sqrg
   + \tauxdot{\xo} \big( \opE - \WW^{-1} \inhId \WW \big) \FF\sqrg
  =
   - \repNl\sqrg
   + \tauxdot{\xo} \hmgId  \FF\sqrg
  \eomeq 0
 \eeq
where operators are acting to the left.

Thus, finally one gets
 \beq{}
  \repNl
  \eomeq   \tauxdot{\hmg{\xo}} \FF
  %\eomeq   %\hmg{\replm} \FF
 \eeq
which is parameterized version of the generalized unimodular restriction condition (\ref{GUMG_restriction}).
}\fi

%%-------------------------=%        ******         %=-------------------------%%
%%-------------------------=%        ******         %=-------------------------%%

  \subsection{Gauge structure\HIDE{ on the non-GR branch}}
   \label{SSect:GaugeStructure_GUMG}
    \hspace{\parindent}
The canonical treatment reveals the following first-class constraints on the non-GR branch for the consistently parameterized {\GUMG} action (\ref{paramAction_cons_GUMG}),
 \beq{I-class_constr}
  \begin{array}{|lll}
  \sint \geps (\ptf_\io + \hmg{\FF\Hl} + \hmg{U_0^\zn \Hs_\zn})
  \,, \\
  \sint \gzetat^\zm \Hs_\zm
  \,, \\
  \end{array}
 \eeq
where $U_0^\zn$ is the particular on-shell solution for the Lagrange multipliers $\Ns^\zn$ (\ref{def_Uo}), and $\gzetat^\zm$ is the transverse vector gauge parameter, satisfying $\Dvgi[\zk]{\gzetat} \teq 0$.
The spatial nonlocality of the gauge structure makes the analysis more intricate than in {\wGUMG}, particularly due to the nonlocal nature of $U_0^\zn$. However, this is mainly a technical and presentational complication; structurally, the gauge transformations remain similar to those in {\wGUMG}.

Rather than following the laborious approach of Section \ref{SSect:GaugeStructure_AMG}, which tracks the gauge transformations under field redefinitions (some of which are spatially nonlocal), we adopt a kind of gauge-restriction approach from the outset. Assuming canonical symmetries, we perturb the\HIDE{ canonical} action (\ref{paramAction_GUMG_fin})\HIDE{, whose configuration space includes ADM metric fields,} under unrestricted canonical transformations generated by \emph{all} constraints. We then examine the anomaly to determine the constraints on transformation parameters and identify the residual free parameters corresponding to gauge symmetry\footnote{Restricting general canonical transformations by anomaly-free condition here\HIDE{ turns out to be equivalent to} can be interpreted as examining gauge restrictions on the {\GR} gauge transformations, motivated by the restricted-theory approach.}.

\if{ %% DRAFT 2024-09
  Gauge transformations generated by the first first-class constraint in the case of nontrivial $U_0^\zn$ (\ref{def_Uo}) for nonconstant $\WW$ is rather involved for phase space fields. It can be simplified by subtracting trivial gauge transformation so that
   \beq{gauge_transf_noncanon}
    \begin{array}{|lll}
     \gvar \g_{\zn\zm}
   %  &=& \PB{\g_{\zn\zm}}{\sint (\po + \hmg{\FF\Hl}) \geps }
   %  + \PB{\g_{\zn\zm}}{\sint  \Hs_\zk } U_0^\zk \hmg{\geps}
   %  + \PB{\g_{\zn\zm}}{\sint  \Hs_\zk } \gzeta^\zk_{\isst}
     &=&
     \gvar^{canon} \g_{\zn\zm}
     + \mut_{(\zn\zm)}^{\;\zk} \VDer{S}{\repNs^\zk}
     \,, \vphantom{\big|^I} \\
     \gvar \pg^{\zn\zm}
   %  &=& \PB{\pg^{\zn\zm}}{\sint  (\po + \hmg{\FF\Hl}) \geps }
   %  + \PB{\pg^{\zn\zm}}{\sint  \Hs_\zk } U_0^\zk \hmg{\geps}
   %  + \PB{\pg^{\zn\zm}}{\sint  \Hs_\zk } \gzeta^\zk_{\isst}
     &=&
     \gvar^{canon} \pg^{\zn\zm}
     + \mut^{(\zn\zm)\zk} \VDer{S}{\repNs^\zk}
     \,, \vphantom{\big|^{I^I}} \\
     \gvar \Ns^\zn
       &=&
     \gvar^{canon} \Ns^\zn
      - \VDer{S}{\g_{\zn\zm}} \mut_{(\zn\zm)}^{\;\zk}
      - \VDer{S}{\pg^{\zn\zm}} \mut^{(\zn\zm)\zk}
     \,, \vphantom{\big|^I} \\
    \end{array}
    \quad \text{ where} \quad
    \begin{array}{|lll}
    \mut_{(\zn\zm)}^{\;\zk}
    &\equiv& \PB{\g_{\zn\zm}}{U_0^\zk} \hmg{\geps}\, , \vphantom{\big|^I}\\
    \mut^{(\zn\zm)\zk}
    &\equiv& \PB{\pg^{\zn\zm}}{U_0^\zk} \hmg{\geps}\, .\vphantom{\big|^I}\\
    \end{array}
   \eeq
  These infinitesimal gauge transformations (\ref{gauge_transf_noncanon}) acquire the form
   \beq{gauge_transfs_ham}
    \begin{array}{lll}
     \gvar \g_{\zn\zm}
     &=& \big( \g_{\zk\zm} \partial_\zn %\gzeta_{\isst}^{\zk}
     + \g_{\zn\zk} \partial_\zm %\gzeta_{\isst}^{\zk}
     + (\partial_\zk \g_{\zn\zm}) %\gzeta_{\isst}^{\zk}
     \big) ( \gzeta_{\isst}^{\zk} + U_0^\zk \hmg{\geps})
     + \tfrac{\FF}{\sqrt{\g}}\left(2\pg_{\zm\zn} -\frac{2}{\Ddim{-}2}\,\trpg\,\g_{\zm\zn} \right)
     \hmg{\geps} , \\
     \gvar \tf^\io &=& \geps, \\
     \gvar \ptf_\io &=& 0, \\
     \gvar \repNs^\zn
       &=& \TDer{}{\taux} (\gzeta_{\isst}^{\zn} + U_0^\zn \hmg{\geps})
           - \LieB{\zn}{\Ns}{\gzeta_{\isst} + U_0 \hmg{\geps}}
           - \FF^2 \g^{\zn\zm} \partial_\zm \big(\WW\Dvg{\lmv}\big) \hmg{\geps},
     \\
     \gvar \Dvg{{\lmv}} &=& % 0 +
                          \hmg{\replm} \Dvg{U_0} \hmg{\geps}
                           - \Dvg{\Ns} \hmg{\geps}
                          + \partial_\zk \Dvg{\lmv}\, ( \gzeta_{\isst}^{\zk} + U_0^\zk \hmg{\geps})
                          +  \Dvg{\lmv}  \Dvg{U_0}\hmg{\geps}
                          + \Dvg{\lmv}  \tfrac{\sint \OOmega^{-1}\TTheta \trpg}{\sint \OOmega^{-1}} \hmg{\geps} ,
     \\
     \gvar \replm &=& \TDer{}{\taux} \geps %\tauxdot{\geps}
     \;,
     \\
    \end{array}
   \eeq
  where $\LieB{\zn}{\xi}{\eta}$ is the Lie bracket of two vector fields, $\LieB{\zn}{\xi}{\eta} = {\xi}^\zm\partial_\zm {\eta}^\zm - {\eta}^\zm\partial_\zm {\xi}^\zn $.
}\fi

Consider the variation of the action (\ref{paramAction_GUMG_fin}), decomposed into {\GR} and auxiliary matter\HIDE{ gauge-breaking} parts:
   \bea{paramAction_GUMG_fin_COPY2} %(\ref{parAction_BLRG})
    %&&
    && \SSS_{par''}[\g,\pg,\tf^\io\!,\CCof,\repNl\!,\repNs,\lmptf]
     \;=\; \!\! \SSS^{\iGR}[\g,\pg,\repNl\!,\repNs] + \SSS_{mat}[\g,\repNl\!,\repNs,\CCof,\tf^\io\!,\lmptf]
     \nonumber\\
     && \;= \;\;
     \!\! \int \!d\taux \hspace{1pt} \dsx  %\int d\taux\, \dsx
     \,\Lib
       \pg^{\zm\zn}\tauxdot{\g}_{\zm\zn}
       \!- \repNl  \Hl
       - \repNs^\zm \Hs_\zm\!
     \Rib
     +
     \!\! \int \!d\taux \hspace{1pt} \dsx  %\int d\taux\, \dsx
     \,\Lib\,
      \tauxdot{\tf}^{\io} + \Dvg{\lmptf}
     - \repNl \FF^{-1}\opinvE
     \Rib \CCf_\io
     \,, \;\;
     \nonumber
   \eea
under the \emph{unrestricted} canonical transformations of the phase space variables,
  \beq{ugvar_canon_GUMG}
   \left|
    \begin{array}{lll}
     \ugvar \g_{\zn\zm}
     &\!=& \big( \g_{\zk\zm} \partial_\zn
     + \g_{\zn\zk} \partial_\zm
     + (\partial_\zk \g_{\zn\zm})
     \big)
     \gxs^{\zk} %\HIDE{\gzetat^{\zk}}
     + \tfrac{2 }{\sqrt{\g}}\left(\pg_{\zm\zn} -\frac{1}{\Ddim{-}2}\,\trpg\,\g_{\zm\zn} \right)
     \gxl %\HIDE{\hmg{\geps}}
     \, ,
     \\
     \ugvar \pg^{\zn\zm}
     &\!=&
     \gvarGR[\gxl,\gxs] \pg^{\zn\zm} +\PB{ \pg^{\zn\zm}}{\sint \gxl \FF^{-1}\!\opinvE \, \ptf_\io} \,,
    % \;=\;
    %  ... \;\;\text{ see (\ref{qieDkzkFfbs}) }
    % \\
     \\
     \ugvar \tf^\io
     &\!=&
      \gxl  \FF^{-1} \opinvE       - \Dvg{\gchi} \,,
      %\equiv \geps,
     \\
     \ugvar \CCof
     &\!\equiv&
     \gvar \ptf_\io
     \:=\;
     0 \,,
     \\
    \end{array}
   \right.
  \eeq
generated by the constraints,
%(\ref{gauge_transfs_ham_AMG'})
  \beq{canon_transf_GUMG''}
    \ugvar {(\,.\,)} = \PB{\,.\,}{\sint \gxl ( \Hl {+} \FF^{-1}\opinvE \ptf_{\io} )} + \PB{\,.\,}{\sint \gxs^\zm \Hs_\zm} + \PB{\,.\,}{\sint \gchi^\zm \ptf_{\io,\zm}}
    \,,
  \eeq
where gauge parameters $\gxl(\taux,\sx)$, $\gxs^\zm(\taux,\sx)$, $\gchi^\zm(\taux,\sx)$ are arbitrary functions, and under the corresponding compensating transformations of the canonical Lagrange multipliers.
From general principles --- from the requirement to cancel terms proportional to constraints in the phase-space\HIDE{ canonical} variations (\ref{canon_transf_GUMG''}), it follows that the gauge transformations of the metric Lagrange multipliers
  %%$(\repNl,\repNs^{\zn})$
are given by the general-relativistic diffeomorphism transformations\footnote{
 The transformations of the Lagrangian metric fields $\g_{\zm\zn},\Nl,\Ns^\zn$\HIDE{ are manifestly \emph{parental}, meaning they} coincide with the general-relativistic gauge diffeomorphisms, generated by the {\GR} canonical constraints $\Hl$ and $\Hs_\zn$.
  %%Transformations of the momenta generically acquire corrections, when restriction condition (or gauge-breaking condition here) is $\g_{\zm\zn}$-dependent. Of course it can be guessed in advance from the canonical structure of the action. Gauge transformations of auxiliary variables also could be guessed from the canonical rules, however it is often laborious. It is often easier to restore these transformations from direct calculation of gauge invariance. Here, however, we expect gauge anomaly for the unrestricted transformations, and we have auxiliary variables, absent in the parental theory configuration space. Thus one should be more cautious when presuming some gauge transformation properties from the parental gauge theory. However, this discussion goes beyond the focus of this paper. So one can consider this guess as the calculational tool for easier recovering canonically induced gauge transformations.
 Transformations of the gravitational momenta\HIDE{ generally} acquire corrections due to the $\g$-dependent extension of the parametrization constraint. However, up to a trivial gauge transformation\HIDE{ (i.e., those vanishing on the restricted equations of motion)}, these can be attributed to a general-relativistic gauge transformations, generated by the rescaled constraint $\opE \FF\Hl$ with correspondingly rescaled gauge parameter $\gxl\FF^{-1}\!\opinvE$.
  %Of course it can be guessed in advance from the canonical structure of the action. Gauge transformations of auxiliary variables also could be guessed from the canonical rules, however it is often laborious. It is often easier to restore these transformations from direct calculation of gauge invariance. Here, however, we expect gauge anomaly for the unrestricted transformations, and we have auxiliary variables, absent in the parental theory configuration space. Thus one should be more cautious when presuming some gauge transformation properties from the parental gauge theory. However, this discussion goes beyond the focus of this paper. So one can consider this guess as the calculational tool for easier recovering canonically induced gauge transformations.
 %% FOR THE FUTURE:
  %% The gauge transformations generated from  $\opE \FF\Hl$ in fact coincide for all metric fields in {\GR} and parameterised {\GUMG} theories (since the constraints are shifted by the gauge invariant canonical auxiliary parameter! However in this constraint basis the (unrestricted) algebras become open. And the theories do not satisfy the conditions of the theorem in \cite{Barvinsky:2022guw} which predicted the certain restricted correspondence between the (closed) algebras of parental and restricted theories. In GR closed are algebras in canonical or covariant gauge bases (which are related by the %\g%-independent transformations).
}
(\ref{GR_canon_gauge_transfs}),
  \beq{ugvar_NsNl_GUMG}
   \left|
    \begin{array}{lll}
     \ugvar \repNl
      &\!=& \tTDer{}{\taux} {\gxl} + \gxs^\zn \repNl_{\!\!\!\!,\zn} - \repNs^\zn\gxl_{,\zn}
      \,,
     \\
     \ugvar \repNs^{\zn}
      &\!=&  \tTDer{}{\taux} {\gxs}^{\zn} + \LieB{\zn}{\gxs^{}}{\repNs^{}} + (\gxl {\ader_\zm} \repNl) \g^{\zn\zm}
      \,.
     \\
     \end{array}
   \right.
  % \nonumber
  \eeq
The corresponding variation of $\Dvg{\lmptf}$\HIDE{ induced by the canonical gauge transformations} could be computed explicitly in terms of nontrivial structure functions (operators), but we leave it undefined for now to highlight the emergence of the anomaly and its structure. The action variation (\ref{paramAction_GUMG_fin}) under the above canonical transformations is given by:
   \bea{ugvar_paramAction_GUMG''} %(\ref{parAction_BLRG})
    \ugvar \SSS_{par''} %% [\g,\pg,\tf^\io,\CCof,\repNl\!,\repNs,\lmptf]
     &\!=\!&
    - \int \!d\taux\,\HIDE{ \dsx \,} \LiB
      \sint \dotuline{\gxl} \tTDer{}{\taux} (\FF^{-1}\opinvE) \CCof
   % \nonumber\\
   % &&
    \HIDE{-}+ \PB{\sint (\repNl \Hl {+} \repNs^\zn \Hs_\zn )}{\sint \gxl \FF^{-1}\opinvE \CCof}
    \RiB
   \nonumber\\
    &&
    - \int \!d\taux\,\HIDE{ \dsx \,} \LiB
     \sint
     ( \dotuline{\tTDer{}{\taux} {\gxl}} {\,+\,} \gxs^\zn \repNl_{\!\!\!\!,\zn} {-} \repNs^\zn\gxl_{,\zn} )
     \FF^{-1}\opinvE \CCof
 %  \nonumber\\
 %   &&
    + \PB{\sint \repNl \FF^{-1}\opinvE \CCof}{\sint (\gxl \Hl {+} \gxs^\zn \Hs_\zn)}
    \RiB
    \nonumber\\
    &&
     +    \!\! \int \!d\taux\, \HIDE{\dsx \,} \LiB
     %%\sint \ugvar \big(\tauxdot{\tf}^\io {+} \Dvg{\lmptf}\big) \, \CCof
     \sint \tTDer{}{\taux} \big( \dotuline{\gxl} \FF^{-1}\opinvE       {\,-\,} \Dvg{\gchi} \big) \CCof
       + \sint \ugvar \Dvg{\lmptf} \, \CCof
     \RiB
     \, .
   \eea
The first integral in this expression is the nonvanishing contribution from the {\GR} part of the action, which is due to the extension of $\ugvar\pg^{\zm\zn}$ proportional to $\CCof$.
The second integral originates from metric variations of the matter part: the first term incorporates $\ugvar \Nl$, while the second accounts for phase-space fields variations. The third integral results from the canonical transformations of auxiliary variables, reducing to $\sint \ugvar \big(\tauxdot{\tf}^\io {+} \Dvg{\lmptf}\big) \, \CCof$, given that $\ugvar{\CCof} \teq 0$. The underlined terms with time derivatives proportional to $\gxl$ cancel.

All terms on the right-hand side of (\ref{ugvar_paramAction_GUMG''}) are linear in $\CCof$, as expected\footnote{
 Pure {\GR} variation\HIDE{ in the canonical constraint basis of the metric fields} of the {\GR} part of the action vanishes due to the parental gauge symmetry, while the extension of\HIDE{ the momenta canonical transformations} $\ugvar \pg^{\zn\zm}$ beyond canonical $\GR$ transformations is proportional to $\CCof$.  The\HIDE{ unrestricted} variation of the matter part\HIDE{ of the action} is proportional to $\CCof$ because its integrand is proportional to $\CCof$, and the variation of $\CCof$ itself is trivial\HIDE{, $\ugvar \CCof=0$}.
}.
The role of $\sint \ugvar{\Dvg{\lmptf}} \CCof$ is to cancel other nonvanishing terms proportional to $\CCof$. If $\Dvg{\lmptf}$ were a functionally complete object\HIDE{ (unconstrained yet arbitrary function)}, its variation could fully compensate for all other terms\HIDE{ in arbitrary off-shell configurations of fields}, preventing any anomaly. However, due to the average-free nature of  $\Dvg{\lmptf}$, only contributions proportional to average-free component of $\CCof$ can be compensated. Consequently, for
  \beq{gvar_Dvg_lmptf_GUMG}
   %\left|
    \begin{array}{|lll}
    \ugvar{\Dvg{\lmptf}}
   &\!=\!& \tTDer{}{\taux}\Dvg{{\gchi}} +
   \Big(\!
  % \PB{\sint (\repNl \Hl {+} \repNs^\zn \Hs_\zn )}{ \HIDE{\sint}\inh{(\gxl \FF^{-1}\opinvE)}}
  %  +
     \inh{( \gxs^\zn {\repNl}_{\!\!\!\!,\zn}\FF^{-1}\opinvE )}
  %   {-}
  %   \inh{(\repNs^\zn\gxl_{,\zn}   \FF^{-1} \opinvE )}
%   \nonumber\\
%    &&
    +
    \PB{ \inh{( \repNl \FF^{-1} \opinvE) } }{\sint (\gxl \Hl {+} \gxs^\zn \Hs_\zn)}
    \,-\,
    \big(\repNl\!,\repNs^\zn \leftrightarrow  \gxl\!, \gxs^\zn\big)
    \!\Big)
   ,
    \end{array}
    %\right.
  \eeq
the anomaly $\!A$ of the unrestricted\HIDE{ canonical} transformations, given by
  $ \ugvar \SSS_{par''}  
    \!\!\!\teq\!
   \int \!d\taux\, A
  $,
can be expressed as
  \beq{anomaly_GUMG_init}
    \quad
    - A
    \;=\;
  %%  \big(
  %%{-\,}
   \sint
      %(
    \gxs^\zn {\repNl}_{\!\!\!\!,\zn}
   %  {\,-\,}
   %  \sint\repNs^\zn\gxl_{,\zn}
     %)
     \FF^{-1}\opinvE \hmgs{\CCof}
  % \nonumber\\
  %  && \quad
    +  %%  -
     \PB{\sint \repNl \FF^{-1}\opinvE \hmgs{\CCof}\,}{\sint (\gxl \Hl {\,+\,} \gxs^\zn \Hs_\zn)}
 %%  \big)
    \,-\,
    \big(\repNl\!,\repNs^\zn \leftrightarrow  \gxl\!, \gxs^\zn\big)
    \,.
    \quad
   \eeq
\if{ %% COMMENT
  %  Since $\opinvE$ inside the spatial integrals act to the right on the spatially-homogeneous factor $\hmgs{\CCof}(\taux)$, one may simplify: $\opinvE \hmgs{\CCof}= (\WW^{-1}\!/\hmg{\WW^{-1}\!}\,)\hmgs{\CCof}$, (\ref{opinvE_properties}).
  %  Anyway, the spatially homogeneous part of the cosmological-constant field, $\hmgs{\CCof}(\taux)$, enters the anomaly as the overall time-dependent factor.
}\fi
Evaluating the Poisson brackets in (\ref{anomaly_GUMG_init}) requires careful treatment of spatial nonlocality in the operator $\opinvE$. After processing the Poisson brackets, grouping similar terms, and using the properties (\ref{opinvE_properties}\,--\,\ref{opG_properties2_kernels}) of\HIDE{ the operator} $\opinvE$ (we relegate all technicalities to Appendix \ref{ASSect:gauge_calcs_GUMG}) the anomaly simplifies to
  \beq{anomaly_GUMG_fin}
   \quad
   A
    \;=\;
    \sint  \gxl \FF^{-1}\hspace{-1pt} \opGm
    \big( \Dvg{\repNs} \,\OOmega
     - \repNl \FF^{-1} \TTheta \trpg \big)
    \frac{\hmgs{\CCof}}{\,\hmg{\WW^{-1}}\,}
    -
    \sint \repNl \FF^{-1}\hspace{-1pt} \opGm
    \big(\Dvg{\gxs} \,\OOmega - \gxl \FF^{-1} \TTheta \trpg \big)
      %%\underbrace{ \WW\opinvE \hmg{\ptf_{\io}} }_{ =
    \frac{\hmgs{\CCof}}{\,\hmg{\WW^{-1}}\,}
      %%}
%\nonumber\\
    \,,
    \quad
   \eeq
where $\OOmega(\argsqrg)$ and $\TTheta(\argsqrg)$ are the {\GUMG} derivative characteristic functions (\ref{def_OOmega_TTheta}), and the
operator $\opGm \defeq \opinvE \WW^{-1} \!\inhId$ is defined by the two-point integral kernel
  \beq{}
   G_{\scriptscriptstyle{-}}(\taux|\sx,\sx')
   \;=\;
   \WW^{-1}(\taux,\sx) \delta(\sx,\sx') - \WW^{-1}(\taux,\sx) \frac1{\sint \WW^{-1} (\taux) } \WW^{-1}(\taux,\sx')
   \,,
  \eeq
with $\WW(\taux,\sx) \defeq \WW(\sqrg {\scalebox{0.8}{\mbox{\ensuremath{(\taux,\sx)}}}})$.
$\opGm$ is one of the operators that, according to Lemma \ref{Lemma:invE_variations}, appear in the variation of $\opinvE$\,: \,$\var \opinvE \teq\, {-\,} \opinvE \! \var\WW^{-1} \opGp  -\,  \opGm \var \WW \opinvE$\HIDE{, (\ref{Lemma:var_invE_statement})}. Both $\opGm$ and $\opGp$ are symmetric and one-mode degenerate: their left and right kernels are (spatially) constant functions $\hmg{f}$ (see Corollary \ref{Corollary:var_invE_G_properties}). Since all $\opinvE$ in (\ref{anomaly_GUMG_init}) act to the right on the spatially constant function $\hmgs{\CCof}$, the second projector form, $\opGp$, disappears from the result, (\ref{Corollary:invE_1_variations_statement}).

\if{ %% v. 2025-02
%% - - -
Note that all terms are linear in $\ptf_\io \tequiv \CCof$, which is expected since pure parental transformations of metric fields of \GR\ action cancelled due to parental gauge symmetry. Also note that if we pretend that we do not know the transformation of $\gvar \ptf_\io$ and add correspondent variation terms $\HIDE{\int d\taux\,} \HIDE{\dsx \,}
\sint (\tauxdot{\tf}^\io {+} \Dvg{\lmptf}) \, \gvar \ptf_\io$ one immediately concludes that $\gvar \ptf_\io$ should vanish, since no other terms (\ref{ugvar_paramAction_GUMG''}) contain neither $\tauxdot{\tf}^\io$ nor $\Dvg{\lmptf}$.
Now lets examine what terms could be compensated by the term $\HIDE{\int d\taux\,} \HIDE{\dsx \,}     \sint (\gvar \tauxdot{\tf}^\io) \ptf_\io$ (pretending that we do not know $\gvar{\tf^\io}$ from canonical structure) and reinstate $\gvar{\tf^\io}$ from time locality considerations. The very first terms in the first and the second line fulfill the total time derivative as a factor at $\ptf_\io$.
There are no other terms with time derivative in the action. The only exception is yet underfined $\gvar{\Dvg{\lmptf}}$ which multiplies only $\ptf_\io$ and thus may generate total derivative term $\Dvg{\tauxdot{\gchi}}$, which should also be compensated by  $\gvar{\tauxdot{\tf}^\io}$. Consequently $\gvar{\tf^\io}$ acquires the form (\ref{ugvar_canon_GUMG})\footnote{The complete gauge transformation of $\gvar{\tf^\io}$ besides $   \gxl  \FF^{-1}\opinvE - \Dvg{\gchi}$ from (\ref{ugvar_canon_GUMG}) also contains shift with arbitrary time-independent parameter, which is not fully covered by $\Dvg{\gchi}$ due to the functional incompleteness of the latter.} which we expected from the canonical structure of the action.

All other terms from the first two lines of (\ref{ugvar_paramAction_GUMG''}) should be compensated by $\gvar{\Dvg{\lmptf}}$. If there were no assumption of the compact topology without boundaries for spatial sections of spacetime and corresponding  boundary conditions, which make a vector divergence average-free, then it would be possible to compensate all yet noncancelled terms in (\ref{ugvar_paramAction_GUMG''}) by appropriate (spatially nonlocal) gauge variation of $\Dvg{\lmptf}$. However, due to average-free property the variation of the divergence it can compensate only terms, multiplied by spatially inhomogeneous part of  $\ptf_\io$. Lengthy explicit expression is not important for us, so we leave  $\gvar{\Dvg{\lmptf}}$ in the following\HIDE{ implicit} form\footnote{The right-hang side of (\ref{gvar_Dvg_lmptf_GUMG}) is nothing more than variation of the terms (\ref{ugvar_paramAction_GUMG''}), left after using $\gvar{\tf^\io}$, (\ref{ugvar_canon_GUMG}). Or, equivalently, the factor in the variation integral (\ref{ugvar_paramAction_GUMG''}) (after using $\gvar{\tf^\io}$) at independent equation-of-motion factor $\inh{\ptf_\io}=0$.}
  \bea{gvar_Dvg_lmptf_GUMG}
   \Big| \quad \gvar{\Dvg{\lmptf}}
   &=& \Dvg{\gchi} + %\sVDer{}{\inh{\ptf_\io}}
   \PB{\sint (\repNl \Hl {+} \repNs^\zn \Hs_\zn )}{ \HIDE{\sint}\inh{(\gxl \FF^{-1}\opinvE)} \HIDE{\ptf_{\io}}}
    +
    %\stVDer{}{\inh{\ptf_\io}}
    \HIDE{\sint}
     \inh{( \gxs^\zn \repNl_{\!\!\!\!,\zn}\FF^{-1}\opinvE )} \HIDE{\ptf_{\io}}
     {-}
     \inh{(\repNs^\zn\gxl_{,\zn}
     \FF^{-1}\opinvE )} \HIDE{\ptf_{\io}}
   \nonumber\\
    &&
    + %\sVDer{}{\inh{\ptf_\io}}
    \PB{\HIDE{\sint} \inh{( \repNl \FF^{-1}\opinvE) } \HIDE{\ptf_{\io}}}{\sint (\gxl \Hl {+} \gxs^\zn \Hs_\zn)}
    \big)
   \,,
  \eea
which defines gauge transformation of $\gvar{\lmptf^\zm}$ up to arbitrary transverse spatial vector parameter. Operators $\opinvE$ act on functions, which stand from the left.\HIDE{ The third and the fourth term in the right-hand side are made average-free. by correspondent projector.}

Finally, the gauge anomaly
  $ \gvarGR \SSS_{\ialt_0}
    =
   % 0 + \gvarGR \SSS_{mat_0}
   % \equiv
     \int \!d\taux\, A
  $
for the unrestricted gauge transformations (\ref{ugvar_canon_GUMG}), (\ref{ugvar_NsNl_GUMG}), (\ref{gvar_Dvg_lmptf_GUMG}) reads
  \bea{anomaly_GUMG_init}
    A
    &=&
   %%  \int d\taux\, \HIDE{\dsx  }
   %%  \LiB \,
      \PB{\sint (\repNl \Hl+\repNs^\zn\Hs_\zn)\,}{\sint \gxl \FF^{-1}\opinvE \hmg{\ptf_{\io}}}
   %\nonumber\\
   % &&
    + \sint ( \gxs^\zn \repNl_{\!\!\!\!,\zn} - \repNs^\zn\gxl_{,\zn} )  \FF^{-1}\opinvE \hmg{\ptf_{\io}}
   \nonumber\\
    &&
    \quad +\, \PB{\sint \repNl \FF^{-1}\opinvE \hmg{\ptf_{\io}}\,}{\sint (\gxl \Hl+\gxs^\zn \Hs_\zn)}
   %% \,\RiB
    \,,
   \eea
where only spatially homogeneous part of the cosmological constant invariant, $\hmg{\ptf_{\io}}(\taux)$, enters the anomaly as the overall time-dependent factor. Calculation of the Poisson brackets in (\ref{anomaly_GUMG_init}) implies careful handling of spatial nonlocality in operator $\opinvE$. Calculational details we took out to Appendix \ref{ASSect:gauge_calcs_GUMG}. The result of
processing the Poisson brackets, grouping similar terms and using properties (\ref{opinvE_properties}--\ref{opG_properties2_kernels}) of\HIDE{ the inverse operator} $\opinvE$, (\ref{invE}), reduces the anomaly to:
  \bea{anomaly_GUMG_fin}
    A
    &\!=\!&\!
    \sint  \gxl \FF^{-1} \opGm %\HIDE{\WW^{-1} \tFProj{\WW^{-1}}{1}}
    \,
    \big( \Dvg{\repNs} \,\OOmega
     - \repNl \FF^{-1} \TTheta \trpg \big)
    \frac{\hmg{\ptf_{\io}}}{\,\hmg{\WW^{-1}}\,}
    -
    \sint \repNl \FF^{-1} \opGm % \HIDE{\WW^{-1} \,\tFProj{\WW^{-1}}{1}}
    \,
    \big(\Dvg{\gxs} \,\OOmega - \gxl \FF^{-1} \TTheta \trpg \big)
      %%\underbrace{ \WW\opinvE \hmg{\ptf_{\io}} }_{ =
    \frac{\hmg{\ptf_{\io}}}{\,\hmg{\WW^{-1}}\,}
      %%}
%\nonumber\\
    , \quad\;\;
   \eea
 %%$\WW^{-1} \,\tFProj{\WW^{-1}}{1}$
where $\OOmega(\argsqrg)$ and $\TTheta(\argsqrg)$ are the {\GUMG} characteristic functions (\ref{def_OOmega_TTheta}), and the
operator $\opGm \equiv \opinvE \WW^{-1} \inhId$ with the two-point integral kernel
  \beq{}
   G_{\scriptscriptstyle{-}}(\taux|\sx,\sx') = \WW^{-1}(\taux,\sx) \delta(\sx,\sx') - \WW^{-1}(\taux,\sx) \frac1{\sint \WW^{-1} (\taux) } \WW^{-1}(\taux,\sx')
   \,,
  \eeq
where $\WW(\taux,\sx) \tequiv \WW(\sqrg {\scalebox{0.8}{\mbox{\ensuremath{(\taux,\sx)}}}})$,
is one of operators, whose properties were noted in Corollaries \ref{Corollary:var_invE_G_properties} and \ref{Corollary:var_invE_G_equations}. This symmetric operator is one-mode degenerate. Its left and right kernels are constant functions, so that when acting on function to the right, $\opGm \,\const = 0$, as well as the right action of operator implies $ \const \, \opGm = 0$. In Corollary \ref{Corollary:var_invE_G_equations} it was proved that equation $f\opGm = 0$ is equivalent to $f \inhId = 0$ and imply $\inh{f}=0$. Due to symmetry of the operator, the same is true for the equation $\opGm f = 0$, which also implies $\inh{f}=0$.

%%- - -
}\fi

Corollary \ref{Corollary:var_invE_G_equations} shows that the equation $f\,\opGm \teq 0$ is equivalent to $f \inhId \teq 0$, implying $\inh{f} \teq 0$. Owing to the symmetry of the operator, the equation $\opGm f \teq 0$  also implies $\inh{f} \teq 0$. Thus, the requirement for the anomaly (\ref{anomaly_GUMG_fin}) to vanish imposes the following constraints on\HIDE{ transformation} parameters:
 \beq{gauge_restriction_GUMG}
  %\boxed{
  \begin{array}{|ll|ll}
   %\overrightarrow
   \gxl \FF^{-1} %\overleftarrow
    {\opGm}
    = 0
    \,,
   & \quad\;\Rightarrow\; \quad &
    \;\gxl = \hmg{\geps}(\taux) \FF
    \,,
    \\
   {\opGm}
    \big( \Dvg{\gxs} \,\OOmega
     - \gxl \FF^{-1} \TTheta \trpg \big)
    = 0
    \,,
   & \quad\;\Rightarrow\; \quad &
   \;\Dvg{\gxs}
    = \hmg{\geps}(\taux)
      \big( \OOmega^{-1}\TTheta \trpg
      - \OOmega^{-1}\tfrac{\sint \OOmega^{-1}\TTheta \trpg}{\sint \OOmega^{-1}}
      \big)
   % \equiv \fProj_{(1,\OOmega^{-1})} \OOmega^{-1}\TTheta \trpg
   % = \OOmega^{-1} \fProj_{(\OOmega^{-1},1)} \TTheta \trpg
   \,.
   \\
  \end{array}
  %}
 \eeq
\HIDE{Arrows above the operators indicate the direction of operator's action on functions.}

These are two \emph{independent} constraints on gauge parameters because, in the unrestricted canonical variation of the action, they multiply \emph{independent} equations of motion\footnote{
 In DeWitt's compact notation, the anomaly structure for any field variation $\ugvar[\gxi] \phi^\zi \teq r^\zi_{\alpha} \gxi^\alpha$ in the action is given by
 $
  \int dt\, A
  \defeq  \ugvar[\xi] S[\phi^\zi]
 % \teq  \VDer{S}{\phi^\zi} \ugvar[\gxi] \phi^\zi
 % \teq  \VDer{S}{\phi^\zi} r^\zi_{\,\alpha} \gxi^\alpha
  \teq  \SSS_{,\zi} r^\zi_{\,\alpha} \xi^\alpha
 $.
 Thus, the anomaly form
  % $\VDer{S}{\phi^\zi} r^\zi_{\,\alpha}$
 $\SSS_{,\zi} r^\zi_{\,\alpha}$, which acts on transformation parameters $\xi^\alpha$, is proportional to the left-hand sides of the equations of motion, $\SSS_{,\zi}$, which may have linear dependencies when a gauge symmetry is present in the theory. Nontrivial anomaly-free conditions on $\xi^\alpha$ determine the\HIDE{ number of} independent directions in the space spanned by $ r^\zi_{\,\alpha}$ that are not tangent to the extremal surface defined by $\SSS_{,\zi} \teq 0$ (i.e., that are not\HIDE{ true} gauge directions). The number of these directions is\HIDE{ formally} given by the rank of
  % $\frac{\var S}{\var \phi^\zj \var \phi^\zi} r^\zi_{\,\alpha} \big|_{\VDer{S}{\phi^\zi}=0}$.
 $\SSS_{,\zj\zi} r^\zi_{\,\alpha} \big|_{\SSS_{,\zi}=0}$.
 Dually, this rank corresponds to the number of independent normal directions to the extremal surface (i.e., on-shell independent rows of $\SSS_{,\zj\zi}$) that are transverse to the vectors $r^\zi_{\,\alpha}$. These independent normal directions correspond to the independent on-shell conditions multiplying $\xi^\alpha$ in the anomaly.
}.
\if{
  \footnote{
   The general structure of anomaly for any variation of the fields $\ugvar[\gxi] \phi^\zi \teq r^\zi_{\,\alpha} \gxi^\alpha$ in the action is
   $
    \int dt A
    \defeq  \ugvar[\gxi] S[\phi^\zi]
    \teq  \VDer{S}{\phi^\zi} \ugvar[\gxi] \phi^\zi
    \teq  \VDer{S}{\phi^\zi} r^\zi_{\,\alpha} \gxi^\alpha
   $.
   Thus the coefficients of the gauge-restriction operator imposing restrictions on gauge parameters are always proportional to equations of motion.
   }
}\fi
The second term on the right-hand side of (\ref{anomaly_GUMG_fin}), proportional to $\repNl \FF^{-1}\hspace{-1pt} \opGm $, vanishes under the consistency condition $\inh{\repNl  \FF^{-1} } {\,\eomeq\,} 0$, (\ref{repNl_repNs_onshell}).
The first anomalous term in (\ref{anomaly_GUMG_fin}) is proportional to the  $\opGm \big(\Dvg{\repNs} \,\OOmega   {\,-\,} \repNl \FF^{-1} \TTheta \trpg \big)\HIDE{ \teq 0}$, which vanish on shell due to\HIDE{ the equation of motion} $\Dvg{\repNs} {\,\eomeq\,} \repNl \FF^{-1} \inh{U}_0$, where $\inh{U}_0(\g,\pg)$\HIDE{ first} appeared in (\ref{DvgNs_from_CS_GUMG}). The latter equation, besides $\inh{\repNl \FF^{-1} } {\,\eomeq\,} 0$, relies on the other consistency condition $\Dvg{\repNs} {\,\eomeq\,} \tauxdot{\hmg{\tf}}^\io \,\inh{U}_0 \HIDE{\Dvgi{U_0}}$, (\ref{repNl_repNs_onshell}). Thus, on-shell vanishing factors in the integrands of (\ref{anomaly_GUMG_fin}) encode two independent consistency conditions, constraining the pair of the Lagrange multipliers associated with the second-class constraints on the non-GR branch of the theory.

The degeneracy of operator $\opGm$ in the first equation of (\ref{gauge_restriction_GUMG}) ensures the emergence of a homogeneous gauge parameter $\hmg{\geps}(\taux)$, which parameterizes the residual gauge freedom
associated with homogeneous time reparametrization. In the second equation, the one-mode degeneracy of $\opGm$ guarantees the existence of a solution for the\HIDE{ one-mode incomplete} transformation parameter $\Dvg{\gxs}$. At the same time, since only $\Dvg{\gxs}$ is constrained, the transverse part of $\gxs^\zn$, associated with the spatial volume-preserving diffeomorphisms, remains arbitrary. Finally, there are no constraints on the symmetry parameterized by $\gchi^\zm$, which is internal to the auxiliary sector and does not affect the metric-sector fields.

Thus, in its alternative\HIDE{ parameterized} formulation, the {\GUMG} model possesses a gauge symmetry parameterized by the canonical transformations (\ref{ugvar_canon_GUMG}, \ref{ugvar_NsNl_GUMG}, \ref{gvar_Dvg_lmptf_GUMG}), subject to restrictions:
  \beq{gauge_param_restrict_GUMG}
    \begin{array}{|ll}
     \gxl = \FF \, \hmg{\geps} \,,
     \qquad
     &\text{$\hmg{\geps}(\taux)$ --- arbitrary homogeneous: $\hmg{\geps} \teq \hmg{\geps}(\taux)$;}
     \\
     \gxs^\zn = \hmg{\geps} \,{U^\zn_0} + \gzetat^\zn,
     \qquad
     &\text{$\gzetat^\zn$ --- arbitrary transverse: $\,\Dvg{\gzetat} \teq 0 $\,;}
     \\
     \gchi^\zn\,,
     \qquad
     &\text{$\gchi^\zn$ --- arbitrary (internal to the auxiliary sector),}
    \end{array}
  \eeq
where ${U^\zn_0}(\g,\pg)$ is defined in (\ref{lm_solutions_nonGR}).
  %%The homogeneity of the free parameter $\hmg{\gxs}^\zn$ reflects the homogeneity of time reparametrization in the metric sector, while the transversality of the free spatial diffeomorphism parameter $\gzetat^\zn$ indicates that the allowed gauge transformations are restricted to volume-preserving spatial reparametrizations. Additionally,
As a check,\HIDE{ let us note that} the appearance of $\hmg{\geps} \,{U^\zn_0}$ in the constraint on $\gxs^\zn$ confirms the conjectured form of the homogeneous first-class Hamiltonian constraint in (\ref{I-class_constr}).

\newpar

The first two gauge parameters, $\hmg{\geps}$ and $\gzetat^\zn$, in (\ref{gauge_param_restrict_GUMG}) fully parameterize the gauge freedom of the metric fields, which turns out to be the residual gauge symmetry inherited from the parental theory --- the Einstein general relativity. Corresponding gauge restriction conditions on the left side of (\ref{gauge_param_restrict_GUMG}) effectively restrict the parameters $\gxl$, $\gxs^\zn$ of the parental gauge symmetry (\ref{GR_canon_gauge_transfs}) --- spacetime diffeomorphisms in the canonical gauge basis\footnote{
 This is strictly correct for the gauge transformations of the Lagrangian metric fields. The modified transformation law for $\pg^{\zm\zn}$ can be matched to a parental counterpart by additional trivial gauge transformations\HIDE{ trivial with respect to both theories}.
}.
Hypothetical\HIDE{ noncanonical} modifications of the transformations for the auxiliary sector fields $\CCof$ and $\lmptf^\zn$ obviously cannot eliminate the anomaly\HIDE{ or qualitatively alter the imposed restrictions}. Similarly, local-in-time modifications of $\ugvar\tf^\io$ also can not compensate it. As previously observed in Section \ref{SSect:Gauge_Inv_Check_AMG} for {\wGUMG} models, the anomalous contributions can only be formally compensated by metric-dependent, time-nonlocal transformations of this field (in fact, only of its spatially averaged mode $\hmg{\tf}^{\io}$).
Thus, it was demonstrated that, within the parameterized framework, the residual gauge transformations in the metric sector on the non-GR branch are either canonical or unhealthy due to time nonlocality. Conversely, all canonically-allowed, time-local gauge transformations correspond to restricted gauge transformations of the parental theory. This supports the analogous conclusions drawn within the original formulation \cite{Barvinsky:2019agh}.

The gauge structure analysis above was performed for the\HIDE{ parameterized} canonical action $\SSS_{par''}$, (\ref{paramAction_GUMG_fin}).
This choice was made for convenience and to clarify the origin and meaning of the gauge restrictions, which are more transparent in canonical representations.
Of course, the same gauge-restriction procedure could be applied directly in the Lagrangian representation (\ref{ActionHTlike_GUMG_ccf0}), yielding the same result provided that the field-dependent coefficients of the constraints are tied to the restriction surface, $\pg^{\zm\zn} \teq {\sqrg}(\ecK^{\zm\zn} {\,-\,} \g^{\zm\zn} \trecK)$, (\ref{momenta_ecK_GR}). The latter affects only the function $U_0^\zm(\g,\pg)$, (\ref{lm_solutions_nonGR}), or its divergence $\inh{U}_0 \teq \Dvgi{U_0}$, (\ref{DvgNs_from_CS_GUMG}), which depend on the trace of the metric momenta, $\trpg$.

This concludes the analysis of the gauge structure in the parameterized alternative representation of {\GUMG}. The gauge symmetry was already spatially nonlocal in local {\wGUMG} models, where only the homogeneous part of time reparametrizations remained a symmetry, while spatial diffeomorphisms were restricted to their transverse components. In general models, explicit nonlocality in the action and equations of motion further severes the nonlocalities by feeling the spatial inhomogeneity of $\WW(\argsqrg)$. However, we showed that the overall gauge structure remains analogous to that of {\wGUMG}. This consistency supports the validity of our approach to handling such nonlocalities.

%%-------------------------=%        ******         %=-------------------------%%
%%-------------------------=%        ******         %=-------------------------%%

    \subsection{Note on the effect of nonlocality on quantum measure}
     \label{SSect:Quantum_measure}
      \hspace{\parindent}
To conclude, we briefly address the impact of $\opE$-type spatial nonlocalities on the quantum measure. These nonlocalities may enter through Faddeev-Popov determinants and Jacobians of the $\opE$-dependent field redefinitions beyond any canonical field set.
A correct path integral measure follows from the canonical Liouville measure. In the Lagrangian representation (\ref{ActionHTlike_GUMG_ccf0}), the non-ghost field sector retains the standard ultralocal {\GR} measure, including the DeWitt determinant factor generated by integrating out gravitational momenta\footnote{Additionally, an ultralocal Jacobian may appear when passing from ADM metric variables to the\HIDE{ actual} Lagrangian metric field representation. However, auxiliary sector fields remain in their\HIDE{ original} canonical representation, keeping their measure trivial.}.
However, in the representation (\ref{ActionHTlike_GUMG}), where the effective cosmological constant field $\CCf$ is introduced, the measure acquires an nonultralocal factor $\mathop{\mathrm{Det}} \opE$ due to the nonlocal field redefinition (\ref{ccf_GUMG}).
According to Lemma \ref{Lemma:DetE}, the determinant of $\opE$\HIDE{ acting} as a spatial operator with the two-point kernel ${E}(\taux|\sx;\sx')$, (\ref{opE_kernel}), is
  \beq{} % (\ref{sDet_opE})
    {\mathop{\mathrm{Det}}}{_{\sx}} \opE (\taux)
    \;=\;
    \hmg{\WW}(\taux)\,\hmg{\WW^{-1}}(\taux)
    \,\geq\, 1
    \,.
  \eeq
The full spacetime determinant of the operator\HIDE{ $\opE$}
with the kernel $\HIDE{{E}(\taux,\sx;\taux',\sx') \teq } {E}(\taux|\sx;\sx') \delta(\taux{-}\taux')$
follows as\footnote{
 Regularizing $\delta(0)$  as $(\int\! d \taux\, 1)^{-1} $ ensures that this determinant remains $\geq1$, with equality holding only when $\WW$ is spatially homogeneous. For example, for  $\WW \tequiv \wwc \teq \const$ \,or\, $\det \g_{\zm\zn} (\tx,\sx) = \hmg{\det \g_{\zm\zn}} (\taux)$.
}
  \beq{} % (\ref{Det_opE_t})
    {\mathop{\mathrm{Det}}}{_{\taux,\sx}} \opE
    \;=\;
    e^{ \delta(0) \int\! d\taux\, \ln \big( \hmg{\WW}\,\hmg{\WW^{-1}} \big) }
    %   \,\geq\,1
    \,.
  \eeq
Since $\opE$ is a purely spatial nonlocality of a non-differential type, it does not introduce derivative expansions\HIDE{ via the heat kernel method}. Furthermore, as $\opE$ arises solely from the spatial inhomogeneity of $\WW(\argsqrg)$, only a limited set of structures is expected to appear in its determinant.
This suggests that determinant contributions from such nonlocal structures remain relatively simple and manageable, at least for those arising from\HIDE{ nonderivative} field redefinitions.

Computing the Faddeev-Popov determinants may be a more involved problem due to the complexity of the gauge structure. However, a well-chosen gauge fermion could simplify or even eliminate $\opE$-dependent nonlocalities. In any case, this lies beyond the scope of this work\HIDE{, which is focused on classical properties of the alternative representation}.

%%-------------------------=%        ******         %=-------------------------%%
%%-------------------------=%        ******         %=-------------------------%%
%%-------------------------=%        ******         %=-------------------------%%

\section{Conclusions}
 \hspace{\parindent}
In this paper, we derived and analyzed the alternative representation of generalized unimodular gravity (\GUMG), constructed by generalizing the Henneaux–Teitelboim (HT) approach to unimodular gravity.
We addressed the problem of consistent parameterization and examined the key dynamical and gauge features, including the emergence of explicit spatial\HIDE{ $\opE$-type} nonlocality and its implications for the equations of motion and the gauge structure. We also confirmed that the gauge symmetry derived from canonical analysis matches the local-in-time part of the residual general-relativistic diffeomorphisms implied by the configuration-space restriction\HIDE{(\ref{GUMG_restriction})}.

The alternative formulation is physically equivalent to the original setup of \cite{Barvinsky:2017pmm,Barvinsky:2019agh}, inheriting its non-regular constraint structure with two dynamical branches, each featuring a different number of physical degrees of freedom. On the non-GR branch --- characterized here by a nonzero cosmological-constant field --- an additional degree of freedom emerges, behaving as a cosmological perfect fluid with an equation of state governed by the barotropic parameter $\WW(\argsqrg)$, in agreement with results from the original formulation.

A distinctive feature of the alternative representation is the presence of the {HT}-like auxiliary field entering the action through a total spacetime divergence, $\partial_\Zm\HTf^\Zm$. The latter dynamically selects a spacetime-constant observable $\cco$ --- the only free component of the cosmological-constant field in $\CCof$ parameterization (\ref{ActionHTlike_GUMG_ccf0}). Consequently, the on-shell behavior (\ref{ccf_onshell_GUMG}) of the effective cosmological constant $\CCf$ emerges as a spatially nonlocal function, sensitive to normalized spatial variations of $\WW(\argsqrg)$.
The constant $\cco$ parameterizes a direction along which the space of solutions\HIDE{ ( of equations of motion)} extends relative to general relativity (\GR).
 \if{
  \footnote{
    The extension reflects the fact that the {\GUMG} restriction is not merely a gauge fixing but also implies a physical constraint on a conjugate global observable, which is a gauge-invariant mode involving the Lagrange multiplier $\Nl$, and this physical constraint\HIDE{ cannot be confined to phase space} lies outside phase space.
   %This is analogously to what happens in the unimodular gravity \cite{Barvinsky:2022guw}.
  }.
 }\fi
In {\GR}, physical trajectories are confined to the constraint surface $\Hl \teq 0$, $\Hs_\zn \teq 0$. In contrast, the {\GUMG} constraint surface extends to:
    $ \Hl
      \teq {-}\FF^{-1}\opinvE \cco
      \,\teq {-}\FF^{-1} \frac{\WW^{-1}}{\,\hmg{\!\WW^{-1}\!}\,\vphantom{I^{|^I}}} \, \cco
    $,\,
   $\Hs_\zn \teq 0$\,,
where the first equation refines the secondary constraint $\CT_{,\zm} \teq 0$ using dynamical input. This explicit encoding of such an invariant in the action is advantageous within the restricted theory framework \cite{Barvinsky:2022guw}, which suggests an interrelation between the effective actions of the parental and restricted gauge theories.

Another notable feature of the alternative formulation is the manifest appearance of spatial nonlocality in the action (\ref{ActionHTlike_GUMG_ccf0}, \ref{ActionHTlike_GUMG}). Though arising in the course of consistent parameterization, this nonlocality is not an artifact of the formalism or an ill-chosen extension of configuration space but a genuine dynamical feature of {\GUMG} models with nonconstant $\WW$. It was implicitly present already in the original canonical formulation, where the spatial average $\hmg{\FF \Hl}$ defines a gauge-invariant constant of motion (taking on-shell equal to $- \cco$), while the inhomogeneous secondary constraint $\inh{\WW \FF \Hl}$ is also gauge-invariant and vanishes on shell. Hence, their linear combination $\opE \FF \Hl$ is a functionally complete (albeit spatially nonlocal) gauge-invariant integral of motion --- the property, which the alternative\HIDE{ parameterized} formulation just makes explicit.

The {\wGUMG} subfamily (\ref{AMG_part_case}), where the alternative parameterized formulation remains local, provides a useful particular case for understanding the dynamics and gauge structure of general {\GUMG}. Its equations of motion are local, while gauge transformations exhibit the simple spatial nonlocality: spatially homogeneous time reparameterizations and transverse spatial diffeomorphisms.
In the general case with spatially varying $\WW(\argsqrg)$, the additional $\opE$-type nonlocality does not qualitatively alter the theory's dynamics or gauge structure, making {\wGUMG} a full-fledged yet simpler setting for studying the {\GUMG} behavior.

The framework also includes unimodular gravity (\UMG) as the exceptional case (\ref{UMG_part_case}), reproducing the Henneaux–Teitelboim action with restored diffeomorphism invariance \cite{Henneaux:1989zc}.
Interestingly, the time reparameterization symmetry restores not due to local parameterization alone, but rather from the secondary constraint becoming first-class in {\UMG}. This constraint generates inhomogeneous time reparameterizations, while the missing homogeneous mode is effectively recovered via the homogeneous part of parameterization (the inhomogeneous auxiliary fields can be decoupled).
By contrast, in {\GUMG}, the secondary constraint $(\WW \FF \Hl)_{,\zm} \teq 0$ is second-class, and time reparameterization remains broken, regardless of the applicability of the local parameterization scheme. As seen in {\wGUMG} case, where the local parameterization is consistent, yet full diffeomorphism symmetry is not restored. The new constraint introduced by the parameterization does generate a homogeneous gauge symmetry in the metric sector governed by the parameter $\hmg{\geps}(\taux)$, while its inhomogeneous counterpart with $\Hl$ is fixed to zero by the secondary constraint and remains second-class, along with the longitudinal part of spatial diffeomorphisms.

\newpar

Our constructive derivation confirms the equivalence between the original and alternative formulations of generalized unimodular gravity, both of which admit generalizations to a broader class of restrictions\HIDE{ such as $\Nl = \FF(\g_{\zn\zm})$ without essential complications}. The analysis presented provides an overview of the dynamical features and gauge structure of the theory. Nevertheless, various issues are still are still open.
While we have justified the applicability of the parameterized formalism in the case of manifolds with noncompact spatial sections from general grounds and specific properties of the $\opE$-type nonlocalities, a complete treatment of the theory still requires further development, including a more precise specification of asymptotic behavior of the fields.
 %\footnote{See \cite{Barvinsky:2019qzx} for some hints in $\Ddim=4$.}. 
In the broader {\GUMG} framework, it would be valuable to explore cosmological solutions beyond the inflationary scenarios studied in \cite{Barvinsky:2019qzx}, including static configurations and other nontrivial spacetime geometries. 
Potential global limitations on the solution space which may be hidden in subtle properties of {\GUMG}-defining restriction also warrant further investigation.
In the context of Batalin-Vilkovisky quantization, it is of interest to clarify the relationship between the HT-like consistent parameterization scheme and the conversion-based approach, which is nontrivial even in the {\wGUMG} case. The homogeneous\HIDE{ time-reparametrization} constraint $\hmg{\replm}^\io (\hmg{\ptf_{\io} {+} \Fw\Hl})$ can be restored via the homogeneous parameterization by extending the phase space with $\hmg{\ptf}\io$, $\hmg{\tf}^\io$. Then, following the conversion method \cite{ConversionClassics}, one could attempt to convert the system of constraints with the two average-free second-class constraints: $\partial^\zn \hspace{-1pt}\Hs_\zn$ and $\inh{\wwc \Fw \Hl}$, into a first-class set by introducing average-free auxiliary canonical pair: $\inh{\tf}^\io$ and $\inh{\ptf}_\io$, and extending the constraints in powers of these conversion variables \cite{ConversionClassics,Batalin:2005df,Batalin:2018wxh}. This approach is more difficult, facing modification of local conversion scheme in view of a functional incompleteness of constraints, which probably would result in the spatially nonlocal series for the extended constraints.

We expect that the techniques developed here\HIDE{ for generalized unimodular gravity} will provide a useful basis for further studying of {\GUMG}-type theories\HIDE{  both at the classical and quantum levels}. In any case, the detailed description of their dynamical and gauge properties makes them valuable as nontrivial test cases for the restricted gauge theories framework\HIDE{, parameterized field theories} and general theory of constrained systems due to their functionally incomplete constraints, and potentially as modified gravity candidates for cosmological modeling.

%%%============================================================================%%%
%%%============================================================================%%%
%%%============================================================================%%%

%\newpage
\section*{Acknowledgements}
 \hspace{\parindent}
We wish to thank Andrei Barvinsky and Ksenia Lyamkina for stimulating discussions\HIDE{ on the early stages of this work}. The research was supported by the Russian Science Foundation grant No. \href{https://rscf.ru/en/project/23-12-00051/}{23-12-00051}.

%%-------------------------=%        ******         %=-------------------------%%
%%-------------------------=%        ******         %=-------------------------%%
%%-------------------------=%        ******         %=-------------------------%%
%\newpage
\appendix

%%------------------------=%        ******         %=------------------------%%
%%------------------------=%        ******         %=------------------------%%
%%------------------------=%        ******         %=------------------------%%

%\newpage
\section{Properties of Delocalization Operator}
   \label{ASect:opE_properties}
    %\subsection{Checks and properties}
    %\subsubsection{Derivation}
     \hspace{\parindent}
In this section, we use the bra-ket notation to represent the linear space of functions on {compact} $\tx \teq \const$ hypersurfaces of the spacetime manifold, their inner and outer products, and the linear operators acting on them. Bra- and ket-vectors denote local functions; for instance, $\Ket{f}$ corresponds to the local function $f(\sx)$, and $\Ket{1}$ represents the unit constant function ($f(\sx) \teq 1$). Inner product implies spatial integration, such that
  \beq{}
  \Bra{f}\op{A}\Ket{g} \quad \leftrightarrow \quad \sint f \op{A}\, g \,.\,
 \eeq
A local function appearing outside of bra- and ket- expressions denotes a diagonal operator, which acts on vector via pointwise multiplication. For example, $\Ket{\WW}$ corresponds to the local function $\WW(\sx) \defeq\WW\big(\!\sqrg(\sx)\big)$, while
$\WW$ outside the vector acts as a symmetric diagonal operator with the integral two-point kernel
 $\WW(\sx') \delta(\sx',\sx)
   %\teq \delta(\sx',\sx) \WW\big(\!\sqrg(\sx)\big)
 $,
so that $\WW\Ket{f}=\Ket{\WW f} \HIDE{$ corresponds to the function $} \; \leftrightarrow \; \WW(\sx) f(\sx)$.
The \emph{identity operator} $\Id$ corresponds to the operator with the two-point kernel $\delta(\sx',\sx)$ and satisfies $\Id\hspace{1pt} \Ket{f} \teq \Ket{f}$ for any continuous\HIDE{ function} $f(\sx)$.
We also define symmetric projectors $\hmgId$\, and \,$\inhId \teq \Id {\,-}\hmgId$ onto average\HIDE{ (homogeneous)} and average-free\HIDE{ (inhomogeneous)} components, (\ref{def_inh_hmg_sVol}), as follows:
 %%\,$\hmgId$\, and \,$\inhId\equiv\Id {\,-}\hmgId$\, are defined as
  \beq{def_hmgId_inhId}
    \hmgId \Ket{f} \,=\, \Ket{\hmg{f}}
    \,=\, \Ket{1} \HIDE{{\cdot}} \hmg{f}
    \,,
    \qquad
    \inhId \Ket{f} \,=\, \Ket{\inh{f}}
    \,=\, \Ket{f} - \Ket{1} \HIDE{{\cdot}} \hmg{f}
    \,.
  \eeq

 \subsection{Properties of delocalization operator and its inverse}
  \label{ASSect:opE_properties}
  \hspace{\parindent}
The operator $\opE$, with the two-point kernel (\ref{opE_kernel}), can be represented as
 \bea{def_opE_COPY}  %%(\ref{def_opE}),
  \opE  &\!=\!&  {\hmgId} + \WW^{-1} \inhId \WW
   % \nonumber\\
   % &=&
   \:=\: \Id + \Ket{1}\sVol^{-1\!}\Bra{1} - \Ket{\WW^{-1}}\sVol^{-1\!}\Bra{\WW}
   \,,
  %\nonumber
 \eea
where $\sVol\equiv \BraKet{1}{1}$ is the background spatial volume. Its featured actions are:
 %\underline{Basic properties}
\if{
 \beq{opE_properties}
  \begin{array}{|ll|l}
   \Bra{1} \WW \opE = \Bra{\WW} \opE = \BraKet{\WW}{1} \sVol^{-1} \Bra{1} ,
   &  &
   \opE \WW^{-1} \Ket{1} = \opE \Ket{\WW^{-1}} = \Ket{1}  \sVol^{-1} \BraKet{1}{\WW^{-1}}, \\
   \Bra{1} \opE =  \Bra{1} + \Bra{\WW^{-1}} \inhId \WW,
   &  &
   \opE \Ket{1} = \Ket{1}  + \WW^{-1}\inhId \Ket{\WW}. \\
  \end{array}
 \eeq
}\fi
 \bea{opE_properties}
  \begin{array}{lllll}
   \Bra{1} \opE &\!=\!&  \Bra{1} + \Bra{\WW^{-1}} \inhId \WW
   \,,
   \\ %& \quad &
   \opE \Ket{1} &\!=\!& \Ket{1}  + \WW^{-1}\inhId \Ket{\WW}
   \,,
  \end{array}
 % \\
 \qquad \qquad
 % \\
  \begin{array}{lllll}
   %\Bra{1} \WW \opE &=&
   \Bra{\WW} \opE   &\!\!=\!& \BraKet{\WW}{1} \sVol^{-1\!} \Bra{1}
    &\!\!=\!& \hmg{\WW} \Bra{1}
    \,,
    \\ % & \quad &
   %\opE \WW^{-1} \Ket{1} &=&
   \opE \Ket{\WW^{-1}}
   &\!\!=\!& \Ket{1}  \sVol^{-1\!} \BraKet{1}{\WW^{-1}}
   &\!\!=\!& \Ket{1} \hmg{\WW^{-1}} \vphantom{\big|^I}
   \,.
  \end{array}
 \eea

 % \underline{Homogeneity limit.}

For $\WW=\wwc \teq \const$, the operator $\opE$ reduces to the identity operator:
 \beq{EProp:homogeneity}
  \opE \to\Id \qquad
  \text{ for }  \;\WW\to\,\wwc = \const.
 \eeq
This also holds for\HIDE{ configurations of} the spatial metric field with a homogeneous volume density, $\sqrg \teq \hmgw{\sqrg}(\tx)$.
 %% $\WW(\sqrg(\tx))$
\newpar

 % \underline{Nondegenerateness of $\opE$.}
The operator $\opE$ is \emph{nondegenerate}, and their exists a unique inverse operator.

\begin{lemma}[Inverse of $\opE$] \label{Lemma:InvE}
 \!The operator $\opE$\! (\ref{def_opE_COPY})\HIDE{, defined on the space of functions over a manifold with finite coordinate volume,} is nondegenerate,\! and its inverse is given by
 \bea{invE}
   \opinvE    &\!=\!& \Id
         \;-\;  \Ket{\WW^{-1}} \tfrac{1}{\BraKet{1}{\WW^{-1}}} \Bra{1}
         \;-\;  \Ket{1} \tfrac{1}{\BraKet{\WW}{1}} \Bra{\WW}
         \;+\; 2 \, \Ket{\WW^{-1}} \tfrac{1}{\BraKet{1}{\WW^{-1}}} \sVol \tfrac{1}{\BraKet{\WW}{1}} \Bra{\WW}
         \,.
 \if{
   \nonumber\\
   &\equiv& \Id
         \;-\; \lFProj{\WW}{1}
         \;\;-\;\; \lFProj{\WW^{-1\!}}{1}
         \;\;+\;\; \lFProj{\WW^{-1\!}}{1} \;\;\; \lFProj{\WW}{1}
   \nonumber\\
   &=& \Id
         \;-\; \Big( \underbrace{\Ket{1} - \Ket{\WW^{-1}} \tfrac{1}{\BraKet{1}{\WW^{-1}}} \sVol}_{\tFProj{1}{\WW^{-1}}\Ket{1}} \Big)\tfrac{1}{\BraKet{\WW}{1}} \Bra{\WW}
         \;-\; \Ket{\WW^{-1}} \tfrac{1}{\BraKet{1}{\WW^{-1}}} \Big( \underbrace{\Bra{1} - \sVol \tfrac{1}{\BraKet{\WW}{1}} \Bra{\WW}}_{\Bra{1}\tFProj{\WW}{1}} \Big)
   \nonumber\\
   &\equiv& \Id
         \;-\; \tFProj{1}{\WW^{-1}} \;\;\; \lFProj{\WW}{1}
         \;-\; \lFProj{1}{\WW^{-1}} \;\;\; \tFProj{\WW}{1}
   }\fi
 \eea
\end{lemma}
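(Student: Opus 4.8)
The plan is to recognize $\opE$ as a finite-rank perturbation of the identity and to invert it by the Sherman--Morrison--Woodbury (capacitance-matrix) identity; this simultaneously yields the nondegeneracy claim and the closed form (\ref{invE}), reducing everything to a $2\times2$ computation. First I would rewrite the rank-two update of (\ref{def_opE_COPY}) in factored form,
\[
  \opE \;=\; \Id + \Ket{1}\sVol^{-1}\Bra{1} - \Ket{\WW^{-1}}\sVol^{-1}\Bra{\WW}
        \;=\; \Id + A\,M\,B\,,
\]
where $A$ collects the two kets $\big(\Ket{1},\,\Ket{\WW^{-1}}\big)$, the bra-row is $B = \big(\Bra{1};\,\Bra{\WW}\big)$, and $M = \sVol^{-1}\operatorname{diag}(1,-1)$ is an ordinary $2\times2$ matrix. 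The essential point is that all of the function-space content now enters only through the finite \emph{capacitance matrix} $M^{-1}+BA$, whose entries are the scalar inner products $\BraKet{1}{1} = \BraKet{\WW}{\WW^{-1}} = \sVol$, \ $\BraKet{1}{\WW^{-1}} = \sint\WW^{-1}$, and $\BraKet{\WW}{1} = \sint\WW$.

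Second, I would settle nondegeneracy. The Woodbury identity gives $\opinvE = \Id - A\,(M^{-1}+BA)^{-1}B$, valid exactly when the finite matrix $M^{-1}+BA$ is invertible. A direct evaluation yields
\[
  M^{-1}+BA = \begin{pmatrix} 2\sVol & \sint\WW^{-1} \\[2pt] \sint\WW & 0 \end{pmatrix}\,,
  \qquad
  \det\big(M^{-1}+BA\big) = -\big(\sint\WW\big)\big(\sint\WW^{-1}\big)\,.
\]
Because $\WW(\argsqrg)$ is assumed sign-definite, the integrals $\sint\WW$ and $\sint\WW^{-1}$ are nonzero and share the same sign, so this determinant cannot vanish; this is precisely the nondegeneracy of $\opE$. (If one prefers to bypass invoking Woodbury in an infinite-dimensional setting, the identical conclusion follows by verifying directly that the candidate $\opinvE$ of (\ref{invE}) satisfies $\opE\opinvE = \opinvE\opE = \Id$, multiplying out the rank-one products $\Ket{1}\Bra{1}$, $\Ket{1}\Bra{\WW}$, $\Ket{\WW^{-1}}\Bra{1}$, $\Ket{\WW^{-1}}\Bra{\WW}$ and collapsing them via the inner products listed above.)

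Third, I would assemble the inverse. Inverting the $2\times2$ matrix and reattaching the kets and bras produces
\[
  \opinvE = \Id
    - \Ket{\WW^{-1}}\tfrac{1}{\sint\WW^{-1}}\Bra{1}
    - \Ket{1}\tfrac{1}{\sint\WW}\Bra{\WW}
    + 2\,\Ket{\WW^{-1}}\tfrac{\sVol}{(\sint\WW)(\sint\WW^{-1})}\Bra{\WW}\,,
\]
which matches (\ref{invE}) term by term after substituting $\sint\WW^{-1} = \BraKet{1}{\WW^{-1}}$ and $\sint\WW = \BraKet{\WW}{1}$. As a sanity check one may note that for $\WW \to \wwc = \const$ all four correction terms combine to zero, consistently with $\opE \to \Id$ from (\ref{EProp:homogeneity}).

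The only genuine subtlety — the main thing to spell out carefully rather than the algebra itself — is the passage from matrix identity to function-space operator. Since $\opE - \Id$ has rank two, the inversion is rigorously reducible to the $2\times2$ problem with no need for spectral theory or boundedness estimates on the slice; what legitimises treating $M^{-1}+BA$ as an ordinary invertible matrix is the finiteness of $\sVol$, $\sint\WW$, and $\sint\WW^{-1}$, which holds on compact $\tx \teq \const$ sections. The noncompact case, where these volume and weight integrals require the limiting prescription of Appendix \ref{ASect:NonCompact}, is the part that needs separate care; on the compact slice the remaining verification is purely mechanical.
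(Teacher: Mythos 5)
Your proof is correct, and it reaches the same $2\times2$ reduction as the paper by a different packaging. The paper's proof starts from the defining relation $\opE\,\opinvE = \Id$, introduces the unknown bras $\Bra{X} \defeq \Bra{1}\opinvE$ and $\Bra{Y} \defeq \Bra{\WW}\opinvE$, and contracts with $\Bra{\WW}$ and $\Bra{1}$ to obtain and solve a two-by-two linear system for them; nondegeneracy is then concluded a posteriori from the existence of the two-sided inverse. You instead factor the rank-two update as $\opE = \Id + AMB$ and invoke the Woodbury identity, so that everything is controlled by the capacitance matrix $M^{-1}+BA$. The two routes are algebraically equivalent — your contraction data $\BraKet{1}{1}$, $\BraKet{1}{\WW^{-1}}$, $\BraKet{\WW}{1}$, $\BraKet{\WW}{\WW^{-1}}$ are exactly the inner products the paper generates by its contractions — but your version buys two things the paper's does not make explicit: (i) nondegeneracy is reduced up front to the nonvanishing of $\det(M^{-1}+BA) = -(\sint\WW)(\sint\WW^{-1})$, which follows immediately from sign-definiteness of $\WW$, and (ii) the product $\det M \cdot \det(M^{-1}+BA) = \hmg{\WW}\,\hmg{\WW^{-1}}$ reproduces the determinant later computed by a variational argument in Lemma \ref{Lemma:DetE}, so your calculation doubles as an independent check of that result. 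Your closing caveats — that the finite-rank structure is what legitimises the matrix manipulation on a compact slice, and that the noncompact case needs the regularization of Appendix \ref{ASect:NonCompact} — match the paper's own treatment.
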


\begin{proof}\footnote{We thank Andrei Barvinsky for suggesting this\HIDE{ elegant} constructive proof.}
 {\small
 We begin with the defining relation
 $ % \beq{}
  \opE \opinvE = \Id
 % \,.
 $ %\eeq
as an equation on $\opinvE$.
 %% Using the second representation for $\opE$ (\ref{def_opE_COPY}) one gets the equation on $\opinvE$
Using (\ref{def_opE_COPY}),
 \beq{invE1}
  \opinvE + \Ket{1} \tfrac{1}{\sVol} { \Bra{1}\opinvE} \HIDE{\Bra{X}} \;-\; \Ket{\WW^{-1}} \tfrac{1}{\sVol} {\Bra{\WW}\opinvE} \HIDE{\Bra{Y}}
  \;=\; \Id
  \,.
 \eeq
Let ${\Bra{X}} \defeq \Bra{1}\opinvE$ and ${\Bra{Y}} \defeq \Bra{\WW}\opinvE$. To determine an explicit form of $\opinvE$, we solve for $\Bra{X}$ and $\Bra{Y}$.
 %\\%
Contracting (\ref{invE1}) with $\Bra{\WW}$ gives\;
 $ % \beq{contr1_invE}
  { \Bra{Y} } + \BraKet{\WW}{1} \tfrac{1}{\sVol} \Bra{X} - { \sVol \tfrac{1}{\sVol} \Bra{Y} }
  \;=\; \Bra{\WW} ,
 $ % \eeq
from which we obtain\HIDE{ the resolution for $\Bra{X}$}
 \beq{BraX}
    \Bra{X}
  \;=\;  \sVol \tfrac{1}{\BraKet{\WW}{1}}\Bra{\WW}
  \,.
 \eeq
Contracting (\ref{invE1}) with $\Bra{1}$ yields
 $ % \beq{contr2_invE}
  \Bra{X}  + \sVol \tfrac{1}{\sVol} \Bra{X} -  \BraKet{1}{\WW^{-1}} \tfrac{1}{\sVol} \Bra{Y}
  \;=\; \Bra{1} ,
 $ % \eeq
leading to \HIDE{the resolution for $\Bra{Y}$}
 \beq{BraY}
   \Bra{Y}
  \;=\;  - \sVol \tfrac{1}{\BraKet{1}{\WW^{-1}}} \big( \Bra{1} - 2 \Bra{X} \big)
  \,.
 \eeq
Substituting (\ref{BraX}) and (\ref{BraY}) into (\ref{invE1}) gives{{ the explicit form of $\opinvE$ in}} (\ref{invE}).

The nondegeneracy of $\opE$\HIDE{ then} follows from the existence of its inverse\HIDE{{ operator}} $\opinvE$, satisfying $\opinvE \opE \teq \opE \opinvE \teq \Id$.
%%%, which was obtained by regular procedure of the explicit construction.
}
\end{proof}

The nondegeneracy of $\opE$ is also intuitively expected. In the limiting case $\WW \teq \wwc \teq \const$ and for metric configurations with homogeneous spatial volume density $\sqrg \teq \hmgw{\sqrg}(\tx)$, the operator\HIDE{ $\opE$} becomes the identity, $\opE \to\Id$, (\ref{EProp:homogeneity}), and is therefore\HIDE{ \nondegenerate} full-rank. By continuity of eigenvalues, nondegeneracy holds at least in a neighborhood of such configurations.
The constructive proof above confirms that $\opE$ is nondegenerate for any sign-definite, bounded function $\WW$.

\newpar

%\underline{Basic properties}
%\begin{corollary}[Basic contractions] \label{Corollary:var_invE_G_properties}

For the operator $\opE$ given by (\ref{def_opE_COPY}),
 % $ % \bea{def_opE_COPY2}
 %  \opE \teq \hmgId + \WW^{-1}\inhId \WW ,
 % $ % \eea
the basic actions of its inverse $\opinvE$ from (\ref{invE}) are:
  \bea{opinvE_properties} % (\ref{opinvE_properties})
    \begin{array}{lllll}
       \opinvE \Ket{1}
        &=&
     \if{
         \Ket{1}
         -  \Ket{1}
         -  \Ket{\WW^{-1}} \tfrac{1}{\BraKet{1}{\WW^{-1}}} \sVol
         + 2 \, \Ket{\WW^{-1}} \tfrac{1}{\BraKet{1}{\WW^{-1}}} \sVol
        \;=\; \Ket{\WW^{-1}} \tfrac{1}{\BraKet{\WW^{-1}}{1}} \sVol
       &=&
     }\fi
       \frac{\Ket{\WW^{-1}}}{\hmg{\WW^{-1}}\vphantom{|^{I^I}}}
       \,,
     % \nonumber
     \\
      \Bra{1} \opinvE        &=& \Bra{1}
     \if{
         \;-\;  \sVol \tfrac{1}{\BraKet{\WW}{1}} \Bra{\WW}
         \;-\;  \Bra{1}
         \;+\; 2 \, \sVol \tfrac{1}{\BraKet{\WW}{1}} \Bra{\WW}
        \;=\; \tfrac{1}{\BraKet{\WW}{1}} \sVol \Bra{\WW}
       &=&
     }\fi
       \frac{ \Bra{\WW}}{\hmg{\WW}\vphantom{|^{I^I}}}
       \,,
      %\nonumber
    \end{array}
   %   \\
   \qquad\qquad
   %   \\
    \begin{array}{lllll}
    \opinvE  \Ket{\WW^{-1}}\!\!
      &=&
     \if{
         \Ket{\WW^{-1}}
         \;-\;  \Ket{1} \tfrac{1}{\BraKet{\WW}{1}} \sVol
         \;-\;  \Ket{\WW^{-1}}
         \;+\; 2 \, \Ket{\WW^{-1}} \tfrac{1}{\BraKet{1}{\WW^{-1}}} \sVol \tfrac{1}{\BraKet{\WW}{1}} \sVol
       &\hspace{-15mm}=\hspace{15mm}& \hspace{-15mm}
     }\fi
       \big({-} \Ket{1} + 2 \frac{\Ket{\WW^{-1}}}{\hmg{\WW^{-1}}\vphantom{|^{I^I}}}  \big) \frac{1}{\,\hmg{\WW}\,\vphantom{|^{I^I}}}
       \,,
     % \nonumber
     \\
      \Bra{\WW} \opinvE       &=&
     \if{
      \Bra{\WW}
         \;-\;  \Bra{\WW}
         \;-\;  \sVol \tfrac{1}{\BraKet{1}{\WW^{-1}}} \Bra{1}
         \;+\; 2 \, \sVol \tfrac{1}{\BraKet{1}{\WW^{-1}}} \sVol \tfrac{1}{\BraKet{\WW}{1}} \Bra{\WW}
       &\hspace{-15mm}=\hspace{15mm}& \hspace{-15mm}
     }\fi
         \frac{1}{\,\hmg{\WW^{-1}}\,\vphantom{|^{I^I}}} \big({-} \Bra{1} + 2 \frac{\Bra{\WW}}{\hmg{\WW}\vphantom{|^{I^I}}}  \big) \vphantom{\big|^{I^I}}
         \,.
      %\nonumber
    \end{array}
  \eea
% \end{corollary}

 \subsubsection*{Projectors $\opG$ and their properties}
  %\label{ASSect:opG_properties}
   \hspace{\parindent}
In the dynamic and gauge structures of {\GUMG}, where $\opinvE$ is varied, two symmetric projectors arise, $\opGm$ and $\opGp$, whose basic properties we prove below. %%built of $\opinvE$, (\ref{invE}).

  \begin{corollary}[Projectors $\opG$] \label{Corollary:var_invE_G_properties}  %%\label{Corollary:inhIWE-1f}
  Define two operators:
  \bea{def_opGmp} %{opinvE_properties2a}
    \begin{array}{lllllll}
      \opGp
      \!&\!\defeq\!&
      \inhId \WW \opinvE       &=&
      \inhId \big(\, \WW \HIDE{\Id} - \Ket{\WW} \tfrac1{\BraKet{1}{\WW}} \Bra{\WW} \,\big)
    %  \!&=& \!
      %%  \underbrace{
    %     \WW - \Ket{\WW} \tfrac1{\BraKet{1}{\WW}} \Bra{\WW}
      %%   }_{=\; \tFProj{1}{\WW} \WW  \;=\; \WW \tFProj{\WW}{1}}
    %  &=&\!
    %  \big(\,\WW - \Ket{\WW} \tfrac1{\BraKet{1}{\WW}} \Bra{\WW}\,\big) \inhId
    %  \;=\;
    %   \WW \tFProj{\WW}{1} \inhId
       \,,
      \\
      \opGm
      \!&\!\defeq\!&
      \opinvE \WW^{-1}\! \inhId \!
      &=&\!\!
      \big( \HIDE{\Id} \WW^{-1}\! - \Ket{\WW^{-1}} \tfrac1{\BraKet{1}{\WW^{-1}}} \Bra{\WW^{-1}} \big)\inhId
    %  \!&=& \!\!
      %%  \underbrace{
    %     \WW^{-1}\! - \Ket{\WW^{-1}} \tfrac1{\BraKet{1}{\WW^{-1}}} \Bra{\WW^{-1}}
      %%   }_{=\; \tFProj{1}{\WW^{-1}} \WW^{-1}  \;=\; \WW^{-1} \tFProj{\WW^{-1}}{1}}
    %  \\
    %  &=&
    %  \inhId \big(\,\WW^{-1}\! - \Ket{\WW^{-1}} \tfrac1{\BraKet{1}{\WW^{-1}}} \Bra{\WW^{-1}}\,\big)
    %  \;=\;
    %   \inhId \tFProj{1}{\WW^{-1}} \WW^{-1}
       \,.
      \\
    \end{array}
   \eea
%%where on the right-hand side we expanded\HIDE{ operators} $\opinvE$, (\ref{invE}) and used \; $\inhId {\WW} \Ket{\WW^{-1}} = \Bra{\WW} {\WW^{-1}} \inhId = 0$.

%\newcommand{\opd}[1]{\op{#1}} %% diagonal ultralocal operator
%\newcommand{\dop}[1]{\op{#1}} %% diagonal utralocal operator

The operators $\opGm$ and $\opGp$ are \emph{symmetric} and \emph{degenerate} operators
  \bea{opG_properties1_symmetry}
    \begin{array}{lllllll}
      \opGp
  %    &=&
  %    \inhId \big(\, \WW - \Ket{\WW} \tfrac1{\BraKet{1}{\WW}} \Bra{\WW} \,\big)
      \!&\!=\!&
         {\WW}  - \Ket{\WW} \tfrac1{\BraKet{1}{\WW}} \Bra{\WW}
       \,,
      \\
      \opGm
  %    &=&\!\!
  %    \big( \WW^{-1}\! - \Ket{\WW^{-1}} \tfrac1{\BraKet{1}{\WW^{-1}}} \Bra{\WW^{-1}} \big)\inhId
      \!&\!=\!&\!
         {\WW^{-1}}\! - \Ket{\WW^{-1}} \tfrac1{\BraKet{1}{\WW^{-1}}} \Bra{\WW^{-1}}
       \,,
      \\
    \end{array}
   \eea
with left and right kernels consisting of constant functions, so that
  \bea{opG_properties2_kernels}%{opinvE_properties2b}
    \begin{array}{lllllll}
      \opGp \Ket{1}
      = \Bra{1} \opGp
      = 0
  %    \,,
  %  \qquad
  %    \opGp
  %    = \inhId \opGp
  %    = \opGp \inhId
       \,;
      \\
      \opGm \Ket{1}
      = \Bra{1} \opGm
      = 0
  %    \,,
  %  \qquad
  %    \opGm
  %    = \inhId \opGm
  %    = \opGm \inhId
       \,,
      \\
    \end{array}
   \qquad
   \qquad
    \begin{array}{lll}
      \opGp
      = \inhId \opGp
      = \opGp \inhId
       \,;
      \\
      \opGm
      = \inhId \opGm
      = \opGm \inhId
       \,.
      \\
    \end{array}
   \eea

\if{
  The following representations are also true
  \bea{opinvE_properties2b}
    \begin{array}{lllll}
      \inhId \WW \opinvE       &=&\!
      \big(\,\WW - \Ket{\WW} \tfrac1{\BraKet{1}{\WW}} \Bra{\WW}\,\big) \inhId
    %  \;=\;
    %   \WW \tFProj{\WW}{1} \inhId
       \,,
      \\
      \opinvE \WW^{-1} \inhId \!
      &=&\!
      \inhId \big(\,\WW^{-1}\! - \Ket{\WW^{-1}} \tfrac1{\BraKet{1}{\WW^{-1}}} \Bra{\WW^{-1}}\,\big)
    %  \;=\;
    %   \inhId \tFProj{1}{\WW^{-1}} \WW^{-1}
       \,.
      \\
    \end{array}
   \eea
 }\fi
\end{corollary}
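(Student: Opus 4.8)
The plan is to prove the three assertions of the corollary---the closed symmetric forms (\ref{opG_properties1_symmetry}), the symmetry of $\opGp$ and $\opGm$, and the kernel/degeneracy relations (\ref{opG_properties2_kernels})---in that order, since the latter two follow almost at once once the closed forms are in hand. Everything reduces to bra--ket algebra built on the explicit inverse $\opinvE$ from Lemma \ref{Lemma:InvE}, equation (\ref{invE}), together with the elementary identities $\WW\Ket{\WW^{-1}} = \Ket{1}$, $\WW\Ket{1} = \Ket{\WW}$, $\Bra{1}\WW = \Bra{\WW}$, and $\inhId\Ket{1} = 0$ coming from (\ref{def_hmgId_inhId}).

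First I would establish the closed form for $\opGp = \inhId \WW \opinvE$. Substituting the four-term expression (\ref{invE}) for $\opinvE$ and pushing the diagonal operator $\WW$ through each ket converts every $\Ket{\WW^{-1}}$ into $\Ket{1}$ and every $\Ket{1}$ into $\Ket{\WW}$. The outer projector $\inhId$ then annihilates all terms carrying a left factor $\Ket{1}$, leaving only $\inhId\WW$ and $-\inhId\Ket{\WW}\tfrac{1}{\BraKet{\WW}{1}}\Bra{\WW}$. Expanding $\inhId = \Id - \Ket{1}\sVol^{-1}\Bra{1}$ in these two survivors produces two $\hmgId$-type cross terms proportional to $\Ket{1}\sVol^{-1}\Bra{\WW}$, which cancel exactly once one uses $\hmg{\WW}/\BraKet{\WW}{1} = \sVol^{-1}$ and $\BraKet{1}{\WW} = \BraKet{\WW}{1}$. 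What is left is precisely $\WW - \Ket{\WW}\tfrac{1}{\BraKet{1}{\WW}}\Bra{\WW}$, the first line of (\ref{opG_properties1_symmetry}). The form for $\opGm = \opinvE \WW^{-1} \inhId$ then follows by the mirror computation, acting $\opinvE$ to the right via (\ref{opinvE_properties}), or more economically by observing that $\opGm$ is the formal transpose of $\opGp$ under the replacement $\WW \leftrightarrow \WW^{-1}$.

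With the closed forms available, symmetry is manifest: each is a symmetric diagonal operator minus the symmetric rank-one dyad $\Ket{\WW}(\,\cdot\,)\Bra{\WW}$ (respectively $\Ket{\WW^{-1}}(\,\cdot\,)\Bra{\WW^{-1}}$), hence each operator equals its own transpose. For the kernel relations I would evaluate $\opGp\Ket{1} = \Ket{\WW} - \Ket{\WW}\tfrac{\BraKet{\WW}{1}}{\BraKet{1}{\WW}} = 0$ and, by symmetry, $\Bra{1}\opGp = 0$, with the identical computation for $\opGm$. The remaining identities $\opGp = \inhId\opGp = \opGp\inhId$ are then immediate: since $\hmgId = \Ket{1}\sVol^{-1}\Bra{1}$, the one-sided annihilation of constants gives $\opGp\hmgId = 0$ and $\hmgId\opGp = 0$, and subtracting these from $\opGp\Id = \opGp$ yields the claim; likewise for $\opGm$.

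The only genuine care point---rather than a conceptual obstacle---is the bookkeeping in the first step: $\opinvE$ carries three rank-one corrections to the identity with distinct scalar normalizations ($\BraKet{1}{\WW^{-1}}$, $\BraKet{\WW}{1}$, and the doubled cross term), and these must be tracked through the action of the diagonal $\WW$ and the projector $\inhId$ so that the spurious $\hmgId$-pieces cancel and no normalization is mismatched. Provided one consistently invokes $\BraKet{1}{\WW} = \BraKet{\WW}{1}$ and $\hmg{\WW} = \sVol^{-1}\BraKet{1}{\WW}$, the cancellations are forced and the argument is routine.
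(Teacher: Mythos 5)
Your proof is correct and takes essentially the same route as the paper: substitute the explicit four-term expression (\ref{invE}) for $\opinvE$, use $\WW\Ket{\WW^{-1}}=\Ket{1}$ and the annihilation of $\Ket{1}$ by $\inhId$ to reach the closed rank-one-corrected diagonal forms, then read off symmetry and the kernel/projector identities from $\opG\Ket{1}=\Bra{1}\opG=0$. The paper merely packages the same bookkeeping via the identities $\inhId\WW\Ket{\WW^{-1}}=\Bra{\WW}\WW^{-1}\inhId=0$ and the observation that $\hmgId$ annihilates $f-\Ket{f}\tfrac{1}{\BraKet{1}{f}}\Bra{f}$ from either side.
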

\begin{proof}
{\small
 The forms on the right-hand side of (\ref{def_opGmp}) directly follow from the expanded\HIDE{ explicit} representation of $\opinvE$ in (\ref{invE}), simplified using the identities\: $\inhId {\WW} \Ket{\WW^{-1}} \teq \Bra{\WW} {\WW^{-1}} \inhId \teq 0$\,.
 %% and the fact that $\inhId\Ket{1} = \Bra{1}\inhId =0$.

 The symmetric representations (\ref{opG_properties1_symmetry}) and properties\HIDE{ on the right of} (\ref{opG_properties2_kernels}) follow from the chain of identities
 \beq{}
  \inhId \big( f - \Ket{f} \tfrac1{\BraKet{1}{f}} \Bra{f} \big)
   \;=\;
  \Id \big( f - \Ket{f} \tfrac1{\BraKet{1}{f}} \Bra{f} \big)
   \;=\;
   \,   f - \Ket{f} \tfrac1{\BraKet{1}{f}} \Bra{f}   \,
   \;=\;
  \big( f - \Ket{f} \tfrac1{\BraKet{1}{f}} \Bra{f} \big) \Id
   \;=\;
  \big( f - \Ket{f} \tfrac1{\BraKet{1}{f}} \Bra{f} \big) \inhId
  \,,
  \nonumber
 \eeq
 provided $\hmgId \, \big( f - \Ket{f} \tfrac1{\BraKet{1}{f}} \Bra{f} \big) \:=\: \big( f - \Ket{f} \tfrac1{\BraKet{1}{f}} \Bra{f} \big) \, \hmgId =0$.
 These in turn follow from the kernel conditions (\ref{opG_properties2_kernels}):\, $\Bra{1} \, \big( f - \Ket{f} \tfrac1{\BraKet{1}{f}} \Bra{f} \big) \:=\: \big( f - \Ket{f} \tfrac1{\BraKet{1}{f}} \Bra{f} \big) \, \Ket{1} = 0$, which are\HIDE{ easily} verified by direct check.
}
\end{proof}

Also, from the definitions of the projector forms $\opG$, it follows that
 \beq{}
   \opGp \opE \, \WW^{-1} \,=\; \inhId \;=\: \WW \opE \opGm
   \: .
 \eeq

\begin{corollary}[] \label{Corollary:var_invE_G_equations} %%\label{Corollary:inhIWE-1f=0}
The equation  $\opGp \Ket{h} \teq 0$ on $\Ket{h}$ is equivalent to $\inhId \Ket{h} \teq 0\,$:
 \beq{inhIWE-1f=0}
   \opGp \Ket{h} \;\equiv\;
   \inhId \WW \opinvE \Ket{h}
   \;=\: 0
   \quad\;\; \Leftrightarrow  \;\;\quad
   \inhId \Ket{h} = 0
   \,.
 \eeq

Analogously, the equation $ \Bra{g}\, \opGm \teq 0$ on $\Bra{g}$ is equivalent to $ \Bra{g} \inhId \teq 0\,$:
 \beq{fE-1W-1inhI=0}
   \Bra{g}\, \opGm \;\equiv\;
   \Bra{g}\, \opinvE \WW^{-1} \!\inhId
   \;=\: 0
   \quad\;\;  \Leftrightarrow  \;\;\quad
   \Bra{g} \inhId = 0
   \,.
 \eeq

\end{corollary}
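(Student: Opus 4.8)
The plan is to reduce both equivalences to the symmetric representations of $\opGp$ and $\opGm$ already established in Corollary \ref{Corollary:var_invE_G_properties}, namely $\opGp \teq \WW - \Ket{\WW}\tfrac{1}{\BraKet{1}{\WW}}\Bra{\WW}$ and $\opGm \teq \WW^{-1} - \Ket{\WW^{-1}}\tfrac{1}{\BraKet{1}{\WW^{-1}}}\Bra{\WW^{-1}}$, together with the kernel properties $\opGp\Ket{1} \teq \Bra{1}\opGp \teq 0$ and $\opGm\Ket{1} \teq \Bra{1}\opGm \teq 0$. The decisive analytic ingredient is the sign-definiteness, and hence nonvanishing, of $\WW(\argsqrg)$ assumed throughout the paper; this is precisely what licenses pointwise division by $\WW$ at the crucial step.

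For the first equivalence (\ref{inhIWE-1f=0}), the direction $\inhId\Ket{h} \teq 0 \Rightarrow \opGp\Ket{h} \teq 0$ is immediate: $\inhId\Ket{h} \teq 0$ means $\Ket{h} \teq \Ket{1}\,\hmg{h}$ is a constant function, so $\opGp\Ket{h} \teq \hmg{h}\,\opGp\Ket{1} \teq 0$ by the right-kernel property. For the converse I would substitute the symmetric form and read $\opGp\Ket{h} \teq 0$ as the identity $\WW\Ket{h} = \Ket{\WW}\,c$ with scalar $c \teq \BraKet{\WW}{h}/\BraKet{1}{\WW}$; as functions this says $\WW(\sx)\,h(\sx) = c\,\WW(\sx)$, and dividing by $\WW(\sx)\neq0$ yields $h(\sx) = c$, i.e. $\inhId\Ket{h} \teq 0$.

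The second equivalence (\ref{fE-1W-1inhI=0}) follows by the same argument applied from the left, using the symmetry of $\opGm$: writing $\Bra{g}\opGm = \Bra{g}\WW^{-1} - \tfrac{\BraKet{g}{\WW^{-1}}}{\BraKet{1}{\WW^{-1}}}\Bra{\WW^{-1}}$ and setting it to zero gives the pointwise relation $g(\sx)\,\WW^{-1}(\sx) = c'\,\WW^{-1}(\sx)$ with constant $c' \teq \BraKet{g}{\WW^{-1}}/\BraKet{1}{\WW^{-1}}$; since $\WW^{-1}(\sx)\neq0$ this forces $g(\sx) = c'$, hence $\Bra{g}\inhId \teq 0$. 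The reverse implication again uses only the left-kernel property $\Bra{1}\opGm \teq 0$.

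The only place where care is genuinely needed is the division step, so the main (and admittedly modest) obstacle is to invoke the nonvanishing of $\WW$ correctly: without it, a function $h$ concentrated where $\WW$ vanished could lie in the kernel of $\opGp$ without being constant, and the equivalence would fail. Because the paper restricts attention to sign-definite $\WW$ (with correspondingly monotone $\FF$), this subtlety is harmless and both equivalences hold verbatim; I expect no further technical difficulty beyond bookkeeping of the bra–ket contractions.
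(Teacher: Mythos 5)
Your proof is correct, and it reaches the conclusion by a genuinely different route than the paper. The paper's argument is operator-algebraic: it uses $\opG \teq \opG\inhId \teq \inhId\opG$ for the easy direction, and for the converse it extends the one-mode-degenerate $\opG$ to the nondegenerate operator $\opG + \hmgId$, observes that $(\opG+\hmgId)\inhId \teq \opG\inhId$, and inverts to conclude that the kernels of $\opG$ and $\inhId$ coincide — never writing out the functions. You instead exploit the explicit symmetric rank-one-perturbation forms $\opGp \teq \WW - \Ket{\WW}\tfrac{1}{\BraKet{1}{\WW}}\Bra{\WW}$ and $\opGm \teq \WW^{-1} - \Ket{\WW^{-1}}\tfrac{1}{\BraKet{1}{\WW^{-1}}}\Bra{\WW^{-1}}$ to turn the kernel condition into the pointwise identity $\WW(\sx)\,h(\sx) \teq c\,\WW(\sx)$ and divide by the nonvanishing $\WW$. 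Your version is more elementary and exhibits the constant $c \teq \BraKet{\WW}{h}/\BraKet{1}{\WW}$ explicitly; the paper's version isolates the structural mechanism (equal rank and nested kernels of $\inhId$ and $\opG\inhId$), which is the abstract form in which the fact is invoked later in the gauge-structure analysis. Both arguments hinge on exactly the same hypothesis — sign-definiteness of $\WW$ — which you correctly identify as the point where the equivalence would otherwise fail.
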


\begin{proof}
{\small
 The identities $\opG  \teq  \opG \inhId  \teq  \inhId \opG$ from
  (\ref{opG_properties2_kernels})
 imply that  $\inhId \Ket{h} \teq 0$ and $\Bra{g} \inhId \teq 0$ lead to $\opG \Ket{h}=0$ and $\Bra{g} \opG =0$, respectively.

 The converse follows from the rank properties: since (symmetric) operators $\inhId$ and $\inhId \opG  \teq \opG  \teq  \opG \inhId$ have the same rank and kernels of one operator contain kernels of another, their kernels coincide.

 To clarify, note that\HIDE{
 the structure} $\opG \defeq \big( f - \Ket{f} \frac1{\BraKet{1}{f}} \Bra{f} \big)$ for a non-zero $f$ is a one-mode-degenerate operator, and it can be extended to the nondegenerate operator with the quantity of rank $1$ from the kernel of $\inhId\,$:
  \beq{}
   \opG \;\to\; \op{G}' \defeq \opG {} + \hmgId \;=\;  f - \Ket{f} \tfrac1{\BraKet{1}{f}} \Bra{f} \,+\, \Ket{1} \tfrac1{\BraKet{1}{1}} \Bra{1}
   \,.
  \eeq
 Crucially, $\opG \inhId  \teq  \op{G}' \inhId $ and $\inhId \opG  \teq  \inhId \op{G}'$, where $\op{G}'$ is nondegenerate and invertible.
 Then: $\,\inhId  \teq  \op{G}'^{-1} \op{G}' \inhId  =  \op{G}'^{-1} \opG \inhId  \teq  \op{G}'^{-1} \inhId \opG $ and, analogously, $\inhId  \teq  \inhId \op{G}' \op{G}'^{-1}  =   \inhId \opG \op{G}'^{-1}  \teq  \opG \inhId \op{G}'^{-1} $,
 where we used\HIDE{ the property} $ \inhId \opG  \teq  \opG \inhId$ from Corollary \ref{Corollary:var_invE_G_properties}.
  % $ \opG = \inhId \big( f - \Ket{f} \frac1{\BraKet{1}{f}} \Bra{f} \big) =  \big( f - \Ket{f} \frac1{\BraKet{1}{f}} \Bra{f} \big) =  \big( f - \Ket{f} \frac1{\BraKet{1}{f}} \Bra{f} \big) \inhId$.
  %
 Therefore,  $\opG \Ket{h} \teq 0$ and $ \Bra{g} \opG \teq 0$ imply $\inhId \Ket{h} \teq 0$ and $\Bra{g} \inhId \teq 0$ respectively.

 Together with the first obvious statement of the proof, this implies in each pair that two equations follow from each other, which finally proves their equivalence: (\ref{inhIWE-1f=0}) and (\ref{fE-1W-1inhI=0}).
 %
 %%The second statement, (\ref{fE-1W-1inhI=0}), is proved analogously.
}
\end{proof}

 This expresses the simple observation that the kernels of both operators are spanned by\HIDE{ constant} homogeneous functions, while in the average-free\HIDE{ functional} subspace, both operators are nondegenerate.

\newpar

 %%We use representations (\ref{opG_properties1_symmetry}) to simplify constraints and equations of motion.
These properties are used\HIDE{ implicitly} to simplify the constraint $\inhId \WW \opinvE \Ket{\ptf_\io} \teq 0\: \to \:\inhId \Ket{\ptf_\io} \teq 0$  in Section \ref{SSect:AltAction_GUMG},
and\HIDE{ to simplify} the equation of motion
 $ \Bra{\repNl \FF^{-1}} \opinvE \WW^{-1} \! \inhId \teq 0\: \to \:
   \Bra{\repNl \FF^{-1}} \inhId \teq 0
 $\, in Section \ref{SSect:DynamicalProperties_GUMG}.
In Section \ref{SSect:GaugeStructure_GUMG}, the operators $\opG$ and their properties are utilized to derive the gauge structure.

\if{
 %\underline{Symmetricity properties}

 Also note that $\WW \opinvE$ and  $\opinvE \WW^{-1}$ are the sums of the three symmetric terms and the last nonsymmetric term. Operators $\inhId \WW \opinvE \inhId$ and  $\inhId\opinvE \WW^{-1}\inhId$, as well as $\tFProj{X}{1} \WW \opinvE \tFProj{1}{X}$ and  $\tFProj{X}{1}\opinvE \WW^{-1}\tFProj{1}{X}$, \;\;$\tFProj{X}{\WW} \WW \opinvE \tFProj{\WW}{X}$ and  $\tFProj{X}{\WW^{-1}}\opinvE \WW^{-1}\tFProj{\WW^{-1}}{X}$ are symmetric.
}\fi

 \subsubsection*{Variations of $\opinvE$}
  %\label{ASSect:opG_properties}
   \hspace{\parindent}
Although the operators $\opE$ and $\opinvE$ are nondegenerate, their variations with respect to the\HIDE{ spatial} metric acquire a degenerate character.

\begin{lemma}[Variation of $\opinvE$] \label{Lemma:invE_variations}
Under a general metric variation, %% $\var \g_\zm\zn$
 \beq{Lemma:var_invE_statement} % (\ref{Lemma:var_invE_statement})
  \begin{array}{lll}
 %  \var \opE  %   & = &
 %   \HIDE{+} \var\WW^{-1} \opGp \opE \,+\, \opE \opGm \var \WW \,,
 %  \\
   \var \opinvE     &\! = \!&
    - \opinvE \! \var\WW^{-1} \opGp  -\,  \opGm \var \WW \opinvE \,.
   \\
  \end{array}
 \eeq
\end{lemma}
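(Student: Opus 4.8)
The plan is to obtain the result from the standard variational identity for the inverse of a nondegenerate operator, $\var \opinvE = -\,\opinvE\,(\var \opE)\,\opinvE$, which follows by varying $\opE\opinvE \teq \Id$ and whose validity is guaranteed here by Lemma \ref{Lemma:InvE}, establishing that $\opE$ is nondegenerate and invertible. First I would differentiate the representation $\opE \teq \hmgId + \WW^{-1}\inhId\WW$ from (\ref{def_opE_COPY}). The key observation is that the projectors $\hmgId$ and $\inhId$ are built solely from the coordinate volume $\sVol \teq \sint 1$ and the constant function, neither of which depends on the physical metric $\g_{\zm\zn}$; hence $\var \hmgId \teq \var \inhId \teq 0$. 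The only metric dependence resides in the diagonal multiplication operators $\WW \teq \WW(\argsqrg)$ and $\WW^{-1}$, so that
\[
  \var \opE \;=\; \var\WW^{-1}\,\inhId\,\WW \;+\; \WW^{-1}\,\inhId\,\var\WW\,,
\]
where $\var\WW$ and $\var\WW^{-1}$ act as pointwise (diagonal) operators and therefore do not, in general, commute with the projectors.

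Substituting this into the inverse-variation identity gives two terms,
\[
  \var \opinvE \;=\; -\,\opinvE\,\var\WW^{-1}\,(\inhId\,\WW\,\opinvE)
   \;-\; (\opinvE\,\WW^{-1}\,\inhId)\,\var\WW\,\opinvE\,.
\]
At this point I would simply recognize the parenthesized combinations as the operators of Corollary \ref{Corollary:var_invE_G_properties}: by the definitions (\ref{def_opGmp}), $\opGp \teq \inhId\,\WW\,\opinvE$ and $\opGm \teq \opinvE\,\WW^{-1}\,\inhId$. Recombining yields exactly $\var \opinvE \teq -\,\opinvE\,\var\WW^{-1}\,\opGp - \opGm\,\var\WW\,\opinvE$, which is the claimed statement (\ref{Lemma:var_invE_statement}).

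The calculation is short, so the only points requiring genuine care --- which I would flag as the main (if mild) obstacle --- are the metric-independence of the projectors and the noncommutativity bookkeeping. The former must be justified explicitly: it rests on the background spatial volume being a fixed coordinate quantity (indeed gauge-invariant on the relevant branch, as noted after (\ref{def_inh_hmg_sVol})), so that differentiating $\opE$ touches only the $\WW$-factors. The latter simply requires keeping $\var\WW^{-1}$ and $\var\WW$ in place as diagonal operators and resisting the temptation to move them through $\inhId$; once the groupings are read off against (\ref{def_opGmp}), the identification is immediate. No further properties of $\opG$ --- such as its symmetry or one-mode degeneracy --- are needed for this particular lemma, since they enter only in the subsequent applications, so the argument reduces to this direct differentiate-and-regroup computation.
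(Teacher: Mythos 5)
Your proposal is correct and follows essentially the same route as the paper: both start from $\var \opinvE = -\opinvE(\var\opE)\opinvE$ with $\var\opE = \var\WW^{-1}\inhId\WW + \WW^{-1}\inhId\var\WW$, and then read off $\inhId\WW\opinvE = \opGp$ and $\opinvE\WW^{-1}\inhId = \opGm$ from (\ref{def_opGmp}) (the paper merely performs this identification on $\var\opE$ itself, writing it as $\var\WW^{-1}\opGp\opE + \opE\opGm\var\WW$ before sandwiching, which is a trivially equivalent ordering). Your explicit remark on the metric-independence of $\hmgId$ and $\inhId$ is a sound addition that the paper leaves implicit.
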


\begin{proof}
{\small
The variation of $\opE$ is
 $ % \beq{}
   \var \opE    \,=\,
   \HIDE{+} \var\WW^{-1} \inhId \WW
   + \WW^{-1} \inhId \var \WW
  % =
  % - \var\WW^{-1} \hmgId \WW
  % - \WW^{-1} \hmgId \var \WW
  % =
  %  \WW^{-1}\var\WW \WW^{-1} \hmgId \WW
  % - \WW^{-1} \hmgId \var \WW
  % \nonumber\\
   \,=\,
   \HIDE{+} \var\WW^{-1} \opGp \opE    \,+\, \opE \opGm  \var \WW
   ,
 $ % \eeq
\,from which it follows that
 $ %  \beq{}
   \var \opinvE    \,=\,
   - \opinvE  (\var \opE )\, \opinvE   % =
  % + \opinvE \WW^{-1}\var\WW \WW^{-1} \inhId \WW \opinvE   % -\opinvE \WW^{-1} \inhId \var \WW \opinvE    =
   - \opinvE \! \var\WW^{-1} \opGp
   -\, \opGm \var \WW \opinvE    \,.
 $ %  \eeq
}
\end{proof}

\begin{corollary}[] \label{Corollary:invE_1_variations}
  \beq{Corollary:invE_1_variations_statement}
   \var \opinvE \Ket{1}
   \;=\;
   - \opGm \var \WW \opinvE \Ket{1}
   \,.
  \eeq
\end{corollary}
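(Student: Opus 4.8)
The plan is to obtain this identity directly as a specialization of Lemma \ref{Lemma:invE_variations} to the constant ket $\Ket{1}$, with no fresh computation required. First I would apply the general variation formula (\ref{Lemma:var_invE_statement}) to $\Ket{1}$ on the right, giving
\begin{equation}
 \var \opinvE \Ket{1} = - \opinvE \var\WW^{-1} \opGp \Ket{1} - \opGm \var \WW \opinvE \Ket{1}.
\end{equation}
The only thing left to establish is that the first of the two terms vanishes, after which the second term is precisely the claimed right-hand side of (\ref{Corollary:invE_1_variations_statement}).

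The key ingredient is the right-kernel property of the symmetric projector $\opGp$ proved in Corollary \ref{Corollary:var_invE_G_properties}, namely $\opGp \Ket{1} = 0$ from (\ref{opG_properties2_kernels}). Since in the first term $\opGp$ acts first on $\Ket{1}$, the factor $\opGp \Ket{1}$ annihilates the whole expression identically, independently of the metric-dependent operators $\opinvE$ and $\var\WW^{-1}$ standing to its left. This immediately removes the first term and yields the stated result.

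There is no genuine obstacle here: the corollary is a one-line consequence of the lemma together with the degeneracy of $\opGp$ on spatially homogeneous functions. The only point deserving a moment of care is the reading of the operator ordering — one must interpret $\opGp \Ket{1}$ as $\opGp$ acting to the right on the constant function so that its one-mode degeneracy can be invoked. Conceptually, the statement records the convenient fact that whenever $\opinvE$ is varied while acting on a constant function — as occurs repeatedly in the dynamical and gauge analyses of Sections \ref{SSect:DynamicalProperties_GUMG} and \ref{SSect:GaugeStructure_GUMG}, where $\opinvE$ acts on the spatially homogeneous mode $\hmgs{\CCof}$ — the symmetric projector $\opGp$ drops out entirely and only the $\opGm$-type contribution survives.
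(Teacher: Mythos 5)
Your proof is correct and follows exactly the paper's own route: the paper's proof of this corollary is the one-liner "Follows from Lemma \ref{Lemma:invE_variations} and the property $\opGp \Ket{1} = 0$, (\ref{opG_properties2_kernels})," which is precisely the specialization and kernel argument you give.
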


\begin{proof}
{\small
Follows from Lemma \ref{Lemma:invE_variations} and the property $ \opGp \Ket{1} = 0$, (\ref{opG_properties2_kernels}).
}
\end{proof}

This result allows expressing the variation
 $\var {\big( \Ket{\WW^{-1}}/\hmg{\WW^{-1}} \big)}  \teq  - \opGm \var \WW \Ket{\WW^{-1}}/\hmg{\WW^{-1}}$
in Section \ref{SSect:GaugeStructure_GUMG}, clarifying
 %% together with property (\ref{fE-1W-1inhI=0})
the emergence of $\opGm$ in the constraints on unrestricted gauge parameters.

\begin{lemma}[Poisson brackets of $\opinvE$] \label{Lemma:invE_PB}
 \beq{invE_PB_Hl_Hs} % (\ref{sDet_opE})
  \begin{array}{lll}
   \PB{\Bra{f} \opinvE \Ket{e}}{\BraKet{h}{\Hl}}
    & = &
    - \hmg{{f}\,\opinvE{\WW^{-1}}}\,
    %%- \Bra{f}\opinvE \Ket{\WW^{-1}} \sVol^{-1}
    \BraKet{h \tfrac{1}{\FF}\TTheta \trpg  }{\opGp e}
    +
    \BraKet{f \opGm}{ h  \tfrac{1}{\FF}\TTheta \trpg }
    %% \sVol^{-1}\Bra{\WW} \opinvE \Ket{e}
    \,\hmg{{\WW} \opinvE {e}}
    \,, \vphantom{\big|}
   \\
   \PB{\Bra{f} \opinvE \Ket{e}}{\BraKet{\eta^\zn}{\Hs_\zn}}
    & = &
    \HIDE{+} \hmg{{f}\,\opinvE{\WW^{-1}}}\,
    %%+ \Bra{f}\opinvE \Ket{\WW^{-1}} \sVol^{-1}
    \BraKet{ \eta^\zn_{\,;\zn} {} \tTDer{\ln\WW}{\ln\sqrg} }{\opGp e}
    -
    \BraKet{f \opGm}{ \eta^\zn_{\,;\zn} {} \tTDer{\ln\WW}{\ln\sqrg} }
    \,\hmg{{\WW} \opinvE {e}}
    %% \sVol^{-1}\Bra{\WW} \opinvE \Ket{e}
    \,, \vphantom{\big|}
   \\
  \end{array}
 \eeq
where\, $\TTheta(\argsqrg) {\,\defeq\,}  \tfrac{1}{\Ddim{-}2} \tTDer{\ln{\WW}}{\ln{\!\sqrg}}\,\tfrac{\FF}{\sqrg}$
 %%, (\ref{def_OOmega_TTheta}),
\,and\, $\eta^\zn_{\,;\zn} {} \tTDer{\ln\WW}{\ln\sqrg} = \eta^\zn\partial_\zn \ln \WW +\Dvg{\eta} \tTDer{\ln\WW}{\ln\sqrg}$.
\end{lemma}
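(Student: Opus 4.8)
\emph{Plan.} The strategy is to treat $f$, $e$, $h$ and $\eta^\zn$ as field-independent test functions, so that the Poisson bracket acts only on the metric dependence of $\opinvE$. Since $\opinvE$ enters the phase space solely through $\WW \teq \WW(\argsqrg)$, and $\WW$ only through the spatial volume density $\sqrg$, the bracket factors through the scalar variation of $\WW$: one writes $\PB{\Bra f\opinvE\Ket e}{C} \teq \Bra f\,\PB{\opinvE}{C}\,\Ket e$, where $\PB{\opinvE}{C}$ is read off from the variational formula of Lemma~\ref{Lemma:invE_variations}, $\var\opinvE \teq {-}\opinvE\var\WW^{-1}\opGp - \opGm\var\WW\opinvE$, by the replacements $\var\WW \to \PB{\WW}{C}$ and $\var\WW^{-1} \to \PB{\WW^{-1}}{C} \teq {-}\WW^{-2}\PB{\WW}{C}$, now understood as diagonal (pointwise-multiplication) operators.

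First I would compute the two elementary brackets. Using $\sVDer{\sqrg}{\g_{\zm\zn}} \teq \tfrac12\sqrg\g^{\zm\zn}$ together with $\partial\Hl/\partial\pg^{\zm\zn} \teq \tfrac1\sqrg(2\pg_{\zm\zn} - \tfrac{2}{\Ddim-2}\trpg\,\g_{\zm\zn})$, a short contraction gives $\PB{\sqrg}{\BraKet h\Hl} \teq -\tfrac{1}{\Ddim-2}\,h\,\trpg$, while the momentum constraint generates the Lie derivative of the density $\sqrg$, $\PB{\sqrg}{\BraKet{\eta^\zn}{\Hs_\zn}} \teq \partial_\zn(\sqrg\,\eta^\zn)$. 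Applying the chain rule $\PB{\WW}{C} \teq \TDer{\WW}{\sqrg}\PB{\sqrg}{C}$ and re-expressing $\TDer{\WW}{\sqrg}$ through $\TTheta(\argsqrg)$ and $\TDer{\ln\WW}{\ln\sqrg}$ as defined in the paper, yields the diagonal operators $\PB{\WW}{\BraKet h\Hl} \teq -\WW\,(h\tfrac1\FF\TTheta\trpg)$ and $\PB{\WW}{\BraKet{\eta^\zn}{\Hs_\zn}} \teq \WW\,(\eta^\zn_{\,;\zn}\TDer{\ln\WW}{\ln\sqrg})$, with the compound notation $\eta^\zn_{\,;\zn}\TDer{\ln\WW}{\ln\sqrg} \teq \eta^\zn\partial_\zn\ln\WW + (\partial_\zn\eta^\zn)\TDer{\ln\WW}{\ln\sqrg}$ of the statement. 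The two target relations then differ only by the replacement of the diagonal factor $h\tfrac1\FF\TTheta\trpg$ by $-\eta^\zn_{\,;\zn}\TDer{\ln\WW}{\ln\sqrg}$, so it suffices to carry out the Hamiltonian case and transcribe the momentum case by this substitution and an overall sign.

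Substituting into the sandwiched variation gives the sum of $-\Bra f\opinvE\,[\WW^{-1}\phi]\,\opGp\Ket e$ and $\Bra f\opGm\,[\WW\phi]\,\opinvE\Ket e$, with $\phi \teq h\tfrac1\FF\TTheta\trpg$ and $[\cdots]$ denoting diagonal operators. The remaining task is to collapse these into the factorized two-term form of the statement using the symmetric representations (\ref{opG_properties1_symmetry}), the definitions (\ref{def_opGmp}), the elementary actions (\ref{opinvE_properties}) of $\opinvE$, and the kernel/idempotency properties (\ref{opG_properties2_kernels}), namely $\opGp \teq \opGp\inhId$, $\Bra1\opGp \teq \opGp\Ket1 \teq 0$, and their $\opGm$ analogues. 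The hard part is that the two contributions do \emph{not} factorize term by term: each separately retains a ``bulk'' (non-rank-one) diagonal piece, and only in the sum do these bulk pieces cancel, leaving the rank-one coefficients $\hmg{f\opinvE\WW^{-1}}$ and $\hmg{\WW\opinvE e}$ together with the $\opG$-dressed inner products $\BraKet{\phi}{\opGp e}$ and $\BraKet{f\opGm}{\phi}$. I would establish this cancellation by reducing both contributions to a common basis via $\Bra f\opGm\WW \teq \Bra f - \tfrac{\BraKet f{\WW^{-1}}}{\BraKet1{\WW^{-1}}}\Bra1$ and $\opGp\Ket e \teq \Ket{\WW(e-d)}$ with $d \teq \BraKet\WW e/\BraKet1\WW$, then tracking how the homogeneous projections recombine (the computation being controlled by Corollary~\ref{Corollary:var_invE_G_properties}). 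A convenient independent check is the specialization $e \teq 1$, relevant to all applications in Sections~\ref{SSect:AltAction_GUMG} and~\ref{SSect:GaugeStructure_GUMG}, where $\opGp\Ket1 \teq 0$ and $\opinvE\Ket1 \teq \Ket{\WW^{-1}}/\hmg{\WW^{-1}}$ reduce the identity to the manifestly correct $\tfrac1{\hmg{\WW^{-1}}}\Bra f\opGm\Ket\phi$.
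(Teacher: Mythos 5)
Your proposal is correct and follows essentially the same route as the paper: both start from the variational formula $\var\opinvE = -\opinvE\,\var\WW^{-1}\opGp - \opGm\,\var\WW\,\opinvE$ of Lemma~\ref{Lemma:invE_variations}, feed in the elementary brackets $\PB{\WW}{\BraKet{h}{\Hl}}$ and $\PB{\WW}{\BraKet{\eta^\zn}{\Hs_\zn}}$ obtained via the chain rule through $\sqrg$, and then cancel the common cross term $\Bra{f\opGm}\,\phi\,\Ket{\opGp e}$ between the two contributions to reach the factorized form. The paper organizes that last cancellation slightly more economically by inserting $\Id=\hmgId+\inhId$ into $f\opinvE\WW^{-1}$ and $\WW\opinvE e$ rather than expanding $\opGm\WW$ and $\opGp$ explicitly, but this is only a presentational difference.
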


\begin{proof}
{\small
From (\ref{Lemma:var_invE_statement}),
 \beq{invE_PB_Hl} % (\ref{sDet_opE})
  \begin{array}{lll}
   \PB{ \Bra{f} \opinvE \Ket{e}}{\BraKet{h}{\Hl}}
    & = &
    -\Bra{f \opinvE} \PB{\WW^{-1}}{\BraKet{h}{\Hl}} \Ket{\opGp e}
    -\Bra{f \opGm} \PB{\WW}{\BraKet{h}{\Hl}} \Ket{\opinvE e}
    \\
    & = &
    -\Bra{f \opinvE} h  \WW^{-1} \,\tfrac{1}{\FF}\TTheta \trpg  \Ket{\opGp g}
    +\Bra{f \opGm} h \WW \,\tfrac{1}{\FF}\TTheta \trpg  \Ket{\opinvE e}
    \\
    & = &
   % - \Bra{f \opGm} \cancel{ h \tfrac{1}{\FF}\TTheta \trpg }  \Ket{\opGp e}
    - \Bra{f}\opinvE \Ket{\WW^{-1}} \frac{1}{\sVol} \BraKet{h \tfrac{1}{\FF}\TTheta \trpg  }{\opGp e}
   % \\
   % &&
   % + \Bra{f \opGm} \cancel{ h \tfrac{1}{\FF}\TTheta \trpg }  \Ket{\opGp e}
    + \BraKet{f \opGm}{ h  \tfrac{1}{\FF}\TTheta \trpg }\frac{1}{\sVol}\Bra{\WW} \opinvE \Ket{e} \,,
  \end{array}
  \nonumber
 \eeq
where we used (\ref{PB_sqrg_Hl}) and\HIDE{ in the last equality} cancelled the opposite contributions of $\Bra{f \opGm} { h \tfrac{1}{\FF}\TTheta \trpg }  \Ket{\opGp e}$, extracted from each term in the middle expression.
\if{
 \bea{}%{PB_sqrg_Hl}
  \PB{\sqrg}{\sint h B(\g) \Hl}
   %& =
   \,=\, - h\,\tfrac{1}{\Ddim{-}2} B(\g) \trpg %\,,
   \quad \Rightarrow \;
   \begin{array}{|rcl}
     \PB{ \FF}{\sint h B(\g) \Hl} &\!\! =\!&
     -  h \tfrac{1}{\Ddim{-}2} B(\g) \FF\, \WW \tfrac{\trpg}{\sqrg}
     \,, \\
     \PB{ \WW}{\sint h B(\g) \Hl} &\!\! =\!&
     -  h B(\g) \WW \,\tfrac{1}{\FF}\TTheta \trpg
     \,; \\
  %   \PB{ \WW\FF\sqrg}{\sint h B(\g) \Hl} &\!\! =\!&
  %   -  h \tfrac{1}{\Ddim{-}2} B(\g) \WW\FF\sqrg \,\OOmega \tfrac{\trpg}{\sqrg}
  %   \,, \\
   \end{array}
   \hspace{-5mm}
 \eea
}\fi

Similarly, from (\ref{Lemma:var_invE_statement}),
 \beq{invE_PB_Hs} % (\ref{sDet_opE})
  \begin{array}{lll}
   \PB{\Bra{f} \opinvE \Ket{e}}{\BraKet{\eta^\zn}{\Hs_\zn}}
    & = &
    -\Bra{f \opinvE} \PB{\WW^{-1}}{\BraKet{\eta^\zn}{\Hs_\zn}} \Ket{\opGp e}
    -\Bra{f \opGm} \PB{\WW}{\BraKet{\eta^\zn}{\Hs_\zn}} \Ket{\opinvE e}
 %   \\
 %   & = &
 %   +\Bra{f \opinvE \WW^{-1}} \PB{\WW}{\BraKet{\eta^\zn}{\Hs_\zn}} \Ket{\WW^{-1}\opGp e}
 %   -\Bra{f \opGm} \PB{\WW}{\BraKet{\eta^\zn}{\Hs_\zn}} \Ket{\opinvE e}
  %  \\
  %  & = &
  %  +\Bra{f \opinvE \WW^{-1}}
  %  %%\PB{\WW}{\BraKet{\eta^\zn}{\Hs_\zn}}
  %  \big( \eta^\zn \partial_\zn \WW
  %   + \Dvg{\eta} \WW \tTDer{\ln\WW}{\ln\sqrg} \big)
  %   \Ket{\WW^{-1}\opGp e}
  %  \\
  %  &  &
  %  -\Bra{f \opGm}
  %  %%\PB{\WW}{\BraKet{\eta^\zn}{\Hs_\zn}}
  %  \big( \eta^\zn \partial_\zn \WW
  %   + \Dvg{\eta} \WW \tTDer{\ln\WW}{\ln\sqrg} \big)
  %  \Ket{\opinvE e}
    \\
    & = &
    \HIDE{+} \Bra{f \opinvE \WW^{-1}}
    %%\PB{\WW}{\BraKet{\eta^\zn}{\Hs_\zn}}
     %\big( \eta^\zn \partial_\zn \ln{\WW}
     % {+} \Dvg{\eta} \tTDer{\ln\WW}{\ln\sqrg} \big)
     \eta^\zn_{\,;\zn} {} \tTDer{\ln\WW}{\ln\sqrg}
     \Ket{\opGp e}
  %  \\
  %  & &
    -\Bra{f \opGm}
    %%\PB{\WW}{\BraKet{\eta^\zn}{\Hs_\zn}}
     %\big( \eta^\zn \partial_\zn \ln{\WW}
     % {+} \Dvg{\eta} \tTDer{\ln\WW}{\ln\sqrg} \big)
     \eta^\zn_{\,;\zn} {} \tTDer{\ln\WW}{\ln\sqrg}
    \Ket{\WW\opinvE e}
    \\
    & = &
  %  +\Bra{f \opGm}
  %  \cancel{ \eta^\zn_{\,;\zn} {} \tTDer{\ln\WW}{\ln\sqrg} }
  %   \Ket{\opGp e}
    \HIDE{+} \Bra{f} \opinvE \Ket{\WW^{-1}} \frac{1}{\sVol}
    \Bra{1}
     \eta^\zn_{\,;\zn} {} \tTDer{\ln\WW}{\ln\sqrg}
     \Ket{\opGp e}
  %  \\
  %  & &
  %  -\Bra{f \opGm}
  %  \cancel{ \eta^\zn_{\,;\zn} {} \tTDer{\ln\WW}{\ln\sqrg} }
  %  \Ket{\opGp e}
    -\Bra{f \opGm}
     \eta^\zn_{\,;\zn} {} \tTDer{\ln\WW}{\ln\sqrg}
     \Ket{1}
    \frac{1}{\sVol} \Bra{\WW}\opinvE \Ket{e}
    \,,
  \end{array}
  \nonumber
 \eeq
where we used (\ref{PB_sqrg_Hs}) and\HIDE{ in the last equality} cancelled the opposite contributions of
    $\Bra{f \opGm}
    { \eta^\zn_{\,;\zn} {} \tTDer{\ln\WW}{\ln\sqrg} }
    \Ket{\opGp e}$,
extracted from each term in the middle expression.
}
\end{proof}

\begin{corollary}[] \label{Corollary:invE_PB}
When\HIDE{ the operator} $\opinvE$ acts to the right on a spatially homogeneous function, the Poisson brackets (\ref{invE_PB_Hl_Hs}) simplify:
 \beq{Corollary:invE_PB_Hl_Hs_statement} % (\ref{sDet_opE})
  \begin{array}{lll}
   \PB{\Bra{f} \opinvE \Ket{1}}{\BraKet{h}{\Hl}}
    &\! = \!&
    \BraKet{f \opGm}{ h  \tfrac{1}{\FF}\TTheta \trpg }
    \, \frac{1}{\hmg{\WW^{-1}}\vphantom{|^{I^I}}}
    %%\hmg{{\WW} \opinvE {1}}
    \,, \vphantom{\big|}
   \\
   \PB{\Bra{f} \opinvE \Ket{1}}{\BraKet{\eta^\zn}{\Hs_\zn}}
    &\! = \!&
    -
    \BraKet{f \opGm}{ \eta^\zn_{\,;\zn} {} \tTDer{\ln\WW}{\ln\sqrg} }
    \, \frac{1}{\hmg{\WW^{-1}}\vphantom{|^{I^I}}}
    %%\hmg{{\WW} \opinvE {1}}
    \,. \vphantom{\big|}
   \\
  \end{array}
 \eeq
\end{corollary}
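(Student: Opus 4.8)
The plan is to obtain this corollary as a direct specialization of Lemma \ref{Lemma:invE_PB} to the case $\Ket{e} \teq \Ket{1}$, so that no fresh evaluation of Poisson brackets is required; the entire argument consists of simplifying the two structures in (\ref{invE_PB_Hl_Hs}) that carry the dependence on $e$. First I would substitute $\Ket{e} \teq \Ket{1}$ into both lines of (\ref{invE_PB_Hl_Hs}) and isolate the two places where $e$ enters: the combination $\opGp e$ appearing inside the first term of each identity, and the scalar factor $\hmg{\WW\opinvE e}$ multiplying the second term.

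The first step is to eliminate the leading term. Since the projector $\opGp$ annihilates constants by the right-kernel property $\opGp \Ket{1} \teq 0$ from (\ref{opG_properties2_kernels}), the factor $\Ket{\opGp 1}$ vanishes identically, and with it the entire $\hmg{f\opinvE\WW^{-1}}$-weighted contribution in each line. Hence only the term proportional to $\BraKet{f\opGm}{\,\cdot\,}$ survives in both brackets.

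The second step is to evaluate the surviving scalar factor $\hmg{\WW\opinvE 1}$. Using the action $\opinvE \Ket{1} \teq \Ket{\WW^{-1}}/\hmg{\WW^{-1}}$ from (\ref{opinvE_properties}), together with the pointwise identity $\WW\Ket{\WW^{-1}} \teq \Ket{1}$, one finds $\WW\opinvE\Ket{1} \teq \Ket{1}/\hmg{\WW^{-1}}$. The only point requiring a moment's care is that $1/\hmg{\WW^{-1}}$ is spatially homogeneous (a function of $\taux$ alone), so it is unaffected by averaging and $\hmg{\WW\opinvE 1} \teq 1/\hmg{\WW^{-1}}$. Inserting this factor into the surviving term of each identity reproduces exactly (\ref{Corollary:invE_PB_Hl_Hs_statement}), with the relative sign carried over unchanged from the two lines of (\ref{invE_PB_Hl_Hs}).

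Because the statement follows by substitution plus two elementary simplifications, there is no genuine obstacle here; the only bookkeeping subtlety is verifying that the average-free operator $\opGm$ and the homogeneous scalar $1/\hmg{\WW^{-1}}$ decouple cleanly, which is guaranteed by the symmetry and kernel identities of Corollary \ref{Corollary:var_invE_G_properties}.
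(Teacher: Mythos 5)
Your proposal is correct and follows essentially the same route as the paper: the paper's own proof also specializes Lemma \ref{Lemma:invE_PB} to $\Ket{e}=\Ket{1}$, kills the first term via $\opGp\Ket{1}=0$ from (\ref{opG_properties2_kernels}), and evaluates the surviving factor as $\hmg{\WW\opinvE 1}=1/\hmg{\WW^{-1}}$ using (\ref{opinvE_properties}). Your intermediate step of deriving that factor from $\opinvE\Ket{1}=\Ket{\WW^{-1}}/\hmg{\WW^{-1}}$ is a harmless elaboration of the same identity.
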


\begin{proof}
{\small
This follows from Lemma \ref{Lemma:invE_PB}, along with\HIDE{ the properties} $ \opGp \Ket{1} \teq 0$, (\ref{opG_properties2_kernels}), \,and\, $\hmg{{\WW} \opinvE {1}} \teq {1}/\,{\hmg{\WW^{-1}}\vphantom{|^{I^I}}}$, (\ref{opinvE_properties}).
}
\end{proof}

 \subsubsection*{Determinant of operator $\opE$}
  %\label{ASSect:opG_properties}
  % \hspace{\parindent}
 \newcommand{\Det}{\mathop{\mathrm{Det}}}
 \newcommand{\sDet}{\mathop{\color{DarkGreen}\mathrm{Det}}_{\!\sx}}
 \newcommand{\sTr}{\mathop{\color{DarkGreen}\mathrm{Tr}}_{\!\sx}}
\begin{lemma}[Determinant of $\opE$] \label{Lemma:DetE}
The determinant of the spatial integral operator $\opE$, (\ref{def_opE_COPY}), with the two-point kernel ${E}(\taux|\sx;\sx')$, (\ref{opE_kernel}), reads:
 \beq{sDet_opE} % (\ref{sDet_opE})
  \sDet \opE \,(\taux)
  \;=\;
  \hmg{\WW}(\taux)\,\hmg{\WW^{-1}}(\taux)
  \geq 1
  \,,
 \eeq
 The equality is satisfied only for $\WW(\taux,\sx) \teq \hmg{\WW}(\taux)$, which happen either when $\WW \teq \wwc \teq \const$ or for metric configurations with $\sqrg(\taux,\sx) \teq \hmg{\sqrg}(\taux)$.
\end{lemma}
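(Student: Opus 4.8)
The plan is to exploit the fact that $\opE$, in the representation (\ref{def_opE_COPY}), is a rank-two perturbation of the identity. Writing it as $\opE = \Id + \Ket{u_1}\Bra{v_1} + \Ket{u_2}\Bra{v_2}$ with $\Ket{u_1} \teq \Ket{1}$, $\Bra{v_1} \teq \sVol^{-1}\Bra{1}$ and $\Ket{u_2} \teq -\Ket{\WW^{-1}}$, $\Bra{v_2} \teq \sVol^{-1}\Bra{\WW}$, I would invoke Sylvester's determinant identity (the matrix determinant lemma), by which the functional (Fredholm) determinant of such an operator collapses to a finite-dimensional one: $\sDet\,\big(\Id + \sum_i \Ket{u_i}\Bra{v_i}\big) = \det\big(\delta_{ij} + \BraKet{v_i}{u_j}\big)$, where the right-hand side is an ordinary $2\times2$ determinant. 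Since the perturbation is genuinely finite-rank, no regularization of $\delta(0)$ is needed at this purely spatial level and the identity is exact; only the full spacetime determinant in the adjacent discussion requires the $\delta(0)$ prescription.

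Next I would assemble the matrix $\delta_{ij}+\BraKet{v_i}{u_j}$. Its entries follow from $\BraKet{1}{1}=\sVol$, $\BraKet{1}{\WW^{-1}}=\sVol\,\hmg{\WW^{-1}}$, $\BraKet{\WW}{1}=\sVol\,\hmg{\WW}$, together with the key simplification $\BraKet{\WW}{\WW^{-1}}=\sint \WW\,\WW^{-1}=\sint 1=\sVol$, which makes the $(2,2)$ entry vanish. This yields
$\left(\begin{smallmatrix} 2 & -\hmg{\WW^{-1}} \\ \hmg{\WW} & 0\end{smallmatrix}\right)$,
whose determinant is $\hmg{\WW}\,\hmg{\WW^{-1}}$, establishing (\ref{sDet_opE}).

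Finally, the bound $\hmg{\WW}\,\hmg{\WW^{-1}}\geq 1$ is the Cauchy--Schwarz inequality (equivalently the arithmetic--harmonic mean inequality) applied to the sign-definite function $\WW$: since $\WW$ and $\WW^{-1}$ carry a common sign, their averages' product equals $\hmg{|\WW|}\,\hmg{|\WW|^{-1}}$, and $\big(\sint |\WW|\big)\big(\sint |\WW|^{-1}\big)\geq\big(\sint \sqrt{|\WW|}\,\sqrt{|\WW|^{-1}}\big)^2=\big(\sint 1\big)^2=\sVol^2$. Equality holds precisely when $\sqrt{|\WW|}$ and $1/\sqrt{|\WW|}$ are proportional, i.e. when $\WW$ is constant over the slice, which is exactly the case $\WW\equiv\wwc=\const$ or a metric configuration with homogeneous volume density $\sqrg(\taux,\sx)=\hmg{\sqrg}(\taux)$; in those cases $\opE\to\Id$ by (\ref{EProp:homogeneity}) and the determinant is unity. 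I do not anticipate a serious obstacle here: the computation is routine once the reduction is set up, and the only point demanding care is keeping the action directions of the bra- and ket-factors consistent in the matrix-determinant step and recognizing that $\BraKet{\WW}{\WW^{-1}}=\sVol$ is what produces the clean closed form.
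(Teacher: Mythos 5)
Your proof is correct, and it takes a genuinely different route from the paper's. The paper proves the lemma variationally: it computes $\var \ln \sDet\opE = \sTr{\opinvE\,\var\opE}$, reduces the traces to the inner products $\BraKet{\var\WW}{1}/\BraKet{1}{\WW}$ and $\BraKet{1}{\var\WW^{-1}}/\BraKet{\WW^{-1}}{1}$, recognizes the result as $\var\ln\big(\hmg{\WW}\,\hmg{\WW^{-1}}\big)$, and then fixes the integration constant by the homogeneous limit $\WW\to\wwc$ where $\opE\to\Id$; it also sketches an alternative check by Taylor-expanding $\sTr\ln(\Id+\Delta)$. You instead observe that $\opE$ is an exact rank-two perturbation of the identity and apply the matrix determinant lemma, collapsing the functional determinant to the $2\times2$ determinant $\det\big(\delta_{ij}+\BraKet{v_i}{u_j}\big)$; your matrix entries are all correct (in particular $\BraKet{\WW}{\WW^{-1}}=\sVol$ killing the $(2,2)$ entry), and $\det\left(\begin{smallmatrix} 2 & -\hmg{\WW^{-1}} \\ \hmg{\WW} & 0\end{smallmatrix}\right)=\hmg{\WW}\,\hmg{\WW^{-1}}$ reproduces (\ref{sDet_opE}). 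Your approach buys a closed-form, one-shot evaluation with no integration constant to fix and no need for the variation formula of Lemma \ref{Lemma:invE_variations}; the paper's approach buys consistency with the variational machinery it develops anyway for the gauge-structure analysis. Your Cauchy--Schwarz argument for $\hmg{\WW}\,\hmg{\WW^{-1}}\geq1$ and the characterization of the equality case also match the lemma's claim precisely. The only point I would flag as worth stating explicitly is the hypothesis under which the matrix determinant lemma applies here without qualification: the perturbation kernels are bounded and of finite rank, so the Fredholm determinant is well defined and the identity is exact, as you note.
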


\begin{proof}
{\small
The variation of the determinant gives:
 \bea{}
  \var \ln \sDet {\opE}
  &\!=\!&
  \var \sTr \ln \opE   \;=\; \sTr {\opinvE \,\var\opE}
  \;=\;
    - \sTr { \big(\opinvE {\WW}^{-1}\hmgId \var{\WW}\, \big)}
    - \sTr { \big(\var{\WW^{-1}}\hmgId\WW\,\opinvE \big)}
  \nonumber
  \\
  &\!=\!& - \big[ \tfrac{1}{\sVol} \Bra{\var{\WW}} \opinvE \Ket{\WW^{-1}} \big]
  - \big[ \tfrac{1}{\sVol} \Bra{\WW} \opinvE \Ket{\var{\WW^{-1}}} \big]
  % \nonumber \\
  % &=&
  \;=\;
  \tfrac{\BraKet{\var \WW}{1}}{\BraKet{1}{\WW}}
  + \tfrac{\BraKet{1}{\var \WW^{-1}}}{\BraKet{\WW^{-1}}{1}}
  %\nonumber \\
  \;=\;
   \var{ \ln{ \big( \,\hmg{\WW}\, \hmg{\WW^{-1}}\, \big) } }
  \nonumber
  \,.
 \eea
This implies $\sDet {\opE} \teq \const\, \hmg{\WW}\, \hmg{\WW^{-1}}$. The\HIDE{ normalization} constant is fixed by the limit $\WW \to \wwc$, where $\opE$ is the identity operator\HIDE{ with unit determinant}.
}
\end{proof}

The result (\ref{sDet_opE}) %% for $\ln\sDet_{\!\sx\,} (\Id + \Delta)$
can also be checked via Taylor expansion of $\,\HIDE{\ln\sDet (\Id{+}\Delta) \teq } \sTr\ln (\Id + \Delta)\,$ in powers of $\,\Delta \tdefeq \hmgId {-\,} \WW^{-1}\! \hmgId \WW$, which after taking the trace gives the expansion of $\ln\big(1- (1{-}\hmg{\WW}\,\hmg{\WW^{-1}}) \big)$.

  \subsection{Properties of a generalized\HIDE{ $\opE$-type} delocalization operator and its inverse}
   \label{SSect:generic_op_E}  %% 2023-12-12
    \hspace{\parindent}
More general operators
 %% of the form \,$ \hmgId + \Af^{-1} \inhId \Bf^{-1} $\,
  \bea{def_opEE}
  \opEE
   &\!=\!& \hmgId + \Af^{-1\!} \inhId \Bf^{-1}
  % \nonumber\\
  % &=&
   \;=\;
   \Af^{-1\!} \Id \Bf^{-1} + \Ket{1}\invBraKet{1}{1}\Bra{1} - \Ket{\Af^{-1}}\invBraKet{1}{1}\Bra{\Bf^{-1}}
   \,,
  %\nonumber
 \eea
defined on compact spaces for arbitrary sign-definite functions $\Af(\sx)$ and $\Bf(\sx)$ possess similar properties, except for the homogeneity limit property (\ref{EProp:homogeneity}), which for $\Af(\sx)\to\hmg{\Af}$ and $\Bf(\sx)\to\hmg{\Bf}$ with $\hmg{\Bf}\neq\hmg{\Af}^{-1}$ leads to a rescaling in the inhomogeneous sector. Its featured actions are
 \bea{opEE_properties}
  \begin{array}{lllll}
   \Bra{1} \opEE &\!=\!&  \Bra{1} + \Bra{\hspace{0.5pt}\inh{\Af^{-1}} \Bf^{-1}}
   \,,
   \\ %& \quad &
   \opEE \Ket{1} &\!=\!& \Ket{1}  + \Ket{\Af^{-1} \inh{\Bf^{-1}}}
   \,,
  \end{array}
 % \\
 \qquad \qquad
 % \\
  \begin{array}{lllll}
   %\Bra{1} \WW \opE &=&
   \Bra{\Af} \opEE
   %% &\!\!=\!& \BraKet{\Af}{1} \sVol^{-1\!} \Bra{1}
    &\!\!=\!& \hmg{\Af}\, \Bra{1}
    \,,
    \\ % & \quad &
   %\opE \WW^{-1} \Ket{1} &=&
   \opEE \Ket{\Bf}
   %% &\!\!=\!& \Ket{1}  \sVol^{-1\!} \BraKet{1}{\Bf}
   &\!\!=\!& \Ket{1} \,\hmg{\Bf} \vphantom{\big|^I}
   \,.
  \end{array}
 \eea

\begin{lemma}[Nondegenerateness and inverse of $\opEE$] \label{Lemma:InvEE}
 Operator $\opEE \teq \hmgId + \Af^{-1} \inhId \Bf^{-1}$, (\ref{def_opEE}),
 where $\Af(\sx)$ and $\Bf(\sx)$ are bounded, sign-definite functions\HIDE{ on a compact manifold},
  %%(manifold with the finite coordinate volume),
is a \emph{nondegenerate} operator
 %%on the $L_2$ \TODO{(?)} space of scalar functions on the manifold.
with an \emph{inverse} of the form
 \bea{invEE}
   \opEE^{-1}
   &\!=\!& \Bf \Id \Af
         \;-\;  \Ket{\Bf\Af} \invBraKet{\Af}{1} \Bra{\Af}
         \;-\;  \Ket{\Bf} \invBraKet{\Bf}{1} \Bra{\Bf\Af}
         \;+\;  \Ket{\Bf} \invBraKet{\Bf}{1} \big( \BraKet{1}{1} {+} \BraKet{\Bf\Af}{1}\big) \invBraKet{\Af}{1} \Bra{\Af}
 \if{
  \nonumber\\
   &=& \Bf \Id \Af
         \;-\;  \Bf\Af\Ket{1} \invBraKet{\Af}{1} \Bra{\Af}
         \;-\;  \Bf\Ket{1} \invBraKet{\Bf}{1} \Bra{\Bf}\Af
         \;+\;  \Bf\Ket{1} \invBraKet{\Bf}{1} \big( \BraOKet{\Bf}{\Bf^{-1}}{1} {+} \BraOKet{\Bf}{\Af}{1}\big) \invBraKet{\Af}{1} \Bra{\Af}
 }\fi
 \,.
 \eea
\end{lemma}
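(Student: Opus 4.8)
The plan is to follow verbatim the constructive strategy used for Lemma~\ref{Lemma:InvE}, now carrying the nontrivial invertible diagonal block $\Af^{-1}\Bf^{-1}$ in place of the bare $\Id$. I would start from the defining relation $\opEE\,\opEE^{-1} \teq \Id$, read as an equation for the unknown operator $\opEE^{-1}$, and insert the rank-two-plus-diagonal representation (\ref{def_opEE}),
\[
  \opEE \teq \Af^{-1}\Bf^{-1} + \Ket{1}\invBraKet{1}{1}\Bra{1} - \Ket{\Af^{-1}}\invBraKet{1}{1}\Bra{\Bf^{-1}} .
\]
Since $\Af,\Bf$ are bounded and sign-definite, the diagonal multiplication operator $\Af^{-1}\Bf^{-1}$ is invertible with inverse $\Bf\Af$. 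Left-multiplying the defining relation by $\Bf\Af$ and using $\Bf\Af\Ket{1}\teq\Ket{\Bf\Af}$ and $\Bf\Af\Ket{\Af^{-1}}\teq\Ket{\Bf}$ then isolates
\[
  \opEE^{-1} \teq \Bf\Af - \Ket{\Bf\Af}\invBraKet{1}{1}\Bra{X} + \Ket{\Bf}\invBraKet{1}{1}\Bra{Y},
\]
where $\Bra{X}\defeq\Bra{1}\opEE^{-1}$ and $\Bra{Y}\defeq\Bra{\Bf^{-1}}\opEE^{-1}$ are two unknown covectors still to be pinned down.

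Next I would close the system for $\Bra{X},\Bra{Y}$ by contracting this expression with $\Bra{1}$ and with $\Bra{\Bf^{-1}}$ on the left. The second contraction collapses cleanly because $\Bra{\Bf^{-1}}\Bf\Af \teq \Bra{\Af}$ and $\BraKet{\Bf^{-1}}{\Bf} \teq \sVol$, producing directly $\Bra{X} \teq \sVol\,\invBraKet{\Af}{1}\Bra{\Af}$ --- the exact analogue of (\ref{BraX}). Substituting this into the first contraction determines $\Bra{Y}$, and feeding both covectors back into the isolated expression reproduces, after grouping, the four rank-one terms of (\ref{invEE}). As a built-in consistency check I would specialize to $\Af\teq\WW$, $\Bf\teq\WW^{-1}$ (so that $\Af^{-1}\Bf^{-1}\teq\Id$ and $\opEE\teq\opE$), which must collapse the answer to (\ref{invE}).

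The linear algebra of the two-covector system is routine; the only points needing genuine care are the well-definedness inputs and the upgrade from a one-sided to a two-sided inverse. Sign-definiteness of $\Af$ and $\Bf$ is precisely what keeps the denominators $\BraKet{\Af}{1}$, $\BraKet{\Bf}{1}$, and $\sVol$ nonzero, while boundedness ensures that $\Bf\Af$ and the finite-rank corrections are bounded operators. The construction as written yields a right inverse; to conclude nondegeneracy I would invoke the $\Af\leftrightarrow\Bf$ symmetry of the family --- the transpose of $\opEE \teq \hmgId + \Af^{-1}\inhId\Bf^{-1}$ is $\hmgId + \Bf^{-1}\inhId\Af^{-1}$, again of the same form --- so the identical construction run with $\Af$ and $\Bf$ interchanged furnishes a right inverse of the transpose, whose transpose is a left inverse of $\opEE$. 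Existence of both one-sided inverses forces them to coincide, giving the two-sided inverse (\ref{invEE}) and hence nondegeneracy. The main obstacle I anticipate is purely bookkeeping: keeping the left/right action conventions of the diagonal factors and the bra--ket contractions consistent while assembling the four rank-one pieces into the compact closed form (\ref{invEE}).
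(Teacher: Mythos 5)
Your proposal is correct and is essentially the paper's own constructive proof: the paper likewise reads $\opEE\,\opEE^{-1}=\Id$ as an equation for $\opEE^{-1}$, introduces the two unknown covectors $\Bra{1}\opEE^{-1}$ and $\Bra{\Bf^{-1}}\opEE^{-1}$, and closes the linear system by contractions (it contracts the un-premultiplied equation with $\Bra{\Af}$ and $\Bra{\Bf\Af}$, which is exactly equivalent to your contractions with $\Bra{\Bf^{-1}}$ and $\Bra{1}$ after left-multiplying by $\Bf\Af$). Your transpose/$\Af\leftrightarrow\Bf$ argument upgrading the right inverse to a two-sided one is a sensible supplement to a point the paper leaves implicit (it only notes that the formula ``can also be verified directly'').
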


From sign-definiteness and boundedness, it follows that $\Af, \Bf \neq0,\infty$, and that the integrals $\BraKet{\Af}{1}\HIDE{\neq0}$ and $\BraKet{\Bf}{1}\HIDE{\neq0}$ exist and do not vanish.
The nondegeneracy of $\opEE$ is established by the explicit construction of its inverse, $\opEE^{-1}$, mirroring the approach used in Lemma \ref{Lemma:InvE}.
The correctness of the expression (\ref{invEE}) for the inverse operator can also be verified directly.

\if{
\begin{proof}
{\small
  We start from the defining relation
   $ % \beq{}
    \opEE \, \opEE^{-1} = \Id \,,
   $ %\eeq
  as an equation on $\opEE^{-1}$.
  From\HIDE{ the right representation of} (\ref{def_opEE}):
   \beq{invEE1}
    \Af^{-1} \Id \Bf^{-1} \opEE^{-1} + \Ket{1} \invBraKet{1}{1} \underbrace{ \Bra{1}\opEE^{-1}}_{{\Bra{\Xf}}} \;-\; \Ket{\Af^{-1}} \invBraKet{1}{1} \underbrace{\Bra{\Bf^{-1}} \opEE^{-1}}_{\Bra{\Yf}}
    \;=\; \Id
    \,.
   \eeq
  To find $\opEE^{-1}$, we solve for $\Bra{\Xf}$ and $\Bra{\Yf}$.

  Contracting (\ref{invEE1}) with $\Bra{\Af}$ yields:
   %\if{
   \beq{contr1_invEE}
    \cancel{ \Bra{\Yf} } + \BraKet{\Af}{1} \invBraKet{1}{1}  \Bra{\Xf} - \cancel{ \BraKet{\Af}{\Af^{-1}} \invBraKet{1}{1}  \Bra{\Yf} }
    \;=\; \Bra{\Af}
   \eeq
   %}\fi
  %%one gets resolution for $\Bra{\Xf}$
   \beq{BraXX}
      \Bra{\Xf}
    \;=\;   \BraKet{1}{1} \invBraKet{\Af}{1} \Bra{\Af}
    \,.
   \eeq

  Next, contracting (\ref{invEE1}) with $\Bra{\Bf\Af}$
   %\if{
   \beq{contr2_invEE}
    \Bra{\Xf}  + \BraKet{\Bf\Af}{1} \invBraKet{1}{1} \Bra{\Xf} -  \BraKet{\Bf}{1} \invBraKet{1}{1} \Bra{\Yf}
    \;=\; \Bra{\Bf\Af}
   \eeq
   %}\fi
  and using (\ref{BraXX}) gives\HIDE{ the resolution for $\Bra{\Yf}$}
   \beq{BraYY}
    \Bra{\Yf}
    \;=\;  - \BraKet{1}{1} \invBraKet{\Bf}{1} \big( \Bra{\Bf\Af} - \BraKet{1}{1} \invBraKet{\Af}{1} \Bra{\Af} - \BraKet{\Bf\Af}{1} \invBraKet{\Af}{1} \Bra{\Af} \big)
    \,.
   \eeq
  Substituting (\ref{BraXX}) and (\ref{BraYY}) into (\ref{invEE1}) and left-multiplying by $\Bf\Id\Af$ yields the result (\ref{invEE}).
}
\end{proof}
}\fi

\begin{corollary}[] \label{Corollary:EE-1_1}
 For the operator % $\opEE$ of the form
 $ % \beq{def_opEE_COPY2}
  \,\opEE
   = \hmgId {+\,} \Af^{-1\!}\inhId \Bf^{-1}
 $, (\ref{def_opEE}), % \eeq
the following properties hold:
\begin{itemize}
  \item
    $\opEE^{-1} \Ket{1}$ is independent of $\,\Af$ and takes the form
     $ %\beq{} \
       \; \displaystyle
       \opEE^{-1} \Ket{1}
       = \Ket{\Bf}\tfrac{1}{\BraKet{\Bf}{1}} \BraKet{1}{1}
       = \tfrac{\Ket{\Bf}}{\hmg{\Bf}\vphantom{I^l} } \,.
    $ % \eeq
  %%The essence of the mechanism of the on-shell independence on the arbitrary choice of $\Af$

  \item
    $\Bra{1} \opEE^{-1} $ is independent of $\,\Bf$ and takes the form
     $ %\beq{} \
       \; \displaystyle
       \Bra{1} \opEE^{-1}
       = \BraKet{1}{1} \tfrac{1}{\BraKet{\Af}{1}}  \Bra{\Af}
       = \tfrac{\Bra{\Af}}{\hmg{\Af}} \,.
    $ % \eeq

  \item
    $\Bra{1} \opEE^{-1} \Ket{1}$ is independent of both $\,\Af$ and $\,\Bf$ and equals
     $ %\beq{} \
       \; \displaystyle
       \Bra{1} \opEE^{-1} \Ket{1}
       = \BraKet{1}{1} \,.
    $ % \eeq

\end{itemize}

\end{corollary}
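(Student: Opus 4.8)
The plan is to read all three identities directly off the featured actions (\ref{opEE_properties}) of $\opEE$, rather than substituting the bulky four-term expression for $\opEE^{-1}$ from Lemma \ref{Lemma:InvEE}. The advantage of this route is that it makes the asserted independence of $\Af$ and of $\Bf$ manifest already at the level of the action of $\opEE$ itself, instead of something that has to emerge from pairwise cancellations in the explicit inverse.

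For the first bullet I would start from the action $\opEE\Ket{\Bf} = \Ket{1}\,\hmg{\Bf}$. Since $\hmg{\Bf}$ is a spatial constant, this rearranges to $\opEE\big(\Ket{\Bf}/\hmg{\Bf}\big) = \Ket{1}$, and left-multiplying by $\opEE^{-1}$ (which exists by Lemma \ref{Lemma:InvEE}) gives $\opEE^{-1}\Ket{1} = \Ket{\Bf}/\hmg{\Bf}$ at once. The point worth making explicit is why this action carries no $\Af$: writing $\opEE = \hmgId + \Af^{-1}\inhId\Bf^{-1}$, the second piece yields $\Af^{-1}\inhId\Bf^{-1}\Ket{\Bf} = \Af^{-1}\inhId\Ket{1} = 0$, because $\inhId$ annihilates constants, so the $\Af$-dependent term drops out before $\Af$ can contribute. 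The second bullet follows from the transposed action $\Bra{\Af}\opEE = \hmg{\Af}\,\Bra{1}$, giving $\Bra{1}\opEE^{-1} = \Bra{\Af}/\hmg{\Af}$; here the $\Bf$-independence is manifest because $\Bra{\Af}\Af^{-1}\inhId\Bf^{-1} = \Bra{1}\inhId\Bf^{-1} = 0$, the left action of $\inhId$ on the constant bra vanishing. The third bullet is then obtained by contracting either result, e.g. $\Bra{1}\opEE^{-1}\Ket{1} = \BraKet{1}{\Bf}/\hmg{\Bf} = \BraKet{1}{1}$, using $\BraKet{\Bf}{1} = \sVol\,\hmg{\Bf} = \BraKet{1}{1}\,\hmg{\Bf}$.

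I do not expect a genuine obstacle here; the computation is essentially bookkeeping. The only thing requiring care is the distinction between the left and right action of the diagonal multiplication operators $\Af^{\pm1}, \Bf^{\pm1}$ and of the projectors $\hmgId, \inhId$, together with consistent use of the normalization $\hmg{f} = \BraKet{f}{1}/\BraKet{1}{1}$. As an optional cross-check I would confirm each identity by direct substitution of the explicit inverse (\ref{invEE}), where the four terms collapse in pairs to reproduce the same answers; but since this longer computation is precisely what the featured-action argument bypasses, I would present only the short proof and mention the substitution merely as verification.
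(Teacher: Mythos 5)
Your proof is correct. The paper states this corollary immediately after Lemma \ref{Lemma:InvEE} without an explicit argument, the implied route being direct substitution of $\Ket{1}$ and $\Bra{1}$ into the four-term expression (\ref{invEE}) and watching the terms collapse in pairs. Your route is genuinely different and cleaner: you take the featured actions $\opEE\Ket{\Bf}=\Ket{1}\,\hmg{\Bf}$ and $\Bra{\Af}\opEE=\hmg{\Af}\,\Bra{1}$ from (\ref{opEE_properties}), which hold because $\inhId$ annihilates the constant produced by $\Bf^{-1}\Ket{\Bf}$ (respectively $\Bra{\Af}\Af^{-1}$), and then invoke the invertibility established in the Lemma to invert these relations directly, obtaining $\opEE^{-1}\Ket{1}=\Ket{\Bf}/\hmg{\Bf}$ and $\Bra{1}\opEE^{-1}=\Bra{\Af}/\hmg{\Af}$; the third identity follows by contraction using $\BraKet{1}{\Bf}=\sVol\,\hmg{\Bf}=\BraKet{1}{1}\,\hmg{\Bf}$. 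What your approach buys is that the $\Af$- and $\Bf$-independence is visible at the level of the action of $\opEE$ itself rather than emerging from cancellations in the explicit inverse, and it only uses the existence of $\opEE^{-1}$, not its explicit form. What the substitution route buys is a consistency check of the formula (\ref{invEE}) itself, which is presumably why the paper orders things as it does. Your handling of the normalization $\hmg{f}=\BraKet{f}{1}/\BraKet{1}{1}$ and of left versus right actions is consistent with the paper's conventions, and no step is missing.
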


\begin{corollary}[] \label{Corollary:inhIbE-1}
 For the operator % $\opEE$ of the form
  $ % \bea{def_opEE_COPY2}
  \opEE
   = \hmgId {+\,} \Af^{-1\!}\inhId \Bf^{-1}
 $, (\ref{def_opEE}), % \eeq
%where $\Af(\sx)$, $\Bf(\sx)$ --- sign-definite (finite) functions on compact manifold without boundaries,
the following holds:
 \bea{inhIbE-1}
   \inhId \Bf^{-1} \opEE^{-1}
   &\!=\!&   \inhId \Af
         -  \inhId\Ket{\Af} \invBraKet{\Af}{1} \Bra{\Af}
        % \:-\:  \inhId \Ket{1} \invBraKet{\Bf}{1} \Bra{\Bf\Af}
        % \:+\:  \inhId \Ket{1} \invBraKet{\Bf}{1} \big( \BraKet{1}{1} {+} \BraKet{\Bf\Af}{1}\big) \invBraKet{\Af}{1} \Bra{\Af}.
   \;=\;   \Af
         -  \Ket{\Af} \invBraKet{\Af}{1} \Bra{\Af}
   \;=\;   \Af \inhId{}
         -  \Ket{\Af} \invBraKet{\Af}{1} \Bra{\Af} \inhId
   \,.
 \eea

\end{corollary}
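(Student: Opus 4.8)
The plan is to derive all three equalities directly from the explicit inverse $\opEE^{-1}$ established in Lemma \ref{Lemma:InvEE}, equation (\ref{invEE}), using only the projector identities (\ref{def_hmgId_inhId}) and elementary bra-ket bookkeeping. First I would left-multiply (\ref{invEE}) by the diagonal operator $\Bf^{-1}$, exploiting $\Bf^{-1}\Ket{\Bf\Af} = \Ket{\Af}$, $\Bf^{-1}\Ket{\Bf} = \Ket{1}$ and $\Bf^{-1}\Bf\Id\Af = \Af$. This collapses the four terms of (\ref{invEE}) into $\Bf^{-1}\opEE^{-1} = \Af - \Ket{\Af}\invBraKet{\Af}{1}\Bra{\Af} + \Ket{1}(\,\cdots)$, where both of the last two terms of (\ref{invEE}) carry a $\Ket{1}$ prefactor after the multiplication. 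Applying $\inhId$ on the left then annihilates those $\Ket{1}$-terms, since $\inhId\Ket{1} = 0$, which immediately yields the first equality $\inhId\Bf^{-1}\opEE^{-1} = \inhId\Af - \inhId\Ket{\Af}\invBraKet{\Af}{1}\Bra{\Af}$.

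For the second equality I would substitute $\inhId = \Id - \hmgId$ and track the homogeneous contributions. The key identity here is $\hmgId\Ket{\Af} = \Ket{1}\hmg{\Af}$, together with $\hmg{\Af}\,\invBraKet{\Af}{1} = \invBraKet{1}{1}$, which follows from $\hmg{\Af} = \BraKet{\Af}{1}/\sVol$ and $\sVol = \BraKet{1}{1}$. Reading a bra as a diagonal operator, $\Bra{\Af} = \Bra{1}\Af$, one recognizes $\Ket{1}\invBraKet{1}{1}\Bra{\Af} = \hmgId\Af$, so the homogeneous piece subtracted from $\inhId\Af$ exactly matches the homogeneous piece subtracted from $\inhId\Ket{\Af}\invBraKet{\Af}{1}\Bra{\Af}$. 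These two $\hmgId\Af$ contributions cancel, leaving $\Af - \Ket{\Af}\invBraKet{\Af}{1}\Bra{\Af}$.

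The third equality I would obtain by showing that the resulting operator $M \defeq \Af - \Ket{\Af}\invBraKet{\Af}{1}\Bra{\Af}$ annihilates constants on the right: $M\Ket{1} = \Ket{\Af} - \Ket{\Af}\invBraKet{\Af}{1}\BraKet{\Af}{1} = 0$. Since $\hmgId = \Ket{1}\invBraKet{1}{1}\Bra{1}$, this gives $M\hmgId = 0$, hence $M = M(\hmgId + \inhId) = M\inhId = \Af\inhId - \Ket{\Af}\invBraKet{\Af}{1}\Bra{\Af}\inhId$, the third form. None of these steps presents a genuine obstacle; the only care required is the consistent treatment of a local function both as a ket/bra vector and as a diagonal multiplication operator, and the bookkeeping of the scalar normalizations $\invBraKet{\Af}{1}$, $\invBraKet{\Bf}{1}$ and $\sVol$ so that the homogeneous parts cancel cleanly. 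In effect this corollary is the $\opEE$-analogue of the symmetric-projector identities proved for $\opGp$ and $\opGm$ in Corollary \ref{Corollary:var_invE_G_properties}, so the same strategy transfers essentially verbatim.
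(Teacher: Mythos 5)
Your proposal is correct and follows essentially the same route as the paper's own (very brief) justification: left-multiplying the explicit inverse (\ref{invEE}) by $\Bf^{-1}$ so that the last two terms acquire $\Ket{1}$ prefactors annihilated by $\inhId$, and then using that $M \defeq \Af - \Ket{\Af}\invBraKet{\Af}{1}\Bra{\Af}$ kills constants on both sides ($\Bra{1}M = M\Ket{1} = 0$) to strip or append the $\inhId$ projectors. Your middle step is done by explicit cancellation of the homogeneous pieces where the paper merely invokes the coincidence of ranks and kernels, but this is a presentational difference, not a different argument.
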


This is analogous to the properties of operators $\opG$ discussed in Corollary \ref{Corollary:var_invE_G_properties}. The first form arises because the last two terms in $\Bf^{-1} \opEE^{-1}$, (\ref{invEE}), are of the form $\Ket{1} \,..\, \Bra{\,.\,}$ and are annihilated by the{ inhomogeneous} projector $\inhId$ acting from the left. The subsequent two forms follow from the fact that the rank and the left and right kernels of $\Af -  \Ket{\Af} \invBraKet{\Af}{1} \Bra{\Af}$ coincide with those of its product with the inhomogeneous projector $\inhId$.
 %%(see the proof of Lemma \ref{Corollary:var_invE_G_properties}).

\begin{corollary}[] \label{Corollary:inhIbE-1f=0}
 The equation \,$\inhId\Bf^{-1}\opEE^{-1} \Ket{f} \teq 0\,$ for $\,\Ket{f}\,$ is equivalent to $\inhId \Ket{f} \teq 0\,$:
 \bea{inhIbE-1f=0}
   \inhId\Bf^{-1}\opEE^{-1} \Ket{f} = 0
    \quad\;\; \Leftrightarrow  \;\;\quad
   \inhId \Ket{f}=0
   \,.
 \eea
\end{corollary}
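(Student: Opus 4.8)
The plan is to reduce the statement to the structural facts already recorded for the operator $\inhId\Bf^{-1}\opEE^{-1}$ in Corollary \ref{Corollary:inhIbE-1}, mirroring the argument used in Corollary \ref{Corollary:var_invE_G_equations} for the operators $\opG$. Writing $\op{H} \defeq \Af - \Ket{\Af}\invBraKet{\Af}{1}\Bra{\Af}$, the three equivalent forms in Corollary \ref{Corollary:inhIbE-1} read $\inhId\Bf^{-1}\opEE^{-1} = \inhId\op{H} = \op{H} = \op{H}\inhId$, so the equation in question is simply $\op{H}\Ket{f} = 0$, and $\op{H}$ enjoys the sandwich identities $\op{H} = \inhId\op{H} = \op{H}\inhId$.

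First I would dispose of the easy implication. If $\inhId\Ket{f} = 0$, then using $\op{H} = \op{H}\inhId$ I immediately obtain $\op{H}\Ket{f} = \op{H}\inhId\Ket{f} = 0$, establishing the direction $\Leftarrow$.

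For the converse I rely on the one-mode degeneracy of $\op{H}$: a direct check gives $\op{H}\Ket{1} = \Ket{\Af} - \Ket{\Af}\invBraKet{\Af}{1}\BraKet{\Af}{1} = 0$ and likewise $\Bra{1}\op{H} = 0$, so both the left and right kernels of $\op{H}$ consist exactly of the constant functions, while on the average-free subspace $\op{H}$ is a rank-one correction of the invertible multiplication operator $\Af$. I then extend $\op{H}$ to the nondegenerate operator $\op{H}' \defeq \op{H} + \hmgId$. Injectivity follows by splitting $\op{H}'\Ket{v} = \op{H}\Ket{v} + \hmgId\Ket{v} = 0$ into its average-free part $\op{H}\Ket{v} = \inhId\op{H}\Ket{v} = 0$ and its homogeneous part $\hmgId\Ket{v} = 0$; the first forces $\Ket{v}$ to be constant, the second then forces $\Ket{v} = 0$. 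Since $\hmgId\inhId = 0$ gives $\op{H}'\inhId = \op{H}\inhId + \hmgId\inhId = \op{H}$, I may write $\inhId = (\op{H}')^{-1}\op{H}'\inhId = (\op{H}')^{-1}\op{H}$. Applying this to $\Ket{f}$ turns $\op{H}\Ket{f} = 0$ into $\inhId\Ket{f} = (\op{H}')^{-1}\op{H}\Ket{f} = 0$, completing the direction $\Rightarrow$, and hence the equivalence (\ref{inhIbE-1f=0}).

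I do not expect a genuine obstacle, since the statement is a near-verbatim analogue of Corollary \ref{Corollary:var_invE_G_equations} with $\opG$ replaced by $\op{H} = \inhId\Bf^{-1}\opEE^{-1}$. The one point demanding care is the invertibility of the extension $\op{H}'$: unlike the symmetric projector case treated for $\opG$, here $\op{H}$ is a \emph{nonsymmetric} rank-one modification of $\Af$, so I would argue injectivity directly from the orthogonal decomposition above rather than invoking symmetry, and note that the boundedness and sign-definiteness of $\Af$ (which guarantee $\BraKet{\Af}{1} \neq 0$ and the existence of $\Af^{-1}$) are precisely what make both the rank-one correction and the extension $\op{H}'$ well defined and invertible.
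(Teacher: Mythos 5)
Your overall architecture --- the easy direction via $\op{H}\teq\op{H}\inhId$, the hard direction via the extension $\op{H}'\defeq\op{H}+\hmgId$ and the identity $\inhId=(\op{H}')^{-1}\op{H}$ --- is the same one the paper uses for the analogous Corollary \ref{Corollary:var_invE_G_equations}, and the easy direction is correct. The problem is the pivotal step of the converse. You state that the direct check $\op{H}\Ket{1}=0$, $\Bra{1}\op{H}=0$ shows the kernels ``consist exactly of the constant functions,'' and you then invoke this (``the first forces $\Ket{v}$ to be constant'') to prove injectivity of $\op{H}'$. But the direct check only establishes that constants lie \emph{in} the kernel; the claim that nothing else does is precisely the content of the corollary, since $\ker\op{H}=\{\text{constants}\}=\ker\inhId$ \emph{is} the statement $\op{H}\Ket{f}=0\Leftrightarrow\inhId\Ket{f}=0$. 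The remark that $\op{H}$ is ``a rank-one correction of the invertible multiplication operator $\Af$'' does not supply the missing step either: $\op{H}$ itself is such a correction and is nonetheless degenerate. As written, the argument is circular --- and if the kernel-exhaustion claim were independently established, the whole $\op{H}'$ detour would be superfluous, because the equivalence would already follow.

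The gap is easy to close, in either of two ways. (i) Use the explicit form from Corollary \ref{Corollary:inhIbE-1}: $\op{H}\Ket{v}=\Af\Ket{v}-\Ket{\Af}\,c=\Af\,\big(\Ket{v}-c\Ket{1}\big)$ with $c\defeq\invBraKet{\Af}{1}\BraKet{\Af}{v}$; since $\Af$ is sign-definite and hence pointwise nonvanishing, $\op{H}\Ket{v}=0$ forces $\Ket{v}=c\Ket{1}$, which is exactly the exhaustion you need (and then you are finished without introducing $\op{H}'$ at all). (ii) Follow the route the paper intends by analogy with Corollary \ref{Corollary:var_invE_G_equations}: since $\Bf^{-1}\opEE^{-1}$ is nondegenerate, $\rank\big(\inhId\Bf^{-1}\opEE^{-1}\big)=\rank\inhId$; combined with $\inhId\Bf^{-1}\opEE^{-1}=\inhId\Bf^{-1}\opEE^{-1}\inhId$ (a consequence of Corollary \ref{Corollary:inhIbE-1}) and the general fact that $\ker(\op{A}\op{B})\supseteq\ker\op{B}$ with equality when the ranks agree, this yields $\ker\big(\inhId\Bf^{-1}\opEE^{-1}\big)=\ker\inhId$ without ever computing the kernel of $\op{H}$ explicitly. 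Either repair makes your proof complete.
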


\if{
\begin{proof}
  Since $\inhId \Bf^{-1} \opEE^{-1}\Ket{1} = 0$ from (\ref{inhIbE-1}), inserting the partition $\Id = \hmgId + \inhId$ on the right implies $\inhId \Bf^{-1} \opEE^{-1} = \inhId \Bf^{-1} \opEE^{-1} \inhId$.

  Right kernels  of operators $\op{A}\op{B}$ and $\op{B}$ coincide if $\rank \op{A}\op{B} = \rank \op{B}$. Thus, equivalence of equations $\inhId\Bf^{-1}\opEE^{-1} \inhId \Ket{f}=0$ and $ \inhId \Ket{f}=0$ follows from validity of the maximal rank condition: $\rank \inhId\Bf^{-1}\opEE^{-1} \inhId = \rank \inhId$.
  This holds since $\Bf^{-1} \opEE^{-1}$ is nondegenerate, hence $\rank \inhId = \rank \inhId \Bf^{-1} \opEE^{-1} $. Due to $\inhId \Bf^{-1} \opEE^{-1} = \inhId \Bf^{-1} \opEE^{-1} \inhId$ one gets  $\rank \inhId = \rank \inhId\Bf^{-1}\opEE^{-1} = \rank \inhId \Bf^{-1} \opEE^{-1} \inhId $.
\end{proof}
}\fi

This result is analogous to Corollary \ref{Corollary:var_invE_G_equations}, and the proof follows similarly. The right kernels of both operators are spanned by constant (homogeneous) functions, while in the average-free functional subspace, both operators are nondegenerate.

%%-------------------------=%        ******         %=-------------------------%%
%%-------------------------=%        ******         %=-------------------------%%
%%-------------------------=%        ******         %=-------------------------%%

\newpage
\section{Calculational Notes on Gauge Symmetries}
 \label{ASect:gauge_calcs}
  %\hspace{\parindent}

  \subsection{{\wGUMG} canonical gauge transformations in \GR-like\HIDE{ gauge} basis}
   \label{ASSect:gauge_calcs_AMG}
    \hspace{\parindent}
To prepare for the Lagrangian reduction along the lines of general relativity in Section \ref{SSect:ConstraintBasisChange} we redefine the configuration space of the canonical system (\ref{paramAction_AMG'}) via ${\replm} \teq \repNl\Fw^{-1}$ and $\ptf_\io \teq \CCof $. This manifests all ADM components of the spacetime metric. Starting from gauge transformations (\ref{gauge_transfs_ham_AMG'}), we derive $\gvar[\geps,\gzeta] \repNl$ and substitute ${\replm} \to \repNl$, $\ptf_\io \to \CCof \teq \ptf_\io$. Thus we obtain the gauge transformations induced by canonical generators (\ref{I_class_constraints_AMG}) for the metric fields of the action  (\ref{paramAction_AMG_fin}):
 \beq{gauge_transfs_ham_AMG''_metric}
 \left|
  \begin{array}{lll}
   \gvar[\geps,\gzeta] \g_{\zn\zm} %\vphantom{\big|}
   &\!=&
   \big( \g_{\zk\zm} \partial_\zn %\gzeta_{\isst}^{\zk}
   + \g_{\zn\zk} \partial_\zm %\gzeta_{\isst}^{\zk}
   + \partial_\zk \g_{\zn\zm} %\gzeta_{\isst}^{\zk}
   \big) %(
   \gzetat^{\zk}
   + \tfrac{2}{\sqrg}\left(\pg_{\zm\zn} -\frac{1}{\Ddim{-}2}\,\trpg\,\g_{\zm\zn} \right)
   \Fw\hmg{\geps} \,,
   \\
   \gvar[\geps,\gzeta] \pg^{\zn\zm} %\vphantom{\big|}
   &\!=&
    \PB{\pg^{\zn\zm}}{\sint \gzetat^{\zn} \Hs_\zn} +
    \PB{\pg^{\zn\zm}}{\sint \Fw \Hl} \hmg{\geps}
    \,,
   \\
  \if{
   \gvar[\geps,\gzeta] \CCf \vphantom{\big|}
   %% used (\ref{gvar_sqrt})
  % &=&
  % \gvar[\geps,\gzeta] (\Fw^{-1} \sqrg^{-1}) \ptf_\io + \Fw^{-1} \sqrg^{-1} \gvar \ptf_\io
  % \\
   &=&
      - % \HIDE{ \sqrg^{-1}\Fw^{-1} (\sqrg\Fw)_{,\zk}}
       (\wwc{+}1)\tfrac{1}{\sqrg} \partial_\zk \sqrg  %%% \sqrg_{,\zk}
        \CCf\, \gzeta_{\isst}^{\zk}
      +  \tfrac{1}{\Ddim{-}2} (\wwc{+}1) \tfrac{\trpg}{\sqrg} \CCf\, \Fw\hmg{\geps}
    \,,
   \\
  }\fi
  \gvar[\geps,\gzeta] \repNl \vphantom{\big|}
  %% used (\ref{gvar_sqrt})
 % &\!=&
 % \gvar (\hmg{\replm} {+} \inh{\replm}) \Fw + \replm \gvar{\Fw}
 % \;=\;
 %   \Fw \TDer{}{\taux} \hmg{\geps} %\tauxdot{\geps}
 %  +
 %   (\Fw^{-1}\repNl)_{,\zn} \Fw \gzetat^\zn
 %  + \wwc \Dvg{\repNs} \Fw \hmg{\geps}
 %  +
 % \wwc \repNl \tfrac{1}{\sqrg} \sqrg_{,\zn} \gzeta_{\isst}^\zn
 %  - \wwc \repNl \tfrac{1}{\Ddim{-}2} \tfrac{\trpg}{\sqrg} \,\Fw\hmg{\geps}
 % \\
   &\!=&
     \Fw \TDer{}{\taux} \hmg{\geps} %\tauxdot{\geps}
   +
    \partial_\zn \repNl %%%\repNl_{\!\!\!\!,\zn}
    \gzetat^\zn
   + \wwc \Dvg{\repNs} \Fw \hmg{\geps}
   - \wwc \repNl \tfrac{1}{\Ddim{-}2} \tfrac{\trpg}{\sqrg} \,\Fw\hmg{\geps}
   \,,
   \\
   \gvar[\geps,\gzeta] \repNs^\zn
     &\!=& \vphantom{\big|}
         \TDer{}{\taux} \gzetat^{\zn}
       %% + \TDer{}{\taux} (U_0^\zn \hmg{\geps})
         - \LieB{\zn}{\repNs}{\gzetat \HIDE{+ U_0 \hmg{\geps}}}
         + \Fw \g^{\zn\zm} \partial_\zm (\Fw^{-1}\repNl) %%%(\Fw^{-1}\repNl)_{,\zm}
         \Fw\hmg{\geps}
         \,,
   \\
  \end{array}
  \right.
  \eeq
supplemented by the transformations of the auxiliary-sector fields:
 \beq{gauge_transfs_ham_AMG''_aux}
 \left|
  \begin{array}{lll}
   \gvar[\geps,\gzeta] \tf^\io \vphantom{\big|}
   &\!\!=\!&
   %%% \hmg{\geps} - \Dvg{\gchi}
   %%% % In the basis of canonical transformations (\ref{canon_transf_AMG'_App})
   %%% \equiv
  % \geps
  % \;=\;
   \hmg{\geps} + \inh{\geps}
   \,,
   %\\
   \qquad
   \gvar[\geps,\gzeta] \CCof
   % &=&
   \;=\; 0 \,,
  % \\
   \qquad
   \gvar[\geps,\gzeta] \Dvg{\lmptf} \vphantom{\big|}
   %&\!=&
   \;=\;
   %%% + \TDer{}{\taux} \Dvg{\gchi} +  \replm_{,\zm} \gzetat^\zm + \Dvg{\repNs}\wwc \hmg{\geps}
   %%% % In the basis of canonical transformations (\ref{canon_transf_AMG'_App})
   %%% =
    - \TDer{}{\taux} \inh{\geps} +
    \partial_\zn (\Fw^{-1}\repNl) %%%(\Fw^{-1}\repNl)_{,\zn}
    \gzetat^\zn + \Dvg{\repNs}\wwc \hmg{\geps}
   \,.
   \\
  \end{array}
  \right.
  \eeq

Gauge transformations in the Lagrangian theory follow from restricting right-hand sides onto the reduction surface (\ref{momenta_ecK_GR}). Before doing so, we express the\HIDE{ canonical} gauge transformations (\ref{gauge_transfs_ham_AMG''_metric}) in a {\GR}-like basis of canonical generators\footnote{
 In canonical general relativity, metric transformations are standardly generated by $\sint \gxl \Hl$ and $\sint \gxs^\zn \Hs_\zn$, (\ref{BGUMGHamiltonianStructure}), (\ref{BGUMGMomentaConstraints}). Here, transformations (\ref{gauge_transfs_ham_AMG''_metric}) are induced by $\sint \geps \Fw \Hl$ and $\sint \gxs^\zn \Hs_\zn$, constrained as $\geps \to \hmg{\geps}$ and $\gxs^\zn \to \gzetat^\zn$.
},
to provide a direct transition to (constrained) covariant gauge transformations of the spacetime metric. Naturally, such {\GR}-like transformations arise from the canonical generators associated with the constraints of the action (\ref{paramAction_AMG_fin}), subject to specific parameter constraints. Nevertheless, it is instructive to derive them directly from (\ref{gauge_transfs_ham_AMG''_metric}).

\HIDE{ (with appropriate constraints on their gauge parameters)}
\if{
 (\ref{replm_ptf_io_rescaling_AMG})
 $ % \bea{}
  \repNl = {\replm} \Fw \HIDE{\sqrg^{\wwc}}\,, \;%\quad
  \CCf = \HIDE{\sqrg^{-\wwc-1}} \Fw^{-1} \sqrg^{-1} \ptf_\io
 $ % \eea
}\fi
\if{
 \beq{gauge_transfs_ham_AMG'_COPY}
 \left|
  \begin{array}{lll}
   \gvar[\geps,\gzeta] \g_{\zn\zm}
   &\!=&
   \tfrac{2\Fw}{\sqrg} \left(\pg_{\zn\zm} -\frac{1}{\Ddim{-}2}\,\trpg\,\g_{\zn\zm} \right)\!
   \hmg{\geps}
  +
   \big( \g_{\zk\zm} {\gzeta_{\isst}^{\zk}}_{,\zn}
   + \g_{\zn\zk}  {\gzeta_{\isst}^{\zk}}_{,\zm}
   + \g_{\zn\zm,\zk} \gzeta_{\isst}^{\zk}
   \big)
   \,, \\
   \gvar[\geps,\gzeta]  \tf^\io
   &\!=&
   \geps
      \,, \qquad  \quad
      \gvar[\geps,\gzeta]  \ptf_\io \;=\; 0
  % \,, \\
  % \gvar[\geps,\gzeta]  \ptf_\io &=& 0
   \,, \\
   \gvar[\geps,\gzeta]  \repNs^\zn
     &\!=&
       \Fw^2 \g^{\zn\zm} \replm_{,\zm} \hmg{\geps}
      + \TDer{}{\taux} \gzetat^{\zn}
      - \LieB{\zn}{\repNs}{\gzetat}
   \,, \\
   \gvar[\geps,\gzeta]  \hmg{\replm}
    &\!=& \TDer{}{\taux} \hmg{\geps}
      \,, \qquad
      \gvar[\geps,\gzeta]  \inhw{\replm}
      \;=\;
      \wwc\, \Dvg[\zk]{\repNs}\hmg{\geps}
      + \replm_{,\zk} \gzetat^\zk
 %  \,, \\
 %  \gvar[\geps,\gzeta]  \inhw{\replm} &=&
 %   \wwc\, \Dvg[\zn]{\repNs}\hmg{\geps}
 %   + \replm_{,\zm} \gzetat^\zm
   \,, \\
   \gvar[\geps,\gzeta]  \Dvg[\zk]{\lmptf}
   \!\! &\!=&
    - \TDer{}{\taux} \inhw{\geps}
    + \wwc \,\Dvg[\zk]{\repNs} \hmg{\geps}
    + \replm_{,\zk} \gzetat^\zk
   \,, \\
  \end{array}
  \right.
 %% Note that in 2024 sign convention for $\gchi$ changed to $-gchi$
 %% Checked by calcs in the Section \ref{ASSect:GaugeInvAMG'}
 \eeq
where partial derivatives act to the right on the next factor or the expression in brackets, and $\Dvg{v}\defeq\Dvgp[\zn]{v}$ stands for a divergence of a spatial vector field.
}\fi

Examining (\ref{gauge_transfs_ham_AMG''_metric}, \ref{gauge_transfs_ham_AMG''_aux}) reveals that the average-free part of $\geps$ affects \emph{only} $\gvar[\geps,\gzeta]\tf^\io$ and $\gvar[\geps,\gzeta]\Dvg{\lmptf}$, canceling in the variation of $\tauxdot{\tf}^\io {\,+\,} \Dvg{\lmptf}$. In the metric-sector transformations (\ref{gauge_transfs_ham_AMG''_metric}), $\hmg{\geps}$\HIDE{ consistently} appears multiplied by $\Fw$, except in $\gvar[\geps,\gzeta] \repNl\!$.
However, the last two terms in $\gvar[\geps,\gzeta] \repNl\!$ can be reorganized:
 \beq{fasfasrcsrg}
    \wwc \big(\Dvg{\repNs} - \tfrac{1}{\Ddim{-}2}
     \HIDE{\Fw\replm} \repNl
     \tfrac{1}{\sqrg} \trpg \big) \Fw \hmg{\geps}
    \,=\,
    \big(\tTDer{}{\taux}{\Fw}
        -  \repNs^\zn
            \partial_\zn \Fw %%%{\Fw}_{,\zm}
    \big) \hmg{\geps}
    - \big(\,\dotuline{{ \tTDer{}{\taux}{\Fw} -   \PB{\Fw}{\sint \Hl}\repNl
    -  \PB{\Fw}{\sint \Hs_\zn} \repNs^\zn \big)\,\hmg{\geps} }} %%^{e.o.m.}\!
    \,,
    \quad
    \vspace{-1mm}
  \eeq
which factors out $\Fw\hmg{\geps}$ combinations in $\gvar[\geps,\gzeta] \repNl\,$, while simultaneously extracting a \emph{trivial} gauge transformation \cite{Henneaux:1992ig},\HIDE{ the underlined term,} proportional to equations of motion:
  \bea{}
    \gvar[\geps,\gzeta]{\repNl}
  %   \equiv \gvar ( \Fw \replm )
     &=& \tTDer{}{\taux} (\Fw \hmg{\geps})
     + \partial_\zn \repNl %%%\repNl_{\!\!\!\!,\zn}
     \gzetat^\zn %% {inh.}
     - \repNs^\zn \partial_\zn \Fw %%%{\Fw}_{,\zm}
     \hmg{\geps}
     -  %%\dotuline{e.o.m.}
     \dotuline{{ \tfrac12 \wwc\Fw\g^{\zm\zn} \tVDer{\SSS^{\iwwc}_{par''}}{\pg^{\zm\zn}}\, \hmg{\geps}}}
     \,.
   \label{gvarNl_with_triv_g.t.}
 \eea
The trivial term reflects a change in the basis of gauge generators (with a dual change in parameters). To consistently discard it from $\gvar[\geps,\gzeta] \repNl$, one must also subtract the corresponding trivial term from $\gvar[\geps,\gzeta] \pg^{\zn\zm}$:
 %The new transformations turns out to be canonical transformations in some other constraint basis.
  \bea{gvarpg_with_triv_g.t.}
     \gvar[{\geps},{\gzeta}] \pg^{\zn\zm}
     &\!=&
     %%\gvar[\FF{\geps},{\gzeta}]^{^{(\GR)}} \pg^{\zn\zm} +
     \PB{\pg^{\zn\zm}}{\sint \gzetat^{\zn} \Hs_\zn} +
     \PB{\pg^{\zn\zm}}{\sint \Hl} \Fw \hmg{\geps} +
     \dashuline{\PB{\pg^{\zn\zm}}{\sint \Fw} \Hl \hmg{\geps}}
     + \PB{\pg^{\zn\zm}}{\sint \dashuline{\,\Fw\,} \Fw^{-1}\CCof} \hmg{\geps}
    \nonumber \\
     &&
     \;=\;
      \PB{\pg^{\zn\zm}}{\sint (\Hl {+} \Fw^{-1}\CCof)} \Fw \hmg{\geps} + \PB{\pg^{\zn\zm}}{\sint \Hs_\zn} \gzetat^\zn
       + \HIDE{e.o.m.'}
       \dashuline{{ \tfrac12\wwc\Fw \g^{\zm\zn}\, \tVDer{\SSS^{\iwwc}_{par''}}{\repNl}\,\hmg{\geps} }} \,.
      \nonumber
  \eea
  %% This leads to a canonical form of the\HIDE{ gauge} transformation in which, instead of the generator $\ptf_\io + \hmg{\Fw \Hl}$, one uses the generator $\Hl + \Fw^{-1} \ptf_\io$, with the (a posteriori) constrained gauge parameter $\gxl \to \Fw \hmg{\geps}$.
We added the identity $0 = \PB{\pg^{\zn\zm}}{\sint \CCof } \hmg{\geps} = \PB{\pg^{\zn\zm}}{\sint \Fw^{-1} \CCof}{\,\Fw\,}  \hmg{\geps} + \PB{\pg^{\zn\zm}}{\sint \Fw } \Fw^{-1} \CCof \,\hmg{\geps}$
to extract a required trivial gauge transformation $\propto \tVDer{\SSS^{\iwwc}_{par''}}{\repNl}$. This leads to a\HIDE{ canonical} transformation generated by $\sint \gxl(\Hl {+\,} \Fw^{-1} \CCof)$
 %instead of $\ptf_\io {\,+\,} \hmg{\Fw \Hl}$,
with the (a posteriori) constrained gauge parameter $\gxl \to \Fw \hmg{\geps}$.

Similarly, the transformation $\gvar[\geps,\gzeta] \repNs^\zn$ can be cast to the (restricted) {\GR} form (\ref{GR_canon_gauge_transfs}) by noticing that
 %Analogously one can cast $\gvar[\geps,\gzeta] \repNs^\zn$ to the form of the restricted {\GR} transformations (\ref{GR_canon_gauge_transfs})\HIDE{ (with the same constraints of \GR\ canonical gauge parameter)} if one notice
$
  \Fw \g^{\zn\zm} \partial_\zm (\Fw^{-1}\repNl) %%%(\Fw^{-1}\repNl)_{,\zm}
  \Fw\hmg{\geps}
  =
  - \g^{\zn\zm} \big( \repNl \ader_\zm \Fw\hmg{\geps}\big)
$, which gives
  \bea{}
    \gvar[\geps,\gzeta] \repNs^\zn
    &=& \tTDer{}{\taux} \gzetat^{\zn}
       - \LieB{\zn}{\repNs}{\gzetat}
       - (\repNl \ader_\zm \Fw\hmg{\geps}) \g^{\zm\zn}
       \,.
   \label{gvarNs}
  \eea

Thus, omitting trivial transformations, for the metric fields one gets the new canonical form of the symmetry with field-dependent gauge parameters $\Fw\hmg{\geps}$ and $\gzetat$:
 \beq{gauge_transfs_ham_AMG''_fin}
  \begin{array}{|lll}
   \gvarp[\geps,\gzeta] \g_{\zn\zm}
   &=& \big( \g_{\zk\zm} \partial_\zn %\gzeta_{\isst}^{\zk}
   + \g_{\zn\zk} \partial_\zm %\gzeta_{\isst}^{\zk}
   + (\partial_\zk \g_{\zn\zm}) %\gzeta_{\isst}^{\zk}
   \big) %(
   \gzetat^{\zk}
   + \tfrac{2}{\sqrg}\left(\pg_{\zm\zn} -\frac{1}{\Ddim{-}2}\,\trpg\,\g_{\zm\zn} \right)
   \Fw\hmg{\geps} ,
   \\
   \gvarp[\geps,\gzeta] \pg^{\zn\zm}  %% fin
          &=&
      \gvar[\Fw\hmg{\geps}]^{^{(\GR)}} \pg^{\zn\zm}
      +\PB{\pg^{\zn\zm}}{\sint \Fw^{-1}\ptf_\io} \Fw \hmg{\geps}
      \,, \vphantom{\big|_I^I}
   \\
   \if{
   \gvar[\geps,\gzeta] \CCf
   &=&
      -
       (\wwc{+}1)\tfrac{1}{\sqrg} \sqrg_{,\zk}
        \CCf\, \gzeta_{\isst}^{\zk}
      +  \tfrac{1}{\Ddim{-}2} (\wwc{+}1) \tfrac{\trpg}{\sqrg} \CCf\, \Fw\hmg{\geps}
    \,,
   \\
    }\fi
   \gvarp[\geps,\gzeta]{\repNl} %% fin
     &=& \tTDer{}{\taux} (\Fw \hmg{\geps})
     + \partial_\zn \repNl %%%\repNl_{\!\!\!\!,\zn}
        \gzetat^\zn %% {inh.}
     - \repNs^\zn \partial_\zn \Fw %%%{\Fw}_{,\zn}
     \hmg{\geps}
   \,,  \vphantom{\big|^I}
   \\
   \gvarp[\geps,\gzeta] \repNs^\zn %% fin
    &=& \tTDer{}{\taux} \gzetat^{\zn}
       - \LieB{\zn}{\repNs}{\gzetat}
       - (\repNl \ader_\zm \Fw\hmg{\geps}) \g^{\zm\zn}
    \,.
   \\
   \if{
   \gvar[\geps,\gzeta] \tf^\io &=&
   %%% \hmg{\geps} - \Dvg{\gchi}
   %%% % In the basis of canonical transformations (\ref{canon_transf_AMG'_App})
   %%% \equiv
   \geps,
   \\
   \gvar[\geps,\gzeta] \Dvg{\lmptf} &=&
   %%% + \TDer{}{\taux} \Dvg{\gchi} +  \replm_{,\zm} \gzetat^\zm + \Dvg{\repNs}\wwc \hmg{\geps}
   %%% % In the basis of canonical transformations (\ref{canon_transf_AMG'_App})
   %%% =
    - \TDer{}{\taux} \inh{\geps} +  (\Fw^{-1}\repNl)_{,\zn} \gzetat^\zn + \Dvg{\repNs}\wwc \hmg{\geps}
   \;,
   \\
   }\fi
  \end{array}
  \eeq
These modified\HIDE{ canonical} transformations are induced by the canonical generator (\ref{canon_generator_AMG''_fin}),
%%(\ref{gauge_transfs_ham_AMG'})
  \beq{canon_generator_AMG''_fin_App}
    \gvarp[\geps,\gzeta] {(\,.\,)}
    = \PB{\,.\,}{\sint ( \Hl {+} \Fw^{-1} \CCof )} \Fw\hmg{\geps}
    + \PB{\,.\,}{\sint \Hs_\zn} \gzetat^\zn
    + \PB{\,.\,}{\sint \inh{\CCof} } \inh{\geps}
    %% + \PB{\,.\,}{\sint \ptf_{\io,\zm}} \gchi^\zm
    \,.
  \eeq
which arises from the constraint set of the action (\ref{paramAction_AMG_fin}).
%where the third generator term\HIDE{ on the right-hand side} with parameter $\inh{\geps}\equiv\Dvg{\gchi}$ acts solely on the auxiliary-sector fields.
These transformations for the (Lagrangian) metric fields $\g_{\zn\zm},\repNl,\repNs^\zn$ coincide with those in general relativity, (\ref{GR_canon_gauge_transfs}), under the parameter restrictions $\gxl \to \Fw \hmg{\geps}$, $\,\gxs^\zn \to \gzetat^\zn$ (\ref{wGUMG_gauge_restriction_from_GR}), although $\gvarp[\geps,\gzeta] \pg^{\zn\zm}$ contains an extra non-GR term proportional to $\ptf_\io$.

 %%Additional motivation to explicate the transition from the transformations (\ref{gauge_transfs_ham_AMG''_metric}) based on generator with $\sint \Fw\Hl \geps$ to (\ref{gauge_transfs_ham_AMG''_fin}) induced by generator (\ref{canon_generator_AMG''_fin_App}) with $\PB{\,.\,}{\sint ( \Hl {+} \Fw^{-1} \ptf_{\io} )} \Fw\hmg{\geps}$ was to explicate the trivial gauge transformation, extracted in  (\ref{gvarNl_with_triv_g.t.}) and (\ref{gvarpg_with_triv_g.t.}), which differ these transformations. It is interesting from the perspective of the restricted theory approach \cite{Barvinsky:2022guw}, because the existence and form of the trivial transformation suggests the possible source of the breakdown of the applicability to the {\GUMG} case of the general theorem on  reduction of the closed gauge algebras for the restricted theories \cite{Barvinsky:2022guw}.

The discussion of the transition from transformations %% (\ref{gauge_transfs_ham_AMG''_metric})
with generator $\PB{\,.\,}{\sint(\Fw\Hl {+} \CCof)}\geps$, to symmetries with
 %%(\ref{gauge_transfs_ham_AMG''_fin}), induced by (\ref{canon_generator_AMG''_fin_App}) with
$\PB{\,.\,}{\sint ( \Hl {+} \Fw^{-1} \CCof )} \Fw{\geps}$, was additionally motivated by the need to clarify the form of the trivial gauge transformation isolated in (\ref{gvarNl_with_triv_g.t.}) and (\ref{gvarpg_with_triv_g.t.}). This distinction of the two transformations is significant in the context of the restricted-theory approach \cite{Barvinsky:2022guw}, as the existence and form of the  {\wGUMG} trivial transformation (which is not a trivial gauge transformation in the parental theory) suggests a potential source of the breakdown in applying the general theorem on restriction of the closed gauge algebras \cite{Barvinsky:2022guw} for {\GUMG} cases.

%These subtracted transformation is trivial with respect to {\wGUMG} set of equations of motion and its subtraction is legitimate in {\wGUMG}. This while from the perspective of the parental theory (\GR)

  \subsection{{\GUMG} unrestricted canonical variation}
   \label{ASSect:gauge_calcs_GUMG}
    \hspace{\parindent}
    \newcommand{\hmgd}[1]{{\hspace{1pt}\hmg{{#1}}\vphantom{1^{l^l}}\hspace{1pt}}}
\!\!The anomaly (\ref{anomaly_GUMG_init}) of the unrestricted canonical variation of the {\GUMG} action is given by
  \bea{anomaly_GUMG_init_COPY}
   \!\!- A
    &\!=\!&
    \HIDE{+}
   %  \PB{\sint (\repNl \Hl {\,+\,} \repNs^\zn\Hs_\zn)\,}{\sint \gxl \FF^{-1}\opinvE \hmgs{\CCof}}
    \!\sint
      %(
    \gxs^\zn {\repNl}_{\!\!\!\!,\zn}
   %  {\,-\,}
   %  \sint\repNs^\zn\gxl_{,\zn}
     %)
     \FF^{-1}\opinvE \hmgs{\CCof}
  % \nonumber\\
  %  && \quad
     +
     \PB{\sint \repNl \FF^{-1}\opinvE \hmgs{\CCof}\,}{\sint (\gxl \Hl {\,+\,} \gxs^\zn \Hs_\zn)}
   %% \,\RiB
    \,-\,
    \big(\repNl\!,\repNs^\zn \leftrightarrow  \gxl\!, \gxs^\zn\big)
    ,
    \quad
   \eea
where all gauge parameters are yet unconstrained functions.

 \newpar

Consider the first contribution in (\ref{anomaly_GUMG_init_COPY}) coming from the uncompensated part of unconstrained gauge variation of $\repNl$. Integrating by parts and using $\opinvE \!\hmgs{\CCof} \!\teq \WW^{-1}\!  \tfrac{\hmgs{\CCof}}{\hmg{\WW^{-1}}}$, (\ref{opinvE_properties}), imply
  \bea{qrwrwettrw_varNl}
     &&
     \HIDE{+}
     \sint  \gxs^\zn {\repNl}_{\!\!\!\!,\zn} \FF^{-1} \opinvE \hmgs{\CCof}
     %%
     %% - \sint \repNs^\zn\gxl_{,\zn} \FF^{-1}\opinvE \hmgs{\CCof}
     \,-\,
      \big(\repNl\!,\repNs^\zn \leftrightarrow  \gxl\!, \gxs^\zn\big)
    %\nonumber
    \\
  \if{
    &=&
    \HIDE{+} \sint \gxs^\zn \repNl_{,\zn} \FF^{-1}
    \WW^{-1}\tfrac{\hmgs{\CCof}}{\hmg{\WW^{-1}}}
    %%\opinvE \hmgs{\CCof}
    - \sint \repNs^\zn\gxl_{,\zn} \FF^{-1}
    \WW^{-1}\tfrac{\hmgs{\CCof}}{\hmg{\WW^{-1}}}
    %%\opinvE \hmgs{\CCof}
    \nonumber\\
  }\fi
  \if{
    &=&
    \HIDE{+}
    \sint \repNl_{,\zn} \FF^{-1}
    \WW^{-1} \gxs^\zn
    {\cdot}\tfrac{\hmgs{\CCof}}{\hmg{\WW^{-1}}}
    %%\opinvE \hmgs{\CCof}
    - \sint\gxl_{,\zn} \FF^{-1}
    \WW^{-1} \repNs^\zn
    {\cdot}\tfrac{\hmgs{\CCof}}{\hmg{\WW^{-1}}}
    %%\opinvE \hmgs{\CCof}
    \nonumber\\
    &=&
    \HIDE{+} \sint \partial_\zn \big( \repNl \FF^{-1}
    \WW^{-1} \gxs^\zn \big)
    {\cdot}\tfrac{\hmgs{\CCof}}{\hmg{\WW^{-1}}}
    - \sint \partial_\zn \big( \gxl \FF^{-1}
    \WW^{-1} \repNs^\zn \big)
    {\cdot}\tfrac{\hmgs{\CCof}}{\hmg{\WW^{-1}}}
    \nonumber\\
    &&
    - \sint \repNl \partial_\zn \big(\FF^{-1}
    \WW^{-1} \big) \gxs^\zn
    {\cdot}\tfrac{\hmgs{\CCof}}{\hmg{\WW^{-1}}}
    + \sint \gxl \partial_\zn \big(\FF^{-1}
    \WW^{-1} \big) \repNs^\zn
    {\cdot}\tfrac{\hmgs{\CCof}}{\hmg{\WW^{-1}}}
    \nonumber\\
    &&
    - \sint \repNl \FF^{-1}
    \WW^{-1} \Dvg{\gxs}
    {\cdot}\tfrac{\hmgs{\CCof}}{\hmg{\WW^{-1}}}
    + \sint \gxl \FF^{-1}
    \WW^{-1} \Dvg{\repNs}
    {\cdot}\tfrac{\hmgs{\CCof}}{\hmg{\WW^{-1}}}
    \nonumber\\
  }\fi
  %  &=&
  %  t.d. +
  %  \nonumber\\
    &&
    =\;
    \HIDE{+} \sint \repNl \FF^{-1}
    \WW^{-1}  \big( \partial_\zn {\ln\FF} {+} \partial_\zn {\ln\WW} \big)
      \gxs^\zn
      {\,} \tfrac{\hmgs{\CCof}}{\hmg{\WW^{-1}}}
   %   - \sint \gxl \FF^{-1}
   %   \WW^{-1} \big( \partial_\zn {\ln\FF} {+} \partial_\zn {\ln\WW} \big)
   % \repNs^\zn
   % {\,} \tfrac{\hmgs{\CCof}}{\hmg{\WW^{-1}}}
   % \nonumber\\
   % &&
    - \sint \repNl \FF^{-1}
    \WW^{-1} \Dvg{\gxs}
    {\,} \tfrac{\hmgs{\CCof}}{\hmg{\WW^{-1}}}
   % + \sint \gxl \FF^{-1}
   % \WW^{-1} \Dvg{\repNs}
   % {\,} \tfrac{\hmgs{\CCof}}{\hmg{\WW^{-1}}}
    %
     \,-\,
      \big(\repNl\!,\repNs^\zn \leftrightarrow  \gxl\!, \gxs^\zn\big)
    \,,
   \nonumber %\\
  \eea
where we used $\partial_\zn \big(\FF^{-1}\WW^{-1} \big) \teq \FF^{-1} \WW^{-1} \big({-} \partial_\zn {\ln\FF} {-} \partial_\zn {\ln\WW}\big)$.

\newpar

The Poisson bracket with $\Hs_\zn$ in (\ref{anomaly_GUMG_init_COPY}) varies the phase-space variables (here it is only $\g_{\zm\zn}$) under spatial diffeomorphisms, and yields two contributions, corresponding to varying $\FF^{-1}$ and $\opinvE$:
 $ % \bea{qrwrwettrw1_Hs}
   %  &&
    \HIDE{+}
     \PB{\sint \repNl \FF^{-1}\opinvE \hmgs{\CCof}\,}{\sint \gxs^\zn \Hs_\zn }
   % \,-\,
   % \big(\repNl \leftrightarrow  \gxl\big)
   %% \big(\repNl\!,\repNs^\zn \leftrightarrow  \gxl\!, \gxs^\zn\big)
   %% \nonumber\\
  \if{
    &=&
    - \BraKet{ \repNl \FF^{-1} \opGm}{ \gxs^\zn_{\,;\zn} {} \tTDer{\ln\WW}{\ln\sqrg} }
    \,\tfrac{\hmgs{\CCof}}{\hmg{\WW^{-1}}}
    + \BraKet{\gxl \FF^{-1} \opGm}{ \repNs^\zn_{\,;\zn} {} \tTDer{\ln\WW}{\ln\sqrg} }
    \,\tfrac{\hmgs{\CCof}}{\hmg{\WW^{-1}}}
    \nonumber\\
    &&
    - \BraKet{ \repNl \FF^{-1} \WW \WW^{-1}}{ \gxs^\zn_{\,;\zn} {} }
    \,\tfrac{\hmgs{\CCof}}{\hmg{\WW^{-1}}}
    + \BraKet{\gxl \FF^{-1} \WW \WW^{-1}}{ \repNs^\zn_{\,;\zn} {} }
    \,\tfrac{\hmgs{\CCof}}{\hmg{\WW^{-1}}}
    \nonumber\\
  }\fi
  %  &&
   =
    - \sint { \repNl \FF^{-1} \WW^{-1} \gxs^\zn_{\,;\zn} {} }
     \WW {} \tfrac{\hmgs{\CCof}}{\hmgd{\WW^{-1}}}
   %  + \sint { \gxl \FF^{-1} \WW \repNs^\zn_{\,;\zn} {} }
   %   \, \WW^{-1}\!  {} \tfrac{\hmgs{\CCof}}{\hmgd{\WW^{-1}}}
 %   \nonumber\\
 %   &&
    - \sint { \repNl \FF^{-1} \opGm}{\, \gxs^\zn_{\,;\zn} {} \tTDer{\ln\WW}{\ln\sqrg} }   \,{} \tfrac{\hmgs{\CCof}}{\hmgd{\WW^{-1}}}
   %  + \sint {\gxl \FF^{-1} \opGm}{\, \repNs^\zn_{\,;\zn} {} \tTDer{\ln\WW}{\ln\sqrg} }   \,{} \tfrac{\hmgs{\CCof}}{\hmgd{\WW^{-1}}}
    %\nonumber\\
  \,.
  $ % \eea
The canonical action of $\Hs_\zn$ on $\FF^{-1}$ is taken from (\ref{PB_sqrg_Hs}). The variation of the\HIDE{ nonlocal} operator $\opinvE$ gives rise to the operator $\opGm$, where we used (\ref{Corollary:invE_PB_Hl_Hs_statement}) from Corollary \ref{Corollary:invE_PB}.
 %The first line on the right-hand side comes from action of the Poisson brackets on $\FF^{-1}(\argsqrg)$ in the left argument, while the second line is the result of acting on $\opinvE$ in the left argument, which gives rise to $\opGm\,$.

Decomposing $\opGm$ into a difference of a local part and a ``longitudinal'' projector form: $\sint f \opGm h \teq \sint f \WW^{-1} h - \sint f \WW^{-1} \frac1{\sint \WW^{-1}} \sint \WW^{-1} h,$ and noting that $\eta^\zn_{\,;\zn} \teq \sqrg^{-1}\partial_{\zn}(\sqrg\eta^\zn)$, allows one to write  $\,\eta^\zn_{\,;\zn} {} \WW \teq \eta^\zn\partial_\zn \ln \FF + \Dvg{\eta} \WW\,$ and $\,\eta^\zn_{\,;\zn} {} \tTDer{\ln\WW}{\ln\sqrg}  \teq  \eta^\zn\partial_\zn \ln \WW {\,+\,} \Dvg{\eta} \tTDer{\ln\WW}{\ln\sqrg}$, so that
  \bea{qrwrwettrw2_Hs}
    &&
    \HIDE{+}
     \PB{\sint \repNl \FF^{-1}\opinvE \hmgs{\CCof}\,}{\sint \gxs^\zn \Hs_\zn }
    \,-\,
     \big(\repNl\!,\repNs^\zn \leftrightarrow  \gxl\!, \gxs^\zn\big)
    %\nonumber
    \\
    &&=\,
    - \sint { \repNl \FF^{-1} \WW^{-1} {\,}(\gxs^\zn\partial_\zn \ln \FF {\,+\,}\Dvg{\gxs} \WW) }
     \tfrac{\hmgs{\CCof}}{\hmgd{\WW^{-1}}}
     % + \sint { \gxl \FF^{-1} \WW^{-1} {\,} (\repNs^\zn\partial_\zn \ln \FF {\,+\,}\Dvg{\repNs} \WW) }
     % \tfrac{\hmgs{\CCof}}{\hmgd{\WW^{-1}}}
   % \nonumber\\
   % &&
    \,-\,
     \sint { \repNl \FF^{-1} \WW^{-1} {\,} ( \gxs^\zn\partial_\zn \ln \WW {\,+\,} \Dvg{\gxs} \tTDer{\ln\WW}{\ln\sqrg} )} \,{} \tfrac{\hmgs{\CCof}}{\hmgd{\WW^{-1}}}
     % + \sint {\gxl \FF^{-1} \WW^{-1} {\,} ( \repNs^\zn\partial_\zn \ln \WW {\,+\,} \Dvg{\repNs} \tTDer{\ln\WW}{\ln\sqrg} )} \,{}
     % \tfrac{\hmgs{\CCof}}{\hmgd{\WW^{-1}}}
    \nonumber\\
    &&
    \quad + \sint {\repNl \FF^{-1} \WW^{-1}} {\tfrac1{\sint \WW^{-1}}} {\sint \WW^{-1}
    ( \gxs^\zn\partial_\zn \ln \WW {\,+\,} \Dvg{\gxs} \tTDer{\ln\WW}{\ln\sqrg} ) }     \,{} \tfrac{\hmgs{\CCof}}{\hmgd{\WW^{-1}}}
     % - \sint {\gxl \FF^{-1} \WW^{-1}} {\tfrac1{\sint \WW^{-1}}} {\sint \WW^{-1} ( \repNs^\zn\partial_\zn \ln \WW {\,+\,} \Dvg{\repNs} \tTDer{\ln\WW}{\ln\sqrg} ) }  \,{}
     % \tfrac{\hmgs{\CCof}}{\hmgd{\WW^{-1}}}
   %
    \:-\:
     \big(\repNl\!,\repNs^\zn \leftrightarrow  \gxl\!, \gxs^\zn\big)
   \,.
   \nonumber
  \eea

\newpar
Summing the two contributions (\ref{qrwrwettrw_varNl}) and (\ref{qrwrwettrw2_Hs}), after cancellations, the residual terms are gathered in an integral structure with the $\opGm$ operator. In the local parts, terms proportional to $\partial_\zn \ln\FF$ and $\partial_\zn \ln\WW$ cancel, leaving only divergence terms with divergences, with coefficients summing to the combination $\OOmega$, (\ref{def_OOmega_TTheta}).
In the nonlocal part of (\ref{qrwrwettrw2_Hs}), integrating by parts in the first term yields:  $\sint \WW^{-1} \big( \eta^\zn\partial_\zn \ln \WW {\,+\,} \Dvg{\eta} \tTDer{\ln\WW}{\ln\sqrg} \big) = \sint \WW^{-1}\big(\Dvg{\eta} {\,+\,} \Dvg{\eta} \tTDer{\ln\WW}{\ln\sqrg} \big)$
to which we can add a total derivative $\sint \WW^{-1} \Dvg{\eta} \WW$ to factor out $\OOmega$: $\sint \WW^{-1} \Dvg{\eta} \OOmega$.
 % Summing\HIDE{ these} two contributions (\ref{qrwrwettrw_varNl}) and (\ref{qrwrwettrw2_Hs}) after certain cancellations the residual terms will gather in an integral structure with $\opGm$ operator. In local parts terms proportional to $ \partial_\zn {\ln\FF} \HIDE{\gxs^\zn}$ and $ \partial_\zn {\ln\WW} \HIDE{\gxs^\zn}$ cancel, leaving only terms, proportional to divergences\HIDE{ $\Dvg{\gxs}$}, coefficients at which sums to combination $\OOmega$, (\ref{def_OOmega_TTheta}).
 % In the nonlocal part of (\ref{qrwrwettrw2_Hs}), after integrating by parts of the first term, $\sint \WW^{-1} \big( \eta^\zn\partial_\zn \ln \WW {\,+\,} \Dvg{\eta} \tTDer{\ln\WW}{\ln\sqrg} \big)$ can be represented as $\sint \WW^{-1}\big(\Dvg{\eta} {\,+\,} \Dvg{\eta} \tTDer{\ln\WW}{\ln\sqrg} \big)$ to which one can add the total spatial derivative term $ \sint \WW^{-1} \Dvg{\eta} \WW$ to gather the $\OOmega$ factor: $\sint \WW^{-1} \Dvg{\eta} \OOmega$.
 %
Thus, the sum of (\ref{qrwrwettrw_varNl}) and (\ref{qrwrwettrw2_Hs}) becomes
  \bea{qrwrwettrw_varNl_Hs}
    &&
    \HIDE{+}
     \sint  \gxs^\zn {\repNl}_{\!\!\!\!,\zn} \FF^{-1} \opinvE \hmgs{\CCof}
    +
     \PB{\sint \repNl \FF^{-1}\opinvE \hmgs{\CCof}\,}{\sint \gxs^\zn \Hs_\zn }
    \,-\,
    \big(\repNl\!,\repNs^\zn \leftrightarrow  \gxl\!, \gxs^\zn\big)
    %\nonumber
    \\
  \if{
    &=&
    - \sint { \repNl \FF^{-1} \WW^{-1} {\,}( \cancel{\gxs^\zn\partial_\zn \ln \FF} {\,+\,}\Dvg{\gxs} \WW) }
     \tfrac{\hmgs{\CCof}}{\hmgd{\WW^{-1}}}
    + \sint { \gxl \FF^{-1} \WW^{-1} {\,} ( \bcancel{\repNs^\zn\partial_\zn \ln \FF} {\,+\,}\Dvg{\repNs} \WW) }
     \tfrac{\hmgs{\CCof}}{\hmgd{\WW^{-1}}}
    \nonumber\\
    &&
    - \sint { \repNl \FF^{-1} \WW^{-1} {\,} ( \cancel{\gxs^\zn\partial_\zn \ln \WW} {\,+\,} \Dvg{\gxs} \tTDer{\ln\WW}{\ln\sqrg} )} \,{} \tfrac{\hmgs{\CCof}}{\hmgd{\WW^{-1}}}
    + \sint {\gxl \FF^{-1} \WW^{-1} {\,} ( \bcancel{\repNs^\zn\partial_\zn \ln \WW} {\,+\,} \Dvg{\repNs} \tTDer{\ln\WW}{\ln\sqrg} )} \,{} \tfrac{\hmgs{\CCof}}{\hmgd{\WW^{-1}}}
    \nonumber\\
   &&
    + \sint {\repNl \FF^{-1} \WW^{-1}} {\tfrac1{\sint \WW^{-1}}} {\sint \WW^{-1}
      ( \gxs^\zn\partial_\zn \ln \WW {\,+\,} \Dvg{\gxs} \tTDer{\ln\WW}{\ln\sqrg} ) }     \,{} \tfrac{\hmgs{\CCof}}{\hmgd{\WW^{-1}}}
    - \sint {\gxl \FF^{-1} \WW^{-1}} {\tfrac1{\sint \WW^{-1}}} {\sint \WW^{-1}
      ( \repNs^\zn\partial_\zn \ln \WW {\,+\,} \Dvg{\repNs} \tTDer{\ln\WW}{\ln\sqrg} ) }    \,{} \tfrac{\hmgs{\CCof}}{\hmgd{\WW^{-1}}}
    \nonumber\\
    &+&
    t.d.1 +
    \nonumber\\
    &&
    {+} \sint \repNl \FF^{-1}
    \WW^{-1}  \big(\cancel{ \partial_\zn {\ln\FF} {+} \partial_\zn {\ln\WW} }\big)
    \gxs^\zn
    {} \tfrac{\hmgs{\CCof}}{\hmg{\WW^{-1}}}
    - \sint \gxl \FF^{-1}
    \WW^{-1} \big(\bcancel{ \partial_\zn {\ln\FF} {+} \partial_\zn {\ln\WW} }\big)
    \repNs^\zn
    {} \tfrac{\hmgs{\CCof}}{\hmg{\WW^{-1}}}
    \nonumber\\
    &&
    - \sint \repNl \FF^{-1}
    \WW^{-1} \Dvg{\gxs}
    {} \tfrac{\hmgs{\CCof}}{\hmg{\WW^{-1}}}
    + \sint \gxl \FF^{-1}
    \WW^{-1} \Dvg{\repNs}
    {} \tfrac{\hmgs{\CCof}}{\hmg{\WW^{-1}}}
    \nonumber\\
  }\fi
   && =\;
    - \sint \repNl \FF^{-1} \WW^{-1} \Dvg{\gxs}\, \OOmega
    {\,} \tfrac{\hmgs{\CCof}}{\hmg{\WW^{-1}}}
     %  + \sint \gxl \FF^{-1} \WW^{-1} \Dvg{\repNs} \OOmega
     %  {} \tfrac{\hmgs{\CCof}}{\hmg{\WW^{-1}}}
  %%  \nonumber\\
  %%  &&
    + \sint {\repNl \FF^{-1} \WW^{-1}} {\tfrac1{\sint \WW^{-1}}} {\sint \WW^{-1}
      \Dvg{\gxs} \,\OOmega} {\,} \tfrac{\hmgs{\CCof}}{\hmgd{\WW^{-1}}}
      %  - \sint {\gxl \FF^{-1} \WW^{-1}} {\tfrac1{\sint \WW^{-1}}} {\sint \WW^{-1}
      %  \Dvg{\repNs} \OOmega} \,{} \tfrac{\hmgs{\CCof}}{\hmgd{\WW^{-1}}}
   % \nonumber\\
   % &+&  t.d.1 + t.d.2
    \,-\,
    \big(\repNl\!,\repNs^\zn \leftrightarrow  \gxl\!, \gxs^\zn\big)
    \nonumber\\
   && =\;
    - \sint \repNl \FF^{-1}   \opGm \Dvg{\gxs}\, \OOmega
    {\,} \tfrac{\hmgs{\CCof}}{\hmg{\WW^{-1}}}
    %  + \sint \gxl \FF^{-1}   \opGm \Dvg{\repNs} \OOmega
    %  {} \tfrac{\hmgs{\CCof}}{\hmg{\WW^{-1}}}
    \,-\,
    \big(\repNl\!,\repNs^\zn \leftrightarrow  \gxl\!, \gxs^\zn\big)
    \nonumber
   % \\
   % &+&  t.d.1 + t.d.2
    \,.
  \eea

\newpar
The cancellations and appearance of a compact structure proportional to the divergence of the parameters of spatial diffeomorphism transformation is not accidental. These two anomaly terms correspond to the action of spatial Lie derivatives:
  \bea{Lie_varNl_Hs}
    %&&
    \HIDE{+}
     \sint  \gxs^\zn {\repNl}_{\!\!\!\!,\zn} \FF^{-1} \opinvE \HIDE{ \hmgs{\CCof} }
    +
     \PB{\sint \repNl \FF^{-1} \opinvE \HIDE{ \hmgs{\CCof}} {\,}}{\sint \gxs^\zn \Hs_\zn}
 %   \,-\,
 %   \big(\repNl\!,\repNs^\zn \leftrightarrow  \gxl\!, \gxs^\zn\big)
   % \nonumber
  %  \\
  %  &&
    \;=\;
    \HIDE{+}
     \sint \LieD{\gxs}{(\repNl)}\, \FF^{-1} \opinvE \HIDE{ \hmgs{\CCof} }
    +
     \sint \repNl\, \LieD{\gxs} {(\FF^{-1} \opinvE)} \HIDE{ \hmgs{\CCof} }
 %   \,-\,
 %   \big(\repNl\!,\repNs^\zn \leftrightarrow  \gxl\!, \gxs^\zn\big)
   % \nonumber\\
   % &=&
 %  \;=\;
 %   \HIDE{+}
 %    \sint \LieD{\gxs}{\big(\repNl \FF^{-1} \opinvE \big)} \HIDE{ \hmgs{\CCof} }
 %   \,-\,
 %   \big(\repNl\!,\repNs^\zn \leftrightarrow  \gxl\!, \gxs^\zn\big)
 % \nonumber
  \,,
  \eea
and the nonvanishing contribution\HIDE{ which is not a total derivative} of $\sint \LieD{\gxs}{\big(\repNl \FF^{-1} \opinvE \big)}$ just reflects the noncovariance of the $(\repNl)\, \FF^{-1} \opinvE $ structure under unrestricted spatial diffeomorphisms.
  % If the derivative acted on the spatial covariant object (scalar density) this would be total derivative. Possibly non-null term just measures the anomaly. \TODO{Check, the Lie derivative does not depend on connection (for tensors). However it feels the tensorial structure.}

\newpar

The third contribution --- the Poisson bracket with the $\Hl$ structure --- is the canonical transformation of $(\repNl)\, \FF^{-1} \opinvE $ generated by $\Hl$. It enters the anomaly in the antisymmetrized combination of  $\repNl$ and $\gxl$. This antisymmetrization cancels the canonical action of $\Hl$ on $\FF^{-1}$:\,  $\PB{ \sint \repNl \FF^{-1} f}{\sint \gxl \Hl}\big|_{f=...} \!\!\! -\, \big(\repNl \! \leftrightarrow  \gxl\big) = 0 $,
as $\g_{\zn\zm}$ in $\FF^{-1}$ and $\pg^{\zn\zm}$ in $\Hl$ enter ultralocally, so $\repNl$ and $\gxl$ appear in symmetric combination $\repNl\gxl$ and cancel upon antisymmetrizing.
Thus, only the canonical action of $\opinvE$ remains nontrivial:
 $ % \beq{qrwrwettrw1_Hl}
    \HIDE{+}
    \PB{\sint \repNl \FF^{-1}\opinvE \hmgs{\CCof}\,}{\sint \gxl \Hl }
    \,-\,
    \big(\repNl \!\leftrightarrow  \gxl\big)
    \nonumber\\
    = \HIDE{+} \PB{\sint f \opinvE \hmgs{\CCof}\,}{\sint \gxl \Hl }\big|_{f=\repNl \FF^{-1}} \!\!\!\!
     - \, \PB{\sint f \opinvE \hmgs{\CCof}\,}{\sint \repNl \Hl }\big|_{f=\gxl \FF^{-1}}
  $, %\eeq
which is given by (\ref{Corollary:invE_PB_Hl_Hs_statement}) from Corollary \ref{Corollary:invE_PB}. This leads to
 \if{
  \bea{}
    \PB{\Bra{f} \opinvE \Ket{1}}{\BraKet{h}{\Hl}}
     =
    \BraKet{f \opGm}{ h  \tfrac{1}{\FF}\TTheta \trpg }
    \, \frac{1}{\hmg{\WW^{-1}}}
  \eea
 }\fi
  \beq{qrwrwettrw2_Hl}
    \!\HIDE{+}
    \PB{\sint \repNl \FF^{-1}\opinvE \hmgs{\CCof}\,}{\sint \gxl \Hl }
    \,-\,
    \big(\repNl \!\leftrightarrow  \gxl\big)
    %\nonumber\\
    =
     \sint {\repNl \FF^{-1} \opGm}{ \gxl \FF^{-1}\TTheta \trpg }
     \tfrac{\hmgs{\CCof}}{\hmgd{\WW^{-1}}}
      % -
      % \sint{\gxl \FF^{-1} \opGm}{ \repNl \FF^{-1}\TTheta \trpg }
      %  \tfrac{\hmgs{\CCof}}{\hmgd{\WW^{-1}}}
    \:-\:
     \big(\repNl\!,\repNs^\zn \leftrightarrow  \gxl\!, \gxs^\zn\big)
     ,
  \eeq
which naturally gathers the nonlocality into the $\opGm$ structure.

\newpar
Putting together (\ref{qrwrwettrw_varNl_Hs}) and (\ref{qrwrwettrw2_Hl}), and flipping the overall sign, one obtains (\ref{anomaly_GUMG_fin}):
  \bea{qrwrwettrw_all}
   +A
   \if{
   &=&
    \HIDE{+}
     \sint  \gxs^\zn {\repNl}_{\!\!\!\!,\zn} \FF^{-1} \opinvE \hmgs{\CCof}
    +
     \PB{\sint \repNl \FF^{-1}\opinvE \hmgs{\CCof}\,}{\sint \gxl \Hl {+} \gxs^\zn \Hs_\zn }
    \,-\,
    \big(\repNl\!,\repNs^\zn \leftrightarrow  \gxl\!, \gxs^\zn\big)
    \nonumber\\
   }\fi
    &=&
    \HIDE{+} \big( \sint \repNl \FF^{-1}
    \opGm \Dvg{\gxs} \OOmega
    {\,} \tfrac{\hmgs{\CCof}}{\hmgd{\WW^{-1}}}
    -  \sint {\repNl \FF^{-1} \opGm}{ \gxl \FF^{-1}\TTheta \trpg }
     \tfrac{\hmgs{\CCof}}{\hmgd{\WW^{-1}}}
    \big)
    %%
   % +
   % \big(
   %  \sint {\repNl \FF^{-1} \opGm}{ \gxl \FF^{-1}\TTheta \trpg }
   % -
   %  \sint{\gxl \FF^{-1} \opGm}{ \repNl \FF^{-1}\TTheta \trpg }
   % \big)  \tfrac{\hmgs{\CCof}}{\hmgd{\WW^{-1}}}
    %%
   %  \nonumber\\
   %  &+&  t.d.1 + t.d.2
    \:-\:
     \big(\repNl\!,\repNs^\zn \leftrightarrow  \gxl\!, \gxs^\zn\big)
    \nonumber
    \\
    &=&
    \HIDE{+} \Big( \sint \repNl \FF^{-1}
    \opGm
    \big( \Dvg{\gxs} \OOmega -  \gxl \FF^{-1}\TTheta \trpg \big)
    - \sint \gxl \FF^{-1}
    \opGm
    \big( \Dvg{\repNs} \OOmega -  \repNl \FF^{-1}\TTheta \trpg \big)
    \Big)
    \tfrac{\hmgs{\CCof}}{\hmgd{\WW^{-1}}}
  %  \nonumber\\
  %  &+&  t.d.1 + t.d.2
    \,.
   \nonumber
  \eea

%%-------------------------=%        ******         %=-------------------------%%
%%-------------------------=%        ******         %=-------------------------%%
%%-------------------------=%        ******         %=-------------------------%%

\newpage

    \section{Homogeneous Parameterization}
     \label{ASect:Homogeneous_Parameterization}
      \hspace{\parindent}
 \newcommand{\hmgCCof}{\,\hmg{\!\CCof\!\!}\,\,}
In Section \ref{SSect:ParameterizedAction} we mentioned that homogeneously parameterized action
 %% in {\wGUMG} context it is (\ref{hmgparAction_AMG})
 \beq{hmg_param_Action}
  \SSS_{\,\hmg{\!par\!}}
   [\g,\pg,\hmg{\tf}^\io\!,\hmg{\ptf}_\io,\hmg{\replm}^\io\!,\repNs,\replmCT]
   = \int d\taux\hspace{1pt} \dsx \, \Lib
     \pg^{\zm\zn}\tauxdot{\g}_{\zm\zn}
     +  \hmg{\ptf}_\io\,\tauxdot{\hmg{\tf}}^\io
     -   \hmg{\replm}^\io ( \hmg{\ptf}_\io {+} \hmg{\FF \Hl} )
     - \repNs^\zm \Hs_\zm
     - \replmCT^\zm \CT_{,\zm}
     \Rib
  \eeq
is a self-contained equivalent extension of the canonical theory (\ref{extAction_GUMG}). Which at the same time allows to complete the absent homogeneous mode of $\Hl$ canonical structure and reinstate the Einstein-Hilbert term in the Lagrangian action. We noted that adding of the inhomogeneous gauge-trivial sector
  \beq{trivial_inh_sector}
     \sint \Lib
     \inh{\ptf}_\io\,\tauxdot{\inh{\tf}}^\io
     {\,-\,} \inh{\replm}^\io \,\inh{\ptf}_\io
     \Rib
  \eeq
is motivated by the common practice of dealing only with local (functionally-complete) fields which has certain advantages. The same question is how the alternative action would look like if one factors out the auxiliary gauge-trivial sector of inhomogeneous fields.

The answer is straightforward and could be obtained either by repeating the steps of parameterization, or by reducing the trivial sector from the alternative action (\ref{ActionHTlike_GUMG_ccf0}). Both these ways lead a model, defined on mixed-type configuration space of local spacetime metric $\Gaux_{\Zm\Zn}(\taux,\sx)$ and a pair of homogeneous fields $\hmg{\tf}^\io(\taux)$, $\hmgCCof(\taux)$ with the action
 \bea{ActionHTlike_hmg_GUMG_ccf0}
    \SSS_{\,\hmg{\!\ialt}_0}[\,\Gaux,\hmg{\tf}^\io\!,\hmgCCof]
    &\!=\!& \! \int \!d\taux \hspace{1pt} \dsx %\int \!d\taux\, \dsx \,
      \sqrt{|\Gaux|} \Big( \stR (\Gaux)
    - \sqrg^{-1}\!\FF^{-1}
      \tfrac{\,\WW^{-1}\vphantom{|}}{\,\hmg{\WW^{-1}\!}\,\vphantom{I^{|^I}}}
      \hmgCCof \Big)
    + \int \!d\taux \, %\!\int \!d\taux\, \dsx
     \,\tauxdot{\hmg{\tf}}^\io\, \sVol \hmgCCof
    \;. \quad
   % \nonumber
  \eea
Here $\frac{\WW^{-1}}{\,\hmg{\!\WW^{-1}\!}\,\vphantom{I^{|^I}}} \hmgCCof$ comes from $\opinvE \hmgCCof$,
and a nondynamical spatial volume factor $\sVol=\sint 1$ arise from the spatial integration of the homogeneous integrand\footnote{
 Here, as well as in Sections \ref{SSect:ParameterizedAction_AMG} and (\ref{SSect:ParameterizedAction}) when relating auxiliary homogeneous fields with the homogeneous modes of local fields we implicitly assumed that $\sx$-parameterization of $\taux \teq \const$ hypersurfaces is such that the background volume $\sint 1$ is preserves in time (this is mostly relevant for compact spatial manifolds). This, in particular allowed to interchange time differentiation and extracting the homogeneous mode $\tauxdot{\hmg{\tf}}^{\io} = \hmg{\tauxdot{\tf}}^{\io}$.
}.

Dynamical consequences on this reduced configuration space are those, described in Section \ref{SSect:DynamicalProperties_GUMG}. In particular variation with respect to homogeneous $\hmg{\tf}^\io$ implies $\hmgCCof \eomeq \cco$. Which for effective cosmological constant $\CCf$ reinstates spatially nonlocal on-shell relation
 $ % \beq{ccf_onshell_GUMG_COPY}
  \,  \CCf
    \eomeq
   % \sqrg^{-1} \FF^{-1} \opinvE \cco
   % =
   \sqrg^{-1} \FF^{-1} \frac{\WW^{-1}}{\,\hmg{\!\WW^{-1}\!}\,\vphantom{I^{|^I}}} \, \cco
    \,,
 $ % \eeq
(\ref{ccf_onshell_GUMG}).
Variation with respect to the homogeneous $\hmgCCof$ leads to spatially integral  relation between $\tauxdot{\hmg{\tf}}^\io$ and $\hmg{\repNl\FF^{-1}     {\,\WW^{-1}}} /{\,\hmg{\WW^{-1}\!}}$ which in view of additional dynamical input equals $\hmg{\repNl\FF^{-1}}$. Since $\tauxdot{\hmg{\tf}}^\io$ is not fixed from equations of motion, this relation reflects the homogeneous time-reparameterization ambiguity.

The gauge structure is the part of that described in Section \ref{SSect:GaugeStructure_GUMG}. Namely, in the alternative homogeneously parameterized formulation (\ref{ActionHTlike_hmg_GUMG_ccf0}), the {\GUMG} model has a gauge symmetry parameterized by the canonical transformations (\ref{ugvar_canon_GUMG}, \ref{ugvar_NsNl_GUMG}, \ref{gvar_Dvg_lmptf_GUMG}), subject to restrictions (\ref{gauge_param_restrict_GUMG}) with the gauge parameters $\hmg{\geps}(\taux)$ and $\gzetat^\zn(\taux,\sx)$, but now without $\gchi^\zn(\taux,\sx)$ gauge parameter, which parameterized the trivial symmetry of the localized inhomogeneous gauge sector. Remind, $\inh{\replm}^{\io}$ was promoted to $\Dvg{\lmptf}$, which extended the symmetry from $\inh{\geps}$ to that parameterized by spatial vector $\gchi^\zn$  with $\Dvg[\zn]{\gchi} = - \inh{\geps}$. These issues were\HIDE{ better} discussed in the context of the {\wGUMG} subfamily in Section \ref{SSect:GaugeStructure_AMG}.

%%-------------------------=%        ******         %=-------------------------%%
%%-------------------------=%        ******         %=-------------------------%%
%%-------------------------=%        ******         %=-------------------------%%

\newpage

\newcommand{\rR}{R}
\newcommand{\irR}{{\scriptscriptstyle{\rR}}}
\newcommand{\sintR}{\sint_{\!\!\irR\,}}
\newcommand{\sVolR}{\sVol_{\!\!\irR\,}}

\newcommand{\regR}{\mathop{\mathrm{reg}}{}_{\!\irR}}
\newcommand{\limR}{\lim_{\irR \to \infty}}

\newcommand{\invsVol}{\tfrac1{\sVol}}
\newcommand{\invsVVol}{\tfrac1{\sVol^2}}
\newcommand{\invsVolR}{\tfrac1{\sVolR}}
\newcommand{\invsVVolR}{\tfrac1{\sVolR^2}}

%limit structures
\newcommand{\OOO}{{\color{sColor}``O(1)"}}
\newcommand{\ooo}{{\color{sColor}o(1)}}
\newcommand{\ol}{{\color{sColor}o}}
\newcommand{\Ol}{{\color{sColor}O}}
\newcommand{\lsV}{\sVol}
\newcommand{\lsv}{1\mathrm{\color{sColor} v}}
\newcommand{\lsw}{1\mathrm{\color{sColor} w}}

\newcommand{\R}{\mathbb{R}} %% reals
\newcommand{\Sigmat}{\Sigma_t}
\newcommand{\asv}{*}%{\infty}
\newcommand{\pert}[1]{{\hspace{0pt}\vartriangle} #1}

\newcommand{\asvII}{\Id_{\!\asv}}
\newcommand{\pertII}{{\hspace{0.5pt}\vartriangle}}

\newcommand{\hmgII}{\hmgId} %{\mathop{\hmg{\op{I}}}}
\newcommand{\inhII}{\inhId}%{\mathop{\inh{\op{I}}}}

\newcommand{\tfA}{\mathcal{A}}
\newcommand{\tfB}{\mathcal{B}}

{
\def \tempfrac {\tfrac}
\section{Noncompact Spatial Sections and Delocalization}
 \label{ASect:NonCompact} %% 2025-04
  \hspace{\parindent}  
Spacetime manifolds with noncompact spatial sections $\Sigmat$ have important cosmological implications. Therefore, it is worth verifying that results obtained in the compact case remain applicable to noncompact settings. A potential challenge arises from the singular nature of spatial integration in noncompact spaces, particularly when dealing with integral operators. 
A full rigorous treatment would require a careful specification of asymptotic boundary conditions compatible with the theory’s dynamics and gauge structure --- an analysis that deserves a separate discussion. Nevertheless, we outline the key issues below and demonstrate the applicability of our framework with delocalization operators to a broad class of fields.

\newpar

On the non-GR branch, where the diffeomorphism symmetry is broken down to foliation-preserving transverse diffeomorphisms, the natural asymptotic conditions for field configurations on noncompact spatial backgrounds are asymptotically constant\HIDE{ ones}:
 $ \varphi^i(\tx,\sx) \big|_{|\sx|\to\infty}  =\, \varphi_\asv^i(\tx) $.
Since we are primarily concerned with infrared aspects of the formalism, we 
assume that such fields are bounded functions on $\Sigmat$\footnote{
 Formally, this assumption may exclude black hole–like solutions. However, boundedness is adopted here solely as a technical simplification.
}.
These asymptotic conditions capture physically interesting cosmological vacua and align with the residual local symmetry of the non-GR branch where spatial nonlocality becomes significant. This  motivates the following decomposition:
  \beq{NC_field_decomp}
    %\g_{\Zn\Zm}
    \varphi^i(\tx,\sx) = \varphi_\asv^i(\tx) + \pert{\varphi}^i(\tx,\sx)
    \,,  \qquad\varphi^i(\tx,\sx)\big|_{|\sx|\to\infty} \!\!\!\! \to \varphi_\asv^i(\tx)
    \,,  \qquad \pert{\varphi}^i\big|_{|\sx|\to\infty} \!\!\!\! = O\big(\tempfrac1{|\sx|^{c_i}}\big) \,\to\, 0
    \,,
  \eeq
where \emph{asymptotic backgrounds} $\varphi_\asv^i(\tx)$ are spatially constant functions, and ``\emph{perturbations}''\footnote{While $\pert{\varphi}^i$ are not\HIDE{ assumed to be} small in the bulk (only bounded), they\HIDE{ asymptotically} act as small corrections at large distances. Thus, although both components can be locally of the same order, they behave differently under integration.} $\pert{\varphi}^i$  decay at infinity with some falloff rates $c_i$.
Regular local functions of such fields inherit the same structure: asymptotic values can be decoupled allowing a similar decomposition:
 \beq{NC_func_decomp}
  \tfA(\tx,\sx)
  = \tfA_\asv(\tx) + \pert{\tfA}(\tx,\sx)
  \,, \qquad
  \tfA_\asv = \tfA(\varphi^i_\asv)
  \,, \qquad
  \pert{\tfA} = \Ol(\pert{\varphi^\zi})
  \,,
 \eeq
where the latter represents the asymptotically decaying component.

\newpar

Even when focusing on local properties of fields and observables: solutions of their equations of motion, gauge transformations etc. the presence of spatial nonlocalities in the action necessitates examining\HIDE{ the behavior of} integrated expressions. A key issue arises because integration over noncompact spaces is intrinsically singular for the bounded class of fields. While local theories handle this and derive local properties consistently, additional spatial nonlocalities\HIDE{ potentially} can compromise this consistency introducing divergent or ambiguous contributions to local expressions.
Below we argue that the nonlocal components of $\opE$ and $\opinvE$ act as\HIDE{ controlled, finite} perturbations of the ultralocal operator in the integrated quantities and unambiguously contribute to local quantities in the theory.

To handle integration over noncompact spatial manifolds and control possible singularities, we adopt a finite-volume regularization scheme. The constant-time spatial slice $\Sigmat$ is truncated to a large but finite subregion of characteristic radius $\rR$ --- for example, a $(\Ddim{-}1)$-dimensional ball $|\sx| < \rR$ --- and all spatial integrations are preformed over this finite domain:
  \beq{def_R_regularization}
    {\regR}:\qquad
    \sint \ldots  \;\to\; \sintR \ldots  \defeq \sint_{ |\sx| < \rR } \ldots
    \,,
    \qquad
    \sVol \;\to\; \sVolR \defeq \sintR 1
    \,,
  \eeq
with the regulator $\rR$ ultimately taken to infinity\footnote{
 Alternatively, one could use smooth smearing functions or regularize in Fourier space. However, since our analysis does not require repeated integration by parts, a simple cutoff using a characteristic function suffices. This approach also permits direct analogs of the averaging projectors used in the compact case.
}.
%\newpar

Integrals of bounded asymptotically constant \emph{local} functions, (\ref{NC_func_decomp}), decompose as
 \beq{sint_tfA}
   \sint \tfA
   \;=\; \tfA_\asv \sVol + \sint \pert{\tfA}
   \;=\; \tfA_\asv \sVol \,
      + \tfA_{1\asv} \sint \pert{\varphi}
      +  \tfA_{2\asv} \sint  \pert{\varphi} \pert{\varphi} + \tfA_{3\asv} \sint \pert{\varphi} \pert{\varphi} \pert{\varphi} + \ldots
   \,,
 \eeq
where the terms fall into different divergence classes with respect to the regularization. The integral over the decaying perturbation part integrates terms, proportional to at least one $\pert{\varphi}$, and is thus subleading compared to the volume-divergent contribution $\sint \tfA_\asv = \sVol \tfA_\asv $ originating from the asymptotic background, $\tfA_\asv=\tfA(\varphi_\asv)$. Integrals with more decaying factors have better convergence and eventually become finite.
Bilinear combinations decompose analogously:
 $ %\beq{sint_tfA_tfB}
    \sint \tfA \, \tfB
    = \tfA_{\asv} \sVol \tfB_{\asv}
    + \tfA_{\asv} \sint \!\pert{\tfB}
    + \sint \!\pert{\tfA} \cdot \tfB_{\asv}
    +  \sint \!\pert{\tfA} \pert{\tfB}
   \,,
 $ %\eeq
with the leading volume-divergent contribution $\sint (\tfA\tfB)_{\asv}=\tfA_{\asv} \sVol \tfB_{\asv}$ and terms with decaying components giving subleading contributions.

In parameterized {\GUMG}\HIDE{ models}, integrations involving spatial nonlocal operators 
take the form
 \bea{NC_integral_nonloc}
   \sint \tfA' \,\op{C}\, \tfA''
    &=& \sint \tfB^0 \,
      +
    \tempfrac{\sint\tfB^1\,\sint \tfB^3}{\sint \tfB^2}
     +
    \tempfrac{\sint\tfB^4\,\sint\tfB^6\, \sint \tfB^8}{\sint \tfB^5\,\sint \tfB^7}
     +
     \ldots
 \eea
with a finite number of nonlocal contributions, where $\tfB^a$ are bounded local integrands. The balanced number of integrations in each nonlocal term is expected since $\opE$-type operators appear as a sum of projectors at most multiplied by a local regular bounded function. This guarantees that the divergence rate of each term is at most that of a single integration of local structure.

More importantly, nonlocal terms in (\ref{NC_integral_nonloc}) give unambiguous finite contributions to variationally derived local expressions, such as equations of motion, gauge transformations, Noether identities etc. This is because the variational derivative acts only on a one integral factor at a time, while the remaining undifferentiated integral factors compensate the divergencies between numerator and denominator. Resulting nonlocal factors at local quantities  $b^a_{i}(\tx,\sx) \equiv \VDer{\tfB^a}{\varphi^i(\tx,\sx)}$ are ratios of asymptotic values of integrands. For example,
  \beq{}
   \tVDer{}{\varphi^\zi}\, \tempfrac{\sint\tfB^1\sint \tfB^3}{\sint \tfB^2}
   \;=\; b^1_{i} \, \tempfrac{\sint\tfB^3}{\sint \tfB^2}
    - b^2_{i}  \,\tempfrac{\sint\tfB^1\sint\tfB^3}{\sint \tfB^2 \sint \tfB^2}
    + b^3_{i}  \, \tempfrac{\sint\tfB^1}{\sint \tfB^2}
   \;=\;   b^1_{i} \, \tempfrac{\tfB^3_\asv}{\tfB^2_\asv}
    - b^2_{i} \, \tempfrac{\tfB^1_\asv\tfB^3_\asv}{ \tfB^2_\asv  \tfB^2_\asv}
    + b^3_{i} \, \tempfrac{\tfB^1_\asv}{ \tfB^2_\asv}
    + \ooo
   \eeq
up to irrelevant contributions $\ooo$ which vanish when regularization is removed. These terms are the same local quantities  $b^a_{i}$ multiplied by nonlocal factors of the type
  \beq{}
     \tempfrac{\sint \pert{\tfB^1}}{ \sint \tfB^2} \,,\;
     \tempfrac{\sint {\tfB^1}}{\sint (\tfB^2_\asv+\pert{\tfB^2})} - \tempfrac{\sint{\tfB^1}}{ \sVol\tfB^2_\asv} \,,\; ...
     \;\sim\; \Ol (\tempfrac{\sint \pert{\varphi}}{\sVol} )
     = \ooo \,.
  \eeq

 % Analogous considerations may be applied to higher variational structures.

\newpar
This observation confirms the validity of the analyses of the gauge structure and equations of motion in the alternative\HIDE{ parameterized} formulation of {\GUMG}, given in the main part, for noncompact spatial backgrounds. To further support this, we examine the specific properties of $\opE$ and $\opinvE$ in more detail.
For this purpose, we reconsider the notion of spatial averaging for local functions and define corresponding average and average-free projectors with well-defined properties within the adopted regularization.
\if{
Averages of local quantities
  \beq{}
    \hmg{\pert{\varphi^\zi}}
    \equiv \tempfrac{\sint\pert{\varphi^\zi}}{\sVol} \,,\;
    \quad
    \hmg{\pert{\tfA}}
    \equiv \tempfrac{\sint\pert{\tfA}}{\sVol} \,,\;
    \quad
    \hmg{{\tfA}}
    \equiv \tempfrac{\sint{\tfA}}{\sVol} = \tfA_\asv + \hmg{\pert{\tfA}} \,,\;
  \eeq
play the specific role here: averages of perturbations measure the difference between the asymptotic and average values of local functions, and are irrelevant in final finite local expressions, being of order $\Ol (\tempfrac{\sint \pert{\varphi}}{\sVol} )$. However, before variations are taken such contributions should be tracked in integrated expressions.
}\fi
 %Analogs of average and average-free projectors, which we used in compact case, can also be defined for the noncompact case. They naturally respect the chosen regularization procedure.
The symmetric averaging projector $\hmgII$: \;$\hmgII \hmgII \tfA = \hmgII \tfA$ is defined via 
  \beq{}
    \sint \tfA \hmgII \tfB
    \;\defeq\; \sint \tfA \,\invsVol \sint \tfB
    \;\defeq\;  \!\lim_{\irR \to \infty} \sintR \tfA\, {\invsVolR} \sintR \tfB
    \,.
  \eeq
%Limits should be taken after regularizing all terms and extracting singular $\Ol(\sVol)$ and subleading nonvanishing contributions. 
and acts on local bounded functions of asymptotically constant fields as
  \beq{}
    \hmg{\tfA} (\taux)
    \;\defeq\;
    \hmgII \tfA  (\taux)
    \;=\; 1 {\cdot}  \tempfrac{1}{\sVol} \sint \tfA
    \defeq
    \tfA_\asv (\taux) +
    \hmg{\pert{\tfA}} (\taux) %\tempfrac{\sintR \pert{\tfA} }{\sVolR}
 %   + \Ol(\invsVVol) %(\tempfrac{1}{\lsV^2})
    \,,
  \eeq
producing the normalized integrated value. In the noncompact case, the average of the perturbative part, $\hmg{\pert{\tfA}} \defeq 1 {\cdot} \!\limR  \tempfrac{\sintR \pert{\tfA} }{\sVolR}$, give the difference between the $\hmg{\tfA}(\tx)$ and the background value $\tfA_\asv(\tx)$ and vanish as $\rR\to\infty$. Although it is irrelevant locally, it should be retained until after integrations or functional variations are taken.
 
The complementary projector $\inhII \defeq \Id - \hmgII$\,: \;$\inhII \inhII \tfA = \inhII \tfA$, is defined by
 \beq{}
   \sint \tfA \inhII \tfB
    \;\defeq\; \sint \tfA\, \tfB - \sint \tfA \,\invsVol \sint \tfB
    \,,
 \eeq
and extracts the integral-free (average-free) part of a function:
  \beq{InhII_A}
    \inh{\tfA } (\taux,\sx)
    \;\defeq\;
    {\inhII} \tfA (\taux,\sx)
    \;\defeq\; \tfA (\taux,\sx) - \hmg{\tfA} (\taux)
    \,,
    \qquad
    \sint \inhII \tfA = 0
    \,.
  \eeq
The integral-free components  necessarily decay at spatial infinity and can be represented as divergences of bounded vector fields with appropriate asymptotic behavior.
 %\footnote{With a sufficiently fast decay rate to neglect possible boundary contributions when integrating by parts.}

 %These analogs of average and average-free projectors naturally respect the chosen regularization procedure being symmetric with respect to integration (in contrast to projectors, extracting the asymptotic or perturbation parts of a function).

\newpar

We are now ready to specify and analyze properties of the delocalization operators $\opE$ and $\opinvE$, (\ref{opE_kernel}), (\ref{invE_kernel}), in the context of noncompact $\Sigmat$.
The parameterization procedure of Section~\ref{SSect:ParameterizedAction} applies analogously, yielding a nonlocal operator at $\Hl$ with the same integral kernel (\ref{opE_kernel})
   \beq{opE_kernel_COPY}
    E(\taux|\sx;\sx')
    \;=\;
    \delta^{\Ddim{-}1}(\sx;\sx')
    - \WW^{-1} (\taux,\sx) \tempfrac1{\sVol{\HIDE{(\taux)}}}  \WW (\taux,\sx')
    + \tempfrac1{\sVol\HIDE{(\taux)}}
  \,.
  \eeq
This operator is the sum of the identity operator and the nonlocal correction by two projector-type terms, each of order $O(\invsVol)$. When acting on a bounded asymptotically constant function to either side (i.e. after integrating the kernel with such a function) local and nonlocal contributions become locally of order $O(1)$.
However, decomposing $\WW$ and $\WW^{-1}$ into asymptotic and decaying parts, cf.~(\ref{NC_func_decomp}), reveals the cancellation of the \HIDE{leading} asymptotic contributions in the nonlocal terms, leaving them with at least one decaying factor. The kernel thus takes the form:
  \beq{}
   %\begin{array}{lllll}
    E(\taux|\sx;\sx')
    \:=\: \delta^{\Ddim{-}1}(\sx;\sx')
    %&& {\color{Gray} \sim O(\lsV)}
    - \pert{\WW^{-1}}(\taux,\sx) \tempfrac{1}{\sVol} \WW_\asv
    - \WW^{-1}_{\asv} \tempfrac{1}{\sVol} \pert{\WW} (\taux,\sx')
    - \pert{\WW^{-1}}(\taux,\sx)  \tempfrac{1}{\sVol} \pert{\WW} (\taux,\sx')
    \,,
   %\end{array}
  \eeq
suggesting that the nonlocal part is the perturbative correction to the unit operator
 \beq{}
  \opE \defeq\: \Id  + \pert{\opE}
  \,.
 \eeq   %
When acting on a bounded asymptotically constant function\HIDE{ to the right}, $\opE \,\tfB (\taux,\sx) \equiv \sint_{\!\sx'} E(\taux|\sx,\sx') \tfB(\taux,\sx')$,
  \beq{}
   \opE \, \tfB (\taux,\sx)
           \;=\;
              \tfB(\taux,\sx)
           \,-\, \pert{\WW^{-1}}(\taux,\sx)\, \tempfrac{\sint \WW_\asv \tfB}{\sVol}
            - \WW^{-1}(\taux,\sx) \tempfrac{\sint \pert{\WW} \tfB}{\sVol}
        %   \;\defeq\;
        %    \Id \tfB + \pert{\opE} \, \tfB
        \,,
  \eeq
the nonlocal correction yields two types of contributions a decaying $O(\pert{\varphi})$ term, and a suppressed $O(\sint\pert{\varphi}/\sVol)$ term, which can be neglected in local expressions (e.g., equations of motion).
 
As noted in Section~\ref{SSect:AltAction_GUMG}, there is freedom in defining $\opE$ within a family of invertible operators $\opEa \,\defeq \hmgII + \ofa^{-1} \inhII \WW$. In nonlocal case, our choice, (\ref{opE_kernel}),
 %% \footnote{As well as\HIDE{ allowable} similar operators $\opE_\asv= \Id - \WW^{-1}_\asv \hmgII \WW + \hmgII$.}
is convenient due to effective disentangling of the leading local operator (here --- identical operator) from the nonlocal correction $\pert{\opE}$, facilitating the analysis of regularity and asymptotic behavior.

Moreover, the nonlocal part of $\opE$ exhibits even more improved behavior, which is easier seen when integrated with two arbitrary functions.
In  $\sint \tfA \,\pert{\opE}\, \tfB$ the subleading but potentially diverging terms\footnote{  Small perturbations of the spatial volume density are related to {\GUMG} massive extra degree of freedom \cite{Barvinsky:2019agh}, which suggest that $\pert{\sqrg}$ acquires a short-range behavior making integrations with $\pert{\WW}$ and $\pert{\WW^{-1}}$ finite.
} 
$O(\sint\pert{\sqrg})$ come from $\sint \tfA \pert{\WW^{-1}} $ and $\sint \pert{\WW} \tfB$ (multiplied by bounded factors) 
  \beq{AEB_pert}
   \begin{array}{lllll}
    \sint \tfA \,\pert{\opE}\, \tfB
    &\!\!=\!\!&
     {-} \tfA_\asv  \sint \left(\pert{\WW^{-1}} \WW_\asv \!+\! \pert{\WW} \,\WW^{-1}_{\asv} \right)  \tfB_\asv
     - \sint \pert{\tfA} \pert{\WW^{-1}} \HIDE{\cdot} \WW_\asv \tfB_\asv
     - \tfA_\asv \WW^{-1}_\asv \sint \pert{\WW} \pert{\tfB}
    + ...
    \,,
   \end{array}
  \eeq
where dots denote irrelevant terms vanishing in local variational expressions in  $\rR{\to}\infty$ limit.
 % which are at most $\Ol({\tempfrac{\sint \pert\sqrg \sint \pert\sqrg}{\sVol}})$. 
Asymptotical behavior of $\pert{\WW^{-1}}$:
 $\pert{\WW^{-1}}
% = \WW^{-1} - \WW^{-1}_\asv
 = \tempfrac{1}{\WW_\asv + \pert{\WW}} - \WW^{-1}_\asv
 = - \tempfrac{\pert{\WW}}{\WW^2_\asv} + \tempfrac{\pert{\WW}\pert{\WW}}{\WW^3_\asv} + O(\pert{\WW}^3)
 $, is defined by that of $\pert{\WW}$, which improves the first term in the right hand side of (\ref{AEB_pert}):
 %of the two terms in $\sint \left(\pert{\WW^{-1}} \WW_\asv \!+\! \pert{\WW} \,\WW^{-1}_{\asv} \right)$  cancel:
 \beq{NC_opE_decay_improvement}
    \sint \big(\pert{\WW^{-1}} \WW_\asv \!+\! \pert{\WW} \,\WW^{-1}_{\asv} \big)
   \;=\; - \sint \big(\tempfrac{\pert{\WW}\pert{\WW}}{\WW^3_\asv} +  O(\pert{\WW}^3) \big)
   \,.
 \eeq
This property holds for general asymptotically constant functions $\tfA$ and $\tfB$, 
and reflects a general property of $\opE$. This makes the delocalization correction to the identity operator in (\ref{AEB_pert}) safer even for slow-decaying field perturbations.

\newpar

Similar properties hold for the inverse operator $\opinvE$, (\ref{invE}), whose kernel\HIDE{ representation} (\ref{invE_kernel}),
 \beq{invE_kernel_COPY}
  E^{-1}(\taux|\sx;\sx')
 % \if{
  \;=\; \delta^{\Ddim{-}1}(\sx;\sx')
    \,-\, \Big( \tempfrac{\WW^{-1}(\taux,\sx) }{\sint {\WW^{-1}} (\taux)\vphantom{\big{|}}}
    \,+\, \tempfrac{\WW(\taux,\sx')}{\sint{\WW}(\taux)\vphantom{\big{|}}\,} \Big)
    \,+\, 2
    \tempfrac{\WW^{-1}(\taux,\sx) }{\sint {\WW^{-1}} (\taux)\vphantom{\big{|}}}
    \,\sVol{\HIDE{(\taux)}}\,
    \tempfrac{\WW(\taux,\sx')}{\sint{\WW}(\taux)\vphantom{\big{|}}\,}
 % }\fi
  \if{
   = \delta^{\Ddim{-}1}(\sx;\sx')
    - \WW^{-1}(\taux,\sx) \tempfrac{1}{\sint {\WW^{-1}}(\taux)}
    -\tempfrac{1}{\sint{\WW}(\taux)} \WW(\taux,\sx')
    + 2 \WW^{-1}(\taux,\sx)
        \tempfrac{\sVol(\taux)}{\sint{\WW^{-1}}(\taux)  \sint{\WW}(\taux) }
        \WW(\taux,\sx')
  }\fi
  \,,
 \eeq
can be derived from (\ref{opE_kernel_COPY}) following the steps of\HIDE{ the constructive proof of} Lemma \ref{Lemma:InvE} implying the regularization. Again,
  \beq{}
    \opinvE \defeq \Id + \pert{\opinvE}
    \,,
  \eeq
since in the nonlocal correction the leading asymptotic terms cancel\footnote{
The sign at the first contribution in brackets is inverted by contributions from the latter nonlocal term.}:
  \bea{invopE_kernel_pert_Noncompact}
  % \begin{array}{lllll}
    \,E^{-1}(\taux|\sx;\sx')\,
    &\!\!=\!\!& \delta^{\Ddim{-}1}(\sx;\sx')
    \,+\, \Big(  { \HIDE{\big(} \tempfrac{\inhII\WW^{-1}(\taux,\sx) }{\,\sint{\WW^{-1}} (\taux)\vphantom{|}} \HIDE{\big)} }
      + { \HIDE{\big(} \tempfrac{\inhII\WW(\taux,\sx')}{\, \sint{\WW\vphantom{|}}(\taux)\,}  \HIDE{\big)} } \Big)
    \,+\, 2\,
     { \HIDE{\big(} \tempfrac{\inhII\WW^{-1}(\taux,\sx) }{\,\sint{\WW^{-1}} (\taux)\vphantom{|}} \HIDE{\big)} }
     \sVol\HIDE{(\taux)}
     {\HIDE{\big(} \tempfrac{\inhII\WW(\taux,\sx')}{\, \sint{\WW\vphantom{|}}(\taux)\vphantom{|}\,}  \HIDE{\big)} }
    \,. \quad\;\;
  % \end{array}
  \eea
All local factors in nonlocal correction are affected by the \emph{average-free} projectors $\inhII$, (\ref{InhII_A}),
which ensure that nonlocal contributions decay faster than generic perturbations. For instance,
  $ { \HIDE{\big(} \tempfrac{\inhII\WW^{-1}(\taux,\sx) }{\,\sint{\WW^{-1}} (\taux)\vphantom{|}} \HIDE{\big)} }
 \defeq  \tempfrac{\WW^{-1}(\taux,\sx) }{\,\sint{\WW^{-1}} (\taux)\vphantom{|}} - \tempfrac1{\sVol}
 =  \tempfrac{\pert{\WW^{-1}}(\taux,\sx) - \hmg{\pert{\WW^{-1}}}(\taux)}{\sint{\WW^{-1}}(\taux)\vphantom{|}} $
(and similarly for $\WW$).
The first term in the right hand side is of order $O(\tempfrac{\pert{\sqrg}}{\sVol})$, while the subtracted term is $O(\tempfrac{\sint\pert{\sqrg}}{\sVol^2})$\HIDE{ and is irrelevant in local expressions}. %% but may give slowly divergent contributions in $\sint \tfA \pert{\opinvE} \tfB$ for effectively massless $\pert{\sqrg}$.
However, the subtraction remove the leading potentially divergent integrated parts from the first term, so that when integrated with bounded asymptotically constant functions, they yield contributions quadratic in perturbations.
This reflects a general rule, that for bounded asymptotically constant functions,
  \beq{} 
   \sint \tfA \inhII \tfB \;=\; \sint \pert\tfA \pert\tfB  + \ooo \,,
  \eeq 
showing improved convergence behavior of terms with average-free factors\HIDE{ upon integration}.
Accordingly, the kernel form (\ref{invopE_kernel_pert_Noncompact}), with average-free local factors, ensures improved decay of the integrated nonlocal correction:
  \beq{}
     \sint \tfA \pert{\opinvE} \tfB
     \;\sim\;
     \Ol{(\sint \pert{\tfA} \,\pert{\WW^{-1}} )} + \Ol{(\sint  \pert{\WW} \pert{\tfB} )}  + \ooo
    % \;\sim\;
    % \Ol{(\sint (\pert{\varphi})^s (\pert{\sqrg})^r )}  + \ooo
    \,,
  \eeq
in full analogy with the corresponding property (\ref{NC_opE_decay_improvement}) of $\opE$.

\newpar

Finally, we verify the\HIDE{ nonlocal} on-shell behavior of the effective cosmological constant $\CCf$\, in the noncompact case. The action of $\opinvE$, (\ref{invE_kernel_COPY}),\HIDE{ to the right} on a spatially-homogeneous function yields
  \beq{}
    \sint_{\!\sx'\,} E^{-1}(t|\sx;\sx') \,\hmg{f}(\taux)
  \if{
    \;=\;
     \cco
    - \tempfrac{\WW^{-1}(\taux,\sx) }{ \hmg{\WW^{-1}} (\taux)\vphantom{\big{|}}} \hmg{f}(\taux)
    \,-\, \,\hmg{f}(\taux)
    \,+\, 2
     \tempfrac{\WW^{-1}(\taux,\sx) }{ \hmg{\WW^{-1}} (\taux)\vphantom{\big{|}}} \,\hmg{f}(\taux)
  }\fi
    \:=\:
     \tempfrac{\WW^{-1}(\taux,\sx) }{ \hmg{\WW^{-1}} (\taux)\vphantom{\big{|}}} \,\hmg{f}(\taux)
    \:=\:
     \tempfrac{\WW^{-1}(\taux,\sx) }{ (\WW^{-1})_\asv (\taux) }  \,\hmg{f}(\taux) + \ooo
     \,,
  \eeq
where the contribution from the third term in (\ref{invE_kernel_COPY}) compensates the contribution from the identity operator, while the contribution from the fourth term inverts that from the second term. 
In finite local expressions, spatial averages such as $\hmg{\WW^{-1}}$ can be replaced by their asymptotic values, since the difference $\hmg{\pert{\WW^{-1}}} \sim \Ol ( \frac{\sint \pert{\sqrg}}{\sVol} )$ vanishes in the limit $\rR {\,\to\,} \infty$. Using $\frac{1}{(\WW^{-1})_\asv} = \WW_\asv$, this simplifies further.
Thus, on-shell behavior of the effective cosmological constant takes the form
  \beq{ccf_onshell_GUMG_nonloc}
    \CCf
    \:\eomeq\: \sqrg^{-1} \FF^{-1} \opinvE \cco
    \:=\: \sqrg^{-1} \FF^{-1} \tempfrac{\WW^{-1}}{\,\hmg{\WW^{-1}\!}\vphantom{{|}}\,} \, \cco
    \:=\: \sqrg^{-1} \FF^{-1} \tempfrac{\WW_\asv}{\WW} \, \cco
    \,,
  \eeq
(cf.~(\ref{ccf_onshell_GUMG})).
This confirms that the nonlocal behavior of $\CCf$, which feels the asymptotic behavior of the perfect-fluid equation-of-state parameter $\WW(\argsqrg)$\HIDE{of the cosmological perfect fluid}, survives in the noncompact case.

%%-------------------------=%        ******         %=-------------------------%%
%%-------------------------=%        ******         %=-------------------------%%
%%-------------------------=%        ******         %=-------------------------%%

\newpage

    \section{Poisson Brackets of Constraint Structures}
     \label{ASect:PB_GUMG_Tables}
      \hspace{\parindent}
Below we gather the basic Poisson brackets of the constraint structures.
%% For generic {\GUMG} model with generic $\WW(\argsqrg)$
\vspace{-3mm}
\begin{table}[h]
\hspace{-11mm}
\renewcommand{\arraystretch}{1.2}
\newcommand{\tempsint}{\sint}
%\begin{center}
\begin{tabular}{||l||c|c|c||}
 \hline
 \hline
  % after \\: \hline or \cline{col1-col2} \cline{col3-col4} ...
    \!\!\!\!\!
    $\boxed{\! \PB{\mathbf{A}}{\mathbf{B}} \!}$
    %% \;\;\;
     $\mathbf{B} \!\rightarrow \!\!$
    %% &   %%\!\!$\rightarrow$\!\!
 & $\sint h (\ptf_\io {+} \FF\Hl)$
 & $\sint \eta^\zn\Hs_\zn$
 & $\sint \eta^\zn \CT_{,\zn}$
     \phI
  \\
%    \hline
    \;\; $\mathbf{A} \! \downarrow$ \quad $\searrow$\!
       &   &   &
    %% &  %% \phantom{i}
 \\
 \hline
 \hline
 %&&&
 %\\
 $\sint f (\ptf_\io {+} \FF\Hl)$
     & $0$
     & $- \tempsint f_{,\zn} \eta^\zn \tfrac1\WW\, \CT
        + \tempsint f \Dvg{\eta} \,\CT$
     & $ - \tempsint f \Dvg{\eta} \,\TTheta \,\trpg\,\CT $
     %&
     \phI
 \\
  \color{gray}
    $\phantom{.}$
     &
   %% $+O(\Hs_\zm)$
     {\color{gray}
       $ + \tempsint \big(f \ader_{\zn} h\big) \,\FF^2 \g^{\zn\zk} \Hs_\zk$
     }
     &
     {\color{gray}
       $0$
     }

     & %% $+O(\Hs_\zm)$
     {\color{gray}
       $ - \tempsint \big(f \ader_{\zn} \Dvg{\eta}\WW\big) \,\FF^2 \g^{\zn\zk} \Hs_\zk $
     }
     %&
     %\phI
 \\
 \hline
 %&&&&
 % \vspace{-4.5mm}
 %\\
 %\hline
 % &&&&
 %\\
 $\sint \xi^\zm\Hs_\zm$
     &  $ \tempsint h_{,\zm} \xi^\zm \tfrac1\WW \,\CT
        - \tempsint h \Dvg{\xi} \,\CT $
     & $0$
      %$({\xi}^{\zl}{\eta}^\zk_{\;,\zl}{-}{\eta}^{\zl}{\xi}^\zk_{\;,\zl}) {\cdot}\Hs_\zk$
     & $\HIDE{+} \tempsint \Dvg{\xi}\Dvg{\eta} {\,} \OOmega\, \CT $
       %%$+ \xi^{\zm} \eta^\zn_{\z,\zn} {\cdot} \CT_\zm$
     %&
     \phI
 \\ $\phantom{.}$
     &
     {\color{gray}
       $0$
     }
     &
     %% $+O(\Hs_\zm)$
     {\color{gray}
       $+\tempsint \LieB{\zm}{\xi}{\eta}\Hs_\zm$
     }
     & %%$+O(\CT_{,\zm})$
     {\color{gray}
       +$ \tempsint \Dvg{\eta} \,\xi^\zm\CT_{,\zm}$
     }
     %&
     %\phI
 \\
 \hline
 %&&&
 %\\
 $ \sint \xi^\zm \CT_{,\zm}$
     & $ \tempsint \Dvg{\xi} h\,\TTheta \,\trpg\,\CT $
     & ${-} \tempsint \Dvg{\xi}\Dvg{\eta} {\,} \OOmega\, \CT $
      %% $ - \xi^{\zm}_{\z,\zm} \eta^\zn {\cdot} T_\zn$
     & $0$
      %$(\xi^{\zm}_{\z,\zm}\partial_\zk\eta^\zn_{\z,\zn}
      % {-} \eta^\zn_{\z,\zn}\partial_\zk\xi^{\zm}_{\z,\zm})
      %  {\cdot} \FF^2\WW^2\g^{\zk\zl}\Hs_\zl$
     %&
     \phI
 \\
 $\phantom{.}$
     & %% $+O(\Hs_\zm)$
     {\color{gray}
       $- \tempsint \big(\Dvg{\xi}\WW \ader_{\zn} h \big) \,\FF^2 \g^{\zn\zk} \Hs_\zk$
     }
     & %% $+O(\CT_{,\zm})$
     {\color{gray}
       $- \tempsint \Dvg{\xi} \,\eta^\zn\CT_{,\zn}$
     }
     & %% $+O(\Hs_\zm)$
     {\color{gray}
       $+ \tempsint \big(\Dvg{\xi} \ader_{\zn} \Dvg{\eta}\big) \,\FF^2 \WW^2 \g^{\zn\zk} \Hs_\zk$
     }\hspace{-4mm}
     %&
     %\phI
 \\
  \hline
   %&&&& \HIDE{\phantom{i}} \\
  \hline
\end{tabular}
%\beq{PBtableGUMGs0Equiv}
%\text{\textit{Poisson brackets of constraints of the {\GUMG} parameterized action (\ref{parAction_BGUMG}).}}
%\eeq
\caption{Poisson brackets of constraint structures\HIDE{ of the parameterized theory} ($\WW(\argsqrg)$ of general form)}
 \label{table:Par_PB_WW}
  %{\raggedright <Text> \par}
  %{\small \it   }
\renewcommand{\arraystretch}{1.0}
%  \beq{PBtableGUMGs0Equiv}
%   \text{\textit{Poisson brackets of \underline{nonextended} constraints of the {\GUMG} parameterized action (\ref{parAction_BGUMG}).}}
%  \eeq
%\end{center}
\end{table}
\vspace{-3mm}

We use the notations: $\Dvg{\eta}\defeq\Dvgp[\zn]{\eta}\,$, $\,(f \ader_{\zn} h) \defeq f h_{,\zn} {-} f_{,\zn} h$ \,and\, $\LieB{\zm}{\xi}{\eta} {\defeq\,} \xi^\zn \eta^\zm_{,\zn} {\,-\,} \eta^\zn \xi^\zm_{,\zn}$.
Derivative characteristic structures\, $\OOmega(\argsqrg) \,{\defeq}\,  \tTDer{\ln{\WW}}{\ln\!\sqrg} {\,+\,} \WW {\,+\,} 1$,  $\TTheta(\argsqrg) {\,\defeq\,}  \tfrac{1}{\Ddim{-}2} \tTDer{\ln{\WW}}{\ln{\!\sqrg}}\,\tfrac{\FF}{\sqrg}$\, are defined in (\ref{def_OOmega_TTheta}).
\if{
  \bea{def_OOmega_TTheta}
    \OOmega(\argsqrg)=  \tTDer{\ln{\WW}}{\ln\sqrg} + \WW + 1\;,
    \qquad
    \TTheta(\argsqrg)=  \tfrac{1}{\Ddim{-}2} \tTDer{\ln{\WW}}{\ln{\sqrg}}\,\tfrac{\FF}{\sqrg}
    \,.
    %\nonumber\\
  \eea
}\fi
Weakly vanishing terms are marked in grey. Also useful are the relations:
\vspace{-1mm}
 \bea{PB_sqrg_Hl}
  \!\! \PB{\sqrg}{\sint h B(\g) \Hl}
   %& =
   \,=\, - h\,\tfrac{1}{\Ddim{-}2} B(\g) \trpg %\,,
   \quad\!\! \Rightarrow \;
   \begin{array}{|rcl}
     \PB{ \FF}{\sint h B(\g) \Hl} &\!\!\! =\!\!&
     -  h \tfrac{1}{\Ddim{-}2} B(\g) \FF\, \WW \tfrac{\trpg}{\sqrg}
     \,, \\
     \PB{ \WW}{\sint h B(\g) \Hl} &\!\!\! =\!\!&
     -  h B(\g) \WW \,\tfrac{1}{\FF}\TTheta \trpg
     \;\;
     \,, \\
  %   \PB{ \WW\FF\sqrg}{\sint h B(\g) \Hl} &\!\! =\!&
  %   -  h \tfrac{1}{\Ddim{-}2} B(\g) \WW\FF\sqrg \,\OOmega \tfrac{\trpg}{\sqrg}
  %   \,, \\
   \end{array}
   %\hspace{-4mm}
 \eea
 %where   (\ref{TThetaFromPB_BGUMG}) %% GUMG_NDV.tex
\vspace{-4mm}
 \bea{PB_sqrg_Hs}
   \PB{\sqrg}{\sint \eta^\zn \Hs_\zn}
   %& =
   \,=\, \partial_\zn \big( \eta^\zn \sqrg \big) %\,,
   \quad\!\!  \Rightarrow
   \begin{array}{|rcl}
     \PB{ \FF}{\sint  \eta^\zn \Hs_\zn}
     &\!\!\! =\!\!&
     \eta^\zn \partial_\zn \FF
     +
     \Dvg{\eta} \FF \WW  \vphantom{\tTDer{\ln\FF}{\ln\sqrg} }
    \:=\:
     \eta^\zn_{\,;\zn} \FF \WW
     \,, \\
     \PB{ \WW}{\sint  \eta^\zn \Hs_\zn}
     &\!\!\! =\!\!&
      \eta^\zn \partial_\zn \WW
     + \Dvg{\eta} \WW \tTDer{\ln\WW}{\ln\sqrg}
    \:=\:
     \eta^\zn_{\,;\zn} \WW \tTDer{\ln\WW}{\ln\sqrg}
     \,, \;\;\\
  %   \PB{ \WW\FF\sqrg}{\sint f B(\g) \Hl} &\!\! =\!&
  %   -  f \tfrac{1}{\Ddim{-}2} B(\g) \WW\FF\sqrg \,\OOmega \tfrac{\trpg}{\sqrg}
  %   \,, \\
   \end{array}
   \hspace{-5mm}
  \eea
with arbitrary test functions $f(\sx)$, $\eta^\zn(\sx)$, and any{ function} $B(\g)$ depending only on $\g_{\zm\zn}$.
\if{
 $  \PB{\sint f \FF}{\sint g B(\g) \Hl} = -\sint f g \tfrac{1}{\Ddim{-}2} B(\g) \WW\FF \tfrac{\trpg}{\sqrg}$,
 $  \PB{\sint f \WW}{\sint g B(\g) \Hl} = - \sint f g B(\g) \tfrac{\WW}{\FF} \,\TTheta \trpg$,
 $\PB{\sint f \WW\FF\sqrg}{\sint g B(\g) \Hl}
   =
   - \sint f g \tfrac{1}{\Ddim{-}2} B(\g) \WW\FF\sqrg \OOmega \tfrac{\trpg}{\sqrg}
 $
 %\;with\;
 %$ \TTheta \pg = \tfrac{1}{\Ddim{-}2}  \FF \tTDer{\ln\WW}{\ln\sqrg} \tfrac{1}{\sqrg} \pg$
}\fi
The constraint ${\ptf_\io}_{,\zm}$ of the parameterized canonical action is an abelian constraint and has strictly vanishing Poisson brackets with all structures that do not depend on $\tf^\io$.

\newpar

For {\wGUMG} subfamily $\WW \teq \wwc \teq \const$ with $\Fw \defeq \sqrg^\wwc$ and $\CTw \defeq \wwc\Fw\Hl$. This imply:
%%the Poisson brackets of generic {\GUMG} simplify to
\vspace{-3mm}
\begin{table}[h]
\hspace{-12mm}
%\begin{center}
%\setlength\extrarowheight{5pt}
\renewcommand{\arraystretch}{1.2}
\newcommand{\tempsint}{\sint}
\begin{tabular}{||l||c|c|c||}
 \hline
 \hline
  % after \\: \hline or \cline{col1-col2} \cline{col3-col4} ...
    \!\!\!\!\!
    $\boxed{\! \PB{\mathbf{A}}{\mathbf{B}} \!}$
    %% \;\;\;
     $\mathbf{B} \!\rightarrow \!\!$
    %% &   %%\!\!$\rightarrow$\!\!
 & $\sint h (\ptf_\io {+} \Fw\Hl\!)$
 & $\sint \eta^\zn\Hs_\zn$
 & $\sint \eta^\zn %%%%\CTw{}_{,\zn} %\to
     {(\wwc\Fw\Hl)}_{,\zn} $
    %% & \phI
    \\
%    \hline
    \;\; $\mathbf{A} \! \downarrow$ \quad $\searrow$\!
       &   &   &
    %% &  %% \phantom{i}
  \\
  \hline
  \hline
  %% &&&&&
  %% \\
  $\sint f (\ptf_\io {+} \Fw\Hl\!)$$\vphantom{\ader}$\!\!\!  %% &
     & $0$
     & $ %%\!\! {-} \tempsint f_{,\zn} \eta^\zn \tfrac1\WW\, \CTw {+}
         %%%% \tfrac{\wwc+1}{\wwc} \tempsint f \Dvg{\eta} \, \CTw  %\to
            (\wwc{+}1) \tempsint f \Dvg{\eta} \, \Fw\Hl $
        & $ 0 $
     %% & %% \phI
  \\
  \color{gray}
     $\vphantom{\ader}$  %% &
     &
   %% $+O(\Hs_\zm)$
     {\color{gray}
       $ {+} \tempsint\big(f \ader_{\zn} h \big) \sqrg^{2\wwc} \g^{\zn\zk} \Hs_\zk$
     }
     &
     {\color{gray}
       ${+} %%%% \tfrac1\wwc \tempsint f \eta^\zn \, \CTw{}_{,\zn} %\to
         \tempsint f \eta^\zn \, {(\Fw\Hl)}_{,\zn} $
     }

     & %% $+O(\Hs_\zm)$
     {\color{gray}
       $ {-} \wwc \tempsint\big(f \ader_{\zn} \Dvg{\eta}\big) \sqrg^{2\wwc} \g^{\zn\zk} \Hs_\zk $
     }
     \phI
  \\
  \hline
 %% %% &
 %% &&&
 %% %% &
 %% \\
 $\sint \xi^\zm\Hs_\zm$  $\vphantom{\ader}$  %% &
     & $ %%\tempsint g_{,\zm} \xi^\zm \tfrac1\WW \,\CT
        {-} %%%% \tfrac{\wwc+1}{\wwc} \tempsint g \Dvg{\xi} \,\CTw %\to
          (\wwc{+}1) \tempsint h \Dvg{\xi} \,\Fw\Hl $
     & $ 0 $
      %$({\xi}^{\zl}{\eta}^\zk_{\;,\zl}{-}{\eta}^{\zl}{\xi}^\zk_{\;,\zl}) {\cdot}\Hs_\zk$
     & $ %%%% (\wwc{+}1) \tempsint \Dvg{\xi}\Dvg{\eta} \, \CTw  \to
         \wwc (\wwc{+}1) \tempsint \Dvg{\xi}\Dvg{\eta} \, \Fw\Hl  $
       %%$+ \xi^{\zm} \eta^\zn_{\z,\zn} {\cdot} \CT_\zm$
     \phI
  \\
 $\vphantom{\ader}$  %% &
     &
     {\color{gray}
       ${-} %%%% \tfrac1\wwc \tempsint g \,\xi^\zm \,\CTw{}_{,\zm} %\to
         \tempsint h \,\xi^\zm \,{(\Fw\Hl)}_{,\zm}  $
     }
     &
     %% $+O(\Hs_\zm)$
     {\color{gray}
       $ {+} \tempsint \LieB{\zm}{\xi}{\eta}\Hs_\zm$
     }
     & %%$+O(\CT_{,\zm})$
     {\color{gray}
       $ {+} %%%% \tempsint \Dvg{\eta} \,\xi^\zn \CTw{}_{,\zn} %\to
        \wwc \tempsint \Dvg{\eta} \,\xi^\zm {(\Fw\Hl)}_{,\zm} $
     }
     %% & %% \phI
  \\
  \hline %%&
 %%    &&&
 %%   %% &
 %% \\
 $\!\sint \xi^\zm %%%%\CTw{}_{,\zm} %\to
     {(\wwc\Fw\Hl)}_{,\zm}$  $\vphantom{\ader}$  \!\!\!\!\!\!
     %% &
     & $ 0 $
     & $\!\!{-} %%%% (\wwc{+}1) \tempsint \Dvg{\xi}\Dvg{\eta} \, {\CTw} \! \%
        \wwc (\wwc{+}1) \tempsint \Dvg{\xi}\Dvg{\eta} \, {\Fw\Hl} \! $
      %% $ - \xi^{\zm}_{\z,\zm} \eta^\zn {\cdot} T_\zn$
     & $0$
      %$(\xi^{\zm}_{\z,\zm}\partial_\zk\eta^\zn_{\z,\zn}
      % {-} \eta^\zn_{\z,\zn}\partial_\zk\xi^{\zm}_{\z,\zm})
      %  {\cdot} \FF^2\WW^2\g^{\zk\zl}\Hs_\zl$
     \phI
   \\
   $\vphantom{\ader}$  %% &
     & %% $+O(\Hs_\zm)$
     {\color{gray}
       $ \!\! {-} \wwc \tempsint \big(\Dvg{\xi}\ader_{\zn} h\big) \!\sqrg^{2\wwc} \g^{\zn\zk} \Hs_\zk \!\!\!\! $
     }
     & %% $+O(\CT_{,\zm})$
     {\color{gray}
       $ {-} %%%% \tempsint \Dvg{\xi} \,\eta^\zn \CTw{}_{,\zn} %\to
         \wwc \tempsint \Dvg{\xi} \,\eta^\zn {(\FF\Hl)}_{,\zn}  $
     }
     & %% $+O(\Hs_\zm)$
     {\color{gray}
       $ \!\! {+} \wwc^2 \! \tempsint \big(\Dvg{\xi} \ader_{\zn} \Dvg{\eta}\big) \!\sqrg^{2\wwc}  \g^{\zn\zk} \Hs_\zk \!\!\!\! $
     }
     %% & %% \phI
     \\
  \hline
  %% &&&&
   %% & %% \phantom{i}
  %% \\
  \hline
\end{tabular}
 \caption{Poisson brackets of constraint structures\HIDE{ of the parameterized theory} ($\WW \tequiv \wwc \teq \const$)}
 \label{table:Par_PB_ww}
  %{\raggedright <Text> \par}
  %{\small \it }
\renewcommand{\arraystretch}{1.0}
%  \beq{PBtableGUMGs0Equiv}
%   \text{\textit{Poisson brackets of \underline{nonextended} constraints of the {\GUMG} parameterized action (\ref{parAction_BGUMG}).}}
%  \eeq
%\end{center}
\end{table}
\vspace{-3mm}

To obtain the Poisson brackets for the first-class constraint $\CPI$ defined in (\ref{CPI_AMG}), use the substitution $\sint f(\ptf_\io {+} \Fw\Hl) \to  \sint (f\ptf_\io {+} \hmg{f}\Fw\Hl)$, which implies $f\to\hmg{f}$ and $f_{,\zn}\to 0$. Also note that: $\wwc\sint \,\hmg{f} \Dvg{\eta} \,{\Fw\Hl} = O\big(({\wwc\Fw\Hl})_{,\zn}\big) {\,\weq\,} 0$.

%%-------------------------=%        ******         %=-------------------------%%
%%-------------------------=%        ******         %=-------------------------%%
%%-------------------------=%        ******         %=-------------------------%%

\if{

\newpage
\section{(Add): Parameterization of the Action in Field Theory}
 \label{ASect:parametrization} %% 2024-10
  \hspace{\parindent}
\newcommand{\ii}{\zi}
\newcommand{\jj}{\zj}
\newcommand{\qo}{\qq^{\io}}
\newcommand{\tauxdotqo}{\tauxdot{\qq}^{\io}}
\renewcommand{\po}{\pp_{\io}}
\renewcommand{\lmo}{\lambda^{\io}}
\newcommand{\lm}{\lambda}
\renewcommand{\replm}{\rep{\lambda}}
\renewcommand{\Ho}{H_\io}
\newcommand{\ccm}{\Zm}
\newcommand{\ccn}{\Zn}

Discussion of parametrization in mechanics can be found in many textbooks on constrained systems, including \cite{Henneaux:1992ig}. Parameterization of the field theories is a more rare topic. In this section we briefly review the parametrization procedure and summarize discussion on parameterizing the field theory models. Parameterization converts the theory with a physical Hamiltonian to the theory with null Hamiltonian by converting physical Hamiltonian into a new ``parametrization'' constraint via introducing the triple of auxiliary variables: two conjugated phase-space variables $\qo$, $\po$ and a Lagrange multiplier $\lmo$.

Any mechanical model on phase space $(\qq^\ii,\pp_\ii)$ with Hamiltonian $\Ho(\qq,\pp)$ and the complete set of constraints $\CC_\ccm(\qq,\pp)$,
  \bea{extAction_MechWithConstr} % (\ref{initAction_Mechanics})
   \SSS_{\iE} [\qq,\pp,\lm]
    \,= \int \!d\tx\,  \LiB
      \pp_\ii \dot \qq^\ii - \Ho(\qq,\pp) - \lm^\ccm \CC_\ccm(\qq,\pp)
     \RiB \,,
  \eea
can be parameterized. The equivalent parameterized model is defined by the action
  \bea{parAction_MechWithConstr} % (\ref{parAction_Mechanics})
   \Spar[\qo,\po,\qq,\pp,\lmo,\rep{\lm}]
   = \int \!d\taux\, \LiB
    \pp_\ii\tauxdot\qq^\ii + \po\tauxdotqo - \lmo\big(\po {+} \Ho\big)- \replm^\ccm \CC_\ccm(\qq,\pp)
   \RiB \,.
  \eea
One can interpret the parameterized model as the model defined with respect to the new time parameter $\taux$, so that initial time $t$ becomes a dynamical variable $\qo(\taux)$. In this interpretation one often assumes the correspondence of the new $\qq^\ii(\taux)$ to initial $\qq^\ii\big(\qo(\taux)\big)$ and implicitly assumes redefinition of the Lagrange multipliers to new ones $\rep{\lm}^\ccm(\taux)$, which are expressed in terms of the initial ones as $\lmo(\taux) {\lm}^\ccm\big(\qo(\taux)\big)$ or $\tauxdot\qo(\taux) {\lm}^\ccm\big(\qo(\taux)\big)$. Alternatively one can consider the parameterized model (\ref{parAction_MechWithConstr}) as model defined with respect to the same time variable, without redefinition of configuration variables.

Both interpretations are possible due to physical equivalence of the models, which is bases on two facts. The first is that addition of constraint $\po {+} \Ho(\qq,\pp)$ to the system of constraints introduces to the model the new first-class constraint, generically expressible as
 $ % \beq{def_par_Iclass}
   \CPI(\po,\qq,\pp) \equiv \po {+} \Ho(\qq,\pp) + a^\ccm(\qq,\pp) \CC_\ccm(\qq,\pp)
 $ % \eeq
with some phase-space functions  $a^\ccm(\qq,\pp)$.
which generate time-reparametrization gauge symmetry. The correspondence between two theories is most easily established by partial gauge fixing of the parameterized model, so that gauge-fixing condition together with the parametrization constraint form a conjugated second-class pair and thus can be reduced ``in the action'' by reduction with respect to the set of auxiliary variables  $\qo,\po,\lmo$ and the auxiliary Lagrange multiplier for a gauge-fixing constraint.
Thus, expression of these four auxiliary variables from the system of their own variational equations of motions reduces the parameterized action (\ref{parAction_MechWithConstr}) to the initial extended action (\ref{extAction_MechWithConstr}).

There is a universal \emph{correspondence gauge},
 \beq{def_corr_gauge}
  \qo-\taux =0 \,.
 \eeq
Which is accessible by construction, since it is transversal to the parametrization constraint, $\PB{(\qo{-}\taux)}{\CPI} = 1$.

If all constraints $\CC_\Zm(\qq,\pp)$  are first class or if all the Lagrange multipliers for the second-class constraints\HIDE{, which are fixed by the consistency conditions on equations of motion,} are fixed to vanishing values, then the new reparametrization first-class constraint $\CPI$ can be chosen in the form $(\po {+} \Ho(\qq,\pp))$, up to arbitrary linear combination of the first-class constraints. This is guaranteed either by Dirac-Bergmann procedure of obtaining the complete set of constraint or alternatively by the direct check that dynamics, governed in (\ref{extAction_MechWithConstr}) by $\Ho$, preserves the constraint surface $\CC_\ccm=0$, which is equivalent that this set is complete and the action (\ref{extAction_MechWithConstr}) is the extended action of the model.
If in the constraint set of (\ref{extAction_MechWithConstr}) there are second-class constraints $\IIC_\rho (\qq,\pp)$ \footnote{Assume that the the complete constraint set $\CC_\ccm(\qq,\pp)$ of the  system (\ref{extAction_MechWithConstr}) is disentangled in to equivalent set of the first- and the second-class constraints $(\IC_\alpha, \IIC_\rho)$, whose Lagrange multipliers are correspondingly divided into two subsets $(v^\alpha,u^\rho)$, so that $\lm^\ccm \CC_\ccm =  v^\alpha \IC_\alpha {+} u^\rho \IIC_\rho$.}, and their Lagrange multipliers $u^\rho$ for some of them are nontrivially expressed in terms of the phase space variables $u^\rho \eomeq U^\rho(\qq,\pp)$ within the system of the equations of motion, then the $\Ho$ in $\CPI$ should be extended in particular way
 \beq{def_par_Iclass}
   \CPI(\po,\qq,\pp) \equiv \po {+} \Ho(\qq,\pp) + U^\rho(\qq,\pp) \IIC_\rho(\qq,\pp)
 \eeq
again up to arbitrary linear combination of the first-class constraints. Combination $H'(\qq,\pp) = \Ho(\qq,\pp) {+} U^\rho(\qq,\pp) \IIC_\rho(\qq,\pp)$ is known in the literature on constrained systems as ``primed'' or first-class Hamiltonian.\footnote{Generically any function of phase-space variables by appropriate shift with the second-class constraints can be made first class \cite{Henneaux:1992ig}.}

Finally, $H'(\qq,\pp)$ is a first-class function in the system (\ref{extAction_MechWithConstr}) and has on-shell vanishing Poisson brackets with all its constraints. Thats why the new constraint $\po{+}H'(\qq,\pp)$ in  (\ref{parAction_MechWithConstr}) automatically becomes in weak involution with all \emph{other} constraints of the parameterized system. And there is a special mechanical property that every phase-space function is in involution with itself, which implies $\PB{\po{+}H'}{\po{+}H'}=0$. Thus $\po{+}H'(\qq,\pp)$, \ref{def_par_Iclass}, is the first-class constraint in the parameterized system. \HIDE{It is by construction independent constraint since it is the only one dependent on new variable $\po$.}

\subsubsection*{Parameterization in field theory}
 %\label{ASSect:Parameterization} %% 2024-10 major revision
  \hspace{\parindent}
In the general field theory model
  \bea{extAction_FTWithConstr} % (\ref{initAction_Mechanics})
   \SSS_{\iE}[\qq,\pp,\lm]
    \,= \int \!d\taux \hspace{1pt} \dsx %\int \!d\tx \,\dsx
     \,\LiB
      \pp_\ii \dot \qq^\ii - \Ho(\qq,\pp) - \lm^\ccm \CC_\ccm(\qq,\pp)
     \RiB \,,
  \eea
where fields $(\qq^\ii, \pp_\ii, \lm^\ccm)$ and their local phase-space functions $f(\qq,\pp)$ depend besides time $t$ on spatial coordinates $\sx^\zm$,
the parametrization procedure analogous to (\ref{extAction_MechWithConstr})$\to$(\ref{parAction_MechWithConstr}) is not guaranteed.
Though many procedures in constrained field theory are analogous to that of mechanics, there are various differences, which in particular may lead to failure of the naively transferred parametrization prescription. The simplest reason is that local field constraints --- are infinite dimensional from the mechanical point of view. And if the phase-space functions objects are not algebraic in canonical fields and contain spatial derivatives then they do not generically commute even with themselves.

In field theory the Dirac consistency procedure as well as the condition of the dynamical completeness of the system of constraints guarantees only that \emph{integrated} Hamiltonian $\mathrm{H}'(\tx)=\sint H'(\tx,\sx)$ is the first-class nonlocal function and is in weak involution with the complete set of local constraints. However the naive prescription (\ref{extAction_MechWithConstr}) $\to$ (\ref{parAction_MechWithConstr}) turn it into \emph{local} constraint function $\po(\taux,\sx) {+}  H'(\taux,\sx)$, whose involution with local constraints of the extended system (\ref{extAction_FTWithConstr}) is not guaranteed. Even when $H'(\taux,\sx)$ contain terms with spatial derivatives it generically does not commute with itself, $\PB{H'(\taux,\sx)}{H'(\taux,\sx')}\neq0$.
\\

Lets determine the formal criteria when naive parametrization, analogous to (\ref{extAction_MechWithConstr}) $\to$ (\ref{parAction_MechWithConstr}), is valid. In other words when the action
  \bea{parAction_FTWithConstr}
   \Spar[\qo,\po,\qq,\pp,\lmo,\rep{\lm}]
    \,= \int \!d\taux \hspace{1pt} \dsx  %\int \!d\taux\, \dsx
    \,\LiB
    \pp_\ii\tauxdot\qq^\ii + \po\tauxdotqo - \lmo\big(\po {+} \Ho\big)- \replm^\ccm \CC_\ccm(\qq,\pp)
    \RiB \,.
  \eea
describes the system, which is physically equivalent to (\ref{extAction_FTWithConstr}).
 %
%The sufficient\footnote{For theories with regular structure of independent constraints this is also the necessary condition of equivalence.} criteria of validity of the naive parametrization prescription is the following.
 %
Theories (\ref{extAction_FTWithConstr}) and (\ref{parAction_FTWithConstr}) are physically equivalent when two following conditions are satisfied:
  \begin{enumerate}
    \item Equal rank condition: the on-shell ranks of Poisson brackets matrix of the complete constraint sets of extended action (\ref{extAction_FTWithConstr}) and parameterized action (\ref{parAction_FTWithConstr})  coincide,
      \bea{rank_validity_condition}
        \rank \PB{\CC_\ccm}{\CC_\ccn} \weq \rank \PB{\Psi_M}{\Psi_N}
        \,,
      \eea
  %  where $\psi_\mu\equiv\big(\Hs_\zm,\CT_{,\zm}\big)$ and $\Psi_M\equiv\big(\ptf_\io {+} {\FF \Hl},\Hs_\zm,\CT_{,\zm}\big)$.
  where  $\Psi_M \equiv \big( (\po {+} \Ho), \CC_\ccm \big)$.

    \item There exist an admissible gauge-fixing condition for the reparametrization symmetry so that after the partial gauge fixing of the parameterized action (\ref{parAction_FTWithConstr}) one obtains the initial action (\ref{extAction_FTWithConstr}) via partial reduction with respect\HIDE{ to auxiliary variables} to conjugated second-class constraints.
  \end{enumerate}
The rank condition (\ref{rank_validity_condition}) in particular demands that new constraint $(\po {+} \Ho )$ brings one new functionally independent \emph{first-class} constraint to the system of constraints. This new first-class constraint may be either the new constraint itself or some linear combination of the new constraint and the constraints $\CC_\ccm$ of the type
(\ref{def_par_Iclass}).%\footnote{The coefficient at the new constraint in the new first-class linear combination should be full-rank at least at constraint surface.}

As we mentioned earlier these conditions are too restrictive in field theory and the parametrization is not guaranteed. However there are various new options. First note that the Hamiltonian density $\Ho$, integrated over space at fixed time, is defined in (\ref{parAction_FTWithConstr}) modulo rather arbitrary total derivatives with respect to time spatial coordinates. In particular a total divergence of the local spatial-vector function may be added. And if we allow description with spatial nonlocality, then we may add any spatially nonlocal average-free functions or functionals.\footnote{We restrict ourselves to preserve only locality in time.} Such total derivatives and average-free functionals added to the hamiltonian density become nontrivial when incorporated in the new local constraint.

To restore validity of the parametrization procedure one may allow a sort of the partial conversion procedure by modifying Hamiltonian and constraints by the terms, proportional to powers of the new auxiliary variables $\qo,\po$ and their spatial derivatives. The objective is to satisfy the rank condition (\ref{rank_validity_condition}), and at the same time do not spoil the second, correspondence condition. One can add such terms with care, limiting oneself to the terms $ \sim\qo_{,\zm}$, which assure\HIDE{ help providing} the correspondence.
 %
%% %% INCORRECT! 2024-10
%% For example, an example of such scheme is parametrization the action of the relativistic scalar field, where perturbative procedure of extending the parameterized hamiltonian with such terms (to achieve the local involution with itself) ends in two steps. However, generically this procedure is infinite and may be spatially nonlocal.
 %
However, this procedure generically generates infinite series.
 %\footnote{Even the model of the free relativistic scalar field fails to be locally parameterized by naive parametrization procedure and correction with powers of auxiliary fields.}
The naive time parametrization of the general {\GUMG} theory with $\WW{\not\equiv} \const $ fails and cannot be by finite local expansion of such kind. \TBC{The same is true even for the free relativistic scalar field.}

Below we briefly outline an alternative manifestly spatially nonlocal approach. It may be unsatisfactory in problems when one wants to use an equivalent formulation with a true local reparametrization gauge symmetry, which acts locally on the initial configuration space, and at the same time one cannot extend it to a functionally complete gauge freedom by other incomplete constraints from the set of constraints before parametrization. However this scheme works in the nontrivial case of  {\GUMG} and is worth mentioning.

\paragraph{Homogeneous reparametrization}\vphantom{.}\\
%\vspace{5mm}

% \TODO{Edit} %% copied 2024-09 from main text

In general field theory one can always use the direct analog of the quantum mechanical reparametrization by introducing only spatially \emph{homogeneous}, id est $\sx$-independent\HIDE{ and only time-dependent}, auxiliary degrees of freedom $\hmg{\qo}=\hmg{\qo}(\taux)$, $\hmg{\po}=\hmg{\po}(\taux)$, $\hmg{\lmo} =\hmg{\lmo} (\taux)>0$, which account for a spatially-homogeneous time reparametrization. Homogeneously-parameterized action
 \bea{parhmgAction_FTWithConstr}
   \Spar_{hmg}[\hmg{\qo},\hmg{\po},\qq,\pp,\hmg{\lmo},\rep{\lm}]
    \,= \int \!d\taux \hspace{1pt} \dsx  %\int \!d\taux\, \dsx
    \,\LiB
    \pp_\ii\tauxdot\qq^\ii + \hmg{\po} \tauxdot{\hmg{\qo}} - \hmg{\lmo}\big(\hmg{\po} {+} \Ho\big)- \replm^\ccm \CC_\ccm(\qq,\pp)
    \RiB \,,
  \eea
 %%  %% DOUBLING
 %% where $\hmg{\qo}=\hmg{\qo}(\taux)$, $\hmg{\po}=\hmg{\po}(\taux)$, $\hmg{\lmo} =\hmg{\lmo} (\taux)>0$ are new spatially homogeneous fields which depend only on time and $\sVol\equiv \sint 1$  is the coordinate volume of $\taux = \const$ hypersurfaces. All other fields are reexpressed as functions of new time variable $\taux$ (and spatial coordinates $\sx$).\footnote{One could embed the original theory in parameterized theory in different ways. For example one can leave the time coordinate of the base spacetime manifold unchanged and just introduce new extended set of fields. This way does not assume reexpression of fields since the time coordinate is not changed. Also one can rescale Lagrange multipliers by $\replm$, or by $\tauxdot{\xo}$, or keep them as in original theory. \TBC{These all are physically equivalent ways of parameterizing the theory}.}
is classically equivalent to the initial action (\ref{extAction_FTWithConstr}) for the same reasons, as in mechanics, because the hamiltonian converted to homogeneous constrain is still integrated over space. The new homogeneous ``Hamiltonian'' constraint,
 \beq{def_parhmg_Iclass}
   \hmg{\CPI}(\po,\qq,\pp) \equiv \hmg{\po} {+} \hmg{\Ho}(\qq,\pp) + \hmg{ U^\rho(\qq,\pp) \IIC_\rho(\qq,\pp) }
   \,,
 \eeq
is in weak involution with other constraint, because integrated primed Hamiltonian $\sint ({\Ho} {+} { U^\rho\IIC_\rho})$ is the first-class function. And this action may be gauge-fixed with condition $\hmg{\qo}(\taux)-\taux=0$ so that the reduction with respect to the conjugated second-class constraints $\hmg{\CPI}, \hmg{\qo}(\taux)-\taux$ recovers the extended action (\ref{extAction_FTWithConstr}).

The homogeneous first-class (\ref{def_parhmg_Iclass}) generates the homogeneous transformations with the homogeneous gauge parameter $\hmg{\geps} (\taux)$. In the auxiliary sector this implies $\gvar[\hmg{\geps}] \qo = \hmg{\geps}$, $\gvar[\hmg{\geps}] \po = 0$, and induces a Lagrange multiplier gauge transformation $\gvar{\lmo}=\tauxdot{\hmg{\geps}}$.

The configuration space of the homogeneously parameterized theory (\ref{parhmgAction_FTWithConstr}) contains fields from different functional spaces\HIDE{ (i.e. sections of different bundles)}. This introduces (spatial) nonlocality to the theory. For many reasons it may be more convenient to work with configuration space of the same functional space with spacetime-local fields. Below we demonstrate that it is always possible to reformulate (\ref{parhmgAction_FTWithConstr}) using only local auxiliary fields. This can be achieved by incorporating the complementary inhomogeneous auxiliary sector, which is purely gauge, into (\ref{parhmgAction_FTWithConstr}).
%Below we show that it is always possible to reformulate (\ref{parhmgAction_FTWithConstr}) in terms of only local fields by adding to the auxiliary homogeneous sector the complementary inhomogeneous sector which is a pure gauge.
%\footnote{In the context of {\GUMG} the local parametrization will keep us closer to Henneaux--Teitelboim line of reasoning for {\UMG} theory \cite{Henneaux:1989zc}.}

\subsubsection*{Trivially localized homogeneous parametrization}%\vspace{5mm}
   \hspace{\parindent}
In field theory one can add to (\ref{parhmgAction_FTWithConstr}) an action of the pure gauge sector
 \bea{triv add_FT}
   \SSS_{triv} [\inh{\qo},\inh{\po},\inh{\lmo}]
    \,= \int \!d\taux \hspace{1pt} \dsx  %\int \!d\taux\, \dsx
    \,\LiB
       \inh{\po} \tauxdot{\inh{\qo}} - \inh{\lmo}\inh{\po}
    \RiB \,,
  \eea
with spatially average-free fields $\inh{\qo}$,$\inh{\po}$,$\inh{\lmo}$, which complement homogeneous auxiliary sector fields in (\ref{parhmgAction_FTWithConstr}) to the functionally complete quantities. Action $\SSS_{par} = \SSS_{\,\hmg{\!par\!}} + \SSS_{triv}$, which we name minimally parameterized, take almost local form
 \bea{parminAction_FTWithConstr}
   \SSS_{par}[\qo,\po,\qq,\pp,\lmo,\rep{\lm}]
    \,= \int \!d\taux \hspace{1pt} \dsx  %\int \!d\taux\, \dsx
    \,\LiB
    \pp_\ii\tauxdot\qq^\ii + {\po} \tauxdotqo - {\lmo}\big({\po} {+} \hmg{\Ho}\big)- \replm^\ccm \CC_\ccm(\qq,\pp)
    \RiB \,,
  \eea
with the only spatially nonlocal interaction term $\sint {\lmo} \hmg{\Ho} \equiv \sint \hmg{\lmo}{\Ho}$ between the configuration space of the initial extended theory (\ref{extAction_FTWithConstr}) and (now local) auxiliary fields.

The addition of (\ref{triv add_FT}) introduces in to (\ref{parminAction_FTWithConstr}) additional ``lazy'' gauge invariance in this sector, which is parameterized by some spatially-inhomogeneous gauge parameter $\inh{\geps}(\taux,\sx)$. Aggregating this symmetry with the homogeneous canonical symmetry, defined by the constraint (\ref{def_parhmg_Iclass}), one can write the additional first-class constraint of (\ref{parminAction_FTWithConstr}) in the form
 \beq{def_parhmg_Iclass}
   {\CPI}(\po,\qq,\pp) \,\equiv\, {\po} + \hmg{\Ho}(\qq,\pp) + \hmg{ U^\rho(\qq,\pp) \IIC_\rho(\qq,\pp) }
   \,,
 \eeq
which generates gauge transformation with local parameter $\geps(\taux,\sx)=\hmg{\geps}(\taux) + \inh{\geps}(\taux,\sx)$. This implies $\gvar[\geps] \qo = \geps$, $\gvar[\geps] \po = 0$, and induces a Lagrange multiplier gauge transformation $\gvar{\lmo}=\tauxdot{\geps}$. However, it is obvious that localized are the gauge transformations only in the auxiliary sector. At the same time (\ref{def_parhmg_Iclass}) acts on phase space variables and their functions,
 \beq{}
  \gvar[\geps]f(\g,\pg)
  \equiv \PB{f(\g,\pg)}{\sint \geps {\CPI}}
  = \PB{f(\g,\pg)}{\sint ({\Ho} {+}  U^\rho\IIC_\rho)} {\cdot}\hmg{\geps}
  \,,
 \eeq
with the  spatially-homogeneous part of the gauge parameter $\hmg{\geps}(\taux)$. Which induce homogeneous action on the Lagrange multipliers in the configuration space, inherited from (\ref{extAction_FTWithConstr}).

}\fi

%%%=============================================================================%%%
%%%=============================================================================%%%
%%%=============================================================================%%%

\makeatletter
\@ifundefined{INSERTION} %{\foo}
  {%
    % \foo not defined

%%%=============================================================================%%%
%%%================================ Bibliography ===============================%%%
%%%=============================================================================%%%
\newpage
{}

%\newpage
 {
  \def \INSERTION {}
 }

\end{document}